\definecolor{shadecolor}{rgb}{0.95, 0.95, 0.86}
\renewcommand{\d}{{\mathrm d}}
\newcommand{\im}{\mathrm{i}}
\newcommand{\e}{\mathrm{e}}
\newcommand{\z}{\zeta}
\numberwithin{equation}{section}
\newtheorem{theo}{Theorem}[section]
\newtheorem{lem}[theo]{Lemma}
\newtheorem{rem}[theo]{Remark}
\newtheorem{problem}[theo]{Riemann-Hilbert Problem}
\newtheorem{remark}[theo]{Remark}
\newtheorem{prop}[theo]{Proposition} 
\newtheorem{cor}[theo]{Corollary}
\begin{document}

\title{From gap probabilities in random matrix theory to eigenvalue expansions}

\author{Thomas Bothner}
\address{Department of Mathematics, University of Michigan, 2074 East Hall, 530 Church Street, Ann Arbor, MI 48109-1043, United States}
\email{bothner@umich.edu}

\keywords{Integrable integral operators, spectrum analysis, transition asymptotics, Riemann-Hilbert problem, Deift-Zhou nonlinear steepest descent method.}

\subjclass[2010]{Primary 45C05; Secondary 45M05, 82B26, 33C10, 33C45.}

\thanks{The author is grateful to R. Buckingham, P. Deift, A. Its and I. Krasovsky for stimulating discussions about this project. The revised version of this manuscript has benefited from the constructive comments made by anonymous referees for which we are grateful.}
\dedicatory{Dedicated to Percy Deift and Craig Tracy on the occasion of their 70th birthdays}

\begin{abstract}
We present a method to derive asymptotics of eigenvalues for trace-class integral operators $K:L^2(J;\d\lambda)\circlearrowleft$, acting on a single interval $J\subset\mathbb{R}$, which belong to the ring of integrable operators \cite{IIKS}. Our emphasis lies on the behavior of the spectrum $\{\lambda_i(J)\}_{i=0}^{\infty}$ of $K$ as $|J|\rightarrow\infty$ and $i$ is fixed. We show that this behavior is intimately linked to the analysis of the Fredholm determinant $\det(I-\gamma K)|_{L^2(J)}$ as $|J|\rightarrow\infty$ and $\gamma\uparrow 1$ in a Stokes type scaling regime. Concrete asymptotic formul\ae\, are obtained for the eigenvalues of Airy and Bessel kernels in random matrix theory.
\end{abstract}

\date{\today}
\maketitle

\section{Introduction and statement of results}
A well-known result in orthogonal polynomial random matrix ensembles states that the spacing distributions of eigenvalues of $N\times N$ Hermitian matrices are encoded in Fredholm determinants \cite{M,D}. In more detail, for a given positive weight function $w(t)$, the probability $E_N(n;J)$ that a matrix from the ensemble associated with $w(t)$ has $n\in\mathbb{Z}_{\geq 0}$ eigenvalues in $J\subset\mathbb{R}$ equals
\begin{equation}\label{EN:1}
	E_N(n;J)=\frac{(-1)^n}{n!}\frac{\d^n}{\d\gamma^n}\det(I-\gamma K_N)\big|_{\gamma=1}.
\end{equation}
Here, $K_N$ is the finite rank integral operator on $L^2(J;\d\lambda)$ with kernel
\begin{equation*}
	K_N(\lambda,\mu)=\sum_{j=0}^{N-1}p_j(\lambda)p_j(\mu)w^{\frac{1}{2}}(\lambda)w^{\frac{1}{2}}(\mu);\ \ \ \ \ \ \ p_j\in\mathbb{C}[t]:\ \ \int p_j(t)p_k(t)w(t)\d t=\delta_{jk}.
\end{equation*}
The Fredholm representation of eigenvalue spacing statistics is also valid in the limit $N\rightarrow\infty$ and the resulting trace-class integral operator $K:L^2(J;\d\lambda)\circlearrowleft$ depends crucially on the Hermitian model we start with and at which local point in the spectrum the scaling is considered.\smallskip

For instance, two of the most commonly encountered kernels in random matrix theory arise in the Gaussian Unitary Ensemble (GUE) by scaling in the ``bulk" of the spectrum \cite{D,P,WN}, resp. at the ``soft edge" \cite{BB,F,Mo}: 
\begin{eqnarray}
	K_{\sin}(\lambda,\mu)&=&\frac{\sin(\lambda-\mu)}{\pi(\lambda-\mu)},\ \ \ \ \ J_{\sin}=(-s,s),\ \ s>0;\ \ \ \ \ \ \ \ \textnormal{resp.}\label{sinek}\\
	 K_{\textnormal{Ai}}(\lambda,\mu)&=&\frac{\textnormal{Ai}(\lambda)\textnormal{Ai}'(\mu)-\textnormal{Ai}'(\lambda)\textnormal{Ai}(\mu)}{\lambda-\mu},\ \ \ \ \ J_{\textnormal{Ai}}=(s,\infty),\ \ s\in\mathbb{R};\label{Aik}
\end{eqnarray}
i.e. the {\it sine kernel} and the {\it Airy kernel}, where $\textnormal{Ai}(z)$ is the Airy function. A third well-known kernel appears when one scales the Laguerre or Jacobi Unitary Ensembles (LUE or JUE) at the ``hard edge" \cite{F,WN}:
\begin{equation}\label{Bessk}
	K_{\textnormal{Bess}}^{(a)}(\lambda,\mu)=\frac{J_a(\sqrt{\lambda})\sqrt{\mu}J_a'(\sqrt{\mu})-J_a(\sqrt{\mu})\sqrt{\lambda}J_a'(\sqrt{\lambda})}{2(\lambda-\mu)},\ \ \ \ \ J_{\textnormal{Bess}}=(0,s),\ \ s>0;\ \ \ a>-1,
\end{equation}
i.e. the {\it Bessel kernel}, where $J_a(z)$ is the Bessel function of order $a$. From the random matrix point of view one is interested in the analogue of \eqref{EN:1},
\begin{equation*}
	E(n;J)=\frac{(-1)^n}{n!}\frac{\d^n}{\d\gamma^n}D(J;\gamma)\big|_{\gamma=1};\ \ \ \ \ \ \ \ D(J;\gamma)=\det(I-\gamma K)\big|_{L^2(J)}
\end{equation*}
which equals the probability that there are $n\in\mathbb{Z}_{\geq 0}$ bulk scaled ($K=K_{\sin}$), resp. soft-edge scaled ($K=K_{\textnormal{Ai}}$), resp. hard-edge scaled ($K=K_{\textnormal{Bess}}^{(a)}$) eigenvalues in the interval $J_{\sin}$, resp. $J_{\textnormal{Ai}}$, resp. $J_{\textnormal{Bess}}$.\footnote{Equivalently, $E(0;J_{\textnormal{Ai}})$ and $E(0;J_{\textnormal{Bess}})$ are the distribution functions of the (rescaled) largest and smallest eigenvalue drawn from the GUE and LUE.}\bigskip

We shall denote the eigenvalues of the integral operator $K$ by $\{\lambda_i(J)\}_{i=0}^{\infty}$. In case of the three kernels \eqref{sinek}, \eqref{Aik} and \eqref{Bessk} it has been proven \cite{PS,Fu,TW,TW2} that the spectrum is simple and we can order
\begin{equation}\label{cascade}
	1>\lambda_0(J)>\lambda_1(J)>\ldots>0.
\end{equation}
The lower bound follows from positivity and the upper one from the fact that $K:L^2(\mathbb{R};\d\lambda)\circlearrowleft$ with $K=K_{\sin},K_{\textnormal{Ai}}$ and $K:L^2((0,\infty);\d\lambda)\circlearrowleft$ with $K=K_{\textnormal{Bess}}^{(a)}$ are projection operators (cf. \cite{PS,TW,TW2}). In particular, the last statement implies by minimax characterizations that for each fixed $i\in\mathbb{Z}_{\geq 0}$ we have
\begin{equation}\label{minimax}
	\lambda_i(J)\rightarrow 1,\ \ \ \textnormal{as}\ \ \ |J|\rightarrow+\infty\ \ \ \ \Leftrightarrow\ \ \ \ \lambda_i(J)\rightarrow 1,\ \ \ \textnormal{as}\ \ \begin{cases} s\rightarrow+\infty,&K=K_{\sin},K_{\textnormal{Bess}}^{(a)}\\ s\rightarrow-\infty,&K=K_{\textnormal{Ai}}\end{cases}.
\end{equation}
The eigenvalue behavior \eqref{cascade} provides us with a quick qualitative explanation of the phase transition near $\gamma=1$ in the asymptotic behavior of $D(J;\gamma)$ as $|J|\rightarrow\infty$, cf. \cite{W2}. Indeed, use Lidskii's Theorem and write
\begin{equation}\label{Lidskii}
	D(J;\gamma)=\det(I-\gamma K)\big|_{L^2(J)}=\prod_{i=0}^{\infty}\big(1-\gamma\lambda_i(J)\big),
\end{equation}
so that (1) for $\gamma<1$ all factors are bounded away from zero. Hence $D(J;\gamma)$ approaches zero exponentially fast. On the other hand (2) for $\gamma=1$ the $i$-th factor will approach zero for each $i$ and $D(J;\gamma)$ tends to zero faster than for $\gamma<1$. In the remaining case (3) for $\gamma>1$ the determinant will vanish for a discrete set $\{J_i\}_{i\in\mathbb{Z}}$, the solutions of $\lambda_i(J)=\gamma^{-1}$.\smallskip

%\begin{enumerate}
%	\item for $\gamma<1$ all factors are bounded away from zero. Hence $D(J;\gamma)$ approaches zero exponentially fast.\smallskip
%	\item for $\gamma=1$ the $i^{\textnormal{th}}$ factor will approach zero for each $i$ and $D(J;\gamma)$ tends to zero faster than for $\gamma<1$.\smallskip
%	\item for $\gamma>1$ the determinant will vanish for a discrete set $\{J_i\}_{i\in\mathbb{Z}}$, the solutions of $\lambda_i(J)=\gamma^{-1}$.
%\end{enumerate}
As another application, the eigenvalue asymptotics \eqref{minimax} can be used to derive quantitative asymptotic information for the normalized eigenvalue probabilities
\begin{equation*}
	r(n;J)\equiv\frac{E(n;J)}{E(0;J)}=\sum_{i_1<\ldots<i_n}\frac{\lambda_{i_1}\cdot\ldots\cdot \lambda_{i_n}}{(1-\lambda_{i_1})\cdot\ldots\cdot(1-\lambda_{i_n})},\ \ n\in\mathbb{Z}_{\geq 1},
\end{equation*}
i.e. \eqref{minimax} serves in the exact evaluation of ``large gap probabilities", see \cite{TW,TW2} for concrete formul\ae. For this to work one requires subleading terms in \eqref{minimax} and a standard way to obtain these is the method of commuting differential operators: by a remarkable coincidence the three integral operators on $L^2(J;\d\lambda)$ with kernels \eqref{sinek} and \eqref{Aik}, \eqref{Bessk} commute with the operators given by
\begin{equation*}
	\mathcal{L}_{\sin}[f]=\left(\frac{\d}{\d x}\big(x^2-s^2\big)\frac{\d}{\d x}+x^2\right)f,%\frac{\d}{\d x}\Big(\big(x^2-s^2\big)\frac{\d f}{\d x}\Big)+s^2x^2f;
\end{equation*}
and
\begin{equation}\label{TWcomm}
	\mathcal{L}_{\textnormal{Ai}}[f]=\left(\frac{\d}{\d x}(x-s)\frac{\d}{\d x}-x(x-s)\right)f,\ \ \ \ \ \ \ \ 
\mathcal{L}_{\textnormal{Bess}}^{(a)}[f]=\left(\frac{\d}{\d x}x(s-x)\frac{\d}{\d x}-\left(\frac{a^2s}{4x}+\frac{x}{4}\right)\right)f;%\frac{\d}{\d x}\Big(x(1-x)\frac{\d f}{\d x}\Big)-\left(\frac{a^2}{4x}+\frac{sx}{4}\right)f.
\end{equation}
defined on appropriate function spaces. Since $(K,\mathcal{L})$ share the same eigenfunctions the desired eigenvalue expansion is then obtained from WKB arguments applied to the differential equation. This is exactly the approach employed by Pollak, Slepian and Fuchs \cite{PS,Fu} for $K_{\sin}$: the eigenfunctions of $\mathcal{L}_{\sin}$ are prolate spheroidal wave functions, cf. \cite{NIST}, and the following asymptotics was derived,
\begin{equation}\label{sinres}
	1-\lambda_i(J_{\sin})=\frac{\sqrt{\pi}}{i!}2^{3i+2}s^{i+\frac{1}{2}}\e^{-2s}\left(1+\mathcal{O}\left(s^{-1}\right)\right),\ \ s\rightarrow+\infty,
\end{equation}
valid for fixed $i\in\mathbb{Z}_{\geq 0}$.\smallskip

In case of \eqref{TWcomm} no explicit formul\ae\, for eigenfunctions are known, still Tracy and Widom \cite{TW,TW2} used the operators \eqref{TWcomm} in combination with a trick and derived analogues of \eqref{sinres}, see Corollaries \ref{Aispecex} and \ref{Bessspecex} below. This derivation was somewhat heuristic and it seems to work well only for Sturm-Liouville operators $\mathcal{L}$ with rather simple potentials. For more general kernels in random matrix theory, compare Section \ref{singker} for a short discussion, the potentials would be transcendental functions of Painlev\'e type and the method fails.\smallskip

For this reason we choose a different approach to the asymptotics of $\lambda_i(J)$ as $|J|\rightarrow\infty$ which will only rely on the integrable structure \cite{IIKS} of the kernels, i.e. them being of the form
\begin{equation}\label{IIKSker}
	K(\lambda,\mu)=\frac{\phi(\lambda)\psi(\mu)-\psi(\lambda)\phi(\mu)}{\lambda-\mu}.
\end{equation}
The idea is to obtain ``large gap asymptotics" for $D(J;\gamma)$ 
%the Fredholm determinant
%\begin{equation}\label{Lidk}
%	D(J;\gamma)=\det(I-\gamma K)\big|_{L^2(J)}=\prod_{i=0}^{\infty}\big(1-\gamma \lambda_i(J)\big)
%\end{equation}
as $|J|\rightarrow\infty$ and {\it simultaneously} $\gamma\uparrow 1$ in a specific scaling regime. This regime has to be chosen in such a way that we can observe individual factors of the infinite product \eqref{Lidskii}, i.e. preferably we would like to produce an asymptotic expansion of $D(J;\gamma)$ which involves a finite product with distinct factors. 
\begin{rem}
It is not clear at all how to read off individual factors from the expansions of $D(J;\gamma)$ as $|J|\rightarrow\infty$ and $\gamma\leq 1$ is kept fixed: For instance, for the sine-kernel determinant \cite{BW,BuBu,K,E0,DIKZ}, as $s\rightarrow+\infty$,
\begin{eqnarray}
	D(J_{\sin};\gamma<1)&=&\exp\left[-\frac{2v}{\pi}s\right](4s)^{\frac{v^2}{2\pi^2}}G^2\left(1+\frac{\im v}{2\pi}\right)G^2\left(1-\frac{\im v}{2\pi}\right)\big(1+\mathcal{O}\left(s^{-1}\right)\big),\ \ \ \ \ v=-\ln(1-\gamma);\nonumber\\
	D(J_{\sin};1)&=&\exp\left[-\frac{s^2}{2}\right]s^{-\frac{1}{4}}c_0\big(1+\mathcal{O}\left(s^{-1}\right)\big),\ \ \ \ \ \ \ \ \ c_0=\exp\left[\frac{1}{12}\ln 2+3\z'(-1)\right];\label{DyWidgap}
\end{eqnarray}
involving Barnes $G$-function $G(\cdot)$ and the Riemann zeta-function $\z(\cdot)$. %The key  instead we shall analyze $D(J;\gamma)$ as $|J|\rightarrow\infty$ in a certain vicinity of the phase transition point $\gamma=1$.\smallskip
\end{rem}
In order to motivate the usefulness of the phase transition point $\gamma=1$ we recall the following result from \cite{BDIK}. This result was obtained with the help of the known behavior \eqref{sinres}, i.e. the line of reasoning is reversed in loc. cit. -- nevertheless it will provide us with valuable insight for the general case \eqref{IIKSker} where little information on $\lambda_i(J)$ is available. The result we are referring to is the Theorem below.\footnote{We have changed the definition of $v=-\frac{1}{2}\ln(1-\gamma)$ in \cite{BDIK} to $v=-\ln(1-\gamma)$. This allows us to have a uniform $v$ for all kernels considered in this paper. Also, \cite{BDIK} uses $K_{\sin}(\lambda,\mu)=\frac{\sin s(\lambda-\mu)}{\pi(\lambda-\mu)}$ acting on $(-1,1)$.}
\begin{theo}[\cite{BDIK}, Theorem $1.12$] Given $\chi\in\mathbb{R}$ let $p=p(\chi)\in\mathbb{Z}_{\geq 1}$ such that $p=1$ for $\chi<\frac{1}{2}$ and $\chi+\frac{1}{2}<p\leq \chi+\frac{3}{2}$ for $\chi\geq\frac{1}{2}$. There exist constants $s_0=s_0(\chi)>0$ and $v_0=v_0(\chi)>0$ such that
\begin{equation}\label{percy}
	D(J_{\sin};\gamma)=\exp\left[-\frac{s^2}{2}\right]s^{-\frac{1}{4}}c_0\prod_{i=0}^{p-1}\left(1+\e^{-v}\frac{\lambda_i(J_{\sin})}{1-\lambda_i(J_{\sin})}\right)\left(1+\mathcal{O}\left(s^{-\min\{p-\chi-\frac{1}{2},1\}}\right)\right)
\end{equation}
uniformly for $s\geq s_0,v=-\ln(1-\gamma)\geq v_0$ and $v\geq 2s-\chi\ln s$. The universal constant $c_0$ appeared in \eqref{DyWidgap}.
\end{theo}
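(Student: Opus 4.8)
The plan is to reduce \eqref{percy} to a purely algebraic statement about the spectrum $\{\lambda_i(J_{\sin})\}_{i=0}^{\infty}$, fed by two inputs already available to us: the $\gamma=1$ large gap asymptotics in \eqref{DyWidgap} and the prolate eigenvalue expansion \eqref{sinres}. First I would record the exact identity that comes out of Lidskii's theorem \eqref{Lidskii}: writing $1-\gamma=\e^{-v}$ and using $0<\lambda_i(J_{\sin})<1$ from \eqref{cascade} one has $1-\gamma\lambda_i=(1-\lambda_i)\bigl(1+\e^{-v}\tfrac{\lambda_i}{1-\lambda_i}\bigr)$, hence
\begin{equation*}
	D(J_{\sin};\gamma)=D(J_{\sin};1)\prod_{i=0}^{\infty}\Bigl(1+\e^{-v}\frac{\lambda_i(J_{\sin})}{1-\lambda_i(J_{\sin})}\Bigr),
\end{equation*}
the product converging because $K_{\sin}|_{L^2(J_{\sin})}$ is trace class. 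The asymptotics \eqref{DyWidgap} supplies $D(J_{\sin};1)=\e^{-s^2/2}s^{-\frac14}c_0(1+\mathcal{O}(s^{-1}))$ as $s\to+\infty$, so everything reduces to showing that, uniformly in the Stokes regime $v\geq 2s-\chi\ln s$, the infinite product differs from its truncation $\prod_{i=0}^{p-1}$ by a factor $1+\mathcal{O}\bigl(s^{-(p-\chi-\frac12)}\bigr)$.

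For the tail I would use that $v\geq 2s-\chi\ln s$ forces $\e^{2s-v}\leq s^{\chi}$, i.e. $\e^{-v}$ is super-exponentially small in $s$. Put $t_i:=\lambda_i/(1-\lambda_i)$, which is decreasing in $i$; by \eqref{sinres} at the fixed index $p$ one gets $\e^{-v}t_p\leq\e^{-v}(1-\lambda_p)^{-1}\leq C s^{\chi-p-\frac12}=o(1)$, so $\e^{-v}t_i<1$ for every $i\geq p$ once $s\geq s_0$, and therefore $\prod_{i\geq p}(1+\e^{-v}t_i)\leq\exp\bigl(\e^{-v}\sum_{i\geq p}t_i\bigr)$. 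To bound $\sum_{i\geq p}t_i$ I split according to the size of $\lambda_i$: when $\lambda_i\leq\frac12$ one has $t_i\leq 2\lambda_i$, so these terms contribute at most $2\sum_i\lambda_i=2\,\mathrm{tr}\,K_{\sin}|_{L^2(J_{\sin})}=\tfrac{4s}{\pi}$; when $\lambda_i>\frac12$ monotonicity gives $t_i\leq(1-\lambda_i)^{-1}\leq(1-\lambda_p)^{-1}$, while the number of such indices is again at most $2\,\mathrm{tr}\,K_{\sin}|_{L^2(J_{\sin})}=\tfrac{4s}{\pi}$, and \eqref{sinres} gives $(1-\lambda_p)^{-1}\leq Cs^{-p-\frac12}\e^{2s}$. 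Hence $\sum_{i\geq p}t_i\leq C's^{-p+\frac12}\e^{2s}$ and $\e^{-v}\sum_{i\geq p}t_i\leq C's^{-p+\frac12}\e^{2s-v}\leq C's^{\chi-p+\frac12}=C's^{-(p-\chi-\frac12)}$, which is $o(1)$ because $p>\chi+\frac12$. Exponentiating, and using $\prod_{i\geq p}(1+\e^{-v}t_i)\geq 1$, gives the stated tail estimate. Combining this with \eqref{DyWidgap} and absorbing $\mathcal{O}(s^{-1})$ and $\mathcal{O}\bigl(s^{-(p-\chi-\frac12)}\bigr)$ into $\mathcal{O}\bigl(s^{-\min\{p-\chi-\frac12,1\}}\bigr)$ yields \eqref{percy}; the constants $s_0(\chi)$ and $v_0(\chi)$ are simply whatever \eqref{DyWidgap}, \eqref{sinres} and the inequality $\e^{-v}t_p<1$ require, the condition $v\geq v_0$ being subsumed by $v\geq 2s-\chi\ln s$ once $s_0$ is taken large enough.

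The hard part is the bound on $\sum_{i\geq p}t_i$: \eqref{sinres} is only a statement for each fixed $i$ as $s\to\infty$, and says nothing about the roughly $2s/\pi$ eigenvalues that pile up near $1$. Bounding all of their contributions $(1-\lambda_i)^{-1}$ crudely by the single largest relevant value $(1-\lambda_p)^{-1}$, and their number by the trace, is exactly what produces the slightly lossy exponent $p-\chi-\frac12$ rather than the $p-\chi+\frac12$ that the $i=p$ term alone would suggest; a sharper treatment would require uniform-in-$i$, Landau--Widom plunge-region eigenvalue asymptotics, which are not needed for \eqref{percy}. A secondary point is to check that the truncation index defined by $p=1$ for $\chi<\frac12$ and $\chi+\frac12<p\leq\chi+\frac32$ for $\chi\geq\frac12$ is exactly the right one, so that the retained factors with $i<p$ are genuinely present (not already inside the error bar for every admissible $v$) while the discarded ones are negligible. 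Finally I would note that a proof not invoking \eqref{sinres} would instead represent $D(J_{\sin};\gamma)$ through the IIKS Riemann--Hilbert problem with $\gamma$ (equivalently $v$) kept as a parameter and run Deift--Zhou steepest descent in the window $v\asymp 2s$, where the finite product emerges from a confluence of the local parametrices at $\pm s$ — but that is precisely the mechanism the remainder of this paper develops for general integrable kernels.
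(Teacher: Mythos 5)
The paper does not actually prove this theorem---it quotes it verbatim (with a normalization change recorded in a footnote) from \cite{BDIK}, and merely remarks that the result ``was obtained with the help of the known behavior \eqref{sinres}, i.e.\ the line of reasoning is reversed in loc.\ cit.'' So the only thing to compare against is the strategy the paper attributes to \cite{BDIK}, and your route matches it: start from \eqref{sinres}, combine with the Lidskii factorization and the $\gamma=1$ expansion \eqref{DyWidgap}, and control the tail.

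Your tail estimate is correct and self-contained. With $t_i=\lambda_i/(1-\lambda_i)$ decreasing, splitting on $\lambda_i\lessgtr\tfrac12$ and using $\operatorname{tr}K_{\sin}|_{L^2(J_{\sin})}=2s/\pi$ to count both the small-$\lambda_i$ contribution and the number of $\lambda_i>\tfrac12$ gives
\begin{equation*}
\sum_{i\ge p}t_i\le \frac{4s}{\pi}+\frac{4s}{\pi}\,(1-\lambda_p)^{-1}=\mathcal O\!\left(s^{\frac12-p}\e^{2s}\right),
\end{equation*}
the last step using only the fixed-index ($i=p$) form of \eqref{sinres}. Then $\e^{2s-v}\le s^{\chi}$ gives $\e^{-v}\sum_{i\ge p}t_i=\mathcal O\!\left(s^{-(p-\chi-\frac12)}\right)$, which is $o(1)$ precisely because $p>\chi+\tfrac12$, and $\prod_{i\ge p}(1+\e^{-v}t_i)=1+\mathcal O\!\left(s^{-(p-\chi-\frac12)}\right)$ follows from $1\le\prod(1+x_i)\le\exp(\sum x_i)$. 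Combined with the $\mathcal O(s^{-1})$ error of \eqref{DyWidgap}, this is exactly the stated $\mathcal O\!\left(s^{-\min\{p-\chi-\frac12,1\}}\right)$ rate, uniformly on $v\ge 2s-\chi\ln s$. Your observation that the factor-of-$s$ ``loss'' from counting the plunge-region eigenvalues by the trace is what turns the naive single-term exponent $p-\chi+\tfrac12$ into $p-\chi-\tfrac12$ is apt, and that loss is precisely what the theorem's error budget allows. The only point worth stating explicitly is that since $p=p(\chi)$ is fixed, the threshold $s_0$ coming from \eqref{sinres} at $i=p$ and from $\e^{-v}t_p<1$ depends on $\chi$ alone, and $v\ge v_0(\chi)$ is indeed implied by $v\ge 2s-\chi\ln s$ once $s_0(\chi)$ is large enough. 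Your final remark correctly identifies the alternative, Riemann--Hilbert route as the one the present paper develops for the Airy and Bessel kernels.
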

 Note that from \eqref{sinres}, we have explicit information on the factors,
\begin{equation*}
	1+\e^{-v}\frac{\lambda_i(J_{\sin})}{1-\lambda_i(J_{\sin})}=1+\frac{i!}{\sqrt{\pi}}2^{-3i-2}s^{-i-\frac{1}{2}}\e^{2s-v}\left(1+\mathcal{O}\left(s^{-1}\right)\right),\ \ \ 0\leq i\leq p-1
\end{equation*}
and these contribute in general to the leading behavior of $D(J_{\sin};\gamma)$ as $|J_{\sin}|\rightarrow\infty$. In fact by construction of $p\in\mathbb{Z}_{\geq 1}$ we have $0<p-\chi-\frac{1}{2}\leq 1$ and thus for $v=2s-\chi\ln s$,
\begin{eqnarray*}
	i=p-1:\ \ &&\e^{-v}\big(1-\lambda_{p-1}(J_{\sin})\big)^{-1}\sim C_{p-1}s^{-(p-\chi-\frac{1}{2})}=o(1);\\
	i=0,\ldots,p-2:\ \ && \e^{-v}\big(1-\lambda_{i}(J_{\sin})\big)^{-1}\sim C_is^{-(i-\chi+\frac{1}{2})}\geq s^{-(p-\chi-\frac{3}{2})}\neq o(1)
\end{eqnarray*}
so that the asymptotics of $D(J_{\sin};\gamma)$ changes each time we cross one of the infinitely many Stokes curves
\begin{equation*}
	v=2s-\chi\ln s,\ \ \ \chi=p-\frac{1}{2},\ \ p\in\mathbb{Z}_{\geq 1}.
\end{equation*}
In short, the asymptotic Stokes phenomenon of the Fredholm determinant encodes the information of the spectrum we are after. The natural idea is now to reverse the procedure which lead to \eqref{percy}, i.e.
\begin{enumerate}
	\item[(A)] Identify the asymptotic Stokes region for a given determinant $D(J;\gamma)$ and derive transition asymptotics of type \eqref{percy}, i.e. derive an expansion of the form
	\begin{equation*}
		D(J;\gamma)\sim D(J;1)\prod_{i=0}^{p-1}\left(1+\e^{-v}f_i(J)\right),\ \ \ |J|\rightarrow+\infty,\ \ \gamma\uparrow 1,
	\end{equation*}
	with distinct factors $f_i(J)$ and where $p\in\mathbb{Z}_{\geq 1}$ is tailored to the double scaling regime.
	\item[(B)] Extract from $f_i(J)$ the behavior of $\lambda_i(J)$ as $|J|\rightarrow\infty$ and $i\in\mathbb{Z}_{\geq 0}$ is fixed.
\end{enumerate}
\begin{remark}\label{philophase} The existence of a phase transition for $D(J;\gamma)$ near $\gamma=1$ is a priori ensured for many, if not all, kernels in orthogonal polynomial random matrix theory. The operators are positive definite and since $D(J;1)$ has a probabilistic interpretation we have automatically 
\begin{equation*}
	1>\lambda_0(J)\geq\lambda_1(J)\geq\ldots>0.
\end{equation*}
\end{remark}

In this paper we carry out both steps for $K_{\textnormal{Ai}},K_{\textnormal{Bess}}^{(a)}$ and discuss applicability of the results to a more general Painlev\'e type kernel to be discussed below.
\subsection{Results for the Airy and Bessel kernel}
The following two Theorems are the major results of the manuscript.
\begin{theo}\label{Airymain} Given $\chi\in\mathbb{R}$ determine $p=p(\chi)\in\mathbb{Z}_{\geq 0}$ such that $p=0$ for $\chi<-\frac{1}{2}$ and $\chi+\frac{1}{2}<p\leq\chi+\frac{3}{2}$ for $\chi\geq-\frac{1}{2}$. There exist positive $t_0=t_0(\chi)$ and $v=v_0(\chi)$ such that
\begin{equation}\label{the1res}
	D(J_{\textnormal{Ai}};\gamma)=\exp\left[\frac{s^3}{12}\right]|s|^{-\frac{1}{8}}\eta_0\prod_{i=0}^{p-1}\left(1+\frac{i!}{\sqrt{\pi}}2^{-\frac{7}{2}i-\frac{9}{4}}t^{-i-\frac{1}{2}}\e^{\frac{2}{3}\sqrt{2}t-v}\right)\left(1+\mathcal{O}\left(t^{-\min\{p-\chi-\frac{1}{2},\frac{1}{2}\}}\right)\right)
\end{equation}
uniformly for $t=(-s)^{\frac{3}{2}}\geq t_0,v=-\ln(1-\gamma)\geq v_0$ and $v\geq\frac{2}{3}\sqrt{2}t-\chi\ln t$. Here
\begin{equation*}
	\eta_0=\exp\left[\frac{1}{24}\ln 2+\z'(-1)\right],
\end{equation*}
involving the Riemann zeta function $\z(\cdot)$ and in case $p=0$ we take $\prod_{i=0}^{p-1}(\ldots)\equiv 1$.
\end{theo}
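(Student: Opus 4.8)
\medskip

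\noindent\emph{Proof strategy.} The plan is to use only the integrable structure \eqref{IIKSker} of $K_{\textnormal{Ai}}$ (with $\phi=\textnormal{Ai}$, $\psi=\textnormal{Ai}'$) together with a Deift--Zhou steepest descent analysis that is uniform in the two coupled large parameters $t\to+\infty$ and $v\to+\infty$ subject to $v\geq\frac{2}{3}\sqrt2\,t-\chi\ln t$. First I would set up the IIKS Riemann--Hilbert representation \cite{IIKS} of the Fredholm determinant: $D(J_{\textnormal{Ai}};\gamma)$ equals the $\tau$-function of a $2\times2$ RHP posed on the oriented interval $J_{\textnormal{Ai}}=(s,\infty)$, with jump matrix built from $\gamma,\textnormal{Ai},\textnormal{Ai}'$ and normalized to $I$ at infinity; denote its solution by $X=X(z;s,\gamma)$. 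Combining $\partial_\gamma\ln\det(I-\gamma K)=-\operatorname{tr}\bigl(K(I-\gamma K)^{-1}\bigr)$ with the IIKS formula for the resolvent kernel and the change of variable $\frac{\d\gamma}{\d v}=\e^{-v}=1-\gamma$, one obtains a differential identity expressing $\partial_v\ln D(J_{\textnormal{Ai}};\gamma)$ explicitly through $X$ near the finite endpoint $z=s$. Because all $\lambda_i(J_{\textnormal{Ai}})<1$ by \eqref{cascade}, the point $\gamma=1$ (i.e.\ $v=+\infty$) is regular, with the classical large gap asymptotics $D(J_{\textnormal{Ai}};1)=\exp[s^3/12]\,|s|^{-\frac18}\eta_0\bigl(1+\mathcal{O}(t^{-\frac12})\bigr)$ as $s\to-\infty$ \cite{BB}. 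Integrating the differential identity from $v$ up to $v=+\infty$ then reduces \eqref{the1res} to controlling
\[
	\ln D(J_{\textnormal{Ai}};\gamma)=\ln D(J_{\textnormal{Ai}};1)-\int_v^{\infty}\partial_{v'}\ln D\bigl(J_{\textnormal{Ai}};1-\e^{-v'}\bigr)\,\d v'
\]
uniformly for $v'\geq v$.

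The heart of the argument is the steepest descent for $X$. I would rescale $z=(-s)\zeta$, introduce a $g$-function adapted to the semi-infinite gap $(s,\infty)$ (equivalently to pushing the Airy equilibrium measure off $J_{\textnormal{Ai}}$), open lenses, and reduce to a RHP whose jumps are exponentially close to $I$ except near the soft edge of the pushed measure, where the standard Airy parametrix applies and produces the algebraic prefactor $\exp[s^3/12]|s|^{-1/8}$, and near the endpoint $z=s$, where a local model problem is required. The new phenomenon, absent when $\gamma$ is fixed, is that one of the jumps carries the scalar $1-\gamma=\e^{-v}$ against an oscillatory exponential whose rate at $z=s$ equals precisely $\frac{2}{3}\sqrt2\,t$; hence on and above the Stokes curve $v=\frac{2}{3}\sqrt2\,t-\chi\ln t$ it cannot be discarded. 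After building the endpoint parametrix with exact (rather than perturbative) dependence on $\e^{-v}$ and estimating the residual small-norm RHP, a stationary phase expansion at $z=s$ combined with the Neumann series for the residual problem yields contributions of orders $t^{-i-\frac12}\e^{\frac23\sqrt2\,t-v'}$ for $i=0,1,2,\dots$, whose bookkeeping is geometric and resums into
\[
	\partial_{v'}\ln D\bigl(J_{\textnormal{Ai}};1-\e^{-v'}\bigr)=-\sum_{i=0}^{p-1}\frac{a_i(t)\,\e^{-v'}}{1+a_i(t)\,\e^{-v'}}+\mathcal{O}\bigl(t^{-\min\{p-\chi-\frac12,\frac12\}}\bigr),
\]
where $a_i(t)=\frac{i!}{\sqrt\pi}\,2^{-\frac{7}{2}i-\frac{9}{4}}\,t^{-i-\frac12}\,\e^{\frac23\sqrt2\,t}$; the factorials and the powers of $2$ and $t$ come from the coefficients of the model-function / stationary phase expansion at the endpoint, and the integer $p$ is defined precisely so that the terms $i\geq p$ are uniformly $o(1)$ in the prescribed regime and may be absorbed into the error.

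Substituting $u=\e^{-v'}$ and integrating term by term gives $\int_v^{\infty}\frac{a_i(t)\e^{-v'}}{1+a_i(t)\e^{-v'}}\,\d v'=\ln\bigl(1+a_i(t)\e^{-v}\bigr)$, and since $a_i(t)\e^{-v}=\frac{i!}{\sqrt\pi}2^{-\frac{7}{2}i-\frac{9}{4}}t^{-i-\frac12}\e^{\frac23\sqrt2 t-v}$, assembling the three pieces yields \eqref{the1res}; the constant $\eta_0$ is inherited from the $\gamma=1$ asymptotics (equivalently it is fixed by a constant-problem / comparison argument as $v\to+\infty$), and propagating the error terms through the $v'$-integral produces the stated uniform exponent $\min\{p-\chi-\frac12,\frac12\}$. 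For the Airy kernel there is an alternative anchor through the Tracy--Widom identity $\frac{\d^2}{\d s^2}\ln D(J_{\textnormal{Ai}};\gamma)=-q(s;\gamma)^2$, with $q$ a Painlev\'e~II transcendent, fed by the double scaling asymptotics of $q$ as $s\to-\infty$, $\gamma\uparrow1$; this serves as a consistency check but does not extend to $K_{\textnormal{Bess}}^{(a)}$.

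The main obstacle is the uniformity of the steepest descent in the pair $(t,v)$: constructing the endpoint parametrix at $z=s$ with exact $\e^{-v}$-dependence and correct matching onto the outer parametrix, and proving that the residual RHP has small norm all the way up to the Stokes curve $v=\frac23\sqrt2\,t-\chi\ln t$, where the estimate is allowed to degenerate to $\mathcal{O}(t^{-(p-\chi-1/2)})$. A related point that must be checked carefully is that exactly $p$ terms survive and that the resummation closes into a \emph{finite} product rather than the full Lidskii series \eqref{Lidskii}; it is this feature that ties the determinant asymptotics back to the spectral ordering \eqref{cascade} and makes the subsequent extraction of $\lambda_i(J_{\textnormal{Ai}})$ possible.
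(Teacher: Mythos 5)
Your route is genuinely different from the paper's and is not, as stated, complete. The paper differentiates $\ln D$ with respect to the endpoint $s$, obtains the local identity \eqref{diff:1} (a single boundary evaluation of the RHP solution $X$ at $z=s$), and then integrates in $s$ at fixed $v$ from a base point $\hat{s}_0$ where $\varkappa_{_\textnormal{Ai}}=\frac23\sqrt2$; you instead differentiate in $v$ (i.e.\ in $\gamma$) and integrate in $v'$ up to $+\infty$ at fixed $t$. The $\gamma$-derivative does appear in the paper, but only as an auxiliary device in Appendix~\ref{Airyapp} (Section~\ref{gderiv}) to pin down a constant of integration; it involves $\partial_\gamma\ln D=-\frac1\gamma\int_s^\infty R(\lambda,\lambda)\,\d\lambda$, an integral of the resolvent kernel over the whole interval, whereas $\partial_s\ln D$ is a single limiting value of $X^{-1}X'$ and is technically much easier to control uniformly. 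That choice is not cosmetic.

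Two concrete gaps. First, the anchor: integrating from $v$ to $+\infty$ requires the expansion of $D(J_{\textnormal{Ai}};\gamma)$ to be known uniformly in the whole half-line $\varkappa_{_\textnormal{Ai}}\geq\frac23\sqrt2$, not just the classical fixed-$\gamma=1$ result you cite; the paper proves exactly this uniform statement in Theorem~\ref{bettererror} (by a delicate comparison $R_0(\cdot;t,\varkappa_{_\textnormal{Ai}})$ versus $R_0(\cdot;t,+\infty)$). Without that, you cannot control the accumulated error in the $v'$-integral. Second, and more importantly, your assertion that ``the Neumann series for the residual problem yields contributions of orders $t^{-i-\frac12}\e^{\frac23\sqrt2t-v'}$ for $i=0,1,2,\dots$ whose bookkeeping resums into the finite product'' overstates what a single steepest-descent analysis delivers. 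At a fixed $(t,v')$ with $\chi'=k+\alpha$, the local model near the stationary point $z=\frac12$ (which the paper builds from a degree-$k$ Hermite parametrix, and which is the actual realization of your ``parametrix with exact $\e^{-v}$-dependence'') produces a \emph{single} logistic factor $\sigma^{\pm}$ containing only the $k$-th coefficient $\frac{k!}{\sqrt\pi}2^{-\frac72k-\frac94}t^{-k-\frac12}\e^{\frac23\sqrt2t-v'}$; the indices $i\neq k$ do \emph{not} appear in that one expansion with the claimed accuracy. The remaining factors only emerge upon sweeping $v'$ through the Stokes strips, with the discrete degree $k$ of the Hermite model jumping at each half-integer value of $\chi'$, and one must telescope the resulting strip-by-strip contributions together (this is what Propositions~\ref{k0}--\ref{k2} and Lemmas~\ref{lem:1}--\ref{lem:2} do in the paper's $s$-integration). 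Your Lidskii-based closed form for $\partial_{v'}\ln D$ is correct \emph{a posteriori}, but deriving the $a_i(t)$ from it before the determinant asymptotics is established is circular; the iteration across Stokes curves is exactly the non-circular substitute and is missing from your sketch.

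With both points supplied (a uniform $\gamma$-dependent gap expansion as anchor, and the explicit Stokes-strip iteration with a $k$-dependent Hermite model and the ensuing rank-one reduction / singular RHP to remove the non-small jump on $\partial D(\frac12,r)$), your $v$-integration should close, and would give an interesting ``transpose'' of the paper's argument. As presented, though, it assumes the two hardest technical ingredients rather than proving them.
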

Expansion \eqref{the1res} is the direct analogue of \eqref{percy} for the Airy kernel determinant and completes part (A) of the aformentioned procedure for the same object.
\begin{remark}
Note that in the limit $\varkappa_{_{\textnormal{Ai}}}\equiv\frac{v}{t}\rightarrow+\infty$, all factors in the product \eqref{the1res} contribute to the error term and we correctly restore the well-known expansion for $D(J_{\textnormal{Ai}};1)$ to leading order, cf. \cite{BBD,DIK}. See Appendix \ref{Airyapp} for further details on this matter.
\end{remark}
\begin{remark} Expansion \eqref{the1res} completely captures the behavior of $D(J_{\textnormal{Ai}};\gamma)$ in the underlying Stokes region, the Stokes curves in the $(v,t)$-plane are given by
\begin{equation*}
	v=\frac{2}{3}\sqrt{2}\,t-\chi \ln t,\ \ \ \ \ \ \ \chi=q-\frac{1}{2},\ \ q\in\mathbb{Z}_{\geq 1}.
\end{equation*}
In previous work \cite{B} the very first Stokes curve $q=1$ was identified and here we extend the results (with different techniques) to the full Stokes region. 
\end{remark}
The most important consequence for us is contained in the following Corollary which is part (B) of the procedure for the Airy kernel.
\begin{cor}\label{Aispecex} For any fixed $i\in\mathbb{Z}_{\geq 0}$, we have, as $s\rightarrow-\infty$,
\begin{equation*}
	1-\lambda_i(J_{\textnormal{Ai}})=\frac{\sqrt{\pi}}{i!}2^{\frac{7}{2}i+\frac{9}{4}}t^{i+\frac{1}{2}}\e^{-\frac{2}{3}\sqrt{2}\,t}\big(1+o(1)\big),\ \ \ t=(-s)^{\frac{3}{2}}.
\end{equation*}
\end{cor}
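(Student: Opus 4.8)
The plan is to carry out step (B) of the program described in the Introduction, by feeding the Fredholm determinant asymptotics of Theorem~\ref{Airymain} into Lidskii's formula~\eqref{Lidskii}. Abbreviate
\[
	g_i(t):=\frac{i!}{\sqrt{\pi}}\,2^{-\frac{7}{2}i-\frac{9}{4}}\,t^{-i-\frac12}\,\e^{\frac{2}{3}\sqrt{2}\,t},\qquad \mu_i(J):=\frac{\lambda_i(J_{\textnormal{Ai}})}{1-\lambda_i(J_{\textnormal{Ai}})},
\]
so that $1-\lambda_i(J_{\textnormal{Ai}})=(1+\mu_i(J))^{-1}$ and, by~\eqref{cascade}, $\mu_0(J)>\mu_1(J)>\cdots>0$. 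Dividing~\eqref{the1res} by $D(J_{\textnormal{Ai}};1)=\exp[\frac{s^3}{12}]\,|s|^{-\frac18}\,\eta_0\,(1+\mathcal{O}(t^{-1}))$ -- which, as observed in the remark following Theorem~\ref{Airymain}, is itself contained in~\eqref{the1res} in the regime $\varkappa_{_{\textnormal{Ai}}}\to+\infty$ -- and invoking~\eqref{Lidskii}, one obtains the master identity
\begin{equation*}
	\prod_{j=0}^{\infty}\big(1+\e^{-v}\mu_j(J)\big)=\prod_{j=0}^{p-1}\big(1+\e^{-v}g_j(t)\big)\,\Big(1+\mathcal{O}\big(t^{-\min\{p-\chi-\frac12,\,\frac12\}}\big)\Big),
\end{equation*}
valid uniformly for $t\ge t_0(\chi)$, $v\ge v_0(\chi)$ and $v\ge\frac23\sqrt2\,t-\chi\ln t$. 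Equivalently, writing the left-hand side as $\sum_{n\ge0}\e^{-nv}r(n;J_{\textnormal{Ai}})$, this says the generating function of the normalized gap probabilities is asymptotic to a polynomial of degree $p$ in $\e^{-v}$.

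Fix now $i\in\mathbb{Z}_{\ge0}$ and specialise to $\chi=i+\tfrac12$, so that $p=p(\chi)=i+2$ and the error is $\mathcal{O}(t^{-1/2})$. Along the Stokes curve $v=v_\ast:=\frac23\sqrt2\,t-(i+\tfrac12)\ln t$ one has $\e^{-v_\ast}g_j(t)=\frac{j!}{\sqrt\pi}2^{-\frac72 j-\frac94}\,t^{\,i-j}$, so on the right of the master identity the factors with $j=0,\dots,i-1$ grow like fixed positive powers of $t$, the $j=i$ factor tends to the finite constant $1+\frac{i!}{\sqrt\pi}2^{-\frac72 i-\frac94}$, and the factors with $j\ge i+1$ are $1+\mathcal{O}(t^{-1})$. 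I would then peel the factors off one at a time, by induction on $i$: assuming Corollary~\ref{Aispecex} already known for the indices $0,1,\dots,i-1$ -- equivalently $\mu_j(J)=g_j(t)(1+o(1))$ for $j\le i-1$ -- the matching factors on the two sides cancel (they are $\asymp t^{\,i-j}\to+\infty$, hence harmless to divide by), leaving
\begin{equation*}
	\big(1+\e^{-v_\ast}\mu_i(J)\big)\prod_{j\ge i+1}\big(1+\e^{-v_\ast}\mu_j(J)\big)=\Big(1+\tfrac{i!}{\sqrt\pi}\,2^{-\frac72 i-\frac94}\Big)\big(1+o(1)\big).
\end{equation*}
Granting that the tail product is $1+o(1)$, this forces $\e^{-v_\ast}\mu_i(J)=\frac{i!}{\sqrt\pi}2^{-\frac72 i-\frac94}+o(1)$, i.e. $\mu_i(J)=g_i(t)(1+o(1))$; since $\mu_i(J)\to+\infty$ by~\eqref{minimax}, we conclude $1-\lambda_i(J_{\textnormal{Ai}})=(1+\mu_i(J))^{-1}=g_i(t)^{-1}(1+o(1))=\frac{\sqrt\pi}{i!}2^{\frac72 i+\frac94}t^{i+\frac12}\e^{-\frac23\sqrt2\,t}(1+o(1))$, which is the assertion. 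The base case $i=0$ is the identical computation with $\chi=\tfrac12$ and no factors to cancel.

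The main obstacle is precisely the tail estimate $\prod_{j\ge i+1}(1+\e^{-v_\ast}\mu_j(J))=1+o(1)$, equivalently $\sum_{j\ge i+1}\mu_j(J)=o\big(t^{-i-\frac12}\e^{\frac23\sqrt2 t}\big)$; a ``tail is small'' statement read off on a curve lying above $v_\ast$ is useless here, since shifting $v$ back to $v_\ast$ by the required $\Theta(\ln t)$ exactly cancels the gain from the error term. What is genuinely needed is a uniform two-sided bound $c_j\,g_j(t)\le\mu_j(J)\le C_j\,g_j(t)$ for each fixed $j$, together with a geometric-type decay $\sum_{j\ge N}\mu_j(J)=\mathcal{O}(\mu_N(J))+\mathcal{O}(t)$ of the top of the spectrum, which makes the tail $\mathcal{O}(t^{-1})$. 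The upper bounds $\mu_j(J)=\mathcal{O}(g_j(t))$ I would extract by a preliminary bootstrap from the master identity with $\chi\in(j-\tfrac12,\,j+\tfrac12)$ (so $p=j+1$): after cancelling the first $j$ already-controlled factors, every remaining factor of a product of terms $\ge1$ must equal $1+\mathcal{O}(t^{-(\chi-j+1/2)})$, whence $\e^{-v}\mu_j(J)=\mathcal{O}(t^{-(\chi-j+1/2)})$. The decay estimate follows from the monotone cascade~\eqref{cascade}, from $\lambda_j(J)\to1$ in~\eqref{minimax}, from the coarse bound $\mathrm{tr}\,K_{\textnormal{Ai}}\big|_{L^2(J_{\textnormal{Ai}})}=\mathcal{O}(t)$ (which caps the number of eigenvalues bounded away from $0$), and from Lidskii's formula applied to $D(J_{\textnormal{Ai}};1)$ on the orthogonal complement of the top eigenspaces; differentiating the uniform master identity in $v$ gives an extra relation which, combined with $\mu_{i+1}(J)<\mu_i(J)$, pins down the leading tail term. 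The one delicate point is to order the induction so that the two-sided bounds feeding the tail at level $i$ are available before the sharp constant at level $i$; once this is arranged, the constant drops out of the displayed identity as above.
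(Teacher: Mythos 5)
Your route is essentially the one the paper follows in Section \ref{Airyp4}: feed Theorem \ref{Airymain}, divided by $D(J_{\textnormal{Ai}};1)$, into Lidskii's factorization and peel off eigenvalues inductively along the Stokes curves $v_\ast=\tfrac{2}{3}\sqrt2\,t-(i+\tfrac12)\ln t$. The paper writes the tail as $\det\big(I+\e^{-v}K_\ell(I-K_\ell)^{-1}\big)$ rather than as an infinite product and controls it by the trace--class inequality $|\det(I+B)-1|\le\|B\|_1\e^{\|B\|_1}$ combined with $\|K_{\textnormal{Ai}}\|_1=\mathcal{O}(t)$ and the upper bound on $\mu_\ell$ (Prop.~\ref{eig:imp2}), which is exactly the ``tail is $1+\mathcal{O}(t^{-1})$'' fact you identify as the crux. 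Two places, however, need a repair. (i) You defer to ``positivity'' for the lower bounds but never produce them; the paper's device (eq.~\eqref{first:es}) is to write $1+\e^{-v}\mu_p\ge 1 = D(J_{\textnormal{Ai}};\gamma)/D(J_{\textnormal{Ai}};\gamma)$ and then insert Corollary~\ref{nicecor} with $\chi''=\chi+1$ (hence $p''=p+1$) in the numerator and $\chi'=\chi$ (hence $p'=p$) in the denominator, which exhibits the extra factor $1+\e^{-v}g_p$ on the right and hence a genuine lower bound. (ii) Your bootstrap for the upper bounds $\mu_j=\mathcal{O}(g_j)$ cancels the first $j$ factors \emph{using the already--established sharp asymptotics} $\mu_\ell=g_\ell(1+o(1))$ for $\ell<j$; this creates the very circularity you flag at the end and, despite the remark about ordering, no concrete escape is offered (the sharp constant at level $\ell$ needs the tail bound at level $\ell$, which needs upper bounds for $\ell+1,\ell+2,\dots$, which in your scheme need sharp constants at $\ell,\ell-1,\dots$). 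The paper breaks this loop by running the upper-bound step with the \emph{crude} lower bounds from~\eqref{first:es} and positivity only: taking $\chi=p-\tfrac12$, $\ell=p-1$, one divides the displayed Lidskii identity by $\prod_{\ell<p-1}(1+c_\ell\e^{-v}g_\ell)\le\prod_{\ell<p-1}(1+\e^{-v}\mu_\ell)$ to see that the remaining right--hand product is $\mathcal{O}(1)$, and since every remaining factor is $\ge 1$ it follows that $\e^{-v_\ast}\mu_{p-1}=\mathcal{O}(1)$, i.e.\ $\mu_{p-1}=\mathcal{O}(t^{3/2-p}\e^{\frac23\sqrt2\,t})$, while the previously bounded eigenvalues get sharpened to $\mu_j=\mathcal{O}(g_j)$ for $j\le p-2$; iterating in $p$ produces~\eqref{second:es} without ever invoking sharp constants. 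Once the two-sided bounds of~\eqref{first:es}--\eqref{second:es} and the tail estimate are in place, your ``peel off and read off the constant'' step is exactly the paper's final paragraph, and the argument closes.
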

This expansion matches exactly the formal result of Tracy and Widom in \cite{TW}, $(1.23)$. Next we turn our attention towards \eqref{Bessk}.
\begin{theo}\label{Besselmain} Given $\chi\in\mathbb{R}$ and $a>-1$, determine $p=p(\chi)\in\mathbb{Z}_{\geq 0}$ such that $p=0$ for $\chi<-\frac{1}{2}$ and $\chi+\frac{1}{2}<p\leq\chi+\frac{3}{2}$ for $\chi\geq-\frac{1}{2}$. There exist positive constants $t_0=t_0(\chi,a)$ and $v_0=v_0(\chi,a)$ such that
\begin{equation}\label{the2res}
	D(J_{\textnormal{Bess}};\gamma)=\exp\left[-\frac{s}{4}+a\sqrt{s}\right]s^{-\frac{1}{4}a^2}\tau_a\prod_{i=0}^{p-1}\left(1+d_i(a)t^{-2i-1-a}\e^{2t-v}\right)
	\left(1+\mathcal{O}\left(\max\left\{t^{-2(p-\chi-\frac{1}{2})},\frac{\ln t}{t}\right\}\right)\right)
\end{equation}
uniformly for $t=s^{\frac{1}{2}}\geq t_0,v=-\ln(1-\gamma)\geq v_0$ and $v\geq 2t-2(\chi+\frac{a}{2})\ln t$. Here
\begin{equation*}
	d_i(a)=i!\,\Gamma(1+a+i)\pi^{-1}2^{-4i-2a-3},\ \ \ \ \ \tau_a=\frac{G(1+a)}{(2\pi)^{\frac{a}{2}}},
\end{equation*}
involving Barnes G-function $G(\cdot)$ and $\Gamma(\cdot)$ is the Euler gamma function. We take $\prod_{i=0}^{p-1}(\ldots)\equiv 1$ for $p=0$.
\end{theo}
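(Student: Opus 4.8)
The plan is to run the Deift--Zhou nonlinear steepest descent analysis of the Riemann--Hilbert problem attached to $\gamma K_{\textnormal{Bess}}^{(a)}$, in close parallel to the proof of Theorem \ref{Airymain} but now in the hard-edge geometry of \eqref{Bessk}. Since $K_{\textnormal{Bess}}^{(a)}$ is of integrable form \eqref{IIKSker} with $\phi,\psi$ explicit in terms of $J_a,J_a'$, the Its--Izergin--Korepin--Slavnov theory \cite{IIKS} produces a $2\times 2$ RHP for $X=X(z;s,\gamma)$, normalised to the identity at $\infty$, with jump supported on $(0,s)$, depending linearly on $\gamma$, and with prescribed mild endpoint behaviour ($z^{\pm a/2}$ or logarithmic at $z=0$, integrable at $z=s$). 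The resolvent kernel of $\gamma K_{\textnormal{Bess}}^{(a)}$, hence $\partial_v\ln D(J_{\textnormal{Bess}};\gamma)$, is expressed through the endpoint data of $X$ by a Tracy--Widom type differential identity. Because $D(J_{\textnormal{Bess}};\gamma)\to D(J_{\textnormal{Bess}};1)$ as $v\to+\infty$ and the latter is known from the literature to equal $\exp[-\tfrac{s}{4}+a\sqrt{s}]\,s^{-a^2/4}\tau_a$ up to a factor $1+\mathcal{O}(t^{-1})$ with $\tau_a=G(1+a)/(2\pi)^{a/2}$, the theorem will follow once one controls $\partial_v\ln D$ uniformly on the Stokes strip and integrates it in $v$ up to $+\infty$; the entire prefactor $\exp[-\tfrac{s}{4}+a\sqrt{s}]\,s^{-a^2/4}\tau_a$ is inherited unchanged from this $\gamma=1$ anchor.

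For the asymptotics of $X$ one first passes, via $\lambda\mapsto\sqrt{\lambda}$, to a problem on $(0,t)$, $t=\sqrt{s}$, which structurally resembles a sine-kernel large-gap problem carrying an extra hard-edge Bessel modification at the origin; after a $g$-function conjugation the jump on this interval becomes oscillatory, the $\gamma$-dependence reducing to the scalar $1-\gamma=\e^{-v}$ multiplying certain off-diagonal entries near the active endpoint $x=t$ (the image of the original endpoint $\lambda=s$), and the analytic continuation of the exponential factors into a lens around $(0,t)$ decays, leaving a small-norm problem away from the two endpoints. One then constructs: (i) a global parametrix from a scalar problem whose symbol encodes both the index $a$ and the factor $\e^{-v}$; (ii) a hard-edge Bessel parametrix of parameter $a$ at the origin, the source of the constant $\tau_a$, matching the global one with error $\mathcal{O}(t^{-1}\ln t)$; and (iii) a local parametrix at $x=t$. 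The last object is the heart of the argument: there the standard model is dressed by the small parameter $\e^{2t-v}$, and the correct model is a controlled finite perturbation whose solution expands in nonnegative integer powers of (essentially) $t^{-1}\e^{2t-v}$; retaining exactly the first $p$ of these terms produces the product $\prod_{i=0}^{p-1}\big(1+d_i(a)\,t^{-2i-1-a}\e^{2t-v}\big)$, with the coefficients $d_i(a)=i!\,\Gamma(1+a+i)\pi^{-1}2^{-4i-2a-3}$ emerging from the Taylor data of $J_a$ at the endpoint. The truncation at order $p$ is forced by the defining inequalities $\chi+\tfrac12<p\le\chi+\tfrac32$: on the Stokes curve $v=2t-2(\chi+\tfrac{a}{2})\ln t$ one has $\e^{2t-v}=t^{2\chi+a}$, so the $(p-1)$-st retained term is of exact size $t^{-2(p-\chi-1/2)}=o(1)$ there, while the next would-be term is of the same order and is absorbed into the error.

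With the piecewise parametrix $P$ assembled (global in the bulk, Bessel at the origin, the endpoint model at $x=t$), a standard small-norm argument gives $\|R-I\|=\mathcal{O}\big(\max\{t^{-2(p-\chi-1/2)},\,t^{-1}\ln t\}\big)$ for $R:=X\,P^{-1}$, uniformly for $t\ge t_0(\chi,a)$ and $v\ge v_0(\chi,a)$ in the region $v\ge 2t-2(\chi+\tfrac{a}{2})\ln t$ (in particular at $v=+\infty$, where the analysis degenerates to the known $\gamma=1$ one). Feeding $R$ back into the differential identity represents $\partial_v\ln D(J_{\textnormal{Bess}};\gamma)$ as a sum of logarithmic $v$-derivatives of the factors $1+d_i(a)\,t^{-2i-1-a}\e^{2t-v}$, $0\le i\le p-1$, plus an error integrable in $v$ over $[v,+\infty)$ with total size of the above order; integrating from $v$ up to $+\infty$ and exponentiating yields \eqref{the2res}, the $p=0$ convention covering $\chi<-\tfrac12$. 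As a by-product, identifying these factors with $1+\e^{-v}\lambda_i(J_{\textnormal{Bess}})/(1-\lambda_i(J_{\textnormal{Bess}}))$ delivers the eigenvalue expansion of the accompanying corollary, exactly as Corollary \ref{Aispecex} follows from Theorem \ref{Airymain}.

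The step I expect to be the main obstacle is securing these steepest descent estimates \emph{simultaneously and uniformly} at the two special endpoints across the unbounded Stokes strip: the hard edge demands a Bessel parametrix whose error stays within $\mathcal{O}(t^{-1}\ln t)$ throughout, while the active endpoint requires identifying and controlling the non-standard local model to the precise order $t^{-2(p-\chi-1/2)}$ and checking that all discarded contributions --- the higher terms of the local model and the tail of the $v$-integration alike --- remain within the claimed error uniformly as $v$ ranges over $[v_0,+\infty)$. Pinning down that the coefficients produced by the endpoint parametrix are exactly the $d_i(a)$ above, equivalently that they reproduce the known formal eigenvalue formula for $\lambda_i(J_{\textnormal{Bess}})$, is the remaining bookkeeping hurdle.
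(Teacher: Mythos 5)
Your global strategy — IIKS Riemann--Hilbert problem, $g$-function, local parametrices at the two endpoints, small-norm estimate, differential identity, anchoring at the known $\gamma=1$ asymptotics — is the right template and matches the paper at that level. But there is a genuine structural error in where you locate the ``non-standard'' discretized local model. You place it at the moving endpoint (your $x=t$, the image of $\lambda=s$), and claim the $d_i(a)$ emerge from the Taylor data of $J_a$ there. In fact, after the $g$-function normalization the jump $\e^{-t(\varkappa_{\textnormal{Bess}}+2g)}$ is controlled away from the hard edge (cf.\ \eqref{crux:5}), and it is the hard edge $z=0$ (after $T(z)=X(sz)$) that carries the Stokes-sensitive structure: the $g$-function has an extra logarithmic term there whose coefficient $V=\chi/t$ is what quantizes. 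The paper's local parametrix at $z=0$ is built from degree-$k$ Laguerre polynomials (with $\chi=k+\alpha$), and the coefficients $d_i(a)=i!\,\Gamma(1+a+i)\pi^{-1}2^{-4i-2a-3}$ come precisely from the Laguerre normalizations $h_k=k!\,\Gamma(1+k+a)$ via $\gamma_k=-2\pi\im/h_k$ feeding into $\sigma^{\pm}$ in RHP~\ref{singBessel}. At the moving endpoint $z=1$ a standard Hankel parametrix suffices. Since the Taylor data of $J_a$ at $\sqrt{s}$ has no reason to produce $\Gamma(1+a+i)$, your mechanism for generating the product coefficients would not deliver the stated $d_i(a)$.

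Two further mismatches with the paper's actual proof are worth flagging. First, the leading corrections to $\widehat{G}_R-I$ on $\partial D(0,r)$ scale like $t^{-1\pm2\alpha}$ and are \emph{not} uniformly small over $\alpha\in[-\tfrac12,\tfrac12)$; the paper handles this by the rank-one factorizations $\widehat{E}^{\pm}$ and the passage to a singular RHP (pole at $z=0$, then $Q\mapsto L$), a step your proposal skips entirely — a plain small-norm argument on the raw $G_R$ would fail. Second, you integrate the differential identity in $v$ and claim the entire $p$-fold product falls out of ``retaining the first $p$ terms'' of the local model in one step. The paper instead integrates in $s$ with a fixed $\gamma$, and assembles the product one factor at a time by an induction in $k$ that marches across successive Stokes lines (Lemmas~\ref{lemb:1}, \ref{lemb:2}, Propositions~\ref{propb:1}, \ref{propb:2}, culminating in Theorem~\ref{nice:b}); for each fixed $\chi$ the local parametrix produces a single $\sigma^{\pm}$, not $p$ of them. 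Your $v$-integration route might in principle be made to work, but as written it does not supply the mechanism that actually builds the product, and it presupposes the discretized structure at the wrong endpoint.
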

\begin{rem} As $\varkappa_{_\textnormal{Bess}}\equiv\frac{v}{t}\rightarrow+\infty$, we rediscover the leading terms of $D(J_{\textnormal{Bess}};1)$ since all factors in the product of \eqref{the2res} move to the error term, compare \cite{DKV,E} as well as Appendix \ref{Bessapp}. On the other hand the Stokes curves of $D(J_{\textnormal{Bess}};\gamma)$ are given by
\begin{equation*}
	v=2t-2\left(\chi+\frac{a}{2}\right)\ln t,\ \ \ \ \ \ \ \ \ \chi=q-\frac{1}{2},\ \ q\in\mathbb{Z}_{\geq 1}
\end{equation*}
and these have not been analyzed previously.
\end{rem}
The Corollary below summarizes the anticipated eigenvalue asymptotics for $K_{\textnormal{Bess}}^{(a)}$. We have again matching with the formal result of \cite{TW2},$(1.27)$.
\begin{cor}\label{Bessspecex} For any fixed $i\in\mathbb{Z}_{\geq 0}$ and $a>-1$, we have, as $s\rightarrow+\infty$,
\begin{equation*}
	1-\lambda_i(J_{\textnormal{Bess}})=\frac{\pi}{i!}\frac{2^{4i+2a+3}}{\Gamma(1+a+i)}\,t^{2i+1+a}\e^{-2t}\big(1+o(1)\big),\ \ \ \ t=s^{\frac{1}{2}}.
\end{equation*}
\end{cor}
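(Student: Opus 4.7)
The plan is to compare Theorem~\ref{Besselmain} with Lidskii's factorization \eqref{Lidskii} and extract the eigenvalues by matching corresponding factors. First, writing $1-\gamma\lambda_j = (1-\lambda_j)(1+\e^{-v}\lambda_j/(1-\lambda_j))$ with $\e^{-v} = 1-\gamma$, identity \eqref{Lidskii} yields
\begin{equation*}
D(J_{\textnormal{Bess}};\gamma) = D(J_{\textnormal{Bess}};1)\prod_{j=0}^{\infty}\left(1+\e^{-v}\frac{\lambda_j(J_{\textnormal{Bess}})}{1-\lambda_j(J_{\textnormal{Bess}})}\right).
\end{equation*}
The classical expansion $D(J_{\textnormal{Bess}};1) = \exp[-s/4+a\sqrt{s}]s^{-a^2/4}\tau_a(1+o(1))$ is recovered from Theorem~\ref{Besselmain} in the limit $\varkappa_{_\textnormal{Bess}}=v/t\to\infty$ (cf.\ \cite{DKV,E}); dividing it into Theorem~\ref{Besselmain} gives, in the Stokes region,
\begin{equation*}
\prod_{j=0}^{\infty}\left(1+\e^{-v}\frac{\lambda_j(J_{\textnormal{Bess}})}{1-\lambda_j(J_{\textnormal{Bess}})}\right) = \prod_{i=0}^{p-1}\left(1+d_i(a)t^{-2i-1-a}\e^{2t-v}\right)\bigl(1+o(1)\bigr).
\end{equation*}

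Next, I induct on $i_0 \geq 0$ to establish the claimed formula. For the base case $i_0=0$, I choose $\chi\in[-\frac{1}{2},\frac{1}{2})$ (so $p=1$) and specialise along the one-parameter family of curves $v = 2t-(a+1)\ln t - \ln c$ with $c>0$ a free parameter. The right-hand side of the identity above becomes $1+c\,d_0(a)+o(1)$, while the $j=0$ factor on the left reads $1 + \e^{-v}(1+o(1))/(1-\lambda_0(J_{\textnormal{Bess}}))$ by \eqref{minimax}. Matching leading terms (conditional on the tail estimate discussed below) forces $1-\lambda_0(J_{\textnormal{Bess}}) = d_0(a)^{-1}t^{1+a}\e^{-2t}(1+o(1))$, as claimed. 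For the inductive step, apply the same identity with $p=i_0+1$: by the induction hypothesis the first $i_0$ factors on each side agree modulo $(1+o(1))$, so dividing them out isolates the $i_0$-th factor on either side, and the same matching procedure delivers the asymptotic for $1-\lambda_{i_0}(J_{\textnormal{Bess}})$.

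The main obstacle is the tail estimate, namely that $\prod_{j>i_0}(1+\e^{-v}\lambda_j(J_{\textnormal{Bess}})/(1-\lambda_j(J_{\textnormal{Bess}}))) = 1+o(1)$ uniformly in the regime used above. Since $\log(1+x)\leq x$ for $x\geq 0$, it would suffice to show $\e^{-v}\sum_{j>i_0}\lambda_j/(1-\lambda_j)\to 0$, which requires quantitative information on how many eigenvalues of $K_{\textnormal{Bess}}^{(a)}$ approach $1$ as $s\to\infty$. A natural route combines the trace-class estimate $\sum_j\lambda_j(J_{\textnormal{Bess}}) = \operatorname{tr}K_{\textnormal{Bess}}^{(a)}|_{L^2(0,s)}=O(\sqrt{s})$ with the fact that $K_{\textnormal{Bess}}^{(a)}:L^2((0,\infty);\d\lambda)\circlearrowleft$ is a projection: by the minimax principle the count $\#\{j:\lambda_j(J_{\textnormal{Bess}})\geq 1-\delta\}$ is bounded by a constant multiple of $\sqrt{s}$ for each fixed $\delta>0$, and combined with a uniform lower bound on $1-\lambda_j$ for $j$ beyond a slowly growing threshold this should suffice to absorb the tail into the error term of Theorem~\ref{Besselmain}.
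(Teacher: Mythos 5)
Your plan — writing $D(J_{\textnormal{Bess}};\gamma)=D(J_{\textnormal{Bess}};1)\prod_j(1+\e^{-v}\lambda_j/(1-\lambda_j))$, comparing with Theorem~\ref{Besselmain}, and matching factors inductively — is exactly the paper's high-level strategy. The genuine gap is in the tail estimate, and you are right that it is the ``main obstacle,'' but the route you sketch does not close it. The trace-class estimate $\sum_j\lambda_j(J_{\textnormal{Bess}})=O(t)$ and the minimax/projection argument bound \emph{how many} eigenvalues are close to $1$, but not \emph{how close} they can get. For $i_0 < j \leq N(t)$ with $N(t)=O(t)$, each ratio $\lambda_j/(1-\lambda_j)$ could a priori be arbitrarily large (even of size $\e^{ct}$ or worse), in which case $\e^{-v}\sum_{j>i_0}\lambda_j/(1-\lambda_j)$ cannot be absorbed into the error term no matter how slowly your threshold grows. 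The monotonicity $\lambda_{i_0-1}>\lambda_{i_0}>\cdots$ does not save you either, since it points in the wrong direction: it gives an upper bound on $1-\lambda_j$ for $j>i_0-1$, not a lower bound. What you actually need — and what your proposal does not supply — is a quantitative lower bound on $1-\lambda_p$ for a single fixed $p$, of the form $1-\lambda_p\geq c\, t^{2p+1+a}\e^{-2t}$.

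The paper obtains precisely this bound \emph{before} running the tail estimate, via an iterative positivity argument (Section~\ref{Airyp4} for Airy, mirrored in Section~\ref{Besseigexp}). Applying Theorem~\ref{Besselmain} at the Stokes values $\chi=q-\frac12$ with $\ell=q-1$ in the Lidskii identity, the left-hand finite product is $O(1)$; since every Lidskii factor $1+\e^{-v}\lambda_j/(1-\lambda_j)$ and the tail determinant $\det(I+\e^{-v}K_\ell(I-K_\ell)^{-1})$ are $\geq 1$, each factor is individually bounded, and carefully stripping off known factors and specialising $v$ to the bottom of the regime yields the two-sided estimate $1-\lambda_j=c_j\,t^{2j+1+a}\e^{-2t}(1+o(1))$ with $c_j>0$ unknown (the analogue of Proposition~\ref{almost}). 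Only then does the trace-norm inequality $\e^{-v}\|K_p(I-K_p)^{-1}\|_1\leq \e^{-v}\|K_p\|_1/(1-\lambda_p)$ produce the tail bound, and the constants $c_j$ are finally identified by comparing powers of $t$ in Proposition~\ref{eig:imp2}. Your proposal effectively assumes the conclusion of that preliminary step in order to prove it. A secondary, minor imprecision: the curve $v=2t-(a+1)\ln t-\ln c$ with fixed $c>1$ violates $\varkappa_{_\textnormal{Bess}}\geq 2-2(\chi+\frac a2)\frac{\ln t}{t}$ for $\chi\in[-\frac12,\frac12)$ as $t\to\infty$; you would need $\chi\geq\frac12$, hence $p=2$, not $p=1$ — harmless in spirit, but a sign the parameter bookkeeping needs care.
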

\begin{remark} With \cite{TW,TW2}, Theorem \ref{Airymain} and \ref{Besselmain} can be used to derive transition asymptotics for the underlying Painlev\'e II and III transcendents.
\end{remark}
\subsection{The ring of integrable integral operators}
The technical part of the procedure is contained in its part (A), i.e. the derivation of transition asymptotics of, say, type \eqref{the1res} and \eqref{the2res}. These asymptotics are obtained through an application of the Deift-Zhou nonlinear steepest descent method \cite{DZ} to the following master Riemann-Hilbert problem (RHP). This RHP is associated with integral operators of type \eqref{IIKSker} and first appeared in \cite{IIKS}.
\begin{problem}[Master RHP]\label{masterIIKS} Determine $Y(z)=Y(z;J,\gamma)\in\mathbb{C}^{2\times 2}$, a matrix-valued piecewise analytic function which is uniquely characterized by the following four properties.
\begin{enumerate}
	\item $Y=Y(z)$ is analytic for $z\in\mathbb{C}\backslash\overline{J}$ and $J\subset\mathbb{C}$ is assumed to be a simple oriented curve.
	\item The limiting values $Y_{\pm}(z)$ from either side of the contour $J$ are square integrable and related via the jump condition
	\begin{equation*}
		Y_+(z)=Y_-(z)\begin{pmatrix}1-2\pi\im\gamma\phi(z)\psi(z)&2\pi\im\gamma\phi^2(z)\\ -2\pi\im\gamma\psi^2(z) & 1+2\pi\im\gamma\phi(z)\psi(z)\end{pmatrix}, \ \ z\in J.
	\end{equation*}
	\item At possible finite endpoints of $J$, the function $Y(z)$ is assumed to be square integrable.
	\item As $z\rightarrow\infty$, in a full neighborhood of infinity,
	\begin{equation*}
		Y(z)=I+Y_{\infty}z^{-1}+\mathcal{O}\left(z^{-2}\right),\ \ \ \ Y_{\infty}=\big(Y_{\infty}^{jk}\big)_{j,k=1}^2.
	\end{equation*}
\end{enumerate}
\end{problem}
For a given kernel \eqref{IIKSker} the derivation of an asymptotic solution for RHP \ref{masterIIKS} is kernel specific, nevertheless the general philosophy is always to obtain first a ``local identity" for the logarithmic derivative
\begin{equation}\label{philo}
	\partial\ln D(J;\gamma),
\end{equation}
taken with respect to endpoints of $J$ and $\gamma$ fixed. This means an identity for $\partial\ln D(J;\gamma)$ involving local characteristica of $Y(z)$ such as its residue at $z=\infty$ or behavior near specific points of $J$, see \eqref{diff:1} and \eqref{diff:2} below. Second, through an application of the nonlinear steepest descent method \cite{DZ}, one derives asymptotic expansions for the local characteristica and, third, integrates \eqref{philo} to obtain an expansion for $D(J;\gamma)$.
\begin{remark} The RHP \ref{masterIIKS} has been asymptotically analyzed for a variety of kernels over the past 20 years. In particular in the context of gap probabilities $D(J;1)$ we mention, for instance, \cite{DIZ,KKMST,CIK,BI1}. Away from $\gamma=1$ the method has been successfully used in \cite{BI2,BDIK}.
\end{remark}
\subsection{Non-generic kernels in Hermitian matrix models}\label{singker} The Airy kernel \eqref{Aik} %, written below solely in terms of RHP \ref{Airypara},
%\begin{equation*}
%	K_{\textnormal{Ai}}(\lambda,\mu)=\frac{\im}{2\pi}\frac{\Phi_{11}(\lambda)\Phi_{21}(\mu)-\Phi_{11}(\mu)\Phi_{21}(\lambda)}{\lambda-\mu},
%\end{equation*}
is obtained from scaling matrices in the GUE at the edge of the support of the eigenvalue densities $\rho$. Near a, say, right edge point $x^{\ast}$ of an interval in the density support we have
\begin{equation}\label{nongen}
	\rho(x)\sim c(x^{\ast}-x)^{2k+\frac{1}{2}},\ \ x\uparrow x^{\ast},\ \ c>0,\ k\in\mathbb{Z}_{\geq 0}
\end{equation}
and \eqref{Aik} simply corresponds to the generic situation $k=0$. In more fine tuned cases, i.e. for $k\in\mathbb{Z}_{\geq 1}$, the limiting kernels are described by a Lax pair solution associated with a distinguished solution of the $(2k)$-th member of the Painlev\'e I hierarchy. We will briefly discuss the first non-trivial case $k=1$ and refer to \cite{CV} for a rigorous derivation of the underlying critical kernel:  
\begin{equation}\label{PI2ker}
	K_{P_I^2}(\lambda,\mu;x,\tau)=\frac{\im}{2\pi}\frac{\Phi_{11}(\lambda;x,\tau)\Phi_{21}(\mu;x,\tau)-\Phi_{11}(\mu;x,\tau)\Phi_{21}(\lambda;x,\tau)}{\lambda-\mu};\ \ \ J_{P_I^2}=(s,\infty),\ \ s\in\mathbb{R}
\end{equation}
with $x,\tau\in\mathbb{R}$ and the matrix entries $\Phi_{jk}(\z)$ (viewed as analytic extensions from $\textnormal{arg}\,\z\in(0,\frac{6\pi}{7})$ to the entire complex plane) are characterized in terms of the following RHP.
\begin{problem}\label{PI2RHP} Determine $\Phi(\z)=\Phi(\z;x,\tau)\in\mathbb{C}^{2\times 2}$ with $x,\tau\in\mathbb{R}$ such that
\begin{enumerate}
	\item $\Phi(\z)$ is analytic for $\z\in\mathbb{C}\backslash\bigcup_{j=1}^4\Gamma_j$ with
	\begin{equation*}
		\Gamma_1=[0,\infty),\ \ \ \Gamma_3=(-\infty,0],\ \ \ \Gamma_2=\e^{-\im\frac{\pi}{7}}(-\infty,0],\ \ \ \Gamma_4=\e^{\im\frac{\pi}{7}}(-\infty,0]
	\end{equation*}
	and the rays are oriented as shown in Figure \ref{figure1}.\footnote{Except for the opening angles of $\Gamma_2\cup\Gamma_4$ the jump contours in RHP \ref{PI2RHP} and RHP \ref{Airypara} below are identical.}
	\item The limiting values $\Phi_{\pm}(\z)$ from either side of the jump contour satisfy the jump relations
	\begin{equation*}
		\Phi_+(\z)=\Phi_-(\z)\bigl(\begin{smallmatrix}0 &1\\ -1&0\end{smallmatrix}\bigr),\ \z\in\Gamma_3;\ \ \ \ \ \ \Phi_+(\z)=\Phi_-(\z)\bigl(\begin{smallmatrix}1&1\\ 0&1\end{smallmatrix}\bigr),\ \z\in\Gamma_1
	\end{equation*}
	and
	\begin{equation*}
		\Phi_+(\z)=\Phi_-(\z)\bigl(\begin{smallmatrix}1 & 0\\ 1 & 1\end{smallmatrix}\bigr),\ \  \z\in\Gamma_2\cup\Gamma_4.
	\end{equation*}
	\item $\Phi(\z)$ is bounded at $\z=0$.
	\item As $\z\rightarrow\infty$, away from the jump contours
	\begin{equation*}
		\Phi(\z)=\z^{-\frac{1}{4}\sigma_3}\frac{1}{\sqrt{2}}\begin{pmatrix}1 & 1\\ -1 & 1\end{pmatrix}\e^{-\im\frac{\pi}{4}\sigma_3}\left\{I-v\sigma_3\z^{-\frac{1}{2}}+\frac{1}{2}\begin{pmatrix}v^2&-\im u\\ \im u & v^2 \end{pmatrix}\z^{-1}+\mathcal{O}\left(\z^{-\frac{3}{2}}\right)\right\}\e^{-\theta(\z;x,\tau)\sigma_3}
	\end{equation*}
	where $u,v$ do not depend on $\z$, we define $\z^{\alpha}:\mathbb{C}\backslash(-\infty,0]\rightarrow\mathbb{C}$ such that $\z^{\alpha}>0$ for $\z>0$ and
	\begin{equation}\label{PI2:1}
		\theta(\z;x,\tau)=\frac{2}{7}\z^{\frac{7}{2}}+\frac{2}{3}\tau\z^{\frac{3}{2}}-2x\z^{\frac{1}{2}}.
	\end{equation}
\end{enumerate}
\end{problem}
\begin{remark} The particular scaling chosen in \eqref{PI2:1} differs slightly from \cite{CV} and also \cite{C1}. For instance $\Psi(\z;s,t_1)$ in \cite{C1}, Section $2$, connects to $\Phi(\z;x,\tau)$ in RHP \ref{PI2RHP} via
\begin{equation*}
	\Phi(\z;x,\tau)=2^{-\frac{\sigma_3}{14}}\Psi\left(2^{-\frac{2}{7}}\z;-2^{\frac{8}{7}}x,2^{-\frac{4}{7}}\tau\right),\ \ \z\in\mathbb{C}\Big\backslash\bigcup_{j=1}^4\Gamma_j,\ \ x,\tau\in\mathbb{R}.
\end{equation*}
On the other hand \eqref{PI2:1} matches exactly \cite{CIK},$(1.21)$.
\end{remark}
The above RHP characterizes a solution of the second member of the Painlev\'e I hierarchy, the $P_I^2$ equation: indeed it is straightforward to show that $\Phi(\z;x,\tau)$ solves a Lax system
\begin{equation*}
	\frac{\partial\Phi}{\partial x}(\z;x,\tau)=A(\z;x,\tau)\Phi(\z;x,\tau),\ \ \ \ \frac{\partial\Phi}{\partial \z}(\z;x,\tau)=B(\z;x,\tau)\Phi(\z;x,\tau)
\end{equation*}
and its compatibility yields for $u=u(x;\tau)$ and $v=v(x;\tau)$,
\begin{equation}\label{PI2:2}
	(P_I^2):\ \ \ x=-\tau u-\left(\frac{5}{2}u^3-\frac{5}{32}\big(u_x^2+2uu_{xx}\big)+\frac{1}{256}u_{xxxx}\right);\ \ \ \ v_x=-2u.
\end{equation}
The precise expressions for $A(\z)$ and $B(\z)$ can be found in, say, \cite{C1} but they will not be relevant for us, we only use the existence of a real-valued, pole-free solution of \eqref{PI2:2} for $x,\tau\in\mathbb{R}$, see again \cite{C1}. Equivalently we only use solvability of RHP \ref{PI2RHP} for $x,\tau\in\mathbb{R}$.
\begin{remark} The solution to RHP \ref{PI2RHP} is unique within the family of solutions of the form $\bigl(\begin{smallmatrix} 1 & 0\\ \omega & 1\end{smallmatrix}\bigr)\Phi(\z;x,\tau)$ with $\omega$ independent of $\z$. However, the critical kernel $K_{P_I^2}(\lambda,\mu;x,\tau)$ is invariant under this gauge.
\end{remark}
Using the Lax equation $\frac{\partial\Phi}{\partial x}=A\Phi$ as well as symmetry constraints of RHP \ref{PI2RHP} we have for $x,\tau\in\mathbb{R}$,
\begin{equation*}
	K_{P_I^2}(\lambda,\mu;x,\tau)=\frac{\im}{2\pi}\int_0^{\infty}\Phi_{11}(\lambda;x+t,\tau)\Phi_{11}(\mu;x+t,\tau)\d t=\frac{1}{2\pi}\int_0^{\infty}\Phi_{11}(\lambda;x+t,\tau)\overline{\Phi_{11}(\mu;x+t,\tau)}\,\d t,
\end{equation*}
which should be viewed as the generalization of \eqref{convu}. The last identity implies in particular that $K_{P_I^2}:L^2\big((s,\infty);\d\lambda\big)\circlearrowleft$ is positive-definite, compare Remark \ref{philophase}. For the analysis of the spectrum we would have to solve RHP \ref{masterIIKS} tailored to the choice
\begin{equation*}
	\phi(z)=\Phi_{11}(z;x,\tau),\ \ \ \ \psi(z)=\frac{\im}{2\pi}\Phi_{21}(z;x,\tau),\ \ \ \ J_{P_I^2}=(s,\infty)
\end{equation*}
as $s\rightarrow-\infty,\gamma\uparrow 1$ (in an appropriate Stokes regime) and $x,\tau\in\mathbb{R}$ are kept fixed. The derivation of this solution is in fact very similar\footnote{This can already be seen from the analysis of the gap probability $D(J_{P_I^2};1)$ presented in \cite{CIK}.} to the analysis of $K_{\textnormal{Ai}}$ presented in Sections \ref{Airyp1} and \ref{Airyp2} and it would provide us with an expansion for $D(J_{P_I^2};\gamma)$ of type \eqref{the1res}. From it we would then obtain the desired information on the spectrum $\{\lambda_i(J_{P_I^2})\}_{i\geq 0}$. We shall devote a separate publication to the spectrum analysis associated with non-generic edge behavior \eqref{nongen} for general $k\in\mathbb{Z}_{\geq 1}$.
\begin{remark} The purpose of this small section is to emphasize that the method of commuting differential operators will become very involved, if not impossible, when applied to, say, \eqref{PI2ker}. As can be seen from the Lax system a commuting differential operator would necessarily involve a Painlev\'e transcendent. And besides \eqref{PI2ker} several other kernels in the theory of random determinantal point processes (for instance Pearcey \cite{TWP,BK} or Tacnode \cite{DKZ} kernels) fit naturally into the Riemann-Hilbert based scheme rather than the method of commuting differential operators.
\end{remark}
\subsection{Outline of paper} The manuscript is split into the following two major parts.
\begin{enumerate}
	\item[(i)] First, in Sections \ref{Airyp1} and \ref{Airyp2}, we derive an asymptotic solution to RHP \eqref{masterIIKS} subject to \eqref{Airytailor} and \eqref{scale}. As can be seen in particular from Section \ref{Airyp1} the steps we carry out in that section are closely related to the ones chosen in the analysis of the gap probability $D(J_{\textnormal{Ai}};1)$, see \cite{CIK}, Section $3$. The effect of $\gamma\neq 1$, in contrast to the gap probability, becomes fully visible only in Section \ref{Airyp2}: we have a new $g$-function and we require different model functions than the ones chosen in \cite{CIK}, some differ only by a rank one perturbation, others are entirely new. In particular the new ones encode the discretized effect of the Stokes region and we use classical Hermite polynomials for our construction.\smallskip

Once all local contributions are in place we derive our first small norm estimates, however additional steps are required to resolve a certain singular structure, see Subsection \ref{Airysing} for all details. The necessity of solving RHP \ref{masterIIKS} is related to the existence of differential identities for $\ln D(J_{\textnormal{Ai}};\gamma)$, compare Subsection \ref{diffid}. Using these identities we first obtain an asymptotic expansion for $\frac{\partial}{\partial s}\ln D(J_{\textnormal{Ai}};\gamma)$ and then perform a definite integration using the known expansion for $D(J_{\textnormal{Ai}};1)$. In fact the known expansion for $D(J_{\textnormal{Ai}};1)$ has to be improved to fit our purposes and we carry out the necessary steps in Appendix \ref{Airyapp}. Finally, the information on $\{\lambda_i(J_{\textnormal{Ai}})\}$ is derived in Section \ref{Airyp4} using an inductive argument.
	\item[(ii)] Second, in Sections \ref{Bessp1} and \ref{Bessp2}, we focus on RHP \ref{masterIIKS} with \eqref{BIIKS} and \eqref{Bscale} in the background. To the author's knowledge the Bessel determinant $D(J_{\textnormal{Bess}};\gamma)$ has not been analyzed previously with the help of RHP \ref{masterIIKS}, although the analysis displays a few overall similarities with part (i). In fact we use again rank perturbations of more standard model functions and also here orthogonal polynomials, this time Laguerre polynomials. The asymptotic expansion for $\frac{\partial}{\partial s}\ln D(J_{\textnormal{Bess}};\gamma)$ is derived in Section \ref{Bessinteg} and subsequently integrated using a refinement for $D(J_{\textnormal{Bess}};1)$ derived in Appendix \ref{Bessapp}. The eigenvalue expansions are obtained via the same argument as in part (i), compare Section \ref{Besseigexp}.
\end{enumerate}
\begin{remark} In recent years several model functions in nonlinear steepest descent analysis with discretized parameter values have appeared, we mention in chronological order \cite{C,BL,BT,BM}. These works have all in common that orthogonal polynomials (with the corresponding degrees as the discrete parameter) were used in the relevant constructions.
\end{remark}

\section{Nonlinear steepest descent analysis associated with $K_{\textnormal{Ai}}$ -- part 1}\label{Airyp1}
The Airy kernel \eqref{Aik} is of type \eqref{IIKSker} with
\begin{equation}\label{Airytailor}
	\phi(z)=\textnormal{Ai}(z),\ \ \ \psi(z)=\textnormal{Ai}'(z);\ \ \ \ \ J_{\textnormal{Ai}}=(s,\infty),\ \ s\in\mathbb{R}
\end{equation}
and we shall derive an asymptotic solution of RHP \ref{masterIIKS} for sufficiently large (negative) $s$ and $\gamma$ close to $1$ such that
\begin{equation}\label{scale}
	\varkappa_{_\textnormal{Ai}}\equiv \frac{v}{t}=\frac{2}{3}\sqrt{2}-\chi\frac{\ln t}{t},\ \ \ \ \ v=-\ln(1-\gamma)>0,\ \ \ \ t=(-s)^{\frac{3}{2}};\ \ \ \ \ \chi\in\mathbb{R}_{\geq 0}.
\end{equation}
\subsection{Preliminary transformations}\label{prelim}
Before we display the connection between $Y(z)$ and $D(J_{\textnormal{Ai}};\gamma)$ we first simplify RHP \ref{masterIIKS} in case of \eqref{Airytailor}: Introduce the entire, unimodular function
\begin{equation}\label{e:2}
	\Phi_0(\z)=\sqrt{2\pi}\,\e^{-\im\frac{\pi}{4}\sigma_3}\bigl(\begin{smallmatrix}
	1 & 0\\
	0 & -\im
	\end{smallmatrix}\bigr)\begin{pmatrix}
	\textnormal{Ai}(\z) & \e^{\im\frac{\pi}{3}}\textnormal{Ai}\big(\e^{-\im\frac{2\pi}{3}}\z\big)\smallskip\\
	\textnormal{Ai}'(\z) & \e^{-\im\frac{\pi}{3}}\textnormal{Ai}'\big(\e^{-\im\frac{2\pi}{3}}\z\big)
	\end{pmatrix},\ \ \ \ \ \z\in\mathbb{C}
\end{equation}
and assemble an {\it Airy-type parametrix},
\begin{equation}\label{e:3}
	\Phi(\z)=\Phi_0(\z)\begin{cases}
	I,&\textnormal{arg}\,\z\in(0,\frac{2\pi}{3})\\
	\bigl(\begin{smallmatrix}
	1 & 0\\
	-1 & 1
	\end{smallmatrix}\bigr),&\textnormal{arg}\,\z\in(\frac{2\pi}{3},\pi)\smallskip\\
	\bigl(\begin{smallmatrix}
	1 & -1\\
	0 & 1
	\end{smallmatrix}\bigr),&\textnormal{arg}\,\z\in(-\frac{2\pi}{3},0)\smallskip\\
	\bigl(\begin{smallmatrix}
	1 & -1\\
	0 & 1
	\end{smallmatrix}\bigr)\bigl(\begin{smallmatrix}
	1 & 0\\
	1 & 1
	\end{smallmatrix}\bigr),&\textnormal{arg}\,\z\in(-\pi,-\frac{2\pi}{3}).
	\end{cases}
\end{equation}
\begin{figure}[tbh]
\begin{center}
\resizebox{0.35\textwidth}{!}{\includegraphics{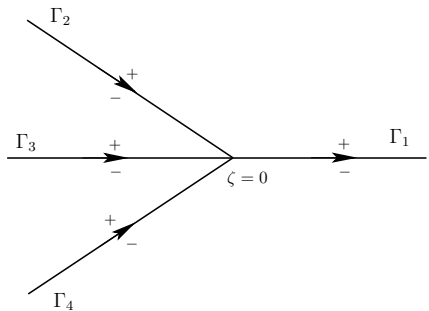}}
%\resizebox{0.3\textwidth}{!}{\input{Airypara.pdf_t}}
\caption{The oriented jump contours for the Airy parametrix $\Phi(\z)$ in the complex $\z$-plane.}
\label{figure1}
\end{center}
\end{figure}
This matrix valued function solves the following standard model problem which, in one form or another, has appeared numerous times in nonlinear steepest descent literature.
\begin{problem}\label{Airypara} The Airy parametrix $\Phi(\z)$ satisfies the properties below
\begin{enumerate}
	\item $\Phi(\z)$ is analytic for $\z\in\mathbb{C}\backslash\bigcup_{j=1}^4\Gamma_j$ with 
	\begin{equation*}
		\Gamma_1=[0,\infty),\ \ \ \Gamma_3=(-\infty,0],\ \ \ \Gamma_2=\e^{-\im\frac{\pi}{3}}(-\infty,0],\ \ \ \Gamma_4=\e^{\im\frac{\pi}{3}}(-\infty,0]
	\end{equation*}
	and all rays are oriented ``from left to right", compare Figure \ref{figure1}.
	\item The limiting values from either side of the jump contours are related via
	\begin{eqnarray*}
		\Phi_+(\z)&=&\Phi_-(\z)\bigl(\begin{smallmatrix}
		0 & 1\\
		-1 & 0
		\end{smallmatrix}\bigr),\ \ \z\in\Gamma_3;\ \ \ \ \ \ \ \ 
		 \Phi_+(\z)=\Phi_-(\z)\bigl(\begin{smallmatrix}
		1 & 1\\
		0 & 1
		\end{smallmatrix}\bigr),\ \ \ \ \z\in\Gamma_1;\\
		\Phi_+(\z)&=&\Phi_-(\z)\bigl(\begin{smallmatrix}
		1 & 0\\
		1 & 1
		\end{smallmatrix}\bigr),\ \ \ \ \z\in\Gamma_2\cup\Gamma_4.
	\end{eqnarray*}
	\item $\Phi(\z)$ is bounded at $\z=0$.
	\item As $\z\rightarrow\infty$, valid in a full neighborhood of infinity off the jump contours,
	\begin{equation*}
		\Phi(\z)=\z^{-\frac{1}{4}\sigma_3}\frac{1}{\sqrt{2}}\begin{pmatrix}
		1 & 1\\
		-1 & 1
		\end{pmatrix}\e^{-\im\frac{\pi}{4}\sigma_3}\left\{I+\frac{1}{48\z^{\frac{3}{2}}}\begin{pmatrix}
		1 & 6\im\\
		6\im & -1
		\end{pmatrix}+\mathcal{O}\left(\z^{-\frac{6}{2}}\right)\right\}\e^{-\frac{2}{3}\z^{\frac{3}{2}}\sigma_3}.
	\end{equation*}
	where $\z^{\alpha}$ is defined and analytic for $\z\in\mathbb{C}\backslash(-\infty,0]$ such that $\z^{\alpha}>0$ for $\z>0$.
\end{enumerate}
\end{problem}

The model function \eqref{e:3} is useful as it allows us to reduce RHP \ref{masterIIKS} to a problem with constant, that is $z$-independent, jumps. This step appeared in \cite{CIK} and amounts to the following transformation: We set for $s<0$, see Figure \ref{figure2},
\begin{equation}\label{center:1}
	X(z)=Y(z)\Phi(z)\begin{cases}
	I,&z\in\Omega_1\cup\Omega_2\cup\Omega_3\\
	\bigl(\begin{smallmatrix}
	1 & 0\\
	1 & 1
	\end{smallmatrix}\bigr),&z\in\Omega_4\smallskip\\
	\bigl(\begin{smallmatrix}
	1 & 0\\
	-1 & 1
	\end{smallmatrix}\bigr),&z\in\Omega_5
	\end{cases}
\end{equation}
and in case $s>0$, see Figure \ref{figure3},
\begin{equation}\label{center:2}
	X(z)=Y(z)\Phi(z)\begin{cases}
	I,&z\in\Omega_1\cup\Omega_2\cup\Omega_3\\
	\bigl(\begin{smallmatrix}
	1 & 0\\
	-1 & 1
	\end{smallmatrix}\bigr),&z\in\Omega_4\smallskip\\
	\bigl(\begin{smallmatrix}
	1 & 0\\
	1 & 1
	\end{smallmatrix}\bigr),&z\in\Omega_5.
	\end{cases}
\end{equation}
\begin{figure}[tbh]
\begin{minipage}{0.4\textwidth} 
\begin{center}
\resizebox{0.9\textwidth}{!}{\includegraphics{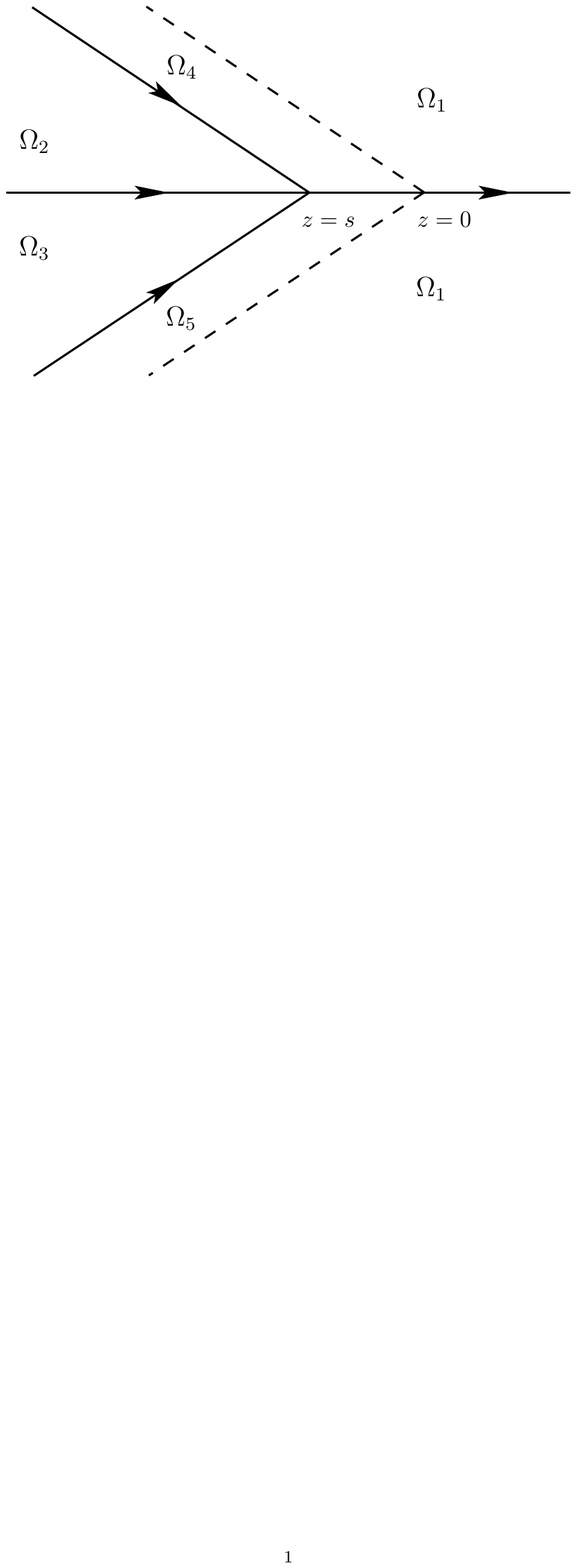}}
%\resizebox{0.85\textwidth}{!}{\input{Center1.pdf_t}}
\caption{``Undressing" of RHP \ref{masterIIKS} in case $s<0$. Jump contours of $X(z)$ as solid lines.}
\label{figure2}
\end{center}
\end{minipage}
\begin{minipage}{0.4\textwidth}
\begin{center}
\resizebox{0.87\textwidth}{!}{\includegraphics{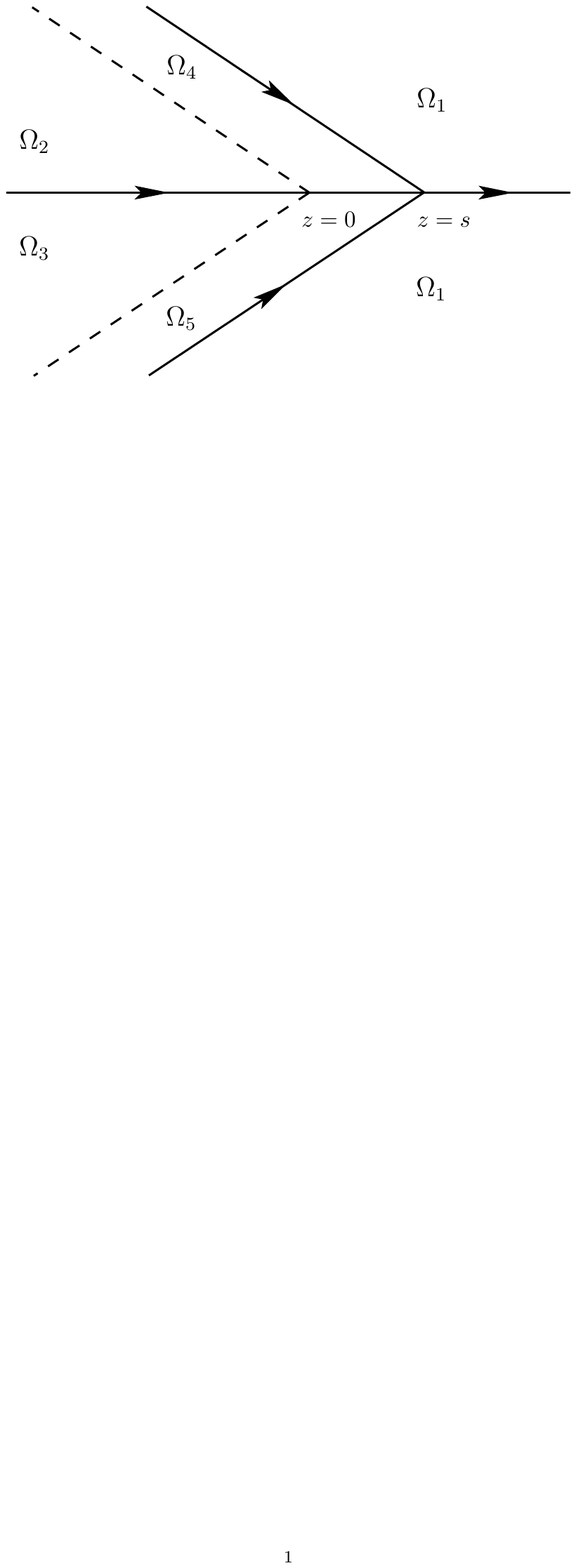}}
%\resizebox{0.85\textwidth}{!}{\input{Center2.pdf_t}}
\caption{``Undressing" of RHP \ref{masterIIKS} in case $s>0$. Jump contours of $X(z)$ as solid lines.}
\label{figure3}
\end{center}
\end{minipage}
\end{figure}

Hence we obtain for $X(z)$ the RHP below.
\begin{problem}\label{unifRHP} Determine $X(z)\in\mathbb{C}^{2\times 2}$ such that
\begin{enumerate}
	\item $X(z)$ is analytic for $z\in\mathbb{C}\backslash\bigcup_{j=1}^4\Gamma_j^{(s)}$ with
	\begin{equation*}
		\Gamma_1^{(s)}=(s,\infty),\ \ \ \Gamma_3^{(s)}=(-\infty,s),\ \ \ \ \Gamma_2^{(s)}=s+\e^{-\im\frac{\pi}{3}}(-\infty,0),\ \ \ \ \Gamma_4^{(s)}=s+\e^{\im\frac{\pi}{3}}(-\infty,0)
	\end{equation*}
	\item The following jump conditions relate the limiting values $X_{\pm}(z)$, the jump contours are shown in Figures \ref{figure2} and \ref{figure3} as solid black lines:
	\begin{equation*}
		X_+(z)=X_-(z)\bigl(\begin{smallmatrix}
		0 & 1\\
		-1 & 0
		\end{smallmatrix}\bigr),\  z\in\Gamma_3^{(s)};\ \ \ \ \ \ \ X_+(z)=X_-(z)\bigl(\begin{smallmatrix}
		1 & 0\\
		1 & 1
		\end{smallmatrix}\bigr),\ z\in\Gamma_2^{(s)}\cup\Gamma_4^{(s)}
	\end{equation*}
	and
	\begin{equation*}
		X_+(z)=X_-(z)\bigl(\begin{smallmatrix}
		1 & 1-\gamma\\
		0 & 1
		\end{smallmatrix}\bigr),\ \ z\in\Gamma_1^{(s)}.
	\end{equation*}
	\item In a full vicinity of $z=s$,
	\begin{equation}\label{Xsing}
		X(z)=\widehat{X}(z)\begin{pmatrix}
		1 & \frac{\gamma}{2\pi\im}\ln(z-s)\\
		0 & 1
		\end{pmatrix}\begin{cases}
		I,&\textnormal{arg}(z-s)\in(0,\frac{2\pi}{3})\\
		\bigl(\begin{smallmatrix}
		1 & 0\\
		-1 & 1
		\end{smallmatrix}\bigr),&\textnormal{arg}(z-s)\in(\frac{2\pi}{3},\pi)\\
		\bigl(\begin{smallmatrix}
		1 & -1\\
		0 & 1
		\end{smallmatrix}\bigr)\bigl(\begin{smallmatrix}
		1 & 0\\
		1 & 1
		\end{smallmatrix}\bigr),&\textnormal{arg}(z-s)\in(\pi,\frac{4\pi}{3})\\
		\bigl(\begin{smallmatrix}
		1 & -1\\
		0 & 1
		\end{smallmatrix}\bigr),&\textnormal{arg}(z-s)\in(\frac{4\pi}{3},2\pi)
		\end{cases}
	\end{equation}
	with $\widehat{X}(z)$ analytic at $z=s$ and we fix the branch of the logarithm with $\textnormal{arg}(z-s)\in(0,2\pi)$.
	\item As $z\rightarrow\infty$,
	\begin{equation*}
		X(z)=z^{-\frac{1}{4}\sigma_3}\frac{1}{\sqrt{2}}\begin{pmatrix}
		1 & 1\\
		-1 & 1\\
		\end{pmatrix}\e^{-\im\frac{\pi}{4}\sigma_3}\left\{I+X_{\infty}z^{-\frac{1}{2}}+\widehat{X}_{\infty}z^{-1}+\mathcal{O}\left(z^{-\frac{3}{2}}\right)\right\}\e^{-\frac{2}{3}z^{\frac{3}{2}}\sigma_3}
	\end{equation*}
	where we choose principal branches for all fractional exponents. The matrices $X_{\infty},\widehat{X}_{\infty}$ are $z$-independent,
\begin{equation*}
  X_{\infty}=\frac{1}{2}\begin{pmatrix}
                                 -1 & \im \\ \im  & 1
                                \end{pmatrix}Y_{\infty}^{12};\ \ \ \ \widehat{X}_{\infty}=\frac{1}{2}\begin{pmatrix}
                                0 & \im\\ -\im & 0
                                \end{pmatrix}\big(Y_{\infty}^{11}-Y_{\infty}^{22}\big).
\end{equation*}
\end{enumerate}
\end{problem}
\subsection{Differential identity}\label{diffid} The connection of $D(J_{\textnormal{Ai}};\gamma)$ to the solution of RHP \ref{unifRHP} has been established for $\gamma=1$ in \cite{CIK}, $(2.17)$. For $\gamma\neq 1$ the derivation in loc. cit. can be copied almost verbatim and we simply state the following result.%, more precisely the logarithmic derivative
\begin{prop} For fixed $\gamma\leq 1$, we have
\begin{equation}\label{diff:1}
	\frac{\partial}{\partial s}\ln D(J_{\textnormal{Ai}};\gamma)=\frac{\gamma}{2\pi\im}\big(X^{-1}(z)X'(z)\big)_{21}\Big|_{z\rightarrow s},\ \ \ \ (')=\frac{\d}{\d z}
\end{equation}
in terms of the solution $X(z)$ of RHP \ref{unifRHP} and the limit is carried out for $\textnormal{arg}(z-s)\in(0,\frac{2\pi}{3})$.
\end{prop}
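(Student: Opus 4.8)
The plan is to combine the classical endpoint differential identity for Fredholm determinants with the integrable--operator formalism of \cite{IIKS}, and then to push everything through the substitution $X=Y\Phi$ of Subsection~\ref{prelim}. First I would record the starting point: writing $\gamma K_{\textnormal{Ai}}$ as the compression $\gamma\,\chi_{(s,\infty)}K_{\textnormal{Ai}}\chi_{(s,\infty)}$ of an operator on $L^2(\mathbb{R})$ and differentiating $\ln\det$ in the finite left endpoint $s$, one gets the standard identity $\frac{\partial}{\partial s}\ln D(J_{\textnormal{Ai}};\gamma)=\rho_{\gamma}(s,s)$, where $\rho_{\gamma}:=\gamma K_{\textnormal{Ai}}(I-\gamma K_{\textnormal{Ai}})^{-1}|_{L^2(s,\infty)}$ and $\rho_{\gamma}(s,s)$ is the diagonal of the resolvent kernel, taken as the limit $\lambda\to s$ from the interior of $J_{\textnormal{Ai}}$. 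The one subtle point here is that this boundary value exists and is finite; this is not automatic, but it becomes transparent once $\rho_{\gamma}$ is rewritten through the matrix $\widehat{X}$ of \eqref{Xsing}, which is analytic at $z=s$.

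Second I would feed in the integrable structure. With $\mathbf{f}=(\phi,\psi)^{T}$ and $\mathbf{h}=(\psi,-\phi)^{T}$ as in \eqref{Airytailor} one checks $\mathbf{f}^{T}\mathbf{h}\equiv 0$ and that the master jump of RHP~\ref{masterIIKS} equals $I-2\pi\im\gamma\,\mathbf{f}\mathbf{h}^{T}$, so the formulas of \cite{IIKS} apply verbatim: the resolvent is again integrable with kernel $\rho_{\gamma}(\lambda,\mu)=\gamma\,(Y_{+}\mathbf{f})^{T}(\lambda)\,(Y_{+}^{-T}\mathbf{h})(\mu)/(\lambda-\mu)$, and since $\mathbf{f}^{T}\mathbf{h}\equiv0$ forces $(Y_{+}\mathbf{f})^{T}(Y_{+}^{-T}\mathbf{h})\equiv 0$, l'H\^opital gives $\rho_{\gamma}(\lambda,\lambda)=\gamma\,(Y_{+}\mathbf{f})'(\lambda)^{T}(Y_{+}^{-T}\mathbf{h})(\lambda)$.

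Third I would transfer this to $X$. In the sector $\textnormal{arg}(z-s)\in(0,\frac{2\pi}{3})$ one has $X=Y\Phi=Y\Phi_{0}$, and a short computation shows that the constant prefactor in \eqref{e:2} collapses to a scalar, $\sqrt{2\pi}\,\e^{-\im\frac{\pi}{4}\sigma_{3}}\bigl(\begin{smallmatrix}1&0\\0&-\im\end{smallmatrix}\bigr)=\sqrt{2\pi}\,\e^{-\im\pi/4}I$, whence $\Phi(z)e_{1}=\sqrt{2\pi}\,\e^{-\im\pi/4}\mathbf{f}(z)$, i.e. $Y\mathbf{f}=\frac{1}{\sqrt{2\pi}}\e^{\im\pi/4}Xe_{1}$. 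Likewise $\mathbf{h}=\bigl(\begin{smallmatrix}0&1\\-1&0\end{smallmatrix}\bigr)\mathbf{f}$, and $\det\Phi\equiv1$ together with the $2\times 2$ identity $A^{T}\bigl(\begin{smallmatrix}0&1\\-1&0\end{smallmatrix}\bigr)A=(\det A)\bigl(\begin{smallmatrix}0&1\\-1&0\end{smallmatrix}\bigr)$ gives $Y^{-T}\mathbf{h}=-\frac{1}{\sqrt{2\pi}}\e^{\im\pi/4}X^{-T}e_{2}$. Substituting, using $(X')^{T}X^{-T}=(X^{-1}X')^{T}$, and noting that the constants combine to $-\im/(2\pi)=1/(2\pi\im)$, one arrives at $\rho_{\gamma}(\lambda,\lambda)=\frac{\gamma}{2\pi\im}(X^{-1}(\lambda)X'(\lambda))_{21}$.

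To close, the local form \eqref{Xsing} shows that in this sector the unipotent factor there fixes $e_{1}$ and $e_{2}$, so $Xe_{1}=\widehat{X}e_{1}$ and $X^{-T}e_{2}=\widehat{X}^{-T}e_{2}$; since $\widehat{X}$ is analytic at $z=s$, the vectors $Y_{+}\mathbf{f}$ and $Y_{+}^{-T}\mathbf{h}$ extend analytically there, the $\lambda\to s$ limit exists, the $\ln(z-s)$ of \eqref{Xsing} cancels out of the $(2,1)$-entry, and \eqref{diff:1} follows. The hard part is not this algebra but the input in the first step — that $\partial_{s}\ln D(J_{\textnormal{Ai}};\gamma)$ really is the boundary value $\rho_{\gamma}(s,s)$ and that this value is finite, i.e. controlling the variation of the Fredholm determinant at a corner of the jump contour. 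For $\gamma=1$ this is precisely \cite{CIK}, $(2.17)$, and the only effect of $\gamma\neq1$ is the scalar $\gamma$ in the master jump $I-2\pi\im\gamma\,\mathbf{f}\mathbf{h}^{T}$ (equivalently, the $\gamma$ already visible in the logarithm of \eqref{Xsing}), so that derivation transfers line by line with the factor $\gamma$ carried along, which is why the statement can be recorded without a separate proof.
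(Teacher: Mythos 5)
Your argument is correct and is essentially the (referenced) derivation: endpoint variation of the Fredholm determinant gives the diagonal resolvent value, the IIKS formula $\rho_\gamma=\gamma\,\mathbf{F}^T\mathbf{H}/(\lambda-\mu)$ with $\mathbf{F}=Y_\pm\mathbf{f}$, $\mathbf{H}=Y_\pm^{-T}\mathbf{h}$ (both sides agree since $\mathbf{f}^T\mathbf{h}\equiv0$), l'H\^opital on the diagonal, and the column/row identifications $\Phi_0 e_1=\sqrt{2\pi}\,\e^{-\im\pi/4}\mathbf{f}$, $J\mathbf{f}=\mathbf{h}$, $J X e_1=-X^{-T}e_2$ (from $\det X=1$), which combine to yield the $\frac{\gamma}{2\pi\im}(X^{-1}X')_{21}$ formula, with the $\ln(z-s)$ of \eqref{Xsing} dropping out of that $(2,1)$-entry exactly as you observe. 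The one step you state rather than check is ``$X=Y\Phi=Y\Phi_0$'' near $z=s$ in the sector $\textnormal{arg}(z-s)\in(0,\tfrac{2\pi}{3})$: for $s<0$ this region lies in $\textnormal{arg}\,z\in(\tfrac{2\pi}{3},\pi)$ and in $\Omega_4$, so the lower-triangular factor from \eqref{e:3} and the factor $\bigl(\begin{smallmatrix}1&0\\1&1\end{smallmatrix}\bigr)$ from \eqref{center:1} cancel, and the claim does hold; it is worth saying a word about this since a leftover lower-triangular unipotent would change $(X^{-1}X')_{21}$.
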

%\begin{proof} Note that with \eqref{IIKS:2} and \eqref{center:1}, \eqref{center:2},
%\begin{equation*}
%	F(z)=\binom{X_{11}(z)}{X_{21}(z)},\ \ \ H(z)=\frac{\gamma}{2\pi\im}\binom{-X_{21}(z)}{\ \ \,X_{11}(z)}
%\end{equation*}
%where $X_{11}(z)$ and $X_{21}(z)$ are analytically continued from $\textnormal{arg}(z-s)\in(0,\frac{2\pi}{3})$ to the full complex plane. The stated identity follows now from
%\begin{equation*}
%	R(s,s)=\big(F^T\big)'(z)H(z)\Big|_{z\rightarrow s}
%\end{equation*}
%where the limit is taken in the same sector in the upper half-plane.
%\end{proof}
\begin{remark} The requirement $\gamma\leq 1$ is imposed for technical purposes only, it is in this case that RHP \ref{unifRHP} is solvable for sufficiently large $v$ and $t$. For $\gamma>1$ the problem is only solvable for large $(v,t)$ away from a discrete set in the $(v,t)$-plane.
\end{remark}
\begin{remark} Another differential identity can be derived for $\frac{\partial}{\partial\gamma}\ln D(J_{\textnormal{Ai}};\gamma)$, see Section \ref{gderiv} for further details.
\end{remark}
On the upcoming pages we will derive an asymptotic solution of RHP \ref{unifRHP} and then compute the asymptotics of $\frac{\partial}{\partial s}\ln D(J_{\textnormal{Ai}};\gamma)$ through \eqref{diff:1}.

\section{Nonlinear steepest descent analysis associated with $K_{\textnormal{Ai}}$ -- part 2}\label{Airyp2}
\subsection{Initial transformation} 
From now on we assume that $s<0$ is sufficiently large negative. Define
\begin{equation*}
	T(z)=X(|s|z+s),\ \ z\in\mathbb{C}\backslash(\Sigma_T\cup\{0\}).
\end{equation*}
which ``centers" the problem at the origin $z=0$ so that we have jumps on the contour
\begin{equation*}
	\Sigma_T=\bigcup_{j=1}^4\Gamma_j
\end{equation*}
shown in Figure \ref{figure1}. More precisely we obtain
\begin{problem}\label{centerRHP} Determine a function $T(z)=T(z;s,\gamma)\in\mathbb{C}^{2\times 2}$ which is uniquely characterized by the following properties:
\begin{enumerate}
	\item $T(z)$ is analytic for $z\in\mathbb{C}\backslash(\Sigma_T\cup\{0\})$
	\item We have the jump conditions
\begin{equation*}
	T_+(z)=T_-(z)\bigl(\begin{smallmatrix}
	1 & 1-\gamma\\
	0 & 1
	\end{smallmatrix}\bigr),\ z\in\Gamma_1\backslash\{0\};\hspace{0.8cm} T_+(z)=T_-(z)\bigl(\begin{smallmatrix}
	0 & 1\\
	-1 & 0
	\end{smallmatrix}\bigr),\ z\in\Gamma_3\backslash\{0\}
\end{equation*}
and
\begin{equation*}
	T_+(z)=T_-(z)\bigl(\begin{smallmatrix}
	1 & 0\\
	1 & 1
	\end{smallmatrix}\bigr),\ \ \ z\in(\Gamma_2\cup\Gamma_4)\backslash\{0\}.
\end{equation*}
	\item In a neighborhood of $z=0$,
	\begin{equation*}
		T(z)=\widehat{T}(z)\begin{pmatrix}
		1 & \frac{\gamma}{2\pi\im}\ln z\\
		0 & 1
		\end{pmatrix}\begin{cases}
		I,&\textnormal{arg}\,z\in(0,\frac{2\pi}{3})\\
		\bigl(\begin{smallmatrix}
		1 & 0\\
		-1 & 1
		\end{smallmatrix}\bigr),&\textnormal{arg}\,z\in(\frac{2\pi}{3},\pi)\\
		\bigl(\begin{smallmatrix}
		1 & -1\\
		0 & 1
		\end{smallmatrix}\bigr)\bigl(\begin{smallmatrix}
		1 & 0\\
		1 & 1
		\end{smallmatrix}\bigr),&\textnormal{arg}\,z\in(\pi,\frac{4\pi}{3})\\
		\bigl(\begin{smallmatrix}
		1 & -1\\
		0 & 1
		\end{smallmatrix}\bigr),&\textnormal{arg}\,z\in(\frac{4\pi}{3},2\pi)\\
		\end{cases}
	\end{equation*}
	where $\widehat{T}(z)$ is analytic at $z=0$ and the branch of the logarithm is specified by the requirement $\textnormal{arg}\,z\in(0,2\pi)$.
	\item The normalization at $z=\infty$ reads as
\begin{equation*}
	T(z)=\big(|s|z\big)^{-\frac{1}{4}\sigma_3}\frac{1}{\sqrt{2}}\begin{pmatrix}
	1 & 1\\
	-1 & 1
	\end{pmatrix}\e^{-\im\frac{\pi}{4}\sigma_3}\left\{I+X_{\infty}(|s|z)^{-\frac{1}{2}}+\mathcal{O}\left(z^{-1}\right)\right\}\e^{-\frac{2}{3}(|s|z+s)^{\frac{3}{2}}\sigma_3}.
\end{equation*}
\end{enumerate}
\end{problem}
\begin{remark} Compared to \cite{CIK}, the RHPs \ref{unifRHP}, resp. \ref{centerRHP} also have a jump on $(s,\infty)$, resp. $(0,\infty)$ since we are interested in the analysis of $D(J_{\textnormal{Ai}};\gamma)$ with $\gamma\neq 1$ in general. This will have a crucial impact on the steps below.
\end{remark}
\subsection{Normalization transformation} We define for $z\in\mathbb{C}\backslash\mathbb{R}$,
\begin{equation}\label{g:1}
	g(z)=\frac{2}{3}z^{\frac{1}{2}}\left(z-\frac{3}{2}\right)+V\ln\left(\frac{1+(2z)^{\frac{1}{2}}}{1-(2z)^{\frac{1}{2}}}\right),\ \ \ \ \ \ V=\frac{\chi}{t},%-\frac{\sigma}{\ln\sigma},\ \ \sigma=\frac{2}{3}\sqrt{2}-\varkappa\stackrel{\eqref{scale}}{=}\chi\frac{\ln t}{t},%=\frac{2}{3}z^{\frac{1}{2}}\left(z-\frac{3}{2}\right)+\frac{\sigma}{|\ln\sigma|}z^{\frac{1}{2}}\int_{\frac{1}{2}}^{\infty}\frac{\d w}{\sqrt{w}(w-z)}
\end{equation}
with principal branches for all fractional exponents and logarithms. In particular we choose
\begin{equation*}
	-\pi<\textnormal{arg}\left(\frac{1+(2z)^{\frac{1}{2}}}{1-(2z)^{\frac{1}{2}}}\right)\leq\pi.
\end{equation*}
Further steps require the following analytical properties of the $g$-function:
\begin{prop}\label{gimp} The function $g(z)$ introduced in \eqref{g:1} is analytic for $z\in\mathbb{C}\backslash((-\infty,0)\cup(\frac{1}{2},+\infty))$. In more detail along the real axis with orientation as shown in Figure \ref{figure1},
\begin{equation*}
	g_{\pm}(z)\equiv\lim_{\varepsilon\downarrow 0}g(z\pm\im\varepsilon)=\pm\frac{2}{3}\im\sqrt{|z|}\left(z-\frac{3}{2}\right)+\im V\textnormal{arg}\left(\frac{1\pm\im\sqrt{2|z|}}{1\mp\im\sqrt{2|z|}}\right),\ \ z\in(-\infty,0);
\end{equation*}
as well as
\begin{equation*}
	g_{\pm}(z)=\frac{2}{3}\sqrt{z}\left(z-\frac{3}{2}\right)+V\ln\left(\frac{1+\sqrt{2z}}{1-\sqrt{2z}}\right),\ \ z\in\left(0,\frac{1}{2}\right);
\end{equation*}
and
\begin{equation*}
	g_{\pm}(z)=\frac{2}{3}\sqrt{z}\left(z-\frac{3}{2}\right)+V\ln\left(\frac{\sqrt{2z}+1}{\sqrt{2z}-1}\right)\pm\im\pi V,\ \ z\in\left(\frac{1}{2},+\infty\right).
\end{equation*}
%Also, near $z=\frac{1}{2}$,
%\begin{equation*}
%	g_+(z)+g_-(z)=\begin{cases} -\frac{2}{3}\sqrt{2}+2V\ln 2-2V\ln(\frac{1}{2}-z)-2V(\frac{1}{2}-z)+\mathcal{O}((\frac{1}{2}-z)^2),&z<\frac{1}{2}\\ 
%	-\frac{2}{3}\sqrt{2}+2V\ln 2-2V\ln(z-\frac{1}{2})+2V(z-\frac{1}{2})+\mathcal{O}\left((z-\frac{1}{2})^2\right),& z>\frac{1}{2}
%	\end{cases}
%\end{equation*}
Also, near $z=\infty$,
\begin{equation*}
	-\frac{2}{3}(z-1)^{\frac{3}{2}}+g(z)=\im\pi V+\left(V\sqrt{2}-\frac{1}{4}\right)z^{-\frac{1}{2}}+\mathcal{O}\left(z^{-1}\right),\ \ z\rightarrow\infty,\ z\notin\mathbb{R}.
\end{equation*}
\end{prop}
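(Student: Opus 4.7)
The plan is to split $g(z) = f_1(z) + V f_2(z)$ with $f_1(z) = \frac{2}{3}z^{1/2}(z - 3/2)$ and $f_2(z) = \ln\big[(1 + (2z)^{1/2})/(1 - (2z)^{1/2})\big]$, then verify each of the three claims by elementary branch bookkeeping.

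For analyticity, $f_1$ inherits the cut $(-\infty, 0]$ from the principal $z^{1/2}$. For $f_2$ the inner function $(2z)^{1/2}$ carries the same cut, while the M\"obius factor $w \mapsto (1+w)/(1-w)$ sends the open right half $w$-plane bijectively onto the exterior of the unit disk (its preimage of $(-\infty, 0]$ being exactly $(1, +\infty]$); composing with the principal logarithm therefore introduces a further cut precisely where $(2z)^{1/2} \in (1, +\infty)$, i.e.\ on $z \in (1/2, +\infty)$. The union is the claimed cut locus $(-\infty, 0) \cup (1/2, +\infty)$.

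For the boundary-value formulas I would substitute $z^{1/2}|_\pm = \pm \im \sqrt{|z|}$ and $(2z)^{1/2}|_\pm = \pm \im \sqrt{2|z|}$ into $g$. On $(-\infty, 0)$ the upper and lower M\"obius ratios $(1 \pm \im\sqrt{2|z|})/(1 \mp \im\sqrt{2|z|})$ are mutual conjugates of unit modulus, so their principal logarithms reduce to $\im \arg$, giving the first formula. On $(0, 1/2)$ every ingredient is real and positive, which simultaneously confirms no-jump analyticity across this subinterval and the stated real formula. On $(1/2, +\infty)$ the ratio is a negative real with modulus $(\sqrt{2z}+1)/(\sqrt{2z}-1)$, and the principal logarithm taken from the $\pm$-side supplies the extra $\pm\im\pi V$ jump.

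For the asymptotic at $z = \infty$ I would combine two Taylor series. The polynomial-algebraic part uses $(z-1)^{3/2} = z^{3/2} - \frac{3}{2}z^{1/2} + \frac{3}{8}z^{-1/2} + \mathcal{O}(z^{-3/2})$, from which $-\frac{2}{3}(z-1)^{3/2} + \frac{2}{3}z^{1/2}(z - 3/2) = -\frac{1}{4}z^{-1/2} + \mathcal{O}(z^{-3/2})$. For the logarithm I would invoke the identity $(1+w)/(1-w) = -(1 + 1/w)/(1 - 1/w)$ with $w = (2z)^{1/2}$, which yields $\ln\big[(1+w)/(1-w)\big] = \im\pi + 2/w + \mathcal{O}(w^{-3}) = \im\pi + \sqrt{2}\,z^{-1/2} + \mathcal{O}(z^{-3/2})$ in the upper half plane (where the M\"obius image of a first-quadrant $w$ sits just above $-1$). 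Multiplying by $V$ and summing with the algebraic part produces the claim. The main (mild) obstacle is the consistent tracking of log branches, so that the constant $\im\pi V$ matches both the $\pm\im\pi V$ discontinuity on $(1/2, +\infty)$ and the half-plane convention used when $z \to \infty$ non-tangentially, as required later by the nonlinear steepest descent analysis.
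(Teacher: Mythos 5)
The paper states this proposition without proof, so your branch-by-branch computation is the argument one must supply, and it is essentially right: the split $g = f_1 + V f_2$, the M\"obius-map description of where the principal logarithm's cut falls (the preimage in $\{\Re w>0\}$ of $(-\infty,0]$ under $w\mapsto(1+w)/(1-w)$ being $(1,\infty)$, hence $z\in(\frac{1}{2},\infty)$), the boundary substitutions $z^{1/2}\big|_\pm = \pm\im\sqrt{|z|}$ and $(2z)^{1/2}\big|_\pm = \pm\im\sqrt{2|z|}$ on $(-\infty,0)$, the reality/no-jump argument on $(0,\frac{1}{2})$, the $\pm\im\pi V$ contribution on $(\frac{1}{2},\infty)$, and the large-$z$ cancellations producing $-\frac{1}{4}z^{-1/2}$ and $V\sqrt{2}\,z^{-1/2}$ all check out.

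The one place where you hedge rather than compute is the closing sentence, and that branch tracking is not a formality. Repeat the final expansion for $z\to\infty$ in the lower half-plane: there $w=(2z)^{1/2}$ lies in the fourth quadrant, $1/w$ in the first quadrant, and $-(1+1/w)/(1-1/w)$ approaches $-1$ through the third quadrant, so the principal logarithm contributes $-\im\pi$. Hence
\begin{equation*}
-\frac{2}{3}(z-1)^{\frac{3}{2}}+g(z)=-\im\pi V+\left(V\sqrt{2}-\frac{1}{4}\right)z^{-\frac{1}{2}}+\mathcal{O}\left(z^{-1}\right),\qquad z\to\infty,\ z\in\mathbb{H}^-,
\end{equation*}
which is also forced by the Schwarz symmetry $g(\bar z)=\overline{g(z)}$ (the principal branches and all coefficients are real). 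So the constant equals $+\im\pi V$ only in $\mathbb{H}^+$; the proposition as written, which claims it uniformly for $z\notin\mathbb{R}$, should really carry the sign $\pm\im\pi V$ for $z\in\mathbb{H}^\pm$. Your proposal implicitly works in $\mathbb{H}^+$, which is fine, but you should say so and obtain $\mathbb{H}^-$ by reflection; otherwise the argument does not cover the full range $z\notin\mathbb{R}$ that the statement asserts, and the downstream asymptotic normalisation for $S(z)$ in RHP~\ref{gRHP} likewise acquires a half-plane dependence absorbed only by the jump across $\Gamma_3$.
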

%\begin{remark}\label{lagrange} The function
%\begin{equation*}
%	\ell(z)=\im V\,\textnormal{arg}\left(\frac{1+(2z)^{\frac{1}{2}}}{1-(2z)^{\frac{1}{2}}}\right),\ \ z\in\mathbb{C}\backslash(-\infty,0]
%\end{equation*}
%although bounded, is not constant in the spectral variable, in particular
%\begin{equation*}
%	\ell_+(z)+\ell_-(z)=0,\ \ z<0;\hspace{0.5cm}\ell_+(z)-\ell_-(z)=0,\ \ z\in\left(0,\frac{1}{2}\right);\hspace{0.5cm}\ell_+(z)-\ell_-(z)=2\pi\im V,\ \ z>\frac{1}{2};
%\end{equation*}
%and
%\begin{equation*}
%	\ell(z)=\frac{\im\sigma}{|\ln\sigma|}\left(\pi\textnormal{sgn}(\Im z)-\frac{\im}{(2z)^{\frac{1}{2}}}+\mathcal{O}\left(z^{-1}\right)\right),\ \ z\rightarrow\infty,z\notin\mathbb{R}\ \ \ \textcolor{red}{(\textnormal{double check}!)}
%\end{equation*}
%\end{remark}
At this point we introduce
\begin{equation*}
	S(z)=\e^{-\im\pi tV\sigma_3}T(z)\e^{tg(z)\sigma_3},\ \ \ z\in\mathbb{C}\backslash(\Sigma_T\cup\{0\}),%\hspace{0.7cm} \ell(\infty)=\frac{\im\pi\sigma}{|\ln\sigma|}\,\textnormal{sgn}(\Im z)
\end{equation*}
which leads us to the problem below.
\begin{problem} \label{gRHP}The normalized function $S(z)=S(z;s,\gamma)\in\mathbb{C}^{2\times 2}$ is characterized by the following properties
\begin{enumerate}
	\item $S(z)$ is analytic for $z\in\mathbb{C}\backslash(\Sigma_T\cup\{0\})$.
	\item The limiting values $S_{\pm}(z),z\in\Sigma_T$ are related by the equations
	\begin{equation*}
		S_+(z)=S_-(z)\begin{pmatrix} 1 & 0\\ \e^{2tg(z)} & 1 \end{pmatrix},\ \ z\in(\Gamma_2\cup\Gamma_4)\backslash\{0\};\hspace{0.75cm}S_+(z)=S_-(z)\begin{pmatrix} 0 & 1\\
		-1 & 0 \end{pmatrix},\ \ z\in\Gamma_3\backslash\{0\};
	\end{equation*}
	and
	\begin{eqnarray*}
		S_+(z)&=&S_-(z)\begin{pmatrix} 1 & \e^{-t(\varkappa_{\textnormal{Ai}}+2g(z))}\\ 0 & 1 \end{pmatrix},\ \ z\in\left(0,\frac{1}{2}\right);\\
		S_+(z)&=&S_-(z)\begin{pmatrix} \e^{t(g_+(z)-g_-(z))} & \e^{-t(\varkappa_{\textnormal{Ai}}+g_+(z)+g_-(z))}\\ 0 & \e^{-t(g_+(z)-g_-(z))} \end{pmatrix},\ \ z\in\left(\frac{1}{2},+\infty\right).
	\end{eqnarray*}
	\item Near $z=0$, with $\textnormal{arg}\,z\in(0,2\pi)$,
	\begin{equation*}
		\e^{\im\pi tV\sigma_3}S(z)\e^{-tg(z)\sigma_3} = \widehat{T}(z)\begin{pmatrix}
		1 & \frac{\gamma}{2\pi\im}\ln z\\
		0 & 1
		\end{pmatrix}\begin{cases}
		I,&\textnormal{arg}\,z\in(0,\frac{2\pi}{3})\\
		\bigl(\begin{smallmatrix}
		1 & 0\\
		-1 & 1
		\end{smallmatrix}\bigr),&\textnormal{arg}\,z\in(\frac{2\pi}{3},\pi)\\
		\bigl(\begin{smallmatrix}
		1 & -1\\
		0 & 1
		\end{smallmatrix}\bigr)\bigl(\begin{smallmatrix}
		1 & 0\\
		1 & 1
		\end{smallmatrix}\bigr),&\textnormal{arg}\,z\in(\pi,\frac{4\pi}{3})\\
		\bigl(\begin{smallmatrix}
		1 & -1\\
		0 & 1
		\end{smallmatrix}\bigr),&\textnormal{arg}\,z\in(\frac{4\pi}{3},2\pi).\\
		\end{cases}
	\end{equation*}
	\item We have the normalized behavior at $z=\infty\notin\Sigma_T$, %\textcolor{red}{(work out details for $S_{\infty}$)}
	\begin{equation*}
		S(z)=(|s|z)^{-\frac{1}{4}\sigma_3}\e^{-\im\pi tV\sigma_3}\frac{1}{\sqrt{2}}\begin{pmatrix} 1 & 1\\ -1 & 1 \end{pmatrix}\e^{-\im\frac{\pi}{4}\sigma_3}\e^{\im\pi tV\sigma_3}\left\{I+S_{\infty}z^{-\frac{1}{2}}+\mathcal{O}\left(z^{-1}\right)\right\},%\ \ S_{\infty}=\textcolor{red}{\ldots}
	\end{equation*}
	with
	\begin{equation*}
		S_{\infty}=\e^{-\im\pi tV\sigma_3}X_{\infty}\e^{\im\pi tV\sigma_3}|s|^{-\frac{1}{2}}+t\left(V\sqrt{2}-\frac{1}{4}\right)\sigma_3.
	\end{equation*}
\end{enumerate}
\end{problem}
We make a few important observations: With $0<r<\frac{1}{8}$ fixed,
\begin{equation}\label{crux:1}
	\Re\big(g(z)\big)<0,\ \ \ z\in\big(\Gamma_2\cup\Gamma_4\big)\backslash D(0,r);\ \ \ \ \ D(z_0,r)=\big\{z\in\mathbb{C}:\ |z-z_0|<r\big\};
\end{equation}
next for $z\in(0,\frac{1}{2})\backslash(D(0,r)\cup D(\frac{1}{2},r))$ and sufficiently large $t\geq t_0,v\geq v_0$ (see \eqref{scale} for the definition of $\varkappa_{_\textnormal{Ai}}$),
\begin{equation}\label{crux:2}
	\varkappa_{_\textnormal{Ai}}+2g(z)=\varkappa_{_\textnormal{Ai}}+\frac{4}{3}\sqrt{z}\left(z-\frac{3}{2}\right)+2V\ln\left(\frac{1+\sqrt{2z}}{1-\sqrt{2z}}\right)\geq \delta>0.
\end{equation}
Finally,
\begin{eqnarray}
	g_+(z)-g_-(z)&=&2\pi\im V\equiv\textnormal{const.}\in\im\mathbb{R},\ \ \ \ z\in\left(\frac{1}{2},+\infty\right)\label{crux:3};\\
	\varkappa_{_\textnormal{Ai}}+g_+(z)+g_-(z)&=&\varkappa_{_\textnormal{Ai}}+\frac{4}{3}\sqrt{z}\left(z-\frac{3}{2}\right)+2V\ln\left(\frac{\sqrt{2z}+1}{\sqrt{2z}-1}\right)\geq\delta>0,\ \ z\in\left(\frac{1}{2},+\infty\right)\Big\backslash D\left(\frac{1}{2},r\right).\nonumber
\end{eqnarray}
These estimates lead us to the expectation that the major contribution to the asymptotic solution of the $S$-RHP arises from the line segments $(-\infty,0)\cup(\frac{1}{2},+\infty)$ as well as two small vicinities of $z=0$ and $z=\frac{1}{2}$. 
\begin{remark}\label{approx} Observe that through \eqref{g:1}, 
\begin{equation*}
	tV=\chi,%\left(1+\frac{\ln\ln t}{\ln t}+\mathcal{O}\left(\frac{1}{t}\right)\right),\ \ t\rightarrow+\infty,
\end{equation*}
and if we represent
\begin{equation*}
	\chi=k+\alpha;\hspace{0.7cm} k\in\mathbb{Z}_{\geq 0},\ \ -\frac{1}{2}\leq\alpha<\frac{1}{2},
\end{equation*}
thus
\begin{equation*}
	\e^{2\pi\im tV}=\e^{2\pi\im\alpha}.%\phi(t),\ \ \ |\phi(t)|=1,\ \ \ \phi(t)=1+\mathcal{O}\left(\frac{\ln \ln t}{\ln t}\right),\ \ t\rightarrow+\infty.
\end{equation*}
\end{remark}
We now continue with the relevant local analysis.
\subsection{Analysis of model Riemann-Hilbert problems} The {\it outer model function}, 
\begin{equation}\label{out}
	P^{(\infty)}(z)=(|s|z)^{-\frac{1}{4}\sigma_3}\e^{-\im\pi\alpha\sigma_3}\frac{1}{\sqrt{2}}\begin{pmatrix} 1 & 1\\ -1 & 1 \end{pmatrix}\e^{-\im\frac{\pi}{4}\sigma_3}\big(\mathcal{D}(z)\big)^{\sigma_3},\ \ \ z\in\mathbb{C}\Big\backslash\left(\big(-\infty,0\big]\cup\left[\frac{1}{2},+\infty\right)\right),
\end{equation}
with the scalar Szeg\H{o} function
\begin{equation*}
	\mathcal{D}(z)=\left(\frac{1+(2z)^{\frac{1}{2}}}{1-(2z)^{\frac{1}{2}}}\right)^{\alpha},\ \ z\in\mathbb{C}\backslash\mathbb{R},\hspace{1cm} \mathcal{D}(z)=\e^{\im\pi\alpha}\left(1+\frac{2\alpha}{(2z)^{\frac{1}{2}}}+\mathcal{O}\left(z^{-1}\right)\right),\ z\rightarrow\infty;
\end{equation*}
satisfies the properties listed below.
\begin{problem}\label{outer} The parametrix $P^{(\infty)}(z)$ has the following analytical properties
\begin{enumerate}
	\item $P^{(\infty)}(z)$ is analytic for $z\in\mathbb{C}\backslash((-\infty,0]\cup[\frac{1}{2},+\infty))$ with orientation of the real axis as indicated in Figure \ref{figure1}.
	\item The function $P^{(\infty)}(z)$ assumes square integrable limiting values on $(-\infty,0]\cup[\frac{1}{2},+\infty)$ which are related by the jump conditions
	\begin{eqnarray*}
		P_+^{(\infty)}(z)&=&P_-^{(\infty)}(z)\bigl(\begin{smallmatrix} 0 & 1\\ -1 & 0 \end{smallmatrix}\bigr),\ \ z\in(-\infty,0)\\
		P_+^{(\infty)}(z)&=&P_-^{(\infty)}(z)\bigl(\begin{smallmatrix} \e^{2\pi\im\alpha} & 0 \\ 0 & \e^{-2\pi\im\alpha} \end{smallmatrix}\bigr),\ \ z\in\left(\frac{1}{2},+\infty\right).%;\hspace{0.5cm}%V=\frac{\sigma}{|\ln\sigma|}.
	\end{eqnarray*}
	\item As $z\rightarrow\infty,z\notin\mathbb{R}$, 
	\begin{equation*}
		P^{(\infty)}(z)=(|s|z)^{-\frac{1}{4}\sigma_3}\e^{-\im\pi\alpha\sigma_3}\frac{1}{\sqrt{2}}\begin{pmatrix} 1 & 1\\ -1 & 1 \end{pmatrix}\e^{-\im\frac{\pi}{4}\sigma_3}\e^{\im\pi\alpha\sigma_3}\Big\{I+\mathcal{O}\left(z^{-\frac{1}{2}}\right)\Big\}.
	\end{equation*}
\end{enumerate}
\end{problem}
The local parametrix near $z=0$ differs from the one used in \cite{CIK}, Section $3.4$, by a rank one perturbation, compare \eqref{e:7} below. In more detail we first define
\begin{equation}\label{bessbare}
	J(\z)=\e^{-\im\frac{\pi}{4}\sigma_3}\pi^{\frac{1}{2}\sigma_3}\begin{pmatrix}
	I_0(\z^{\frac{1}{2}}) & \frac{\im}{\pi}K_0(\z^{\frac{1}{2}})\\
	\im\pi \z^{\frac{1}{2}}I_0'(\z^{\frac{1}{2}}) & -\z^{\frac{1}{2}}K_0'(\z^{\frac{1}{2}})
	\end{pmatrix},\ \ \ \ \z\in\mathbb{C}\backslash(-\infty,0]
\end{equation}
in terms of the modified Bessel functions $I_0$ and $K_0$ and with principal branches for $\z^{\frac{1}{2}}:\,\textnormal{arg}\,\z\in(-\pi,\pi]$. The standard properties of Bessel functions \cite{NIST} in mind we obtain a bare {\it Bessel parametrix}:
\begin{problem} The function $J(\z)\in\mathbb{C}^{2\times 2}$ defined in \eqref{bessbare} has the following properties
\begin{enumerate}
	\item $J(\z)$ is analytic for $\z\in\mathbb{C}\backslash(-\infty,0]$.
	\item On the negative half ray, oriented from $-\infty$ to the origin, we have
	\begin{equation}\label{e:4}
		J_+(\z)=J_-(\z)\begin{pmatrix}
		1 & 1\\
		0 & 1
		\end{pmatrix},\ \ \ \ \z\in(-\infty,0).
	\end{equation}
	\item As $\z\rightarrow 0$,
	\begin{equation}\label{e:5}
		J(\z)=\widehat{J}(\z)\begin{pmatrix}
		1 & \frac{1}{2\pi\im}\ln\z\\
		0 & 1
		\end{pmatrix},\ \ \ \textnormal{arg}\,\z\in(-\pi,\pi),
	\end{equation}
	and $\widehat{J}(\z)$ is analytic at $\z=0$. In more detail,
	\begin{equation*}
	  \widehat{J}(\z)=\e^{-\im\frac{\pi}{4}\sigma_3}\pi^{\frac{1}{2}\sigma_3}
	\begin{pmatrix}
	I_0(\z^{\frac{1}{2}}) & \frac{\im}{\pi}\left(I_0(\z^{\frac{1}{2}})\ln 2+\sum_{k=0}^{\infty}\psi(k+1)\frac{(\frac{1}{4}\z)^k}{(k!)^2}\right)\\
	\im\pi\z^{\frac{1}{2}}I_0'(\z^{\frac{1}{2}}) & -\left(\z^{\frac{1}{2}}I_0'(\z^{\frac{1}{2}})\ln 2-I_0(\z^{\frac{1}{2}})+
	\sum_{k=1}^{\infty}\psi(k+1)\frac{(\frac{1}{4}\z)^k2k}{(k!)^2}\right)
	                                                                        \end{pmatrix},\ |\z|<r
	\end{equation*}
	with
	\begin{equation*}
	  I_0(\z^{\frac{1}{2}})=\sum_{k=0}^{\infty}\frac{(\frac{1}{4}\z)^k}{(k!)^2};\ \ \ 
	  \z^{\frac{1}{2}}I_0'(\z^{\frac{1}{2}})=\sum_{k=1}^{\infty}2k\frac{(\frac{1}{4}\z)^k}{(k!)^2},\,\,\z\in\mathbb{C};\ \ \ \psi(\z)=\frac{\Gamma'(\z)}{\Gamma(\z)},\ \ \z\in\mathbb{C}\backslash\{0,-1,-2,\ldots\}.
	\end{equation*}
	\item The function $J(\z)$ is normalized so that
	\begin{equation}\label{e:6}
		J(\z) = \z^{-\frac{1}{4}\sigma_3}\frac{1}{\sqrt{2}}\begin{pmatrix}
		1 & 1\\
		-1 & 1
		\end{pmatrix}\e^{-\im\frac{\pi}{4}\sigma_3}\left\{I+\frac{1}{8\z^{\frac{1}{2}}}\begin{pmatrix}
		-1 & -2\im\\
		-2\im & 1\\
		\end{pmatrix}+\mathcal{O}\left(\z^{-1}\right)\right\}\e^{\z^{\frac{1}{2}}\sigma_3},
	\end{equation}	
	as $\z\rightarrow\infty$ with $-\pi+\delta\leq\textnormal{arg}\,\z\leq\pi-\delta$ and $\delta>0$ fixed.
\end{enumerate}
\end{problem}
We can now define the parametrix $P^{(0)}(z)$ near the origin in terms of the model function $J(\z)$: For $0<|z|<\frac{1}{8}$,
\begin{equation}\label{e:7}
	\e^{\im\pi tV\sigma_3}P^{(0)}(z)\e^{-tg(z)\sigma_3}=E^{(0)}(z)J\big(\z(z)\big)\begin{pmatrix}
	1 & \frac{\gamma-1}{2\pi\im}\ln z\\
	0 & 1
	\end{pmatrix}\begin{cases}
		I,&\textnormal{arg}\,z\in(-\frac{2\pi}{3},\frac{2\pi}{3})\\
		\bigl(\begin{smallmatrix}
		1&0\\
		-1 & 1
		\end{smallmatrix}\bigr),&\textnormal{arg}\,z\in(\frac{2\pi}{3},\pi)\\
		\bigl(\begin{smallmatrix}
		1 & 0\\
		1 & 1
		\end{smallmatrix}\bigr),&\textnormal{arg}\,z\in(-\pi,-\frac{2\pi}{3})
		\end{cases}
\end{equation}
with the locally analytic left multiplier
\begin{equation*}
	E^{(0)}(z)=\e^{\im\pi k\sigma_3}t^{-\frac{1}{6}\sigma_3}z^{-\frac{1}{4}\sigma_3}\big(\z(z)\big)^{\frac{1}{4}\sigma_3},\ \ \ \ \ |z|<\frac{1}{8},
\end{equation*}
the locally conformal change of coordinates, for $|z|<\frac{1}{8}$,
\begin{equation*}
	\z(z)=t^2\left\{g(z)-\frac{\alpha}{t}\ln\left(\frac{1+(2z)^{\frac{1}{2}}}{1-(2z)^{\frac{1}{2}}}\right)\right\}^2=t^2z\left(1-2^{\frac{3}{2}}\frac{k}{t}\right)^2\left(1-\frac{4}{3}\frac{1+2^{\frac{3}{2}}\frac{k}{t}}{1-2^{\frac{3}{2}}\frac{k}{t}}z+\mathcal{O}\left(z^2\right)\right),%\big(1+\mathcal{O}\left(z\right)\big)^2,\ \ \ 0\leq|z|<\frac{1}{8},
\end{equation*}
and the branch of the logarithm in \eqref{e:7} such that $\textnormal{arg}\,z\in(0,2\pi)$. It is straightforward to verify the analytical properties of $P^{(0)}(z)$:
\begin{problem}\label{origin} The parametrix $P^{(0)}(z)$ has the following analytical properties
\begin{enumerate}
	\item $P^{(0)}(z)$ is analytic for $z\in D(0,\frac{1}{8})\backslash(\Sigma_T\cup\{0\})$ with $D(z_0,r)=\{z\in\mathbb{C}:\,|z-z_0|<r\}$.
	\item Since $\z=\z(z)$ locally conformal near $z=0$, we obtain directly the jump behavior
	\begin{eqnarray*}
		P_+^{(0)}(z)&=&P_-^{(0)}(z)\begin{pmatrix}
		1 & 0\\
		\e^{2tg(z)} & 1
		\end{pmatrix},\ \ \ z\in \big((\Gamma_2\cup\Gamma_4)\backslash\{0\}\big)\cap D\left(0,\frac{1}{8}\right);\\
		 P_+^{(0)}(z)&=&P_-^{(0)}(z)\begin{pmatrix}
		 0 & 1\\
		 -1 & 0
		 \end{pmatrix},\ \ \ \ \ \ \ z\in\big(\Gamma_3\backslash\{0\}\big)\cap D\left(0,\frac{1}{8}\right)
	\end{eqnarray*}
	compare Figure \ref{figure1} for orientation, and where we used \eqref{e:4} in the last identity. Also, by the choice of branches in \eqref{e:6},
	\begin{equation*}
		P_+^{(0)}(z)=P_-^{(0)}(z)\begin{pmatrix}
		1 & \e^{-t(\varkappa_{\textnormal{Ai}}+2g(z))}\\
		0 & 1
		\end{pmatrix},\ \ \ \ z\in\big(\Gamma_1\backslash\{0\}\big)\cap D\left(0,\frac{1}{8}\right).
	\end{equation*}
	All together, $P^{(0)}(z)$ models precisely the jump behavior of $S(z)$ for $z\in D(0,\frac{1}{8})\backslash\{0\}$, see RHP \ref{gRHP}.
	\item Near $z=0$ with $\textnormal{arg}\,z\in(0,2\pi)$, we deduce from \eqref{e:5} and \eqref{e:7},
	\begin{equation*}
		\e^{\im\pi tV\sigma_3}P^{(0)}(z)\e^{-tg(z)\sigma_3}=\widehat{P}^{(0)}(z)\begin{pmatrix}
		1 & \frac{\gamma}{2\pi\im}\ln z\\
		0 & 1
		\end{pmatrix}\begin{cases}
		I,&\textnormal{arg}\,z\in(0,\frac{2\pi}{3})\\
		\bigl(\begin{smallmatrix}
		1 & 0\\
		-1 & 1
		\end{smallmatrix}\bigr),&\textnormal{arg}\,z\in(\frac{2\pi}{3},\pi)\\
		\bigl(\begin{smallmatrix}
		1 & -1\\
		0 & 1
		\end{smallmatrix}\bigr)\bigl(\begin{smallmatrix}
		1 & 0\\
		1 & 1
		\end{smallmatrix}\bigr),&\textnormal{arg}\,z\in(\pi,\frac{4\pi}{3})\\
		\bigl(\begin{smallmatrix}
		1 & -1\\
		0 & 1
		\end{smallmatrix}\bigr),&\textnormal{arg}\,z\in(\frac{4\pi}{3},2\pi)\\
		\end{cases}
	\end{equation*}
	which matches exactly the singular behavior of $S(z)$ near $z=0$. 
	\item As $t\rightarrow+\infty,\gamma\uparrow 1$ subject to \eqref{scale}, we derive from \eqref{e:6},
	\begin{equation}\label{e:8}
		P^{(0)}(z)=P^{(\infty)}(z)\left\{I+\frac{1}{8\z^{\frac{1}{2}}(z)}\big(\mathcal{D}(z)\big)^{-\sigma_3}\begin{pmatrix}
		-1 & -2\im\\
		-2\im & 1
		\end{pmatrix}\big(\mathcal{D}(z)\big)^{\sigma_3}+\mathcal{O}\left(t^{-2}\right)\right\}
	\end{equation}
	uniformly for $0<r_1\leq|z|\leq r_2<\frac{1}{8}$. Here, we used in particular that on the latter annulus,
	\begin{equation*}
		\e^{-t(\varkappa_{\textnormal{Ai}}+g_+(z)+g_-(z))}=\mathcal{O}\left(t^{-\infty}\right), \ \ t\rightarrow+\infty,\gamma\uparrow 1:\ \ \varkappa_{_\textnormal{Ai}}=\frac{2}{3}\sqrt{2}-\chi\frac{\ln t}{t}.
	\end{equation*} 
\end{enumerate}
\end{problem}
\begin{remark}\label{cem:1} Note that from properties $(2)$ and $(3)$ in RHP \ref{origin} we obtain
\begin{equation*}
	S(z)=N_0(z)P^{(0)}(z),\ \ \ 0\leq|z|<\frac{1}{4}
\end{equation*}
where $N_0(z)$ is analytic at $z=0$.
\end{remark}
The outstanding parametrix near $z=\frac{1}{2}$ is in some sense more elementary than the Bessel-type parametrix \eqref{e:7}. We draw inspiration from \cite{C,BL,BT,BM} and define first
\begin{equation}\label{hermitebare}
	H(\z)=\begin{pmatrix} p_k(\z) & \frac{1}{2\pi\im}\int_{\mathbb{R}}p_k(t)\e^{-t^2}\frac{\d t}{t-\z}\smallskip\\ 
	\gamma_{k-1} p_{k-1}(\z) & \frac{\gamma_{k-1}}{2\pi\im}\int_{\mathbb{R}}p_{k-1}(t)\e^{-t^2}\frac{\d t}{t-\z} \end{pmatrix}\e^{-\frac{1}{2}\z^2\sigma_3},\ \ \ \z\in\mathbb{C}\backslash\mathbb{R},\ \ k\in\mathbb{Z}_{\geq 0}
\end{equation}
with the help of monic Hermite polynomials $\{p_k(\z)\}_{k\in\mathbb{Z}_{\geq 0}},p_{-1}(\z)\equiv 0=\gamma_{-1}$, cf. \cite{NIST}:
\begin{equation*}
	p_k(\z)=\z^k+a_{k,k-2}z^{k-2}+\mathcal{O}\left(\z^{k-4}\right),\ \ \z\rightarrow\infty;\hspace{1cm} \int_{\mathbb{R}}p_j(t)p_k(t)\e^{-t^2}\d t = h_k\delta_{jk},%\ \ \ h_k=k!\frac{\sqrt{\pi}}{2^k},\ \ \gamma_k=-\frac{2\pi\im}{h_k}.
\end{equation*}
where
\begin{equation*}
	h_k=k!\frac{\sqrt{\pi}}{2^k},\ \ k\in\mathbb{Z}_{\geq 0};\ \ \ \ \ a_{k,k-2}=-\frac{1}{4}k(k-1),\ \ k\in\mathbb{Z}_{\geq 0}.%;\ \ \ a_{k,k-2}=0,\ \ k=0,1.
\end{equation*}	
\begin{remark} The normalization in \cite{NIST} of the Hermite polynomials $\{H_k(\z)\}_{k\in\mathbb{Z}_{\geq 0}}$ is different from the one chosen here: we have to use the relation $p_k(\z)=2^{-k}H_k(\z),\z\in\mathbb{C}$.
\end{remark}
	%\gamma_k=-\frac{2\pi\im}{h_k}
These properties lead at once to a bare {\it Hermite parametrix}:
\begin{problem} For any $k\in\mathbb{Z}_{\geq 0}$, the function $H(\z)\in\mathbb{C}^{2\times 2}$ defined in \eqref{hermitebare} has the following properties
\begin{enumerate}
	\item $H(\z)$ is analytic for $\z\in\mathbb{C}\backslash\mathbb{R}$ and we orient the real axis from $-\infty$ to $+\infty$.
	\item Along the real line we have
	\begin{equation*}
		H_+(\z)=H_-(\z)\begin{pmatrix} 1 & 1\\ 0 & 1 \end{pmatrix},\ \ \ \z\in\mathbb{R}.
	\end{equation*}
	\item As $\z\rightarrow\infty,\z\notin\mathbb{R}$, with $\gamma_k=-\frac{2\pi\im}{h_k}$,
	\begin{eqnarray}\label{e:9}
		H(\z)&=&\bigg\{I+\frac{1}{\z}\begin{pmatrix} 0&\gamma_k^{-1} \\ \gamma_{k-1} & 0 \end{pmatrix}+\frac{1}{\z^2}\begin{pmatrix} -\frac{\gamma_{k-2}}{\gamma_k} & 0\\ 0 & \frac{\gamma_{k-1}}{\gamma_{k+1}}\end{pmatrix}+\frac{1}{\z^3}\begin{pmatrix} 0 & \gamma_{k+2}^{-1} \\ -\gamma_{k-3} & 0\end{pmatrix}\nonumber\\
		&&+\mathcal{O}\left(\z^{-4}\right)\bigg\}\z^{k\sigma_3}\e^{-\frac{1}{2}\z^2\sigma_3}.
	\end{eqnarray}
	%with certain $u_k,v_k\in\mathbb{C}$ whose precise form will not be important in the following (hopefully!).
\end{enumerate}
\end{problem}
For the actual model function we then take with $|z-\frac{1}{2}|<\frac{1}{8}$,
\begin{equation}\label{local}
	P^{(\frac{1}{2})}(z)=E^{(\frac{1}{2})}(z)H\big(\z(z)\big)\e^{\frac{t}{2}\varkappa_{\textnormal{Ai}}\sigma_3}\e^{tg(z)\sigma_3},%\ \ \ \ \left|z-\frac{1}{2}\right|<\frac{1}{8},
\end{equation}
where
\begin{equation*}
	E^{(\frac{1}{2})}(z)=(|s|z)^{-\frac{1}{4}\sigma_3}\e^{-\im\pi\alpha\sigma_3}\frac{1}{\sqrt{2}}\begin{pmatrix} 1 & 1\\ -1 & 1 \end{pmatrix}\e^{-\im\frac{\pi}{4}\sigma_3}\big(\beta(z)\big)^{-\sigma_3}
	,\ \ \ 0\leq\left|z-\frac{1}{2}\right|<\frac{1}{8},
\end{equation*} 
with the choice
\begin{equation*}
	\beta(z)=\left(\z(z)\frac{1+(2z)^{\frac{1}{2}}}{1-(2z)^{\frac{1}{2}}}\right)^kt^{-\frac{\chi}{2}}=(-1)^k2^{\frac{5k}{4}}t^{-\frac{\alpha}{2}}\left\{1+\frac{2k}{3}\left(z-\frac{1}{2}\right)+\mathcal{O}\left(\left(z-\frac{1}{2}\right)^2\right)\right\},
\end{equation*}
is analytic at $z=\frac{1}{2}$. Throughout, $\z=\z(z),|z-\frac{1}{2}|<\frac{1}{8}$ denotes the locally conformal change of variables
\begin{equation*}
	\z(z)=\sqrt{2t}\left(g(z)-\frac{\chi}{t}\ln\left(\frac{1+(2z)^{\frac{1}{2}}}{1-(2z)^{\frac{1}{2}}}\right)+\frac{\sqrt{2}}{3}\right)^{\frac{1}{2}}=2^{\frac{1}{4}}\sqrt{t}\left(z-\frac{1}{2}\right)\left\{1-\frac{1}{3}\left(z-\frac{1}{2}\right)+\mathcal{O}\left(\left(z-\frac{1}{2}\right)^2\right)\right\},%\ \ \left|z-\frac{1}{2}\right|<\frac{1}{8}
\end{equation*}
and we use again the representation
\begin{equation}\label{e:10}
	\mathbb{R}_{\geq 0}\ni\chi=k+\alpha,\ \ \ k\in\mathbb{Z}_{\geq 0},\ \ -\frac{1}{2}\leq\alpha<\frac{1}{2}.
\end{equation}
The important properties of $P^{(\frac{1}{2})}(z)$ are summarized below.
\begin{problem}\label{statRHP} The parametrix $P^{(\frac{1}{2})}(z)$ has the following analytical properties
\begin{enumerate}
	\item $P^{(\frac{1}{2})}(z)$ is analytic for $z\in D(\frac{1}{2},\frac{1}{8})\backslash\Sigma_T$.
	\item By local analyticity of $\z=\z(z)$, we have
	\begin{eqnarray*}
		P_+^{(\frac{1}{2})}(z)&=&P_-^{(\frac{1}{2})}(z)\begin{pmatrix}1 & \e^{-t(\varkappa_{\textnormal{Ai}}+2g(z))} \\ 0 & 1 \end{pmatrix},\ \ z\in\left(0,\frac{1}{2}\right)\cap D\left(\frac{1}{2},\frac{1}{8}\right);\\
		P_+^{(\frac{1}{2})}(z)&=&P_-^{(\frac{1}{2})}(z)\begin{pmatrix}\e^{2\pi\im tV} & \e^{-t(\varkappa_{\textnormal{Ai}}+g_+(z)+g_-(z))}\\ 0 & \e^{-2\pi\im tV} \end{pmatrix},\ \ z\in\left(\frac{1}{2},+\infty\right)\cap D\left(\frac{1}{2},\frac{1}{8}\right).
	\end{eqnarray*}
	This matches exactly the jump behavior of $S(z)$ for $z\in D(\frac{1}{2},\frac{1}{8})$, compare RHP \ref{gRHP}.
	\item As $t\rightarrow+\infty,\gamma\uparrow 1$ subject to \eqref{scale}, we derive from \eqref{e:9},
	\begin{align}\label{e:11}
		P^{(\frac{1}{2})}(z)&=P^{(\infty)}(z)\bigg\{I+\frac{1}{\z(z)}\big(\mathcal{D}(z)\big)^{-\sigma_3}\begin{pmatrix} 0 & \gamma_k^{-1}\beta^{-2}(z)\\ \gamma_{k-1}\beta^2(z)& 0 \end{pmatrix}\big(\mathcal{D}(z)\big)^{\sigma_3}+\frac{1}{\z^2(z)}\begin{pmatrix} -\frac{\gamma_{k-2}}{\gamma_k} & 0 \\ 0 & \frac{\gamma_{k-1}}{\gamma_{k+1}}\end{pmatrix}\nonumber\\
		&+\frac{1}{\z^3(z)}\big(\mathcal{D}(z)\big)^{-\sigma_3}\begin{pmatrix} 0 & \gamma_{k+2}^{-1}\,\beta^{-2}(z)\\ -\gamma_{k-3}\,\beta^2(z) & 0 \end{pmatrix}\big(\mathcal{D}(z)\big)^{\sigma_3}+\mathcal{O}\left(t^{-2+|\alpha|}\right)\bigg\}
	\end{align}
	uniformly for $0<r_1\leq|z-\frac{1}{2}|\leq r_2<\frac{1}{8}$.
\end{enumerate}
\end{problem}
\begin{remark}\label{cem:2} Note that from property $(2)$ in RHP \ref{statRHP} we obtain
\begin{equation*}
	S(z)=N_{\frac{1}{2}}(z)P^{(\frac{1}{2})}(z),\ \ \ 0\leq\left|z-\frac{1}{2}\right|<\frac{1}{8}
\end{equation*}
where $N_{\frac{1}{2}}(z)$ is analytic at $z=\frac{1}{2}$.
\end{remark}
This completes the construction of local model functions, we now use the explicit functions $P^{(\infty)}(z), P^{(0)}(z)$ and $P^{(\frac{1}{2})}(z)$ and compare them to the unknown $S(z)$ in RHP \ref{gRHP}.
\subsection{Ratio transformation and first small norm estimate} With \eqref{out}, \eqref{e:7} and \eqref{local} this steps amounts to the transformation
\begin{equation*}
	R(z)=\begin{pmatrix} 1 & 0\\ \omega & 1 \end{pmatrix}S(z)\begin{cases} \big(P^{(0)}(z)\big)^{-1},&|z|<r\\ \big(P^{(\frac{1}{2})}(z)\big)^{-1},&|z-\frac{1}{2}|<r\\
	\big(P^{(\infty)}(z)\big)^{-1},&|z|>r,|z-\frac{1}{2}|>r\end{cases};\ \ \omega=-|s|^{\frac{1}{2}}\big(N(S_{\infty}-\alpha\sqrt{2}\,\sigma_3)N^{-1}\big)_{21}
\end{equation*}
in which $0<r<\frac{1}{8}$ is kept fixed. Here we have made use of the abbreviation
\begin{equation*}
	N=\e^{-\im\pi\alpha\sigma_3}\frac{1}{\sqrt{2}}\begin{pmatrix} 1 & 1 \\ -1 & 1 \end{pmatrix}\e^{-\im\frac{\pi}{4}\sigma_3}\e^{\im\pi\alpha\sigma_3},
\end{equation*}
and $S_{\infty}$ occurred in RHP \ref{gRHP}. Recalling RHP \ref{outer}, \ref{origin} and \ref{statRHP} we are lead to the following problem.
\begin{figure}[tbh]
\begin{center}
\resizebox{0.4\textwidth}{!}{\includegraphics{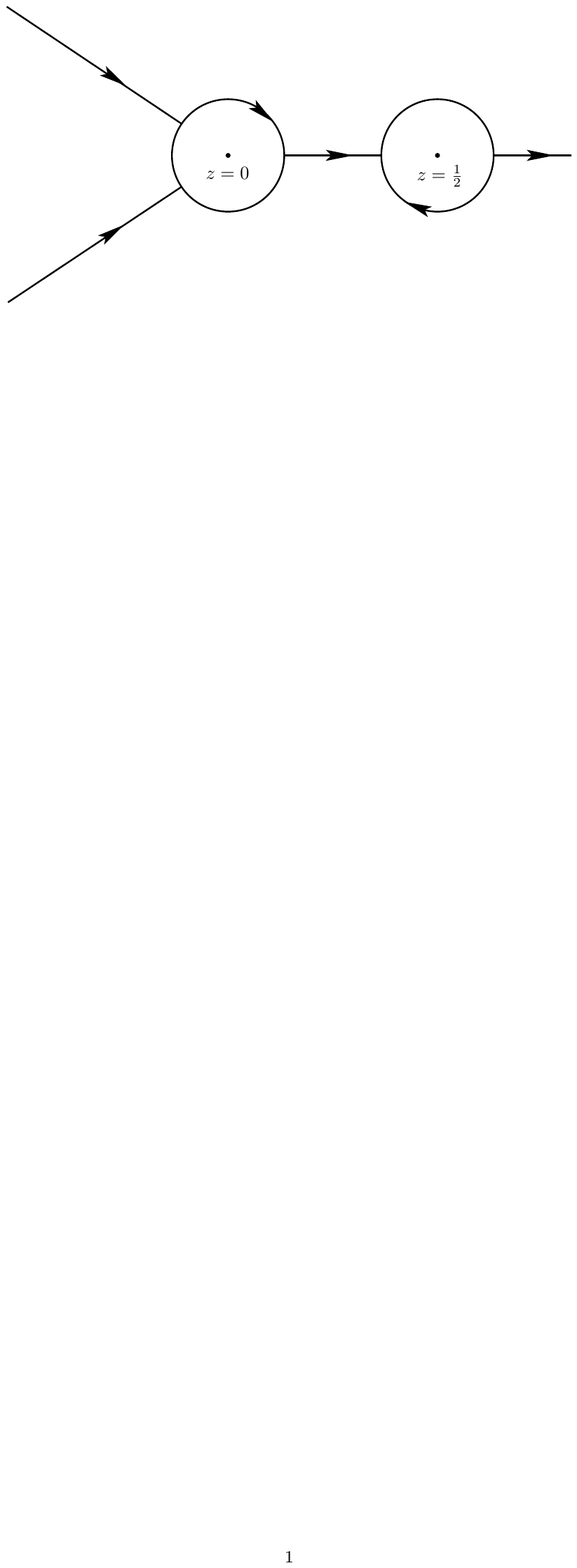}}
%\resizebox{0.4\textwidth}{!}{\input{Ratio.pdf_t}}
\caption{The oriented jump contours for the ratio function $R(z)$ in the complex $z$-plane.}
\label{figure4}
\end{center}
\end{figure}
\begin{problem} Determine $R(z)=R(z;s,\gamma)\in\mathbb{C}^{2\times 2}$ such that
\begin{enumerate}
	\item $R(z)$ is analytic for $z\in\mathbb{C}\backslash\Sigma_R$ with square integrable boundary values on the contour 
	\begin{equation*}
		\Sigma_R=\partial D(0,r)\cup\partial D\left(\frac{1}{2},r\right)\cup\left(r,\frac{1}{2}-r\right)\cup\left(\frac{1}{2}+r,\infty\right)\cup\Big(\big(\Gamma_2\cup\Gamma_4\big)\cap\{z\in\mathbb{C}:\,|z|>r\}\Big)
	\end{equation*}
	which is shown in Figure \ref{figure4}.
	\item On the contour $\Sigma_R$ we have jumps $R_+(z)=R_-(z)G_R(z;s,\gamma),z\in\Sigma_R$ with
	\begin{align*}
		G_R(z;s,\gamma)&=P^{(0)}(z)\big(P^{(\infty)}(z)\big)^{-1},\ \ z\in\partial D(0,r);\ \ \ \ 
		G_R(z;s,\gamma)=P^{(\frac{1}{2})}(z)\big(P^{(\infty)}(z)\big)^{-1},\ \ z\in\partial D\left(\frac{1}{2},r\right);\\
		G_R(z;s,\gamma)&=P^{(\infty)}(z)\begin{pmatrix}
		1 & \e^{-t(\varkappa_{\textnormal{Ai}}+2g(z))}\\
		0 & 1
		\end{pmatrix}\big(P^{(\infty)}(z)\big)^{-1},\ \ z\in\left(r,\frac{1}{2}-r\right);\\
		G_R(z;s,\gamma)&=P_-^{(\infty)}(z)\begin{pmatrix} 1 & \e^{2\pi\im\alpha}\e^{-t(\varkappa_{\textnormal{Ai}}+g_+(z)+g_-(z))} \\ 0 & 1 \end{pmatrix}\big(P_-^{(\infty)}(z)\big)^{-1},\ \ z\in\left(\frac{1}{2}+r,+\infty\right);\\
		G_R(z;s,\gamma)&=P^{(\infty)}(z)\begin{pmatrix}
		1 & 0\\
		\e^{tg(z)} & 1
		\end{pmatrix}\big(P^{(\infty)}(z)\big)^{-1},\ \ z\in\big(\Gamma_2\cup\Gamma_4\big)\cap\{z\in\mathbb{C}:\,|z|>r\}.
	\end{align*}
	By construction, see Remarks \ref{cem:1} and \ref{cem:2}, there are no jumps inside $D(0,r)\cup D(\frac{1}{2},r)$ and on $(-\infty,-r)$. Moreover $R(z)$ is bounded at $z=0$.
	\item As $z\rightarrow\infty$,
	\begin{align*}
		R(z)&=\begin{pmatrix}
	1 & 0\\
	\omega & 1
	\end{pmatrix}\e^{-\im\pi k\sigma_3}P^{(\infty)}(z)\big(\mathcal{D}(z)\big)^{-\sigma_3}\e^{\im\pi tV\sigma_3}\left\{I+S_{\infty}z^{-\frac{1}{2}}+\mathcal{O}\left(z^{-1}\right)\right\}\big(P^{(\infty)}(z)\big)^{-1}\\
	&=I+\mathcal{O}\left(z^{-\frac{1}{2}}\right).
	\end{align*}
\end{enumerate}
\end{problem}
Through standard small norm estimations, compare \eqref{crux:1}, \eqref{crux:2}, \eqref{crux:3} and \eqref{e:8}, we obtain at once
\begin{prop}\label{DZ:1} Given $\chi\in\mathbb{R}_{\geq 0}$ there exist positive $t_0=t_0(\chi),v_0=v_0(\chi)$ and $c=c(\chi)$ such that
\begin{equation*}
	\|G_R(\cdot;s,\gamma)-I\|_{L^2\cap L^{\infty}(\Sigma_R\backslash\partial D(\frac{1}{2},r))}\leq\frac{c}{t^{\frac{2}{3}}},\ \ \ \forall\,t\geq t_0,v=-\ln(1-\gamma)\geq v_0:\ \ v=\frac{2}{3}\sqrt{2}\,t-\chi\ln t.
\end{equation*}
\end{prop}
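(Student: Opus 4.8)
The plan is to estimate $G_R-I$ arc by arc on $\Sigma_R\setminus\partial D(\tfrac12,r)$, i.e.\ on the four pieces $\partial D(0,r)$, $(r,\tfrac12-r)$, $(\tfrac12+r,\infty)$ and $(\Gamma_2\cup\Gamma_4)\cap\{|z|>r\}$; the circle $\partial D(\tfrac12,r)$ is deliberately excluded because the matching \eqref{e:11} there is not uniformly small and is handled by the separate argument of Subsection~\ref{Airysing}. On the last three pieces $G_R$ is a conjugate by $P^{(\infty)}$ of a unipotent matrix with one exponentially small off-diagonal entry. On $(\Gamma_2\cup\Gamma_4)\cap\{|z|>r\}$ that entry is $\e^{tg(z)}$, and \eqref{crux:1} gives $\Re g(z)<0$ there; since $\arg z=\pm\tfrac{2\pi}{3}$ on these rays forces $\Re(z^{3/2})<0$, in fact $\Re g(z)\to-\infty$, so $\e^{tg(z)}$ is $\mathcal O(\e^{-\delta t})$ uniformly in $z$ and decays super-exponentially in $|z|$. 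Conjugation by $P^{(\infty)}$ costs at most the polynomial factor $(|s|z)^{\pm\frac12}=t^{\pm\frac13}|z|^{\pm\frac12}$ produced by the prefactor $(|s|z)^{-\frac14\sigma_3}$ of \eqref{out}, while the Szeg\H{o} factor $(\mathcal D(z))^{\pm\sigma_3}$ stays bounded for $|z|\ge r$; the net contribution is thus $\mathcal O(\e^{-\delta t})$ in both $L^2$ and $L^\infty$. On $(r,\tfrac12-r)$ I would invoke \eqref{crux:2} to bound the entry $\e^{-t(\varkappa_{\textnormal{Ai}}+2g(z))}$ by $\mathcal O(\e^{-\delta t})$, and on $(\tfrac12+r,\infty)$ the estimate \eqref{crux:3} together with $g_+(z)+g_-(z)\sim\tfrac43 z^{3/2}$ as $z\to\infty$, which again makes the entry $\mathcal O(\e^{-\delta t})$ and integrable with an exponentially small $L^2$ tail; in both cases only a polynomial loss from the conjugation.

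The single piece that contributes at the claimed rate is $\partial D(0,r)$, where I would start from the matching relation \eqref{e:8}, i.e.
\[
	G_R(z;s,\gamma)-I=\frac{1}{8\z^{\frac12}(z)}\,P^{(\infty)}(z)\big(\mathcal D(z)\big)^{-\sigma_3}\begin{pmatrix}-1&-2\im\\-2\im&1\end{pmatrix}\big(\mathcal D(z)\big)^{\sigma_3}\big(P^{(\infty)}(z)\big)^{-1}+\mathcal O\!\left(t^{-2}\right),
\]
valid uniformly on the fixed circle $|z|=r$. Since $\z(z)=t^{2}z\bigl(1-2^{\frac32}\tfrac kt\bigr)^{2}\bigl(1+\mathcal O(z)\bigr)$, one has $|\z^{\frac12}(z)|\asymp t$ there, so the scalar prefactor is $\mathcal O(t^{-1})$. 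In the conjugation the two Szeg\H{o} factors cancel against the $(\mathcal D(z))^{\pm\sigma_3}$ inside $P^{(\infty)}$, leaving conjugation by the constant matrix $\e^{-\im\pi\alpha\sigma_3}\tfrac1{\sqrt2}\bigl(\begin{smallmatrix}1&1\\-1&1\end{smallmatrix}\bigr)\e^{-\im\frac\pi4\sigma_3}$ and by $(|s|z)^{-\frac14\sigma_3}$; with $|s|=t^{2/3}$ the latter multiplies the $(2,1)$ entry by $(|s|z)^{\frac12}=t^{1/3}|z|^{1/2}$ and the $(1,2)$ entry by its reciprocal, so the worst growth on $|z|=r$ is $\mathcal O(t^{1/3})$. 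Combined with the prefactor this gives $\|G_R-I\|_{L^\infty(\partial D(0,r))}=\mathcal O(t^{-2/3})$, the $\mathcal O(t^{-2})$ remainder of \eqref{e:8} contributing only $\mathcal O(t^{-5/3})$ after the same polynomial loss; as $\partial D(0,r)$ has finite length, the corresponding $L^2$ bound follows at once.

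Finally I would collect these estimates, take $t_0(\chi)$ and $v_0(\chi)$ large enough and $c(\chi)$ to be the largest implied constant; the bounds \eqref{crux:1}--\eqref{crux:3} hold uniformly for $t\ge t_0$, $v\ge v_0$ along the scaling curve \eqref{scale}, in particular uniformly up to the endpoints of the finite segments, which is exactly what makes the final estimate uniform. I expect the main point --- and the reason one lands on $t^{-2/3}$ rather than on the naive $t^{-1}$ read off directly from \eqref{e:8} --- to be the $t^{1/3}$ amplification of the $(2,1)$ entry coming from the $(|s|z)^{-\frac14\sigma_3}$ prefactor of $P^{(\infty)}$ with $|s|=t^{2/3}$; in carrying the argument out one should also verify that the $\mathcal O(t^{-2})$ and $\mathcal O(t^{-\infty})$ error terms in \eqref{e:8} (and the analogous remainders on the lens and on $(\tfrac12+r,\infty)$) are uniform on the relevant annuli and survive multiplication by that polynomial factor, and bear in mind that $\partial D(\tfrac12,r)$ genuinely has to be excluded here, its jump being governed by \eqref{e:11} rather than \eqref{e:8} and requiring the separate treatment of Subsection~\ref{Airysing}.
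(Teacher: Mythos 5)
Your argument is correct and amounts to writing out in detail the ``standard small norm estimation'' that the paper simply invokes after citing \eqref{crux:1}--\eqref{crux:3} and \eqref{e:8}. You have identified the essential point the paper leaves implicit: the $t^{-2/3}$ rate (rather than the naive $t^{-1}$ one might read off from \eqref{e:8}) comes from the $\mathcal O(t^{-1})$ error of the Bessel-parametrix matching on $\partial D(0,r)$ being amplified by the factor $(|s|z)^{1/2}=t^{1/3}|z|^{1/2}$ that the $(|s|z)^{-\frac14\sigma_3}$ prefactor of $P^{(\infty)}$ produces in the $(2,1)$ entry under conjugation, while the Szeg\H{o} factors cancel and the contributions from the lens, from $(\tfrac12+r,\infty)$, and from $\Gamma_2\cup\Gamma_4$ are exponentially small by \eqref{crux:1}--\eqref{crux:3} and dominate neither the $L^\infty$ nor the $L^2$ bound even after the polynomial losses from conjugation.
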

The circle boundary $\partial D(\frac{1}{2},r)$ requires further analysis: From \eqref{e:11}, uniformly for $z\in\partial D(\frac{1}{2},r)$,
\begin{eqnarray}\label{fac:0}
	G_R(z;s,\gamma)-I&=&(|s|z)^{-\frac{1}{4}\sigma_3}\Big\{R_1^+(z)t^{-\frac{1}{2}+\alpha}+R_1^-(z)t^{-\frac{1}{2}-\alpha}+R_2(z)t^{-1}+R_3^+(z)t^{-\frac{3}{2}+\alpha}\nonumber\\
	&&+\,R_3^-(z)t^{-\frac{3}{2}-\alpha}+\mathcal{O}\left(t^{-2+|\alpha|}\right)\Big\}(|s|z)^{\frac{1}{4}\sigma_3},
\end{eqnarray}	
with the $t$-independent matrices
\begin{equation*}
	R_1^+(z)=-\im\gamma_k^{-1}\frac{\hat{\beta}^{-2}(z)}{2\hat{\z}(z)}\e^{-\im\pi\alpha\sigma_3}\begin{pmatrix} 1 & 1\\ -1 & -1 \end{pmatrix} \e^{\im\pi\alpha\sigma_3};\ \ \ 
	R_1^-(z)=\im\gamma_{k-1}\frac{\hat{\beta}^2(z)}{2\hat{\zeta}(z)}\e^{-\im\pi\alpha\sigma_3}\begin{pmatrix} 1 & -1 \\ 1 & -1 \end{pmatrix}\e^{\im\pi\alpha\sigma_3}
\end{equation*}
which are both of rank one, and
\begin{equation*}
	R_2(z)=\frac{k}{4\hat{\z}^2(z)}\e^{-\im\pi\alpha\sigma_3}\begin{pmatrix} 1 & k \\ k & 1 \end{pmatrix}\e^{\im\pi\alpha\sigma_3},\ \ \ R_3^+(z)=-\im\gamma_{k+2}^{-1}\frac{\hat{\beta}^{-2}(z)}{2\hat{\z}^3(z)}\e^{-\im\pi\alpha\sigma_3}\begin{pmatrix} 1 & 1\\ -1 & -1 \end{pmatrix}\e^{\im\pi\alpha\sigma_3}
\end{equation*}
as well as
\begin{equation*}
	R_3^-(z)=-\im\gamma_{k-3}\frac{\hat{\beta}^2(z)}{2\hat{\z}^3(z)}\e^{-\im\pi\alpha\sigma_3}\begin{pmatrix} 1 & -1 \\ 1 & -1 \end{pmatrix} \e^{\im\pi\alpha\sigma_3}.
\end{equation*}
We have introduced $\beta(z)=t^{-\frac{\alpha}{2}}\hat{\beta}(z)$ and $\z(z)=t^{\frac{1}{2}}\hat{\z}(z)$ where $\hat{\beta}(z)$ and $\hat{\z}(z)$ are $t$-independent. Note that for $\alpha\in[-\frac{1}{2},\frac{1}{2})$ the first two leading terms in \eqref{fac:0} are in general not close to zero. In order to overcome this feature we use matrix factorizations,
\begin{eqnarray}
	G_R(z;s,\gamma)&=&E^+(z)(|s|z)^{-\frac{1}{4}\sigma_3}\Big\{I+R_1^-(z)t^{-\frac{1}{2}-\alpha}+\big(R_2(z)-R_1^+(z)R_1^-(z)\big)t^{-1}\label{fac:1}\\
	&&+\,\big(R_3^+(z)-R_1^+(z)R_2(z)\big)t^{-\frac{3}{2}+\alpha}+R_3^-(z)t^{-\frac{3}{2}-\alpha}+\mathcal{O}\left(t^{-2+|\alpha|}\right)\Big\}(|s|z)^{\frac{1}{4}\sigma_3},\ \ 0\leq\alpha<\frac{1}{2};\nonumber\\
	%&\equiv&E^+(z)E_0^+(z);\\
	G_R(z;s,\gamma)&=&E^-(z)(|s|z)^{-\frac{1}{4}\sigma_3}\Big\{I+R_1^+(z)t^{-\frac{1}{2}+\alpha}+\big(R_2(z)-R_1^-(z)R_1^+(z)\big)t^{-1}\label{fac:2}\\
	&&+\,\big(R_3^-(z)-R_1^-(z)R_2(z)\big)t^{-\frac{3}{2}-\alpha}+R_3^+(z)t^{-\frac{3}{2}+\alpha}+\mathcal{O}\left(t^{-2+|\alpha|}\right)\Big\}(|s|z)^{\frac{1}{4}\sigma_3},\ \ -\frac{1}{2}\leq\alpha\leq 0,\nonumber
\end{eqnarray}
where
\begin{equation*}
	E^+(z)=(|s|z)^{-\frac{1}{4}\sigma_3}\big(I+R_1^+(z)t^{\alpha-\frac{1}{2}}\big)(|s|z)^{\frac{1}{4}\sigma_3},\ \ \ E^-(z)=(|s|z)^{-\frac{1}{4}\sigma_3}\big(I+R_1^-(z)t^{-\alpha-\frac{1}{2}}\big)(|s|z)^{\frac{1}{4}\sigma_3}
\end{equation*}
are invertible and meromorphic for $z\in D(\frac{1}{2},r)$. The factorizations \eqref{fac:1}, \eqref{fac:2} above, i.e.
\begin{equation*}
	G_R(z;s,\gamma)\equiv E^{\pm}(z)E_0^{\pm}(z),\ \ \ z\in\partial D\left(\frac{1}{2},r\right),
\end{equation*}
motivate our next move.
\subsection{Singular Riemann-Hilbert problem and iterative solution}\label{Airysing}
In this step we introduce
\begin{equation*}
	Q(z)=R(z)\begin{cases}\begin{cases} E^+(z),&|z-\frac{1}{2}|<r\\ I,&|z-\frac{1}{2}|>r \end{cases},&\,\,\,\,\,0\leq\alpha<\frac{1}{2}\bigskip\\ \begin{cases} E^-(z),&|z-\frac{1}{2}|<r \\ I,&|z-\frac{1}{2}|>r \end{cases},&-\frac{1}{2}\leq\alpha\leq 0 \end{cases}
\end{equation*}
which leads us to a singular RHP.
\begin{problem}\label{QRH} Determine $Q(z)=Q(z;s,\gamma)\in\mathbb{C}^{2\times 2}$ such that
\begin{enumerate}
	\item $Q(z)$ is analytic for $z\in\mathbb{C}\backslash(\Sigma_R\cup\{\frac{1}{2}\})$ with square integrable boundary values on the jump contour $\Sigma_R$ shown in Figure \ref{figure4}.
	\item The jumps on $\Sigma_R$ are as follows,
	\begin{eqnarray*}
		Q_+(z)&=&Q_-(z)G_R(z;s,\gamma),\ \ \ \ \ z\in\Sigma_R\backslash\partial D\left(\frac{1}{2},r\right);\\
		Q_+(z)&=&Q_-(z)\begin{cases} E_0^+(z),&\,\,\,\,\,0\leq\alpha<\frac{1}{2}\smallskip\\ E_0^-(z),&-\frac{1}{2}\leq\alpha\leq 0\end{cases},\ \ \ \ \ z\in\partial D\left(\frac{1}{2},r\right).
	 \end{eqnarray*}
	 \item The function $Q(z)$ has a first order pole at $z=\frac{1}{2}$. In more detail, near $z=\frac{1}{2}$,
	 \begin{equation}\label{sing:1}
	 	Q(z)=\widehat{Q}(z)\begin{pmatrix} 1 & \frac{\sigma^{\pm}}{z-\frac{1}{2}} \smallskip\\ 0 & 1 \end{pmatrix}\big(T^{\pm}\big)^{-1},\ \ \ \left|z-\frac{1}{2}\right|<r;
	 \end{equation}
	 where $\widehat{Q}(z)$ is analytic at $z=\frac{1}{2}$ and we have introduced
	 \begin{equation}\label{sing:2}
	 	\sigma^+=\frac{-\im\gamma_k^{-1}2^{-\frac{5k}{2}-\frac{5}{4}}t^{\alpha-\frac{1}{2}}}{1-\im\gamma_k^{-1}2^{-\frac{5k}{2}-\frac{5}{4}}t^{\alpha-\frac{1}{2}}},\ \ \alpha\in\left[0,\frac{1}{2}\right);\ \ \ \ \ \sigma^-=\frac{\im\gamma_{k-1}2^{\frac{5k}{2}-\frac{5}{4}}t^{-\alpha-\frac{1}{2}}}{1+\im\gamma_{k-1}2^{\frac{5k}{2}-\frac{5}{4}}t^{-\alpha-\frac{1}{2}}},\ \ \alpha\in\left[-\frac{1}{2},0\right];
	\end{equation}
	as well as
	\begin{equation}\label{sing:3}
		T^+=\begin{pmatrix} 1 & 1\\ -\e^{2\pi\im\alpha}\big(\frac{|s|}{2}\big)^{\frac{1}{2}} & 0 \end{pmatrix},\ \ \alpha\in\left[0,\frac{1}{2}\right);\ \ \ \ \ T^-=\begin{pmatrix}
		1 & 1 \\ \e^{2\pi\im\alpha}\big(\frac{|s|}{2}\big)^{\frac{1}{2}} & 0 \end{pmatrix},\ \ \alpha\in\left[-\frac{1}{2},0\right].
	\end{equation}
	 \item As $z\rightarrow\infty$
	 \begin{equation*}
	 	Q(z)=I+\mathcal{O}\left(z^{-\frac{1}{2}}\right).
	\end{equation*}
\end{enumerate}
\end{problem}
\begin{remark} The singular structure \eqref{sing:1}, \eqref{sing:2}, \eqref{sing:3} follows from the observation that
\begin{equation}\label{sing:4}
	R(z)=Q(z)\big(E^{\pm}(z)\big)^{-1},\ \ \left|z-\frac{1}{2}\right|<r
\end{equation}
is analytic at $z=\frac{1}{2}$. Hence the singular part of $Q(z)$ can be derived by comparison in \eqref{sing:4} which leads to \eqref{sing:1}.
\end{remark}
Note that all jump matrices in the $Q$-RHP are close to unity at the cost of an isolated singularity at $z=\frac{1}{2}$. This will now be resolved by a final transformation. Define $L(z),z\in\mathbb{C}\backslash\Sigma_R$ such that
\begin{equation}\label{res:1}
	Q(z)=\left\{\left(z-\frac{1}{2}\right)I+B^{\pm}\right\}L(z)\frac{1}{z-\frac{1}{2}},
\end{equation}
with $B^{\pm}\in\mathbb{C}^{2\times 2}$ constant in $z$.
\begin{problem}\label{final} Determine $L(z)=L(z;s,\gamma)\in\mathbb{C}^{2\times 2}$ such that
\begin{enumerate}
	\item $L(z)$ is analytic for $z\in\mathbb{C}\backslash\Sigma_R$ with square integrable boundary values on the jump contour $\Sigma_R$ shown in Figure \ref{figure4}.
	\item The jumps are identical to the ones in the previous $Q$-RHP \ref{QRH}, i.e.
	\begin{equation*}
		L_+(z)=L_-(z)G_Q(z;s,\gamma),\ \ \ z\in\Sigma_R.
	\end{equation*}
	\item The function $L(z)$ is analytic at $z=\frac{1}{2}$ provided
	\begin{equation}\label{Bmat}
		B^{\pm}=\sigma^{\pm}L\left(\frac{1}{2}\right)T^{\pm}\begin{pmatrix} 0 & 1\\ 0 & 0 \end{pmatrix}\big(T^{\pm}\big)^{-1}\left\{L\left(\frac{1}{2}\right)-\sigma^{\pm}L'\left(\frac{1}{2}\right)T^{\pm}\begin{pmatrix} 0 & 1\\ 0 & 0 \end{pmatrix}\big(T^{\pm}\big)^{-1}\right\}^{-1},
	\end{equation}
	which follows directly from \eqref{res:1} and \eqref{sing:1}.
	\item As $z\rightarrow\infty$, we have that $L(z)\rightarrow I$.
\end{enumerate}
\end{problem}
\begin{remark} Note that
\begin{equation*}
	B^{\pm}=\sigma^{\pm}L\left(\frac{1}{2}\right)T^{\pm}\begin{pmatrix} 0 & 1\\ 0 & 0 \end{pmatrix}\big(T^{\pm}\big)^{-1}\left\{I-\sigma^{\pm}L^{-1}\left(\frac{1}{2}\right)L'\left(\frac{1}{2}\right)T^{\pm}\begin{pmatrix} 0 & 1 \\ 0 & 0 \end{pmatrix}\big(T^{\pm}\big)^{-1}\right\}^{-1}L^{-1}\left(\frac{1}{2}\right)
\end{equation*}
and thus 
\begin{equation*}
	\textnormal{tr}\,B^{\pm}=0=\det B^{\pm}.
\end{equation*}
This implies that $\det((z-\frac{1}{2})I+B^{\pm})=(z-\frac{1}{2})^2$ and hence \eqref{res:1} is consistent with the identities 
\begin{equation*}
	\det L(z)=\det Q(z)\equiv 1.
\end{equation*}
\end{remark}
At this point it is clear that the $L$-RHP admits direct asymptotic analysis, indeed we have
\begin{prop}\label{DZ:2} Given $\chi=k+\alpha\in\mathbb{R}_{\geq 0}$ with $k\in\mathbb{Z}_{\geq 0},-\frac{1}{2}\leq\alpha<\frac{1}{2}$, there exist positive $t_0=t_0(\chi),v_0=v_0(\chi)$ and $c=c(\chi)$ such that
\begin{equation*}
	\|G_Q(\cdot;s,\gamma)-I\|_{L^2\cap L^{\infty}(\partial D(\frac{1}{2},r))}\leq c\,t^{-\frac{1}{6}-|\alpha|},\ \ \forall\,t\geq t_0,\ v=-\ln(1-\gamma)\geq v_0:\ v=\frac{2}{3}\sqrt{2}\,t-\chi\ln t.
\end{equation*}
\end{prop}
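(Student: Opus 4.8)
The plan is to reduce the bound to an estimate of $E_0^{\pm}-I$ on the circle $\partial D(\frac{1}{2},r)$ and then to track carefully the diagonal conjugation by $(|s|z)^{\pm\frac{1}{4}\sigma_3}$, keeping in mind that $|s|=t^{2/3}$ is large along \eqref{scale}. First I would note that the rational prefactor $\big((z-\frac{1}{2})I+B^{\pm}\big)(z-\frac{1}{2})^{-1}$ in \eqref{res:1} has its only singularity at the centre $z=\frac{1}{2}$ of the disk, so it is analytic and single-valued on $\partial D(\frac{1}{2},r)$ (the matrix $B^{\pm}$ from \eqref{Bmat} being $z$-independent). Writing $Q(z)=\big((z-\frac{1}{2})I+B^{\pm}\big)L(z)(z-\frac{1}{2})^{-1}$ and substituting this into the jump relation $Q_{+}=Q_{-}E_0^{\pm}$ of RHP \ref{QRH} on that circle, the prefactor cancels and one is left with $L_{+}=L_{-}E_0^{\pm}$, i.e. $G_Q(z;s,\gamma)=E_0^{\pm}(z)$ for $z\in\partial D(\frac{1}{2},r)$, with the sign dictated by $\alpha\in[0,\frac{1}{2})$ resp. $\alpha\in[-\frac{1}{2},0]$. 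Since $\partial D(\frac{1}{2},r)$ is compact, an $L^{\infty}$-bound immediately yields the $L^{2}$-bound, so it suffices to estimate $E_0^{\pm}-I$ in $L^{\infty}(\partial D(\frac{1}{2},r))$.

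Next I would read $E_0^{\pm}-I$ off the factorization $G_R=E^{\pm}E_0^{\pm}$ together with \eqref{fac:1} and \eqref{fac:2}: for $\alpha\in[0,\frac{1}{2})$,
\begin{align*}
  E_0^{+}(z)-I=(|s|z)^{-\frac{1}{4}\sigma_3}\Big\{&R_1^{-}(z)t^{-\frac{1}{2}-\alpha}+\big(R_2-R_1^{+}R_1^{-}\big)(z)t^{-1}+\big(R_3^{+}-R_1^{+}R_2\big)(z)t^{-\frac{3}{2}+\alpha}\\
  &+R_3^{-}(z)t^{-\frac{3}{2}-\alpha}+\mathcal{O}\big(t^{-2+|\alpha|}\big)\Big\}(|s|z)^{\frac{1}{4}\sigma_3},
\end{align*}
and the mirror identity with $R_1^{+},t^{-\frac{1}{2}+\alpha}$ replacing $R_1^{-},t^{-\frac{1}{2}-\alpha}$ for $\alpha\in[-\frac{1}{2},0]$. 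On the fixed circle the coefficient matrices $R_1^{\pm},R_2,R_3^{\pm}$ are $t$-independent and uniformly bounded, since $\hat{\beta}$ is analytic and nonvanishing near $z=\frac{1}{2}$ and the conformal factor $\hat{\z}$ vanishes only at the centre $z=\frac{1}{2}$, so $\hat{\z}^{-1}$ is bounded on $\partial D(\frac{1}{2},r)$.

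The estimate then comes down to the conjugation. On $\partial D(\frac{1}{2},r)$ one has $|(|s|z)^{\pm\frac{1}{4}}|$ of order $|s|^{\pm1/4}=t^{\pm1/6}$, with constants depending only on $r$, and conjugating a bounded matrix $\bigl(\begin{smallmatrix}a&b\\c&d\end{smallmatrix}\bigr)$ by $(|s|z)^{-\frac{1}{4}\sigma_3}$ leaves $a,d$ untouched, damps $b$ by $(|s|z)^{-1/2}$ (order $t^{-1/3}$) and \emph{amplifies} $c$ by $(|s|z)^{1/2}$ (order $t^{1/3}$). Hence the largest entry of $E_0^{\pm}-I$ is the $(2,1)$-entry of its leading term, of size $t^{1/3}\cdot t^{-\frac{1}{2}\mp\alpha}=t^{-\frac{1}{6}\mp\alpha}=t^{-\frac{1}{6}-|\alpha|}$ ($\alpha\ge0$ in the $+$ case, $\alpha\le0$ in the $-$ case). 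After the same conjugation the remaining terms have size at most $t^{1/3}t^{-1}=t^{-2/3}$, $t^{1/3}t^{-\frac{3}{2}\pm\alpha}=t^{-\frac{7}{6}\pm\alpha}$ and $t^{1/3}t^{-2+|\alpha|}=t^{-\frac{5}{3}+|\alpha|}$ respectively, and because $|\alpha|\le\frac{1}{2}$ each of these is bounded by $t^{-\frac{1}{6}-|\alpha|}$. This gives $\|E_0^{\pm}(\cdot)-I\|_{L^{\infty}(\partial D(\frac{1}{2},r))}\le c\,t^{-\frac{1}{6}-|\alpha|}$ for $t,v$ large along the scaling \eqref{scale}, which together with the first step is precisely Proposition \ref{DZ:2}.

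The main obstacle is exactly this amplification by $|s|^{1/2}=t^{1/3}$ produced by the diagonal conjugation: it downgrades the naive $\mathcal{O}(t^{-\frac{1}{2}-|\alpha|})$ size of $E_0^{\pm}-I$ to only $\mathcal{O}(t^{-\frac{1}{6}-|\alpha|})$, and one must verify that this amplified leading term still dominates all the lower-order and error contributions — which is precisely what the constraint $|\alpha|\le\frac{1}{2}$ guarantees ($|\alpha|=\frac{1}{2}$ being the borderline case where the $t^{-1}$-term just catches up). This is also the structural reason for the factorization machinery of Subsection \ref{Airysing}, which was engineered to peel off exactly the non-decaying pieces $E^{\pm}$ of $G_R$ on $\partial D(\frac{1}{2},r)$; the remainder — uniform boundedness of $\hat{\beta},\hat{\z}^{-1}$ on the fixed circle and the $L^{2}\le c\,L^{\infty}$ passage on a compact contour — is routine.
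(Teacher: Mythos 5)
Your argument is correct and fills in exactly the computation the paper leaves implicit: the identity $G_Q=E_0^{\pm}$ on $\partial D(\frac{1}{2},r)$ is precisely what is recorded in item (2) of RHP \ref{QRH} (and inherited verbatim by $L$ in RHP \ref{final}), and your bookkeeping of the conjugation by $(|s|z)^{\pm\frac{1}{4}\sigma_3}$ against the powers $t^{-\frac{1}{2}\mp\alpha},t^{-1},t^{-\frac{3}{2}\pm\alpha},t^{-2+|\alpha|}$ in \eqref{fac:1}--\eqref{fac:2} is the intended check, with the amplification $|s|^{1/2}\sim t^{1/3}$ of the $(2,1)$-entry correctly producing the exponent $-\frac{1}{6}-|\alpha|$ and the constraint $|\alpha|\le\frac{1}{2}$ ensuring the remaining terms are subordinate. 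This is the same route the paper takes; the only cosmetic remark is that one need not re-derive $G_Q=E_0^{\pm}$ from the prefactor cancellation in \eqref{res:1}, since the paper already asserts this as a property of the $Q$-RHP.
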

Hence together with Proposition \ref{DZ:1}, by standard arguments \cite{DZ}, the singular integral equation
\begin{equation}\label{fin:1}
	L(z)=I+\frac{1}{2\pi\im}\int_{\Sigma_R}L_-(w)\big(G_Q(w)-I\big)\frac{\d w}{w-z},\ \ \ z\in\mathbb{C}\backslash\Sigma_R
\end{equation}
which is equivalent to RHP \ref{final} is solvable for sufficiently large $t\geq t_0,v\geq v_0$. In fact
\begin{prop}\label{DZ:3} Given $\chi\in\mathbb{R}_{\geq 0}$ there exist positive $t_0=t_0(\chi),v_0=v_0(\chi)$ and $c=c(\chi)$ such that RHP \ref{final} is uniquely solvable for $t\geq t_0,v\geq v_0:v=\frac{2}{3}\sqrt{2}\,t-\chi\ln t$. The solution $L=L(z;s,\gamma)$ satisfies
\begin{equation*}
	\|L_-(\cdot;s,\gamma)-I\|_{L^2(\Sigma_R)}\leq c\,t^{-\frac{1}{6}-|\alpha|},\ \ \forall\,t\geq t_0,\ v\geq v_0:\ \ v=\frac{2}{3}\sqrt{2}\,t-\chi\ln t.
\end{equation*}
\end{prop}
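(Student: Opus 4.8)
The plan is to run the standard Deift--Zhou small-norm machinery \cite{DZ} on the singular integral equation \eqref{fin:1}, using the norm bounds already established for the two pieces of the jump contour. First I would recall that $\Sigma_R$ splits as $\Sigma_R=\big(\Sigma_R\backslash\partial D(\tfrac12,r)\big)\cup\partial D(\tfrac12,r)$, and that Proposition \ref{DZ:1} gives $\|G_Q-I\|_{L^2\cap L^\infty}\leq c\,t^{-2/3}$ on the first piece (note $G_Q=G_R$ there) while Proposition \ref{DZ:2} gives $\|G_Q-I\|_{L^2\cap L^\infty}\leq c\,t^{-1/6-|\alpha|}$ on $\partial D(\tfrac12,r)$. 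Since $\tfrac23>\tfrac16+|\alpha|$ for all $\alpha\in[-\tfrac12,\tfrac12)$, the dominant contribution is the circle $\partial D(\tfrac12,r)$ and we get the global bound
\begin{equation*}
	\|G_Q(\cdot;s,\gamma)-I\|_{L^2\cap L^\infty(\Sigma_R)}\leq c\,t^{-\frac16-|\alpha|},
\end{equation*}
valid for $t\geq t_0$, $v\geq v_0$ along the Stokes line $v=\tfrac23\sqrt2\,t-\chi\ln t$.

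Next I would set up the Cauchy operator $\mathcal{C}_{G_Q}f=\mathcal{C}_-\big(f(G_Q-I)\big)$ on $L^2(\Sigma_R)$, where $\mathcal{C}_-$ is the boundary value of the Cauchy transform from the minus side. Here one must check that $\Sigma_R$ is admissible: it is a finite union of smooth arcs and circles with a finite number of self-intersection points (at $z=r$, $z=\tfrac12-r$, $z=\tfrac12+r$ and on $\partial D(0,r)$, $\partial D(\tfrac12,r)$), so $\mathcal{C}_-$ is bounded on $L^2(\Sigma_R)$ with a norm depending only on the geometry of $\Sigma_R$, which is fixed once $r\in(0,\tfrac18)$ is chosen. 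Consequently $\|\mathcal{C}_{G_Q}\|_{L^2\to L^2}\leq \|\mathcal{C}_-\|\,\|G_Q-I\|_{L^\infty(\Sigma_R)}\leq c\,t^{-1/6-|\alpha|}$, which is $\leq\tfrac12$ once $t_0$ is enlarged. Therefore $I-\mathcal{C}_{G_Q}$ is invertible by Neumann series and \eqref{fin:1} has a unique solution $L_-=I+(I-\mathcal{C}_{G_Q})^{-1}\mathcal{C}_-(G_Q-I)\in I+L^2(\Sigma_R)$; this gives existence and uniqueness of the solution to RHP \ref{final}, the reconstruction of $L(z)$ for $z\notin\Sigma_R$ being via the Cauchy integral in \eqref{fin:1}. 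The resulting estimate is
\begin{equation*}
	\|L_-(\cdot;s,\gamma)-I\|_{L^2(\Sigma_R)}\leq \frac{\|\mathcal{C}_-(G_Q-I)\|_{L^2}}{1-\|\mathcal{C}_{G_Q}\|}\leq c\,\|G_Q-I\|_{L^2(\Sigma_R)}\leq c\,t^{-\frac16-|\alpha|},
\end{equation*}
which is precisely the claimed bound.

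I expect the only genuine subtlety --- and hence the main obstacle --- to be the passage from the singular RHP \ref{QRH} to the regular RHP \ref{final}, i.e. verifying that the substitution \eqref{res:1} with $B^\pm$ given by \eqref{Bmat} is self-consistent. The definition of $B^\pm$ involves $L(\tfrac12)$ and $L'(\tfrac12)$, which themselves depend on $B^\pm$ through \eqref{fin:1}, so one has a fixed-point problem for the pair $(B^\pm,L)$; this must be solved simultaneously with the Neumann-series argument above. The way around this is to observe from \eqref{sing:2}, \eqref{sing:3} that $\sigma^\pm=\mathcal{O}(t^{-1/2-|\alpha|})\to 0$ and $T^\pm$ stays bounded (in fact $|s|^{1/2}=t^{1/3}$ enters $T^\pm$, so the relevant combination $\sigma^\pm T^\pm(\cdots)(T^\pm)^{-1}$ is still $o(1)$), so the curly bracket in \eqref{Bmat} is invertible for large $t$ and $B^\pm=\mathcal{O}(t^{-1/6-|\alpha|})$ is a small perturbation; then $(I+(z-\tfrac12)^{-1}B^\pm)$ is uniformly close to $I$ on $\partial D(\tfrac12,r)$, the modified jump $G_Q$ on $\partial D(\tfrac12,r)$ equals $(I+(z-\tfrac12)^{-1}B^\pm)\,E_0^\pm(z)\,(I+(z-\tfrac12)^{-1}B^\pm)^{-1}$ up to the conjugation in \eqref{res:1}, and one checks it still satisfies the $t^{-1/6-|\alpha|}$ bound of Proposition \ref{DZ:2} (this is really the content of how Proposition \ref{DZ:2} was phrased --- $G_Q$ already refers to the jump after the $B^\pm$-dressing). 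A short contraction-mapping argument on $B^\pm$, using that the map $B^\pm\mapsto \sigma^\pm L(\tfrac12)T^\pm(\cdots)(\cdots)^{-1}$ is Lipschitz with a small constant, closes the loop and yields simultaneous unique solvability. Everything else --- boundedness of $\mathcal{C}_-$, the Neumann series, the reconstruction --- is routine once the contour and the small-norm bounds are in hand.
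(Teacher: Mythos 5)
Your first two paragraphs correctly reproduce the paper's argument: $G_Q=G_R$ on $\Sigma_R\setminus\partial D(\tfrac12,r)$ so Proposition \ref{DZ:1} gives the $t^{-2/3}$ bound there, Proposition \ref{DZ:2} gives $t^{-1/6-|\alpha|}$ on the circle, and then the standard Cauchy-operator/Neumann-series machinery from \cite{DZ} applied to \eqref{fin:1} gives unique solvability and the stated $L^2$ estimate. That is exactly what the paper means by ``by standard arguments \cite{DZ}''.

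Your last paragraph, however, worries about a fixed-point problem that is not there, and several of the specific claims in it are false. First, there is no circularity between $L$ and $B^{\pm}$: the jump matrix $G_Q$ in \eqref{fin:1} is $E_0^{\pm}(z)$ on $\partial D(\tfrac12,r)$ and $G_R(z)$ elsewhere, and neither expression involves $B^{\pm}$. Moreover, since $\det\big((z-\tfrac12)I+B^{\pm}\big)=(z-\tfrac12)^2$ (by $\textnormal{tr}\,B^{\pm}=\det B^{\pm}=0$), the left factor in \eqref{res:1} is analytic and invertible on $\Sigma_R$, so $L$ and $Q$ share the same jump $G_Q$ on $\Sigma_R$ --- there is no ``$B^{\pm}$-dressed jump'' of the form $(I+(z-\tfrac12)^{-1}B^{\pm})E_0^{\pm}(I+(z-\tfrac12)^{-1}B^{\pm})^{-1}$. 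Hence conditions (1), (2), (4) of RHP \ref{final} determine $L$ by themselves; \eqref{Bmat} is then an explicit a posteriori formula for $B^{\pm}$ in terms of $L(\tfrac12)$, $L'(\tfrac12)$, not a constraint that feeds back into the integral equation. Second, your size estimates are off: from \eqref{sing:2} one has $\sigma^{\pm}=\mathcal{O}\big(t^{-(1/2-|\alpha|)}\big)$, not $\mathcal{O}(t^{-1/2-|\alpha|})$, and $T^{\pm}$ is not bounded --- it has an entry of size $|s|^{1/2}=t^{1/3}$, so $N^{\pm}=T^{\pm}\bigl(\begin{smallmatrix}0&1\\0&0\end{smallmatrix}\bigr)(T^{\pm})^{-1}$ has an entry growing like $t^{1/3}$ and $B^{\pm}$ itself is not $\mathcal{O}(t^{-1/6-|\alpha|})$ in the unconjugated normalization. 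None of this affects your first two paragraphs, because $B^{\pm}$ plays no role in the solvability of the $L$-RHP; but the contraction-mapping step you propose is both unnecessary and, as written, unjustified.
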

The last Proposition can be used to derive an asymptotic expansion for the logarithmic $s$-derivative through \eqref{diff:1}. In the derivation of such an expansion a certain structural information will prove useful:\smallskip

Note that all jump matrices in the $Q$-RHP are in fact (formally) conjugated by $|s|^{-\frac{1}{4}\sigma_3}$, compare for instance \eqref{fac:1}, \eqref{fac:2}, i.e. we can write
\begin{equation*}
	G_Q(z;s,\gamma)=|s|^{-\frac{1}{4}\sigma_3}\widehat{G}_Q(z;s,\gamma)|s|^{\frac{1}{4}\sigma_3},\ \ \ z\in\Sigma_R.
\end{equation*}
This leads to a RHP for the function $\widehat{L}(z)=\widehat{L}(z;s,\gamma)=|s|^{\frac{1}{4}\sigma_3}L(z;s,\gamma)|s|^{-\frac{1}{4}\sigma_3}$ which is uniquely solvable by iteration,
%and with $c>0$,
%\begin{equation}\label{DZ:impr}
%	\|\widehat{G}_Q(\cdot;s,\gamma)-I\|_{L^2\cap L^{\infty}(\Sigma_R\backslash\partial D(\frac{1}{2},r))}\leq \frac{c}{t};\ \ \ \ \ \|\widehat{G}_Q(\cdot;s,\gamma)-I\|_{L^2\cap L^{\infty}(\partial D(\frac{1}{2},r))}\leq c\,t^{-\frac{1}{2}-|\alpha|}.
%\end{equation}
%Hence the function
%\begin{equation}\label{DZ:good}
%	\widehat{L}(z)=\widehat{L}(z;s,\gamma)=|s|^{\frac{1}{4}\sigma_3}L(z;s,\gamma)|s|^{-\frac{1}{4}\sigma_3},\ \ \ z\in\mathbb{C}\backslash\Sigma_R;
%\end{equation}
%solves through \eqref{fin:1} the integral equation
%\begin{equation}\label{DZ:int}
%	\widehat{L}(z)=I+\frac{1}{2\pi\im}\int_{\Sigma_R}\widehat{L}_-(w)\big(\widehat{G}_Q(w)-I\big)\frac{\d w}{w-z},\ \ \ z\in\mathbb{C}\backslash\Sigma_R
%\end{equation}
%and we have from \cite{DZ},
\begin{equation}\label{DZ:better}
	\|\widehat{L}_-(\cdot;s,\gamma)-I\|_{L^2(\Sigma_R)}\leq c\, t^{-\frac{1}{2}-|\alpha|},\ \ \ \forall\, t\geq t_0,\ v\geq v_0:\ \ \ v=\frac{2}{3}\sqrt{2}\,t-\chi\ln t.
\end{equation}
\section{Extraction of large gap asymptotics}\label{Airyp3} We recall \eqref{diff:1},
\begin{equation*}
	F(J_{\textnormal{Ai}};\gamma)\equiv\frac{\partial}{\partial s}\ln D(J_{\textnormal{Ai}};\gamma)=\frac{\gamma}{2\pi\im}\big(X^{-1}(z)X'(z)\big)_{21}\Big|_{z\rightarrow s}
\end{equation*}
where the limit is carried out for $\textnormal{arg}(z-s)\in(0,\frac{2\pi}{3})$.
\subsection{Asymptotics for the derivative}\label{asyderiv} Through the sequence of transformations
\begin{equation*}
	X(z)\mapsto T(z)\mapsto S(z)\mapsto R(z)\mapsto Q(z)\mapsto L(z),
\end{equation*}
identity \eqref{diff:1} leads us to
\begin{align*}
	F(J_{\textnormal{Ai}};\gamma)=&\,\frac{\gamma}{2\pi\im|s|}\big(T^{-1}(w)T'(w)\big)_{21}\Big|_{w\rightarrow 0}=\frac{\gamma\e^{-2tg_+(0)}}{2\pi\im|s|}\big(S^{-1}(w)S'(w)\big)_{21}\Big|_{w\rightarrow 0}\\
	=&\,\frac{\gamma}{2\pi\im|s|}\left\{\big(P^{(0)}(w)\big)^{-1}\big(P^{(0)}(w)\big)'\right\}_{21}\Big|_{w\rightarrow 0}+\frac{\gamma}{2\pi\im|s|}\left\{\big(P^{(0)}(w)\big)^{-1}Q^{-1}(w)Q'(w)P^{(0)}(w)\right\}_{21}\Big|_{w\rightarrow 0}
\end{align*}
and the first summand is computed with the help of RHP \ref{origin} as
\begin{equation*}
	\frac{\gamma}{2\pi\im|s|}\left\{\big(P^{(0)}(w)\big)^{-1}\big(P^{(0)}(w)\big)'\right\}_{21}\Big|_{w\rightarrow 0}=\frac{1}{4}\gamma|s|^2\left(1-2^{\frac{3}{2}}\frac{k}{t}\right)^2.
\end{equation*}
For the second, with \eqref{res:1},
\begin{align*}
	\frac{\gamma}{2\pi\im|s|}\Big\{\big(P^{(0)}(w)\big)^{-1}&Q^{-1}(w)Q'(w)P^{(0)}(w)\Big\}_{21}\Big|_{w\rightarrow 0}=-\frac{2\gamma}{\pi\im|s|}\left\{\big(P^{(0)}(w)\big)^{-1}L^{-1}(w)B^{\pm}L(w)P^{(0)}(w)\right\}_{21}\Big|_{w\rightarrow 0}\\
	&+\frac{\gamma}{2\pi\im|s|}\left\{\big(P^{(0)}(w)\big)^{-1}L^{-1}(w)L'(w)P^{(0)}(w)\right\}_{21}\Big|_{w\rightarrow 0}\\
	&\,=2\gamma\e^{-2\pi\im\alpha}\left(1-2^{\frac{3}{2}}\frac{k}{t}\right)\Big\{L^{-1}(0)B^{\pm}L(0)\Big\}_{21}-\frac{\gamma}{2}\e^{-2\pi\im\alpha}\left(1-2^{\frac{3}{2}}\frac{k}{t}\right)\Big\{L^{-1}(0)L'(0)\Big\}_{21}.
\end{align*}
We now begin to compute the contributions from $L^{\pm 1}(0),L'(0)$ and $L^{\pm 1}(\frac{1}{2}),L'(\frac{1}{2})$. In order to achieve this, we use the integral equation corresponding to $\widehat{L}(z)$. Modulo exponentially small contributions, for $z\in\mathbb{C}\backslash\Sigma_R$,
\begin{align}\label{best:1}
	\widehat{L}(z)=&\,I+\frac{1}{2\pi\im}\oint_{\partial D(0,r)}\!\!\big(\widehat{G}_Q(w)-I\big)\frac{\d w}{w-z}+\frac{1}{2\pi\im}\oint_{\partial D(\frac{1}{2},r)}\!\!\big(\widehat{G}_Q(w)-I\big)\frac{\d w}{w-z}\\
	&\,+\frac{1}{2\pi\im}\oint_{\partial D(0,r)}\!\!\big(\widehat{L}_-(w)-I\big)\big(\widehat{G}_Q(w)-I\big)\frac{\d w}{w-z}+\frac{1}{2\pi\im}\oint_{\partial D(\frac{1}{2},r)}\!\!\big(\widehat{L}_-(w)-I\big)\big(\widehat{G}_Q(w)-I\big)\frac{\d w}{w-z}.\nonumber
\end{align}
With \eqref{DZ:better},%But with Propositions \ref{DZ:1} and \ref{DZ:3},
\begin{equation*}\label{best:2}
	\frac{1}{2\pi\im}\oint_{\partial D(0,r)}\!\!\big(\widehat{L}_-(w)-I\big)\big(\widehat{G}_Q(w)-I\big)\frac{\d w}{w-z}=\mathcal{O}\left(t^{-\frac{3}{2}-|\alpha|}\right),\ \ z\in\mathbb{C}\backslash\Sigma_R.
\end{equation*}
For an analogous estimation on $\partial D(\frac{1}{2},r)$ which we require in \eqref{best:1}, we use %\eqref{DZ:impr} and \eqref{DZ:better},
%\begin{equation*}
%	z\in\,\partial D\left(\frac{1}{2},r\right):\ \ \ \  \widehat{L}_-(z)-I=\frac{1}{2\pi\im}\oint_{\partial D(\frac{1}{2},r)}\!\!\big(\widehat{G}_Q(w)-I\big)\frac{\d w}{w-z_-}+\mathcal{O}\left(t^{-1-2|\alpha|}\right),
%\end{equation*}
%and thus by residue theorem with \eqref{fac:1}, \eqref{fac:2},
%\begin{align*}
%	L_-(z)&\,-I=\Bigg\{\frac{2^{\pm\frac{5k}{2}-\frac{5}{4}}}{z-\frac{1}{2}}\left(\frac{|s|}{2}\right)^{-\frac{1}{4}\sigma_3}\e^{-\im\pi\alpha\sigma_3}\begin{pmatrix} 1 & \mp 1\\ \pm 1 & -1 \end{pmatrix} \e^{\im\pi\alpha\sigma_3}\left(\frac{|s|}{2}\right)^{\frac{1}{4}\sigma_3}-(|s|z)^{-\frac{1}{4}\sigma_3}\frac{\hat{\beta}^{\pm 2}(z)}{2\hat{\z}(z)} \e^{-\im\pi\alpha\sigma_3}\\
%	&\,\times\begin{pmatrix} 1 & \mp 1\\ \pm 1 & -1 \end{pmatrix}
%	\e^{\im\pi\alpha\sigma_3}(|s|z)^{\frac{1}{4}\sigma_3}\Bigg\}t^{-\frac{1}{2}-|\alpha|}\begin{cases} \im\gamma_{k-1},&(+)\\ -\im\gamma_k^{-1},&(-) \end{cases}\ \ \ +\mathcal{O}\left(t^{-\min\{\frac{2}{3},\frac{1}{3}+2|\alpha|\}}\right);\ \ z\in\partial D\left(\frac{1}{2},r\right).
%\end{align*}
%In turn we obtain from this expansion
\begin{equation*}
	\frac{1}{2\pi\im}\oint_{\partial D(\frac{1}{2},r)}\big(\widehat{L}_-(w)-I\big)\big(\widehat{G}_Q(w)-I\big)\frac{\d w}{w-z}=\mathcal{O}\left(t^{-1-2|\alpha|}\right),\ \ z\in\mathbb{C}\backslash\Sigma_R;
\end{equation*}
and therefore back in \eqref{best:1}, for $z\in\mathbb{C}\backslash\Sigma_R$, since $1\leq1+2|\alpha|\leq\frac{3}{2}+|\alpha|\leq 2$,
\begin{equation}\label{best:3}
	\widehat{L}(z)=I+\frac{1}{2\pi\im}\oint_{\partial D(0,r)}\!\!\big(\widehat{G}_Q(w)-I\big)\frac{\d w}{w-z}+\frac{1}{2\pi\im}\oint_{\partial D(\frac{1}{2},r)}\!\!\big(\widehat{G}_Q(w)-I\big)\frac{\d w}{w-z}+\mathcal{O}\left(t^{-1-2|\alpha|}\right).
\end{equation}
Starting from \eqref{best:3} it is now straightforward to derive the following estimates,
\begin{align}
	%L(0)&=&I+\e^{-\im\pi\alpha\sigma_3}|s|^{-\frac{1}{4}\sigma_3}\Big\{A_0t^{-1}+A_1^{\pm}t^{-1}\Big\}|s|^{\frac{1}{4}\sigma_3}\e^{\im\pi\alpha\sigma_3}+\mathcal{O}\left(t^{-\frac{1}{6}-|\alpha|}\right),\\
	|s|^{\frac{1}{4}\sigma_3}L^{-1}(0)|s|^{-\frac{1}{4}\sigma_3}=\,&I-\e^{-\im\pi\alpha\sigma_3}\Big\{A_0^{\pm}t^{-\frac{1}{2}-|\alpha|}+A_0t^{-1}+A_1^{\pm}t^{-1}+\mathcal{O}\left(t^{-\min\{1+2|\alpha|,\frac{3}{2}-|\alpha|\}}\right)\Big\}\e^{\im\pi\alpha\sigma_3},\label{co:1}\\
	|s|^{\frac{1}{4}\sigma_3}L'(0)|s|^{-\frac{1}{4}\sigma_3}=\,&\e^{-\im\pi\alpha\sigma_3}\Big\{A_2^{\pm}t^{-\frac{1}{2}-|\alpha|}+A_2t^{-1}+A_3^{\pm}t^{-1}+\mathcal{O}\left(t^{-\min\{1+2|\alpha|,\frac{3}{2}-|\alpha|\}}\right)\Big\}\e^{\im\pi\alpha\sigma_3},\label{co:2}
\end{align}
with
\begin{equation*}
	A_0=\frac{1}{8}\left(1-2^{\frac{3}{2}}\frac{k}{t}\right)^{-1}\begin{pmatrix} 0 & \frac{2}{3}\frac{1+2^{\frac{3}{2}}\frac{k}{t}}{1-2^{\frac{3}{2}}\frac{k}{t}}\\ -3 & 0 \end{pmatrix};\ \ \ \ \ \ A_1^{\pm}=2^{-\frac{3}{2}}\frac{k}{3}(k\pm 1)2^{\frac{1}{4}\sigma_3}\begin{pmatrix} 0 & 7\\ 1 & 0 \end{pmatrix}2^{-\frac{1}{4}\sigma_3};
\end{equation*}
and
\begin{equation*}
	A_2=-\frac{1}{4}\left(1-2^{\frac{3}{2}}\frac{k}{t}\right)^{-1}\begin{pmatrix} 0 & \star\\ \frac{1+2^{\frac{3}{2}}\frac{k}{t}}{1-2^{\frac{3}{2}}\frac{k}{t}} & 0 \end{pmatrix};\ \ \ \ \ \ A_3^{\pm}=2^{-\frac{1}{2}}\frac{k}{3}(k\pm 1)2^{\frac{1}{4}\sigma_3}\begin{pmatrix} 0 & 13\\ 7 & 0 \end{pmatrix}2^{-\frac{1}{4}\sigma_3}.
\end{equation*}
\begin{remark} Explicit expressions for the matrices $A_0^{\pm}$ and $A_2^{\pm}$, for instance
\begin{equation*}
	A_2^{\pm}=2^{\pm\frac{5k}{2}+\frac{3}{4}}2^{\frac{1}{4}\sigma_3}\begin{pmatrix} 1 & \mp 1\\ \pm 1 & -1 \end{pmatrix}2^{-\frac{1}{4}\sigma_3}\begin{cases}-\im\gamma_{k-1},&(+)\\ \im\gamma_k^{-1},&(-) \end{cases},
\end{equation*}
are not going to be important later on: As we shall see shortly, all powers of $t$ which explicitly contain the parameter $\alpha$ will not contribute to leading orders after integration, see \eqref{int:clue} and \eqref{int:clue2} below.
\end{remark}
We now obtain
\begin{align}\label{best:4}
	\Big\{L^{-1}(0)&L'(0)\Big\}_{21}=\e^{2\pi\im\alpha}|s|^{\frac{1}{2}}\Big\{A_2^{\pm}t^{-\frac{1}{2}-|\alpha|}+A_2t^{-1}+A_3^{\pm}t^{-1}+\mathcal{O}\left(t^{-\min\{1+2|\alpha|,\frac{3}{2}-|\alpha|\}}\right)\Big\}_{21}\\
	&=\e^{2\pi\im\alpha}t^{\frac{1}{3}}\left[-\frac{1}{4t}+\frac{7k}{6t}(k\pm 1)\pm2^{\pm\frac{5k}{2}+\frac{1}{4}}t^{-\frac{1}{2}-|\alpha|}\begin{cases}-\im\gamma_{k-1},&(+)\\ \im\gamma_k^{-1},&(-) \end{cases}\,\,+\mathcal{O}\left(t^{-\min\{1+2|\alpha|,\frac{3}{2}-|\alpha|\}}\right)\right].\nonumber
\end{align}
Since furthermore
\begin{equation*}
	L^{\pm 1}\left(\frac{1}{2}\right)=\e^{-\im\pi\alpha\sigma_3}|s|^{-\frac{1}{4}\sigma_3}\left[I+\mathcal{O}\left(t^{-\frac{1}{2}-|\alpha|}\right)\right]|s|^{\frac{1}{4}\sigma_3}\e^{\im\pi\alpha\sigma_3},
\end{equation*}
we also deduce from \eqref{Bmat},
\begin{equation*}
	B^{\pm}=\sigma^{\pm}\e^{-\im\pi\alpha\sigma_3}|s|^{-\frac{1}{4}\sigma_3}\Big\{I+\mathcal{O}\left(t^{-\frac{1}{2}-|\alpha|}\right)\Big\}2^{\frac{1}{4}\sigma_3}\begin{pmatrix} 1 & \pm 1\\ \mp 1 & -1 \end{pmatrix}2^{-\frac{1}{4}\sigma_3}\Big\{I+\mathcal{O}\left(t^{-\frac{1}{2}-|\alpha|}\right)\Big\}|s|^{\frac{1}{4}\sigma_3}\e^{\im\pi\alpha\sigma_3},
\end{equation*}
and thus
\begin{equation}\label{best:5}
	\Big\{L^{-1}(0)B^{\pm}L(0)\Big\}_{21}=\e^{2\pi\im\alpha}|s|^{\frac{1}{2}}\sigma^{\pm}\Big\{2^{\frac{1}{4}\sigma_3}\begin{pmatrix} 1 & \pm 1\\ \mp 1 & -1 \end{pmatrix}2^{-\frac{1}{4}\sigma_3} +\mathcal{O}\left(t^{-\frac{1}{2}-|\alpha|}\right)\Big\}_{21}.
\end{equation}
Summarizing (recall that $\gamma=1-\e^{-\varkappa_{\textnormal{Ai}}t}=1+\mathcal{O}\left(t^{-\infty}\right)$),
\begin{eqnarray}
	F(J_{\textnormal{Ai}};\gamma)&=&\frac{1}{4}|s|^2\left(1-2^{\frac{3}{2}}\frac{k}{t}\right)^2\mp\sqrt{2}\,t^{\frac{1}{3}}\sigma^{\pm}-\frac{1}{2}t^{\frac{1}{3}}\left(-\frac{1}{4t}+\frac{7k}{6t}(k\pm 1)\right)+\mathcal{O}\left(t^{-\frac{1}{6}-|\alpha|}\right)\nonumber\\
	&=&\frac{|s|^2}{4}-\frac{1}{8s}\mp\sqrt{2}|s|^{\frac{1}{2}}\sigma^{\pm}-\sqrt{2}|s|^{\frac{1}{2}}k-\frac{2k^2}{s}+\frac{7k}{12s}(k\pm 1)+\mathcal{O}\left(t^{-\frac{1}{6}-|\alpha|}\right),\label{best:6}
\end{eqnarray}
which is the central estimation for the upcoming integration.
\subsection{Integration of expansion \eqref{best:6}} Let us first work out the details in the special case $k=0$ and $0\leq\alpha<\frac{1}{2}$. Note that in this situation
\begin{equation}\label{int:1}
	F(J_{\textnormal{Ai}};\gamma)=\frac{s^2}{4}-\frac{1}{8s}-\sqrt{2}|s|^{\frac{1}{2}}\sigma^++\mathcal{O}\left(t^{-\frac{1}{6}-\alpha}\right),\ \ \ \ \sigma^+=\frac{\pi^{-\frac{1}{2}}2^{-\frac{9}{4}}t^{\alpha-\frac{1}{2}}}{1+\pi^{-\frac{1}{2}}2^{-\frac{9}{4}}t^{\alpha-\frac{1}{2}}},
\end{equation}
and thus the following Lemma will be useful.
\begin{lem} Let $t=t(s)=(-s)^{\frac{3}{2}}$ and $\hat{s}_0<s<0$. For any $f\in L^1(\hat{s}_0,s)$, we have
\begin{equation*}
	\int_{\hat{s}_0}^sf\big(t(u)\big)\d u=-\frac{2}{3}\int_{\hat{t}_0}^tf(w)\frac{\d w}{w^{\frac{1}{3}}};\ \ \ \ \ \hat{t}_0=(-\hat{s}_0)^{\frac{3}{2}}>0.
\end{equation*}
\end{lem}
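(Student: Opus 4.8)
The plan is to prove the identity by the single substitution $w=t(u)=(-u)^{\frac{3}{2}}$, equivalently $u=-w^{\frac{2}{3}}$. First I would record the elementary structural facts: the map $u\mapsto(-u)^{\frac{3}{2}}$ restricts to a strictly decreasing $C^1$-diffeomorphism of $(-\infty,0)$ onto $(0,\infty)$ with $C^1$ inverse $w\mapsto-w^{\frac{2}{3}}$, and since $\hat{s}_0<s<0$ we have $\hat{t}_0=(-\hat{s}_0)^{\frac{3}{2}}>(-s)^{\frac{3}{2}}=t>0$, so this diffeomorphism carries the interval $(\hat{s}_0,s)$ onto $(t,\hat{t}_0)$ while reversing orientation. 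Differentiating, $\frac{\d w}{\d u}=-\frac{3}{2}(-u)^{\frac{1}{2}}$, and using $(-u)=w^{\frac{2}{3}}$ this equals $-\frac{3}{2}w^{\frac{1}{3}}$, hence $\d u=-\frac{2}{3}w^{-\frac{1}{3}}\,\d w$.

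Next I would invoke the change-of-variables theorem for the Lebesgue integral along this monotone $C^1$ substitution. Because $w\mapsto w^{-\frac{1}{3}}$ is continuous, positive, and bounded above and below on the compact interval $[t,\hat{t}_0]\subset(0,\infty)$, the integrability hypothesis transfers without any loss (so that all three of $f\circ t\in L^1(\hat{s}_0,s)$, $f\in L^1(t,\hat{t}_0)$, and $w^{-\frac{1}{3}}f\in L^1(t,\hat{t}_0)$ are equivalent), and as $u$ runs from $\hat{s}_0$ to $s$ the variable $w$ runs from $\hat{t}_0$ to $t$. Therefore
\[
	\int_{\hat{s}_0}^s f\big(t(u)\big)\,\d u=\int_{\hat{t}_0}^{t}f(w)\left(-\frac{2}{3}w^{-\frac{1}{3}}\right)\d w=-\frac{2}{3}\int_{\hat{t}_0}^t f(w)\,\frac{\d w}{w^{\frac{1}{3}}},
\]
which is exactly the asserted formula.

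There is essentially no real obstacle: this is a routine change of variables, and the only two points worth a word of care are (i) the orientation reversal, which is what makes the lower and upper limits on the right-hand side come out as $\hat{t}_0$ and $t$ respectively and keeps the sign of $\d u/\d w$ consistent with the sign in front of the integral, and (ii) the transfer of the $L^1$-hypothesis, which is immediate from the boundedness of $w^{-1/3}$ on the relevant compact $w$-interval. The lemma will then be used in the integration of the derivative expansion \eqref{best:6}: it converts $\frac{\partial}{\partial s}\ln D(J_{\textnormal{Ai}};\gamma)$, naturally a function of $t=(-s)^{\frac{3}{2}}$, into elementary integrals in $w$ of the form $\int^t w^{\beta}\,\d w$, which can be evaluated explicitly and matched against the known expansion of $D(J_{\textnormal{Ai}};1)$.
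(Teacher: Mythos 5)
Your proof is correct and is the natural one; the paper states this lemma without proof precisely because it is the routine monotone $C^1$ substitution $w=(-u)^{3/2}$ that you carry out, including the orientation reversal that produces the limits $\hat{t}_0$ to $t$. Your remark on transferring the $L^1$ hypothesis across the diffeomorphism (using that $w^{-1/3}$ is bounded above and below on the compact interval $[t,\hat{t}_0]\subset(0,\infty)$) is the right way to make the slightly loose notation $f\in L^1(\hat{s}_0,s)$ precise, and nothing further is needed.
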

Recall that
\begin{equation}\label{int:2}
	\varkappa_{_\textnormal{Ai}}=\frac{v}{t}=\frac{2}{3}\sqrt{2}-\chi\frac{\ln t}{t};\ \ \ \ v=-\ln(1-\gamma)>0,\ \ \chi=k+\alpha;
\end{equation}
so that
\begin{equation}\label{int:clue}
	t^{\alpha}=t^{\chi-k}=t^{-k}\e^{-v+\frac{2}{3}\sqrt{2}\,t},
\end{equation}
and thus back in \eqref{int:1},
\begin{align*}
	-\sqrt{2}&\int_{\hat{s}_0}^s(-u)^{\frac{1}{2}}\sigma^+\big(t(u)\big)\d u=\frac{2}{3}\sqrt{2}\int_{\hat{t}_0}^t\frac{\pi^{-\frac{1}{2}}2^{-\frac{9}{4}}w^{-\frac{1}{2}}\e^{-v}\e^{\frac{2}{3}\sqrt{2}\,w}}{1+\pi^{-\frac{1}{2}}2^{-\frac{9}{4}}w^{-\frac{1}{2}}\e^{-v}\e^{\frac{2}{3}\sqrt{2}\,w}}\d w\\
	&\,=\ln\left(1+\pi^{-\frac{1}{2}}2^{-\frac{9}{4}}w^{-\frac{1}{2}}\e^{-v}\e^{\frac{2}{3}\sqrt{2}\,w}\right)\bigg|_{w=\hat{t}_0}^t+\frac{1}{2}\int_{\hat{t}_0}^t\frac{\pi^{-\frac{1}{2}}2^{-\frac{9}{4}}w^{-\frac{1}{2}}\e^{-v}\e^{\frac{2}{3}\sqrt{2}\,w}}{1+\pi^{-\frac{1}{2}}2^{-\frac{9}{4}}w^{-\frac{1}{2}}\e^{-v}\e^{\frac{2}{3}\sqrt{2}\,w}}\frac{\d w}{w}.
\end{align*}
Let us now choose $(v,t)$ sufficiently large positive such that
\begin{equation*}
	\frac{3}{2\sqrt{2}}\,v\leq t\leq\frac{3}{2\sqrt{2}}\left(v+\frac{1}{2}\ln t\right),
\end{equation*}
i.e. the base point of integration is $\hat{s}_0=-(\frac{3v}{2\sqrt{2}})^{\frac{2}{3}}$ or equivalently $\hat{t}_0=\frac{3v}{2\sqrt{2}}$. Then
%But this means that with the choice
%\begin{equation*}
%	s_0=-v^{\frac{2}{3}}<0\ \ \ \ \ \Rightarrow\ \ \ \ \ 0<t_0=v<t\stackrel{\eqref{int:2}}{=}\frac{3}{2\sqrt{2}}\big(v+\chi\ln t\big),
%\end{equation*}
%we obtain
\begin{equation*}
	-\sqrt{2}\int_{\hat{s}_0}^s(-u)^{\frac{1}{2}}\sigma^+\big(t(u)\big)\d u=\ln\left(1+\pi^{-\frac{1}{2}}2^{-\frac{9}{4}}t^{-\frac{1}{2}}\e^{\frac{2}{3}\sqrt{2}\,t-v}\right)+\mathcal{O}\left(t^{-\frac{1}{2}}\right).
\end{equation*}
Summarizing, we can integrate \eqref{int:1} as follows,
\begin{equation}\label{int:3}
	\ln\left(\frac{\det(I-\gamma K_{\textnormal{Ai}})\big|_{L^2(s,\infty)}}{\det(I-\gamma K_{\textnormal{Ai}})\big|_{L^2(\hat{s}_0,\infty)}}\right)=\frac{1}{12}(s^3-\hat{s}_0^3)-\frac{1}{8}\ln\left|\frac{s}{\hat{s}_0}\right|+\ln\left(1+\pi^{-\frac{1}{2}}2^{-\frac{9}{4}}t^{-\frac{1}{2}}\e^{\frac{2}{3}\sqrt{2}\,t-v}\right)+\mathcal{O}\left(t^{-\frac{1}{2}}\right),
\end{equation}
where we used that
\begin{equation}\label{int:clue2}
	\int_{\hat{s}_0}^s\mathcal{O}\left(t^{-\frac{1}{6}-\alpha}(u)\right)\d u=\int_{\hat{t}_0}^t\mathcal{O}\left(w^{-\frac{1}{2}-\alpha}\right)\d w=\int_{\hat{t}_0}^t\mathcal{O}\left(w^{-\frac{1}{2}}\e^{-\frac{2}{3}\sqrt{2}\,w}e^v\right)\d w=\mathcal{O}\left(t^{-\frac{1}{2}}\right),
\end{equation}
uniformly as $t\rightarrow+\infty,\gamma\uparrow 1$ such that $\frac{3}{2\sqrt{2}}\,v\leq t\leq\frac{3}{2\sqrt{2}}(v+\frac{1}{2}\ln t)$. However, with \eqref{A:1}, we have that
\begin{equation*}
	\ln\det(I-\gamma K_{\textnormal{Ai}})\big|_{L^2(\hat{s}_0,\infty)} = \frac{\hat{s}_0^3}{12}-\frac{1}{8}\ln|\hat{s}_0|+\ln c_0+\mathcal{O}\left(\hat{t}_0^{-\frac{1}{2}}\right),\ \ \ \ c_0=\exp\left[\frac{1}{24}\ln 2+\z'(-1)\right]
\end{equation*}
since for the latter determinant $\varkappa_{_\textnormal{Ai}}=\frac{v}{t(\hat{s}_0)} = \frac{2}{3}\sqrt{2}\geq \frac{2}{3}\sqrt{2}$. This substituted back into \eqref{int:3} we obtain
\begin{prop}\label{k0} There exist $t_0>0$ and $v_0>0$ such that
\begin{equation*}
	\ln\det(I-\gamma K_{\textnormal{Ai}})\Big|_{L^2(s,\infty)}=\frac{s^3}{12}-\frac{1}{8}\ln|s|+\ln c_0+\ln\left(1+\frac{1}{\sqrt{\pi}}\,2^{-\frac{9}{4}}t^{-\frac{1}{2}}\e^{\frac{2}{3}\sqrt{2}\,t-v}\right)+\mathcal{O}\left(t^{-\frac{1}{2}}\right)
\end{equation*}
uniformly for
\begin{equation}\label{condi}
	 t\geq t_0,v\geq v_0:\ \ \ \ \ \frac{3}{2\sqrt{2}}\,v\leq t\leq\frac{3}{2\sqrt{2}}\left(v+\frac{1}{2}\ln t\right).
\end{equation}
\end{prop}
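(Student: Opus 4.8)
The plan is to integrate the differential identity \eqref{diff:1} in the endpoint $s$, anchored at a base point where the Fredholm determinant collapses to the known Airy gap probability $D(J_{\textnormal{Ai}};1)$. By Proposition \ref{DZ:3} the $L$-RHP is uniquely solvable along the Stokes curve, so the expansion \eqref{best:6} of $F(J_{\textnormal{Ai}};\gamma)=\partial_s\ln D(J_{\textnormal{Ai}};\gamma)$ holds; for $k=0$ and $0\le\alpha<\tfrac12$ it reduces to \eqref{int:1}. Since $F(\cdot;\gamma)$ is locally integrable in $s$, one integrates from a base point $\hat s_0<0$ to $s$, and the change-of-variables Lemma turns each $s$-integral into a $w$-integral against $w^{-1/3}$.

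First I would fix $\hat t_0=\tfrac{3v}{2\sqrt2}$, i.e. $\hat s_0=-(\tfrac{3v}{2\sqrt2})^{2/3}$, and integrate \eqref{int:1}. The algebraic terms $\tfrac{s^2}{4}$ and $-\tfrac1{8s}$ contribute $\tfrac1{12}(s^3-\hat s_0^3)$ and $-\tfrac18\ln|s/\hat s_0|$. The essential computation is the $\sigma^+$-term: using $\chi=\alpha$ and the Stokes scaling \eqref{int:2} one has the crucial substitution \eqref{int:clue}, $t^{\alpha}=\e^{-v+\frac23\sqrt2\,t}$, and after the change of variables $-\sqrt2\int_{\hat s_0}^s(-u)^{1/2}\sigma^+\,\d u$ becomes $\ln\bigl(1+\pi^{-1/2}2^{-9/4}w^{-1/2}\e^{\frac23\sqrt2\,w-v}\bigr)\big|_{w=\hat t_0}^{\,t}$ plus an integral whose integrand carries an extra factor $w^{-1}$. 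On the slab \eqref{condi} the latter is $\mathcal O(t^{-1/2})$, since $[\hat t_0,t]$ has length $\mathcal O(\ln t)$ and $w^{-1}=\mathcal O(t^{-1})$ there; and at $w=\hat t_0$ one has $\tfrac23\sqrt2\,\hat t_0-v=0$, so the boundary term there is $\ln(1+\mathcal O(\hat t_0^{-1/2}))=\mathcal O(t^{-1/2})$ — precisely why $\hat t_0=\tfrac{3v}{2\sqrt2}$ is the right base point. The error $\mathcal O(t^{-1/6-\alpha})$ in \eqref{int:1} integrates to $\mathcal O(t^{-1/2})$ by \eqref{int:clue2}. This produces the intermediate identity \eqref{int:3}, relating $\ln D$ at $s$ to $\ln D$ at $\hat s_0$.

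It then remains to evaluate $\ln D(J_{\textnormal{Ai}};\gamma)|_{L^2(\hat s_0,\infty)}$. At the base point $\varkappa_{_\textnormal{Ai}}=v/t(\hat s_0)=\tfrac23\sqrt2$, hence $\gamma=1-\e^{-\varkappa_{_\textnormal{Ai}}t}=1+\mathcal O(t^{-\infty})$ and the determinant is, to all orders, the Airy gap probability. The refined expansion \eqref{A:1} from Appendix \ref{Airyapp} gives $\ln D(J_{\textnormal{Ai}};\gamma)|_{L^2(\hat s_0,\infty)}=\tfrac1{12}\hat s_0^3-\tfrac18\ln|\hat s_0|+\ln c_0+\mathcal O(\hat t_0^{-1/2})$ with $c_0=\exp[\tfrac1{24}\ln2+\z'(-1)]$. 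Substituting into \eqref{int:3}, the $\hat s_0$-dependent terms cancel, and since \eqref{condi} forces $\hat t_0=t(1+o(1))$ (so every $\mathcal O(\hat t_0^{-1/2})$ is $\mathcal O(t^{-1/2})$), Proposition \ref{k0} follows with uniformity on \eqref{condi}.

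The step I expect to be the main obstacle is the integration of the $\sigma^+$-term together with the attendant uniformity bookkeeping: identifying that contribution with a near-exact logarithmic derivative rests on the precise form of the Stokes scaling through \eqref{int:clue}, and one must verify that the remainders stay $\mathcal O(t^{-1/2})$ uniformly over all of \eqref{condi}, in particular near its upper edge $t=\tfrac{3}{2\sqrt2}(v+\tfrac12\ln t)$, where $\e^{\frac23\sqrt2\,t-v}$ grows like $\sqrt t$ and the logarithmic factor is genuinely of order one rather than small. A secondary point is that the anchoring presupposes the error-explicit refinement \eqref{A:1} of the classical $D(J_{\textnormal{Ai}};1)$ asymptotics, which is not the textbook statement and must be proved separately in Appendix \ref{Airyapp}.
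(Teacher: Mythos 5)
Your proposal follows the paper's proof essentially verbatim: the same base point $\hat t_0=\tfrac{3v}{2\sqrt2}$ (chosen so that $\tfrac23\sqrt2\,\hat t_0-v=0$, which both kills the boundary term in the $\sigma^+$-integral and puts the determinant at $\hat s_0$ on the curve $\varkappa_{_\textnormal{Ai}}=\tfrac23\sqrt2$ where \eqref{A:1} applies), the same change-of-variables lemma, the same identification of $\tfrac{2\sqrt2}{3}\sigma^+$ with $\partial_w\ln(1+\pi^{-1/2}2^{-9/4}w^{-1/2}\e^{\frac23\sqrt2\,w-v})$ up to a harmless $\mathcal O(w^{-1})\sigma^+$ correction, the same $\mathcal O(t^{-1/2})$ bookkeeping via \eqref{int:clue2}, and the same anchoring through Theorem \ref{bettererror}. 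You also correctly flag the two places that genuinely carry the weight (uniformity up to the upper edge of the slab where the logarithmic factor is order one, and the dependence on the refined error in \eqref{A:1}); both are addressed by the paper exactly as you describe. No gaps.
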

\begin{remark} Note that the result of Proposition \ref{k0} matches the leading order expansion derived in \cite{B}, Corollary $1.17$. We only have to use that \eqref{condi} implies 
\begin{equation*}
	t\geq t_0,\ v\geq v_0:\ \ \ \ \frac{2}{3}\sqrt{2}\geq\frac{v}{t}=\varkappa_{_\textnormal{Ai}}\geq \frac{2}{3}\sqrt{2}-\chi\frac{\ln t}{t}\ \ \ \ \textnormal{with}\ \ \ \chi=\frac{1}{2}\left(1+\mathcal{O}\left(\frac{\ln t}{t}\right)\right).
\end{equation*}
\end{remark}
For the general case $k\in\mathbb{Z}_{\geq 0}$ we require the following notation: Let
\begin{equation*}
	\hat{t}_k=\frac{3}{2\sqrt{2}}\left(v+\left(k-\frac{1}{2}\right)\ln t\right),\ \ \ \hat{t}_k'=\frac{3}{2\sqrt{2}}\big(v+k\ln t\big),\ \ k\in\mathbb{Z}_{\geq 1};\ \ \ \ \ \ \ \hat{t}_0=\hat{t}_0'=v,
\end{equation*}
or equivalently $\hat{s}_k=-\hat{t}_k^{\frac{2}{3}}, \hat{s}_k'=-\hat{t}_k'^{\frac{2}{3}}$. In this case \eqref{int:3} generalizes to the following two expansions
\begin{lem}\label{lem:1} For $\hat{t}_k\leq t\leq \hat{t}_k'$, i.e. $\alpha\in[-\frac{1}{2},0]$,
\begin{align}
	\ln&\bigg(\frac{\det(I-\gamma K_{\textnormal{Ai}})\big|_{L^2(s,\infty)}}{\det(I-\gamma K_{\textnormal{Ai}})\big|_{L^2(\hat{s}_k,\infty)}}\bigg)=\frac{1}{12}\big(s^3-\hat{s}_k^3\big)-\frac{1}{8}\ln\left|\frac{s}{\hat{s}_k}\right|+\ln\left(1+\im\gamma_{k-1}2^{\frac{5k}{2}-\frac{5}{4}}t^{k-\frac{1}{2}}\e^{-\frac{2}{3}\sqrt{2}\,t+v}\right)\nonumber\\
	&-\ln\left(1+\im\gamma_{k-1}2^{\frac{5k}{2}-\frac{5}{4}}\right)+\frac{7k}{12}(k-1)\ln\left|\frac{s}{\hat{s}_k}\right|-2k^2\ln\left|\frac{s}{\hat{s}_k}\right|+\frac{2}{3}\sqrt{2}\,k(t-\hat{t}_k)+\mathcal{O}\left(t^{-\frac{1}{2}}\right).\label{int:4}
\end{align}
\end{lem}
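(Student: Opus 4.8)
The plan is to obtain \eqref{int:4} by integrating the differential identity \eqref{diff:1}, in the asymptotic form \eqref{best:6}, along the segment in the $s$-variable joining the base point $\hat s_k$ (equivalently $\hat t_k$) to $s$, with $\gamma$ (hence $v$) held fixed. In the regime $\hat t_k\le t\le\hat t_k'$ the scaling relation \eqref{scale} forces the endpoint parameter $\chi=k+\alpha$ into $[k-\tfrac12,k]$, i.e. $\alpha\in[-\tfrac12,0]$, and more precisely, as $t$ runs from $\hat t_k$ up to the endpoint, the associated $\chi$ increases monotonically from $k-\tfrac12$ to $k+\alpha$; thus $k$ stays constant and only $\alpha$ varies along the path, so the relevant instance of \eqref{best:6} is the lower (``$-$'') branch
\begin{equation*}
	F(J_{\textnormal{Ai}};\gamma)=\frac{s^2}{4}-\frac{1}{8s}+\sqrt2\,|s|^{\frac12}\sigma^-\,-\,\sqrt2\,|s|^{\frac12}k-\frac{2k^2}{s}+\frac{7k}{12s}(k-1)+\mathcal{O}\!\left(t^{-\frac16-|\alpha|}\right),
\end{equation*}
uniformly along the contour once $t_0,v_0$ are taken large enough for Propositions \ref{DZ:1}--\ref{DZ:3} (hence \eqref{best:6}) to apply throughout.

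The algebraically elementary terms are integrated directly: $u^2/4$ gives $\tfrac1{12}(s^3-\hat s_k^3)$, while via the change-of-variables Lemma (the one reducing a $\d u$-integral to a $w^{-1/3}\,\d w$-integral) the terms $-\tfrac1{8u}$, $-\tfrac{2k^2}{u}$ and $\tfrac{7k(k-1)}{12u}$ give the three $\ln|s/\hat s_k|$ contributions and $-\sqrt2\,(-u)^{1/2}k$ gives $\tfrac23\sqrt2\,k\,(t-\hat t_k)$. The only genuine computation concerns the $\sigma^-$ term. With $a(w)=\im\gamma_{k-1}2^{\frac{5k}{2}-\frac54}w^{k-\frac12}\e^{-\frac23\sqrt2\,w+v}$, the relation $t^{\chi}=\e^{\frac23\sqrt2\,t-v}$ (equivalently $t^{-\alpha}=t^{-k}\e^{-\frac23\sqrt2\,t+v}$) identifies $\sigma^-=a/(1+a)$ from \eqref{sing:2}, and the elementary identity $a'(w)=a(w)\big(\tfrac{k-1/2}{w}-\tfrac23\sqrt2\big)$ yields $-\tfrac23\sqrt2\,\sigma^-(w)=\frac{\d}{\d w}\ln\!\big(1+a(w)\big)-\tfrac{k-1/2}{w}\,\sigma^-(w)$. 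After the change of variables the $\sqrt2\,|s|^{1/2}\sigma^-$ term thus integrates to $\ln(1+a(t))-\ln(1+a(\hat t_k))-(k-\tfrac12)\int_{\hat t_k}^{t}\sigma^-(w)\tfrac{\d w}{w}$. Since $\im\gamma_{k-1}=2\pi/h_{k-1}>0$ we have $0<\sigma^-<1$, and $t-\hat t_k=O(\ln t)$ with $w^{-1}\le\hat t_k^{-1}=O(t^{-1})$ on the interval, so the tail integral is $O(\ln t/t)=\mathcal{O}(t^{-1/2})$; the leading piece $\ln(1+a(t))$ is exactly the first logarithm in \eqref{int:4}.

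It remains to identify the base-point contributions and to account for the error term. The defining relation $\tfrac23\sqrt2\,\hat t_k=v+(k-\tfrac12)\ln t$ gives $\e^{-\frac23\sqrt2\,\hat t_k+v}=t^{-(k-1/2)}$, hence $a(\hat t_k)=\im\gamma_{k-1}2^{\frac{5k}{2}-\frac54}(\hat t_k/t)^{k-1/2}$, and since $\hat t_k/t=1+O(\ln t/t)$ this produces $\ln(1+a(\hat t_k))=\ln\!\big(1+\im\gamma_{k-1}2^{\frac{5k}{2}-\frac54}\big)+\mathcal{O}(t^{-1/2})$, the constant term of \eqref{int:4}. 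Finally the error $\mathcal{O}(t^{-1/6-|\alpha|})=\mathcal{O}(t^{-1/6+\alpha})$ is handled exactly as in \eqref{int:clue2}: the change of variables turns it into $\int_{\hat t_k}^{t}\mathcal{O}\big(w^{-1/2+\alpha(w)}\big)\,\d w=\int_{\hat t_k}^{t}\mathcal{O}\big(w^{-1/2-k}\e^{\frac23\sqrt2\,w-v}\big)\,\d w$, an integral dominated by its upper endpoint and therefore $\mathcal{O}(t^{-1/2+\alpha})=\mathcal{O}(t^{-1/2})$ uniformly. Assembling all contributions gives \eqref{int:4}.

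The main obstacle is not any single computation but the uniformity bookkeeping: one must track how $\chi$ and $\alpha$ vary along the integration contour, confirm that every $\alpha$-dependent power of $t$ drops out at leading order after integration -- precisely the phenomenon signalled in the remark preceding \eqref{best:4}, namely that only the $\alpha$-independent, $k$-dependent coefficients in \eqref{best:6} contribute -- and check that the moving base point $\hat s_k$ has been placed so that all residual tail integrals really are $\mathcal{O}(t^{-1/2})$, uniformly for $t\ge t_0,\ v\ge v_0$ in the stated Stokes strip.
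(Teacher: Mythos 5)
Your proof is correct and follows the same route as the paper: after changing variables from $u$ to $w=t(u)$, you recognize $-\tfrac23\sqrt2\,\sigma^-(w)$ as $\tfrac{\d}{\d w}\ln\bigl(1+a(w)\bigr)$ plus an $\mathcal{O}(t^{-1/2})$ remainder and fold the $\alpha$-dependent error powers into $\mathcal{O}(t^{-1/2})$ exactly as in \eqref{int:clue2}; the paper's proof of Lemma~\ref{lem:1} simply states the resulting integral identity, whereas you fill in the ODE identity $a'/a=\tfrac{k-1/2}{w}-\tfrac23\sqrt2$, the base-point evaluation $a(\hat t_k)=\im\gamma_{k-1}2^{5k/2-5/4}(\hat t_k/t)^{k-1/2}$, and the estimate $t-\hat t_k=\mathcal{O}(\ln t)$. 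One small slip: the parenthetical ``equivalently'' before the definition of $a(w)$ should read $t^{-\alpha}=t^{k}\e^{-\frac23\sqrt2\,t+v}$, with $t^{+k}$ rather than $t^{-k}$ -- the expression for $a(w)$ you actually use, containing $w^{k-\frac12}$, is the correct one, so this typo has no downstream effect.
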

\begin{proof}  We use that
\begin{align*}
	\sqrt{2}&\int_{\hat{s}_k}^s(-u)^{\frac{1}{2}}\sigma^-\big(t(u)\big)\d u=-\frac{2}{3}\sqrt{2}\int_{\hat{t}_k}^t\frac{\im\gamma_{k-1}2^{\frac{5k}{2}-\frac{5}{4}}w^{k-\frac{1}{2}}\e^v\e^{-\frac{2}{3}\sqrt{2}\,w}}{1+\im\gamma_{k-1}2^{\frac{5k}{2}-\frac{5}{4}}w^{k-\frac{1}{2}}\e^v\,\e^{-\frac{2}{3}\sqrt{2}\,w}}\,\d w\\
	&\,=\ln\left(1+\im\gamma_{k-1}2^{\frac{5k}{2}-\frac{5}{4}}t^{k-\frac{1}{2}}\e^{-\frac{2}{3}\sqrt{2}\,t+v}\right)-\ln\left(1+\im\gamma_{k-1}2^{\frac{5k}{2}-\frac{5}{4}}\right)+\mathcal{O}\left(t^{-\frac{1}{2}}\right)
\end{align*}
and the stated identity follows from \eqref{best:6}.
\end{proof}
\begin{lem}\label{lem:2} For $\hat{t}_k'\leq t\leq\hat{t}_{k+1}$, i.e. $\alpha\in[0,\frac{1}{2}]$,
\begin{align}
	\ln\bigg(\frac{\det(I-\gamma K_{\textnormal{Ai}})\big|_{L^2(s,\infty)}}{\det(I-\gamma K_{\textnormal{Ai}})\big|_{L^2(\hat{s}_k',\infty)}}\bigg)=&\,\frac{1}{12}\big(s^3-\hat{s}_k'^3\big)-\frac{1}{8}\ln\left|\frac{s}{\hat{s}_k'}\right|+\ln\left(1-\im\gamma_k^{-1}2^{-\frac{5k}{2}-\frac{5}{4}}t^{-k-\frac{1}{2}}\e^{\frac{2}{3}\sqrt{2}\,t-v}\right)\nonumber\\
	&+\frac{7k}{12}(k+1)\ln\left|\frac{s}{\hat{s}_k'}\right|-2k^2\ln\left|\frac{s}{\hat{s}_k'}\right|+\frac{2}{3}\sqrt{2}\,k(t-\hat{t}_k')+\mathcal{O}\left(t^{-\frac{1}{2}}\right).\label{int:5}
\end{align}
\end{lem}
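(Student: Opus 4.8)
The proof of Lemma \ref{lem:2} follows the same lines as that of Lemma \ref{lem:1} just given, now with the upper $(+)$ choice of signs in \eqref{best:6}; this is the one in force throughout $\hat{t}_k'\leq t\leq\hat{t}_{k+1}$ since there $\chi=k+\alpha$ with $\alpha\in[0,\frac12]$. The plan is to integrate the differential identity $F(J_{\textnormal{Ai}};\gamma)=\frac{\partial}{\partial s}\ln D(J_{\textnormal{Ai}};\gamma)$, written in the form \eqref{best:6}, from the base point $s=\hat{s}_k'$ up to $s$, and then to identify each resulting term against \eqref{int:5}.

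First I would dispose of the purely algebraic contributions. The term $\frac{|u|^2}{4}=\frac{u^2}{4}$ integrates to $\frac{1}{12}(s^3-\hat{s}_k'^3)$, while $-\frac{1}{8s}$, $-\frac{2k^2}{s}$ and $\frac{7k}{12s}(k+1)$ give the stated multiples of $\ln|s/\hat{s}_k'|$; for the term $-\sqrt2\,k\,|s|^{\frac12}$ the change of variables $w=(-u)^{3/2}$ (the elementary Lemma preceding Lemma \ref{lem:1}) turns $\int_{\hat{s}_k'}^s(-u)^{\frac12}\,\d u$ into $-\frac23\int_{\hat{t}_k'}^t\d w$, producing $\frac23\sqrt2\,k(t-\hat{t}_k')$. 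The error term is controlled exactly as in \eqref{int:clue2}: the same substitution together with \eqref{int:clue} converts $\int_{\hat{s}_k'}^s\mathcal{O}(t^{-\frac16-\alpha}(u))\,\d u$ into $\mathcal{O}(t^{-1/2})$ uniformly on the regime, the exponential factor concentrating the integral near the lower endpoint.

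The only term needing genuine work is $-\sqrt2\int_{\hat{s}_k'}^s(-u)^{\frac12}\sigma^+(t(u))\,\d u$. Inserting $\sigma^+$ from \eqref{sing:2} and replacing $t^{\alpha-\frac12}$ via \eqref{int:clue} by $t^{-k-\frac12}\e^{\frac23\sqrt2\,t-v}$, the substitution $w=(-u)^{3/2}$ turns it into $\frac23\sqrt2\int_{\hat{t}_k'}^t\frac{\widetilde g(w)}{1+\widetilde g(w)}\,\d w$ with $\widetilde g(w)=-\im\gamma_k^{-1}2^{-\frac{5k}{2}-\frac54}w^{-k-\frac12}\e^{\frac23\sqrt2\,w-v}$. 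Since $\widetilde g'(w)=\widetilde g(w)\bigl(\frac23\sqrt2-(k+\frac12)w^{-1}\bigr)$, one has $\frac23\sqrt2\,\frac{\widetilde g}{1+\widetilde g}=\frac{\d}{\d w}\ln(1+\widetilde g)+\bigl(k+\frac12\bigr)\frac1w\frac{\widetilde g}{1+\widetilde g}$, whence the integral equals
\[
\ln\bigl(1+\widetilde g(t)\bigr)-\ln\bigl(1+\widetilde g(\hat{t}_k')\bigr)+\Bigl(k+\frac12\Bigr)\int_{\hat{t}_k'}^t\frac{\widetilde g(w)}{1+\widetilde g(w)}\,\frac{\d w}{w}.
\]
The first summand is precisely the factor $\ln(1-\im\gamma_k^{-1}2^{-\frac{5k}{2}-\frac54}t^{-k-\frac12}\e^{\frac23\sqrt2\,t-v})$ in \eqref{int:5}. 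Because $\alpha=0$ at $\hat{t}_k'$ one computes $\widetilde g(\hat{t}_k')=-\im\gamma_k^{-1}2^{-\frac{5k}{2}-\frac54}(\hat{t}_k')^{-1/2}$, so the second summand is $\mathcal{O}(t^{-1/2})$; this is the sole structural difference from Lemma \ref{lem:1}, where the base point corresponds to $\alpha=-\frac12$ and the analogous term survives as the constant $-\ln(1+\im\gamma_{k-1}2^{\frac{5k}{2}-\frac54})$. Finally $0<\frac{\widetilde g}{1+\widetilde g}<1$, so the last integral is bounded by $\ln(t/\hat{t}_k')=\mathcal{O}((\ln t)/t)=\mathcal{O}(t^{-1/2})$, using $t-\hat{t}_k'=\mathcal{O}(\ln t)$. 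Summing the pieces yields \eqref{int:5}.

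The main obstacle is purely the uniformity bookkeeping: one must check that the boundary term $\ln(1+\widetilde g(\hat{t}_k'))$, the remainder integral, and the integrated error term are all $\mathcal{O}(t^{-1/2})$ \emph{uniformly} over $\hat{t}_k'\leq t\leq\hat{t}_{k+1}$. This rests on the elementary relations $\hat{t}_k'\asymp t\asymp v$ and $t-\hat{t}_k'=\mathcal{O}(\ln t)$, on $\widetilde g(\hat{t}_k')=\mathcal{O}((\hat{t}_k')^{-1/2})$ (a consequence of $\alpha=0$ at the base point), and on the rapid decay of the factor $\e^{-\frac23\sqrt2\,w}$, which concentrates the error integral near $w=\hat{t}_k'$; granting these, the remainder is the same routine computation already performed for Lemma \ref{lem:1}.
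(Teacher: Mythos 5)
Your proposal is correct and follows essentially the same route as the paper. The paper's proof of Lemma \ref{lem:2} consists of the single stated integration
\[
-\sqrt{2}\int_{\hat{s}_k'}^s(-u)^{\frac{1}{2}}\sigma^+\big(t(u)\big)\d u=\ln\Big(1-\im\gamma_k^{-1}2^{-\frac{5k}{2}-\frac{5}{4}}t^{-k-\frac{1}{2}}\e^{\frac{2}{3}\sqrt{2}\,t-v}\Big)+\mathcal{O}\big(t^{-\frac{1}{2}}\big),
\]
substituted back into \eqref{best:6}; your argument fills in how one arrives at this estimate, namely writing the integrand as an exact logarithmic derivative plus a $w^{-1}$-weighted remainder and showing that both the boundary term at $\hat{t}_k'$ and the remainder integral are $\mathcal{O}(t^{-1/2})$, which is the same mechanism used implicitly in Lemma \ref{lem:1} and in the computation preceding Proposition \ref{k0}. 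One small inaccuracy: since $\frac{2}{3}\sqrt{2}\,\hat{t}_k'-v=k\ln t$, the boundary value is $\widetilde g(\hat{t}_k')=-\im\gamma_k^{-1}2^{-\frac{5k}{2}-\frac{5}{4}}(\hat{t}_k')^{-k-\frac{1}{2}}t^{k}$ rather than exactly $-\im\gamma_k^{-1}2^{-\frac{5k}{2}-\frac{5}{4}}(\hat{t}_k')^{-\frac{1}{2}}$; but since $\hat{t}_k'\asymp t$ these agree to leading order and the conclusion $\widetilde g(\hat{t}_k')=\mathcal{O}(t^{-1/2})$, which is all you use, is unaffected.
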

\begin{proof} Use 
\begin{align*}
	-\sqrt{2}&\int_{\hat{s}_k'}^s(-u)^{\frac{1}{2}}\sigma^+\big(t(u)\big)\d u=\frac{2}{3}\sqrt{2}\int_{\hat{t}_k'}^t\frac{-\im\gamma_k^{-1}2^{-\frac{5k}{2}-\frac{5}{4}}w^{-k-\frac{1}{2}}\e^{-v}\e^{\frac{2}{3}\sqrt{2}\,w}}{1-\im\gamma_k^{-1}2^{-\frac{5k}{2}-\frac{5}{4}}w^{-k-\frac{1}{2}}\e^{-v}\e^{\frac{2}{3}\sqrt{2}\,w}}\d w\\
	&\,=\ln\left(1-\im\gamma_k^{-1}2^{-\frac{5k}{2}-\frac{5}{4}}t^{-k-\frac{1}{2}}\e^{\frac{2}{3}\sqrt{2}\,t-v}\right)+\mathcal{O}\left(t^{-\frac{1}{2}}\right),
\end{align*}
back in \eqref{best:6}.
\end{proof}
The idea at this point is to successively improve Proposition \ref{k0} with the help of Lemma \ref{lem:1} and \ref{lem:2}. For instance, start with $k=1$ in Lemma \ref{lem:1}, then with Proposition \ref{k0},
\begin{equation*}
	\ln\det(I-\gamma K_{\textnormal{Ai}})\Big|_{L^2(\hat{s}_1,\infty)} = \frac{\hat{s}_1^3}{12}-\frac{1}{8}\ln|\hat{s}_1|+\ln c_0+\ln\left(1+\pi^{-\frac{1}{2}}2^{-\frac{9}{4}}\right)+\mathcal{O}\left(\hat{t}_1^{-\frac{1}{2}}\right)
\end{equation*}
and back into \eqref{int:4},
\begin{align*}
	\ln&\det(I-\gamma K_{\textnormal{Ai}})\Big|_{L^2(s,\infty)} =\frac{s^3}{12}-\frac{1}{8}\ln|s|+\ln c_0+\ln\left(1+\pi^{\frac{1}{2}}2^{\frac{9}{4}}t^{\frac{1}{2}}\e^{-\frac{2}{3}\sqrt{2}\,t+v}\right)-\ln\left(\pi^{\frac{1}{2}}2^{\frac{9}{4}}\right)\\
	&\,+\frac{2}{3}\sqrt{2}\,(t-\hat{t}_1)+\mathcal{O}\left(t^{-\frac{1}{2}}\right)=\frac{s^3}{12}-\frac{1}{8}\ln|s|+\ln c_0+\ln\left(1+\pi^{-\frac{1}{2}}2^{-\frac{9}{4}}t^{-\frac{1}{2}}\e^{\frac{2}{3}\sqrt{2}\,t-v}\right)+\mathcal{O}\left(t^{-\frac{1}{2}}\right)%,\ \ t\geq \hat{t},\ v\geq \hat{v}:\ \ t_1\leq t\leq t_1'
\end{align*}
where we used that
\begin{equation*}
	\frac{2}{3}\sqrt{2}\,(t-\hat{t}_1)=\ln\left(t^{-\frac{1}{2}}\e^{\frac{2}{3}\sqrt{2}\,t-v}\right).%\left(\chi-\frac{1}{2}\right)\ln t,\ \ \ k=1.
\end{equation*}
Summarizing,
%This means that we have the following analogue of Proposition \ref{k0}:
\begin{prop}\label{k1} There exists $t_0>0$ and $v_0>0$ such that
\begin{equation*}
	\ln\det(I-\gamma K_{\textnormal{Ai}})\Big|_{L^2(s,\infty)}=\frac{s^3}{12}-\frac{1}{8}\ln|s|+\ln c_0+\ln\left(1+\frac{1}{\sqrt{\pi}}\,2^{-\frac{9}{4}}t^{-\frac{1}{2}}\e^{\frac{2}{3}\sqrt{2}\,t-v}\right)+\mathcal{O}\left(t^{-\frac{1}{2}}\right)
\end{equation*}
uniformly for
\begin{equation*}
	 t\geq t_0,v\geq v_0:\ \ \ \ \ \frac{3}{2\sqrt{2}}\left(v+\frac{1}{2}\ln t\right)\leq t\leq\frac{3}{2\sqrt{2}}(v+\ln t).
\end{equation*}
\end{prop}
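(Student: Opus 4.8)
The plan is to run the inductive bootstrap announced just above the statement: the window $\tfrac{3}{2\sqrt2}(v+\tfrac12\ln t)\le t\le\tfrac{3}{2\sqrt2}(v+\ln t)$ is precisely $\hat t_1\le t\le\hat t_1'$, i.e.\ $\chi=tV=k+\alpha$ with $k=1$ and $\alpha\in[-\tfrac12,0]$, so Lemma \ref{lem:1} applies with $k=1$. In that case the term $\tfrac{7k}{12}(k-1)\ln|s/\hat s_1|$ in \eqref{int:4} drops out, and since $h_0=\sqrt\pi$ gives $\gamma_0=-2\sqrt\pi\,\im$, hence $\im\gamma_0\,2^{\frac52-\frac54}=\pi^{1/2}2^{9/4}$, Lemma \ref{lem:1} expresses $\ln\det(I-\gamma K_{\textnormal{Ai}})|_{L^2(s,\infty)}$ in terms of $\ln\det(I-\gamma K_{\textnormal{Ai}})|_{L^2(\hat s_1,\infty)}$, the cubic and logarithmic ``background'' terms, the factor $\ln(1+\pi^{1/2}2^{9/4}t^{1/2}\e^{-\frac23\sqrt2\,t+v})$, a constant $-\ln(1+\pi^{1/2}2^{9/4})$, a leftover $-2\ln|s/\hat s_1|$, and $\tfrac23\sqrt2\,(t-\hat t_1)$, all modulo $\mathcal O(t^{-1/2})$.

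Next I would feed in Proposition \ref{k0} at the anchor point $\hat s_1$. This is legitimate because $\tfrac23\sqrt2\,\hat t_1=v+\tfrac12\ln t$ yields $\tfrac{3}{2\sqrt2}v\le\hat t_1\le\tfrac{3}{2\sqrt2}(v+\tfrac12\ln\hat t_1)$ up to the negligible gap between $\ln t$ and $\ln\hat t_1$, so Proposition \ref{k0} gives $\ln\det(I-\gamma K_{\textnormal{Ai}})|_{L^2(\hat s_1,\infty)}=\tfrac{\hat s_1^3}{12}-\tfrac18\ln|\hat s_1|+\ln c_0+\ln(1+\pi^{-1/2}2^{-9/4})+\mathcal O(\hat t_1^{-1/2})$, the surviving logarithm because at $s=\hat s_1$ the quantity $\pi^{-1/2}2^{-9/4}t^{-1/2}\e^{\frac23\sqrt2\,t-v}$ equals $\pi^{-1/2}2^{-9/4}(1+o(1))$. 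Adding the two expansions, the cubic terms make $\tfrac{s^3}{12}$, the $-\tfrac18\ln|s/\hat s_1|$ and $-\tfrac18\ln|\hat s_1|$ make $-\tfrac18\ln|s|$, the constants make $\ln c_0$, and it remains to collapse the logarithmic block. Writing $x=\tfrac1{\sqrt\pi}2^{-9/4}t^{-1/2}\e^{\frac23\sqrt2\,t-v}$ one has, by $\ln(1+x^{-1})=\ln(1+x)-\ln x$, by $\ln(1+\pi^{-1/2}2^{-9/4})=\ln(1+\pi^{1/2}2^{9/4})-\ln(\pi^{1/2}2^{9/4})$, and by $\tfrac23\sqrt2\,(t-\hat t_1)=\tfrac23\sqrt2\,t-v-\tfrac12\ln t=\ln x+\ln(\pi^{1/2}2^{9/4})$, the cancellation
\[
	\ln\!\big(1+\pi^{\tfrac12}2^{\tfrac94}t^{\tfrac12}\e^{-\frac23\sqrt2\,t+v}\big)-\ln\!\big(1+\pi^{\tfrac12}2^{\tfrac94}\big)+\ln\!\big(1+\pi^{-\tfrac12}2^{-\tfrac94}\big)+\tfrac23\sqrt2\,(t-\hat t_1)=\ln(1+x),
\]
which is exactly the displayed factor in the statement.

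The real content of the proposition sits upstream — in the steepest-descent solution of RHP \ref{final}, in the derivative asymptotics \eqref{best:6}, and in Lemma \ref{lem:1} — so the step above is bookkeeping. The one genuinely delicate point is the error accounting forced by anchoring the integration at the \emph{moving} base point $\hat s_1=\hat s_1(v,t)$: one must confirm that every ``$\hat t_1$ versus $t$'' discrepancy — the leftover $-2\ln|s/\hat s_1|=-\tfrac43\ln(t/\hat t_1)$, the $\mathcal O(\hat t_1^{-1/2})$ carried over from Proposition \ref{k0}, and the $(t/\hat t_1)^{1/2}-1$ hidden inside the surviving logarithm — is of size $o(t^{-1/2})$ throughout the window. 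This is immediate from $t-\hat t_1=\mathcal O(\ln t)$ together with $v\sim\tfrac{2\sqrt2}{3}t$, which force $\ln(t/\hat t_1)=\mathcal O(\ln t/t)=o(t^{-1/2})$; the same bound absorbs all these remainders into the final $\mathcal O(t^{-1/2})$ error.
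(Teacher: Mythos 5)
Your proposal is correct and follows essentially the same route as the paper: apply Lemma \ref{lem:1} with $k=1$, anchor at $\hat{s}_1$ using Proposition \ref{k0}, and carry out the logarithm cancellation via $\ln(1+x^{-1})=\ln(1+x)-\ln x$. Your extra bookkeeping — flagging the $(t/\hat t_1)^{1/2}$ discrepancy hidden in the surviving logarithm when Proposition \ref{k0} is evaluated at the moving base point, and verifying that $\ln(t/\hat t_1)=\mathcal O(\ln t/t)=o(t^{-1/2})$ — makes explicit the error absorption that the paper performs silently.
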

%Second, we choose $t_k'\leq t\leq t_{k+1}$, i.e. $\alpha\in[0,\frac{1}{2}]$. Since
%\begin{align*}
%	-\sqrt{2}&\int_{s_k'}^s(-u)^{\frac{1}{2}}\sigma^+\big(t(u)\big)\d u=\frac{2}{3}\sqrt{2}\int_{t_k'}^t\frac{-\im\gamma_k^{-1}2^{-\frac{5k}{2}-\frac{5}{4}}w^{-k-\frac{1}{2}}\e^{-v}\e^{\frac{2}{3}\sqrt{2}\,w}}{1-\im\gamma_k^{-1}2^{-\frac{5k}{2}-\frac{5}{4}}w^{-k-\frac{1}{2}}\e^{-v}\e^{\frac{2}{3}\sqrt{2}\,w}}\d w\\
%	&\,=\ln\left(1-\im\gamma_k^{-1}2^{-\frac{5k}{2}-\frac{5}{4}}t^{-k-\frac{1}{2}}\e^{\frac{2}{3}\sqrt{2}\,t-v}\right)+\mathcal{O}\left(t^{-\frac{1}{2}}\right),
%\end{align*}
%we have this time through \eqref{best:6},
%\begin{align}
%	\ln\bigg(\frac{\det(I-\gamma K_{\textnormal{Ai}})\big|_{L^2(s,\infty)}}{\det(I-\gamma K_{\textnormal{Ai}})\big|_{L^2(s_k',\infty)}}\bigg)=&\,\frac{1}{12}\big(s^3-s_k'^3\big)-\frac{1}{8}\ln\left|\frac{s}{s_k'}\right|+\ln\left(1-\im\gamma_k^{-1}2^{-\frac{5k}{2}-\frac{5}{4}}t^{-k-\frac{1}{2}}\e^{\frac{2}{3}\sqrt{2}\,t-v}\right)\nonumber\\
%	&+\frac{7k}{12}(k+1)\ln\left|\frac{s}{s_k'}\right|-2k^2\ln\left|\frac{s}{s_k'}\right|+\frac{2}{3}\sqrt{2}\,k(t-t_k')+\mathcal{O}\left(t^{-\frac{1}{2}}\right).\label{int:5}
%\end{align}
Next, with $k=1$ in Lemma \ref{lem:2} and Proposition \ref{k1},
\begin{align*}
	\ln\det(I-\gamma K_{\textnormal{Ai}})\Big|_{L^2(\hat{s}_1',\infty)}=&\,\frac{\hat{s}_1'^3}{12}-\frac{1}{8}\ln|\hat{s}_1'|+\ln c_0+\ln\left(1+\pi^{-\frac{1}{2}}2^{-\frac{9}{4}}\hat{t}_1'^{-\frac{1}{2}}\e^{\frac{2}{3}\sqrt{2}\,\hat{t}_1'-v}\right)+\mathcal{O}\left(\hat{t}_1'^{-\frac{1}{2}}\right)\\
	=&\,\frac{\hat{s}_1'^3}{12}-\frac{1}{8}\ln|\hat{s}_1'|+\ln c_0+\ln\left(1+\pi^{-\frac{1}{2}}2^{-\frac{9}{4}}t^{\frac{1}{2}}\right)+\mathcal{O}\left(t^{-\frac{1}{2}}\right).
\end{align*}
We substitute this back into \eqref{int:5},
\begin{align*}
	\ln\det(I-\gamma K_{\textnormal{Ai}})\Big|_{L^2(s,\infty)}=&\,\frac{s^3}{12}-\frac{1}{8}\ln|s|+\ln c_0+\ln\left(1+\pi^{-\frac{1}{2}}2^{-\frac{7}{2}-\frac{9}{4}}t^{-1-\frac{1}{2}}\e^{\frac{2}{3}\sqrt{2}\,t-v}\right)\\
	&\,+\ln\left(1+\pi^{-\frac{1}{2}}2^{-\frac{9}{4}}t^{\frac{1}{2}}\right)+\frac{2}{3}\sqrt{2}\,(t-\hat{t}_1')+\mathcal{O}\left(t^{-\frac{1}{2}}\right).
\end{align*}
and note that
\begin{equation*}
	\ln\left(1+\pi^{-\frac{1}{2}}2^{-\frac{9}{4}}t^{\frac{1}{2}}\right)+\frac{2}{3}\sqrt{2}\,(t-\hat{t}_1')=\ln\left(1+\pi^{-\frac{1}{2}}2^{-\frac{9}{4}}t^{-\frac{1}{2}}\e^{\frac{2}{3}\sqrt{2}\,t-v}\right)+\mathcal{O}\left(t^{-\frac{1}{2}}\right),\ \ \ \hat{t}_1'\leq t\leq \hat{t}_2.
\end{equation*}
Hence,
\begin{prop}\label{k2} There exists $t_0>0$ and $v_0>0$ such that
\begin{equation*}
	\ln\det(I-\gamma K_{\textnormal{Ai}})\Big|_{L^2(s,\infty)}=\frac{s^3}{12}-\frac{1}{8}\ln|s|+\ln c_0+\ln\prod_{j=0}^1\left(1+\frac{j!}{\sqrt{\pi}}\,2^{-\frac{7}{2}j-\frac{9}{4}}t^{-j-\frac{1}{2}}\e^{\frac{2}{3}\sqrt{2}\,t-v}\right)+\mathcal{O}\left(t^{-\frac{1}{2}}\right)
\end{equation*}
uniformly for
\begin{equation*}
	t\geq t_0,\ v\geq v_0:\ \ \ \ \frac{3}{2\sqrt{2}}(v+\ln t)\leq t\leq\frac{3}{2\sqrt{2}}\left(v+\frac{3}{2}\ln t\right).
\end{equation*}
\end{prop}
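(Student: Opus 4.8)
The plan is to run one more step of the bootstrap that produced Propositions \ref{k0} and \ref{k1}, now feeding Proposition \ref{k1} into the differential–identity expansion \eqref{int:5} of Lemma \ref{lem:2} with $k=1$. First I would fix the base point of integration at $\hat s_1'$ (equivalently $\hat t_1'=\frac{3}{2\sqrt2}(v+\ln t)$), which lies at the upper edge of the window $\frac{3}{2\sqrt2}(v+\frac12\ln t)\le t\le\frac{3}{2\sqrt2}(v+\ln t)$ of Proposition \ref{k1}, so that proposition may be invoked there. Since $\frac23\sqrt2\,\hat t_1'-v=\ln t$, the product factor in Proposition \ref{k1} at $t=\hat t_1'$ collapses to $\ln\!\big(1+\pi^{-1/2}2^{-9/4}t^{1/2}\big)+\mathcal O(t^{-1/2})$; here I use that on the target window $\hat t_1'\le t\le\hat t_2=\frac{3}{2\sqrt2}(v+\frac32\ln t)$ one has $\hat t_1'/t=1+O(\ln t/t)$, together with $\ln(1+a(1+O(\ln t/t)))=\ln(1+a)+O(\ln t/t)$ whenever $a\to+\infty$, so the discrepancy drops into the error (note $v=\Theta(t)$, hence $O(\ln t/t)\subset\mathcal O(t^{-1/2})$).

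Next I would substitute this expansion for $\ln\det(I-\gamma K_{\textnormal{Ai}})|_{L^2(\hat s_1',\infty)}$ into \eqref{int:5} with $k=1$. The cubic and $-\tfrac18\ln$ contributions telescope exactly ($\tfrac1{12}(s^3-\hat s_1'^3)+\tfrac{\hat s_1'^3}{12}=\tfrac{s^3}{12}$, and $-\tfrac18\ln|s/\hat s_1'|-\tfrac18\ln|\hat s_1'|=-\tfrac18\ln|s|$), so the terms left to dispose of are the polynomial-in-$k$ logarithm $\big(\tfrac{7k}{12}(k+1)-2k^2\big)\ln|s/\hat s_1'|$ and the linear term $\tfrac23\sqrt2\,k(t-\hat t_1')$. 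For $k=1$ the first equals $-\tfrac56\ln|s/\hat s_1'|=-\tfrac59\ln(t/\hat t_1')=O(\ln t/t)$, hence absorbed; for the second, $\tfrac23\sqrt2(t-\hat t_1')=\tfrac23\sqrt2\,t-v-\ln t$, and combining with the collapsed base factor gives $\ln(1+at^{1/2})+\tfrac23\sqrt2(t-\hat t_1')=\ln a-\tfrac12\ln t+\tfrac23\sqrt2\,t-v+\mathcal O(t^{-1/2})$ with $a=\pi^{-1/2}2^{-9/4}$; since $a\,t^{-1/2}\e^{\frac23\sqrt2\,t-v}\to+\infty$ on the window, this is exactly $\ln\!\big(1+a\,t^{-1/2}\e^{\frac23\sqrt2\,t-v}\big)+\mathcal O(t^{-1/2})$, i.e.\ the $j=0$ factor of the claimed product is reconstituted.

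It then remains to identify the genuinely new factor: from \eqref{int:5} the exponentially growing correction is $\ln\!\big(1-\im\gamma_1^{-1}2^{-\frac52-\frac54}t^{-3/2}\e^{\frac23\sqrt2\,t-v}\big)$, and using $\gamma_k=-2\pi\im/h_k$ with $h_k=k!\sqrt\pi/2^k$ I would check $-\im\gamma_1^{-1}2^{-15/4}=\tfrac1{\sqrt\pi}2^{-23/4}=\tfrac{1!}{\sqrt\pi}2^{-\frac72-\frac94}$, which is the $j=1$ coefficient. Collecting all contributions yields the asserted expansion uniformly on $\frac{3}{2\sqrt2}(v+\ln t)\le t\le\frac{3}{2\sqrt2}(v+\frac32\ln t)$.

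The only genuinely non-bookkeeping point is the recombination in the second paragraph: one must see that the single large term thrown off by \eqref{int:5} with $k=1$ merges with the collapsed base-point factor into exactly the $j=0$ factor, while the remaining large term simultaneously supplies the $j=1$ factor with the correct Hermite-normalization constant. Everything else — telescoping of the cube and $\ln$ terms, absorption of $O(\ln t/t)$ remainders via $v=\Theta(t)$, and the large-argument asymptotics of logarithms on the relevant window — is routine.
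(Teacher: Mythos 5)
Your proposal is correct and follows the same bootstrap route the paper uses: invoke Proposition \ref{k1} at the base point $\hat{s}_1'$, feed the result into Lemma \ref{lem:2} with $k=1$, then merge $\ln(1+\pi^{-1/2}2^{-9/4}t^{1/2})$ with $\frac{2}{3}\sqrt{2}(t-\hat{t}_1')$ to reconstitute the $j=0$ factor while the new $\ln\big(1-\im\gamma_1^{-1}2^{-15/4}t^{-3/2}\e^{\frac{2}{3}\sqrt{2}t-v}\big)$ term supplies $j=1$. You make explicit a couple of absorptions (the $-\frac{5}{6}\ln|s/\hat{s}_1'|$ term and the $\hat{t}_1'\leftrightarrow t$ replacement inside the logarithm) that the paper handles silently, but the argument is otherwise identical.
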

\begin{remark}\label{arti} Note that the lower constraint on $t$ is artificial: if we were to fix $t<\frac{3}{2\sqrt{2}}(v+\ln t)$, the second factor in the product moves to the error term and we reproduce the result of Proposition \ref{k1}, after adjusting the error term.
\end{remark}
So far repeated integration has lead us to a sequence of results,
\begin{equation*}
	\eqref{A:1}\ \ \mapsto\ \ t\in\underbrace{[\hat{t}_0,\hat{t}_1]}_{\textnormal{Prop.}\,\ref{k0}}\ \ \mapsto\ \ t\in\underbrace{[\hat{t}_1,\hat{t}_1']}_{\textnormal{Prop.}\,\ref{k1}}\ \ \mapsto\ \ t\in\underbrace{[\hat{t}_1',\hat{t}_2]}_{\textnormal{Prop.}\,\ref{k2}},
\end{equation*}
but this strategy can be continued indefinitely
%: We choose first $t_2\leq t\leq t_2'$, i.e. again $\alpha\in[-\frac{1}{2},0]$. Then back in \eqref{int:4} with the help of Proposition \ref{k2},
%\begin{prop}\label{k3} There exists $\hat{t},\hat{v}>0$ such that
%\begin{equation*}
%	\ln\det(I-\gamma K_{\textnormal{Ai}})\Big|_{L^2(s,\infty)}=\frac{s^3}{12}-\frac{1}{8}\ln|s|+\ln c_0+\ln\prod_{j=0}^1\left(1+\frac{j!}{\sqrt{\pi}}\,2^{-\frac{7}{2}j-\frac{9}{4}}t^{-j-\frac{1}{2}}\e^{\frac{2}{3}\sqrt{2}\,t-v}\right)+\mathcal{O}\left(t^{-\frac{1}{2}}\right)
%\end{equation*}
%uniformly for
%\begin{equation*}
%	t\geq \hat{t},\ v\geq\hat{v}:\ \ \ \ \frac{3}{2\sqrt{2}}\left(v+\frac{3}{2}\ln t\right)\leq t\leq\frac{3}{2\sqrt{2}}\left(v+2\ln t\right).
%\end{equation*}
%\end{prop}
%After that, for $t_2'\leq t\leq t_3$, i.e. $\alpha\in[0,\frac{1}{2}]$ back in \eqref{int:5} with Proposition \ref{k3}, we get
%\begin{prop}\label{k4} There exists $\hat{t},\hat{v}>0$ such that
%\begin{equation*}
%	\ln\det(I-\gamma K_{\textnormal{Ai}})\Big|_{L^2(s,\infty)}=\frac{s^3}{12}-\frac{1}{8}\ln|s|+\ln c_0+\ln\prod_{j=0}^2\left(1+\frac{j!}{\sqrt{\pi}}\,2^{-\frac{7}{2}j-\frac{9}{4}}t^{-j-\frac{1}{2}}\e^{\frac{2}{3}\sqrt{2}\,t-v}\right)+\mathcal{O}\left(t^{-\frac{1}{2}}\right)
%\end{equation*}
%uniformly for
%\begin{equation*}
%	t\geq \hat{t},\ v\geq\hat{v}:\ \ \ \ \frac{3}{2\sqrt{2}}\left(v+2\ln t\right)\leq t\leq\frac{3}{2\sqrt{2}}\left(v+\frac{5}{2}\ln t\right).
%\end{equation*}
%\end{prop}
leading to the following result (here we also use Remark \ref{arti}).
\begin{theo}\label{nice:2} Given $q\in\mathbb{Z}_{\geq 1}$, there exists $t_0=t_0(q)>0$ and $v_0=v_0(q)>0$ such that
\begin{equation*}
\ln\det(I-\gamma K_{\textnormal{Ai}})\Big|_{L^2(s,\infty)}=\frac{s^3}{12}-\frac{1}{8}\ln|s|+\ln c_0+\ln\prod_{j=0}^{q-1}\left(1+\frac{j!}{\sqrt{\pi}}\,2^{-\frac{7}{2}j-\frac{9}{4}}t^{-j-\frac{1}{2}}\e^{\frac{2}{3}\sqrt{2}\,t-v}\right)+\mathcal{O}\left(t^{-\frac{1}{2}}\right)
\end{equation*}
uniformly for
\begin{equation*}
	t\geq t_0,\ v\geq v_0:\ \ \ \ \frac{3}{2\sqrt{2}}\,v\leq t\leq\frac{3}{2\sqrt{2}}\left(v+q\ln t\right).
\end{equation*}
\end{theo}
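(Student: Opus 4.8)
The plan is to prove Theorem \ref{nice:2} by an induction that advances the upper end of the scaling window in steps of $\frac{1}{2}\ln t$, with Propositions \ref{k0}, \ref{k1}, \ref{k2} as the first three instances and the transition identities of Lemmas \ref{lem:1} and \ref{lem:2} driving the inductive step. A fresh factor of the product enters exactly when Lemma \ref{lem:2} is applied; Lemma \ref{lem:1} only re-expresses the formula across a strip on which the outermost factor grows from being close to $1$ to a size that genuinely contributes, as already witnessed in the passage from Proposition \ref{k0} to Proposition \ref{k1}. Concretely I would prove by induction on $k$ that the asserted formula, with the appropriate number of factors, holds on the consecutive strips $[\hat{t}_{k-1}',\hat{t}_k]$, $[\hat{t}_k,\hat{t}_k']$, $[\hat{t}_k',\hat{t}_{k+1}]$: Proposition \ref{k2} provides the base case $[\hat{t}_1',\hat{t}_2]$, the step $[\hat{t}_{k-1}',\hat{t}_k]\to[\hat{t}_k,\hat{t}_k']$ uses Lemma \ref{lem:1} with parameter $k$ and base point $s=\hat{s}_k$, and the step $[\hat{t}_k,\hat{t}_k']\to[\hat{t}_k',\hat{t}_{k+1}]$ uses Lemma \ref{lem:2} with parameter $k$ and base point $s=\hat{s}_k'$; in each case the base point sits at the right endpoint of the strip already covered, so the inductive hypothesis supplies $\ln\det(I-\gamma K_{\textnormal{Ai}})|_{L^2(\hat{s},\infty)}$ there with error $\mathcal{O}(t^{-1/2})$, using that $\hat{t}_k\asymp\hat{t}_k'\asymp t$ throughout the scaling regime. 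Having reached the strip $[\hat{t}_q,\hat{t}_q']$, i.e.\ $t$ up to $\hat{t}_q'=\frac{3}{2\sqrt{2}}(v+q\ln t)$, with the $q$-factor formula in force, Remark \ref{arti} propagates it down to the full window $\frac{3}{2\sqrt{2}}v\le t\le\frac{3}{2\sqrt{2}}(v+q\ln t)$, since on each lower sub-strip the superfluous tail factors are $1+\mathcal{O}(t^{-1/2})$ and get absorbed into the error.

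For the inductive step I would insert the inductive value of $\ln\det(I-\gamma K_{\textnormal{Ai}})|_{L^2(\hat{s},\infty)}$ into \eqref{int:4}, resp.\ \eqref{int:5}. The summands $\frac{s^3}{12}$, $-\frac{1}{8}\ln|s|$ and $\ln c_0$ pass through unchanged, and the task is to show that the remaining pieces collapse into exactly one factor, either the new $j=k$ factor appended to the inherited product (the Lemma \ref{lem:2}-steps) or the re-expressed $j=k-1$ factor (the Lemma \ref{lem:1}-steps). This rests on the identity $\ln|s|=\frac{2}{3}\ln t$ together with the telescoping relations $\frac{2}{3}\sqrt{2}(t-\hat{t}_k)=\frac{2}{3}\sqrt{2}\,t-v-(k-\frac{1}{2})\ln t$ and $\frac{2}{3}\sqrt{2}(t-\hat{t}_k')=\frac{2}{3}\sqrt{2}\,t-v-k\ln t$, which convert the affine-in-$t$ remainder generated by the lemmas into the logarithm of the exponential prefactor of the pertinent factor, and on the bookkeeping identity $\gamma_k=-2\pi\im/h_k$ with $h_k=k!\sqrt{\pi}\,2^{-k}$, which identifies the explicit constants $\im\gamma_{k-1}2^{\frac{5k}{2}-\frac{5}{4}}$ and $-\im\gamma_k^{-1}2^{-\frac{5k}{2}-\frac{5}{4}}$ occurring in Lemmas \ref{lem:1}, \ref{lem:2} with the coefficients $\frac{(k-1)!}{\sqrt{\pi}}2^{-\frac{7}{2}(k-1)-\frac{9}{4}}$ and $\frac{k!}{\sqrt{\pi}}2^{-\frac{7}{2}k-\frac{9}{4}}$ of the product factors.

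The step I expect to be the main obstacle is verifying that the parasitic $\mathcal{O}(\ln t)$ contributions cancel and that the error remains $\mathcal{O}(t^{-1/2})$ uniformly over the whole window. Three such contributions must be reconciled: the polynomial-in-$k$ multiples of $\ln|s|$ in \eqref{int:4}, \eqref{int:5}, namely $\frac{7k}{12}(k\pm1)\ln|s|$ and $-2k^2\ln|s|$; the additive constants such as $\ln(1+\im\gamma_{k-1}2^{\frac{5k}{2}-\frac{5}{4}})$ brought in by the bounded factors at the base points; and the discrepancy between $t$ and $\hat{t}_k$ inside the surviving factors, since replacing $t$ by $\hat{t}_j$ in $1+c_j t^{-j-\frac{1}{2}}\e^{\frac{2}{3}\sqrt{2}\,t-v}$ changes it by a multiplicative $1+\mathcal{O}(\ln t/t)$. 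All three are controlled by the observation that $\hat{t}_j\asymp\hat{t}_j'\asymp t$ for all $0\le j\le q$ in the regime $\frac{3}{2\sqrt{2}}v\le t\le\frac{3}{2\sqrt{2}}(v+q\ln t)$, whence $\hat{t}_j^{-1/2}=\mathcal{O}(t^{-1/2})$ and $\ln t/t=\mathcal{O}(t^{-1/2})$ uniformly: the first two contributions telescope by a $t$-independent algebraic identity between consecutive strips, and the third is absorbed under the logarithm into the $\mathcal{O}(t^{-1/2})$ error. With these estimates in place the induction closes on the top sub-strip and Remark \ref{arti} delivers the statement on the full window.
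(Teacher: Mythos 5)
Your proposal is correct and takes essentially the same route the paper intends: the paper states Propositions \ref{k0}--\ref{k2} explicitly, remarks that ``this strategy can be continued indefinitely'' through Lemmas \ref{lem:1} and \ref{lem:2}, and invokes Remark \ref{arti} to cover the full window, which is precisely the strip-by-strip induction you lay out, down to the identification of base points $\hat{s}_k,\hat{s}_k'$ and the role of $\gamma_k=-2\pi\im/h_k$ in converting the Lemma coefficients to the product coefficients. One very minor quibble: the $\frac{7k}{12}(k\pm 1)\ln|s/\hat{s}_k|$ and $-2k^2\ln|s/\hat{s}_k|$ terms do not telescope --- they are simply $\mathcal{O}(\ln t/t)$ because $|s/\hat{s}_k|=1+\mathcal{O}(\ln t/t)$ on each sub-strip, and are absorbed directly into the $\mathcal{O}(t^{-1/2})$ error --- but this does not affect the correctness of your argument.
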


\section{Proof of Theorem \ref{Airymain} and asymptotics for eigenvalues}\label{Airyp4}
The content of Theorem \ref{Airymain} is simply a combination of Theorems \ref{nice:2} and \ref{bettererror}:
\begin{cor}\label{nicecor} Given $\chi\in\mathbb{R}$ determine $p\in\mathbb{Z}_{\geq 0}$ such that $p=0$ for $\chi<-\frac{1}{2}$ and $\chi+\frac{1}{2}<p\leq\chi+\frac{3}{2}$ for $\chi\geq -\frac{1}{2}$. There exist positive $t_0=t_0(\chi),v_0=v_0(\chi)$ such that
\begin{equation*}
	D(J_{\textnormal{Ai}};\gamma)=\exp\left[\frac{s^3}{12}\right]|s|^{-\frac{1}{8}}c_0\prod_{j=0}^{p-1}\left(1+\frac{j!}{\sqrt{\pi}}\,2^{-\frac{7}{2}j-\frac{9}{4}}t^{-j-\frac{1}{2}}\e^{\frac{2}{3}\sqrt{2}\,t-v}\right)\left(1+\mathcal{O}\left(t^{-\min\{p-\frac{1}{2}-\chi,\frac{1}{2}\}}\right)\right)
\end{equation*}
uniformly for $t\geq t_0,v\geq v_0$ and $\varkappa_{_\textnormal{Ai}}=\frac{v}{t}\geq\frac{2}{3}\sqrt{2}-\chi\frac{\ln t}{t}$. In case $p=0$, we take $\prod_{j=0}^{p-1}(\ldots)\equiv 1$.
\end{cor}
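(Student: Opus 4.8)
The plan is to read off Corollary \ref{nicecor} by patching together Theorem \ref{nice:2}, which controls $D(J_{\textnormal{Ai}};\gamma)$ inside the band $\frac{3}{2\sqrt{2}}v\leq t\leq\frac{3}{2\sqrt{2}}(v+q\ln t)$ for every fixed integer $q\geq1$, with Theorem \ref{bettererror}, which supplies the refined large-gap asymptotics in the complementary regime $\varkappa_{_\textnormal{Ai}}=v/t\geq\frac{2}{3}\sqrt{2}$ (this is where the improvement of the pure gap expansion prepared in Appendix \ref{Airyapp} enters). The admissible region of the Corollary is $\varkappa_{_\textnormal{Ai}}\geq\frac{2}{3}\sqrt{2}-\chi\frac{\ln t}{t}$, i.e.\ $t\leq\frac{3}{2\sqrt{2}}(v+\chi\ln t)$, and I would split it according to the sign of $\chi$. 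If $\chi\leq0$ then $-\chi\frac{\ln t}{t}\geq0$, so the whole region lies in $\varkappa_{_\textnormal{Ai}}\geq\frac{2}{3}\sqrt{2}$ and is covered directly by Theorem \ref{bettererror}; here $p=1$ for $-\frac{1}{2}\leq\chi\leq0$ and $p=0$ for $\chi<-\frac{1}{2}$, and with the convention $\prod_{j=0}^{-1}(\ldots)\equiv1$ the asserted formula is what Theorem \ref{bettererror} returns, the stated error $\mathcal{O}(t^{-\min\{p-\chi-\frac{1}{2},1/2\}})$ being no stronger than the remainder of that theorem (since $\min\{p-\chi-\frac{1}{2},\frac{1}{2}\}\leq\frac{1}{2}$).

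If $\chi>0$, I would cover the band $\frac{2}{3}\sqrt{2}-\chi\frac{\ln t}{t}\leq\varkappa_{_\textnormal{Ai}}\leq\frac{2}{3}\sqrt{2}$ with Theorem \ref{nice:2} and the tail $\varkappa_{_\textnormal{Ai}}\geq\frac{2}{3}\sqrt{2}$ with Theorem \ref{bettererror}, their union being exactly $\varkappa_{_\textnormal{Ai}}\geq\frac{2}{3}\sqrt{2}-\chi\frac{\ln t}{t}$. In the band I apply Theorem \ref{nice:2} with the choice $q=p$: since $p>\chi+\frac{1}{2}>\chi$, the validity band $\frac{3}{2\sqrt{2}}v\leq t\leq\frac{3}{2\sqrt{2}}(v+p\ln t)$ of Theorem \ref{nice:2} contains $\frac{3}{2\sqrt{2}}v\leq t\leq\frac{3}{2\sqrt{2}}(v+\chi\ln t)$, and exponentiating the logarithmic expansion of Theorem \ref{nice:2} produces precisely $\exp[\frac{s^3}{12}]|s|^{-\frac{1}{8}}c_0\prod_{j=0}^{p-1}(1+\frac{j!}{\sqrt{\pi}}2^{-\frac{7}{2}j-\frac{9}{4}}t^{-j-\frac{1}{2}}\e^{\frac{2}{3}\sqrt{2}t-v})(1+\mathcal{O}(t^{-1/2}))$; as $\mathcal{O}(t^{-1/2})\subseteq\mathcal{O}(t^{-\min\{p-\chi-\frac{1}{2},1/2\}})$ this is the claimed expansion on the band. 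On the tail $\varkappa_{_\textnormal{Ai}}\geq\frac{2}{3}\sqrt{2}$ one has $\e^{\frac{2}{3}\sqrt{2}t-v}\leq1$, so the product factors with $j\geq1$ are $\mathcal{O}(t^{-3/2})$ and the $j=0$ factor is $\mathcal{O}(t^{-1/2})$; hence adjoining the full finite product $\prod_{j=0}^{p-1}(1+\cdots)$ to the output of Theorem \ref{bettererror} alters the expansion only within the stated error, and the two pieces glue along the overlap $\varkappa_{_\textnormal{Ai}}\in[\frac{2}{3}\sqrt{2}-\chi\frac{\ln t}{t},\frac{2}{3}\sqrt{2}]$.

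Finally I would record the error bookkeeping: taking the larger of the two remainders over the band and the tail leaves the uniform error $\mathcal{O}(t^{-\min\{p-\chi-\frac{1}{2},1/2\}})$, whose first branch $t^{-(p-\chi-\frac{1}{2})}$ becomes active precisely when $\varkappa_{_\textnormal{Ai}}$ sits within $\mathcal{O}(\frac{\ln t}{t})$ of $\frac{2}{3}\sqrt{2}$, where the omitted ``$p$-th'' factor $t^{-p+\frac{1}{2}}\e^{\frac{2}{3}\sqrt{2}t-v}$ has size $t^{\chi-p+\frac{1}{2}}$. Because this corollary merely assembles two theorems already established, there is no substantial analytic obstacle; the only point needing care is the gluing on the overlap — confirming that the two expansions agree there on the nose rather than merely up to the error term — but since the product factors are the same explicit elementary functions in both theorems, this reduces to an elementary comparison of powers of $t$ and not to any new estimate.
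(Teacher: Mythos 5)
Your proposal is correct and follows essentially the same route the paper takes: the paper's own proof consists precisely of the remark that Corollary~\ref{nicecor} is a combination of Theorem~\ref{nice:2} (covering the band $\frac{3}{2\sqrt{2}}v\leq t\leq\frac{3}{2\sqrt{2}}(v+q\ln t)$) and Theorem~\ref{bettererror} (covering $\varkappa_{_\textnormal{Ai}}\geq\frac{2}{3}\sqrt{2}$), and you spell this out with the appropriate case split on the sign of $\chi$ and the verification that adjoining the finite product is harmless on the tail because every factor there is $1+\mathcal{O}(t^{-1/2})$. One cosmetic point: the region $\varkappa_{_\textnormal{Ai}}\in[\frac{2}{3}\sqrt{2}-\chi\frac{\ln t}{t},\frac{2}{3}\sqrt{2}]$ you call the ``overlap'' is in fact the entire band on which Theorem~\ref{nice:2} is invoked; the two theorems only meet at the boundary $\varkappa_{_\textnormal{Ai}}=\frac{2}{3}\sqrt{2}$, but since both expansions agree there up to the stated error this does not affect the argument.
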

We now begin to derive asymptotic information for individual eigenvalues and employ techniques which have occurred previously in \cite{B}. First, by positivity of $\e^{-v}\lambda_j(1-\lambda_j)^{-1}$,% from Theorems \ref{bettererror} and \ref{nice:2},
\begin{equation*}
	\forall\,p\in\mathbb{Z}_{\geq 0}:\ \ \ 1+\e^{-v}\frac{\lambda_p(s)}{1-\lambda_p(s)}\geq 1=\frac{\det(I-\gamma K_{\textnormal{Ai}})}{\det(I-\gamma K_{\textnormal{Ai}})}\bigg|_{L^2(s,\infty)}.\!\!\!%=\left(1+\frac{q!}{\sqrt{\pi}}2^{-\frac{7}{2}q-\frac{9}{4}}t^{-q-\frac{1}{2}}\e^{\frac{2}{3}\sqrt{2}\,t-v}\right)\left(1+\mathcal{O}\left(t^{-\frac{1}{2}}\right)\right)
\end{equation*}
Now, we use Corollary \ref{nicecor} for the determinant in the denominator (with $\chi'=\chi,\chi\geq 0$ and thus $p'=p\in\mathbb{Z}_{\geq 0}$) as well as in the numerator (with $\chi''=\chi+1,\chi\geq 0$ and thus $p''=p'+1=p+1$), i.e.
\begin{equation*}
	1+\e^{-v}\frac{\lambda_p(s)}{1-\lambda_p(s)}\geq\left(1+\frac{p!}{\sqrt{\pi}}2^{-\frac{7}{2}p-\frac{9}{4}}t^{-p-\frac{1}{2}}\e^{\frac{2}{3}\sqrt{2}\,t-v}\right)\left(1+\mathcal{O}\left(t^{-\min\{p-\frac{1}{2}-\chi,\frac{1}{2}\}}\right)\right)
\end{equation*}
uniformly for $t\geq t_0,v\geq v_0$ such that $\varkappa_{_\textnormal{Ai}}\geq\frac{2}{3}\sqrt{2}-\chi\frac{\ln t}{t},\chi\geq 0$. Hence, after algebra,
\begin{equation}\label{first:es}
	\forall\,p\in\mathbb{Z}_{\geq 0}:\ \ \ \frac{\lambda_p(s)}{1-\lambda_p(s)}\geq C_pt^{-p-\frac{1}{2}}\e^{\frac{2}{3}\sqrt{2}\,t},\ \ t\rightarrow+\infty;
\end{equation}
and the constant $C_p>0$ can be chosen independent of $v$. Next, we make use of Lidskii's Theorem: Valid for any $\ell\in\mathbb{Z}_{\geq 0}$,
\begin{eqnarray*}
	\frac{\det(I-\gamma K_{\textnormal{Ai}})}{\det(I-K_{\textnormal{Ai}})}\bigg|_{L^2(s,\infty)} &=& \det\big(I+\e^{-v}K_{\textnormal{Ai}}(I-K_{\textnormal{Ai}})^{-1}\big)\\
	&=&\prod_{j=0}^{\ell-1}\left(1+\e^{-v}\frac{\lambda_{\ell}(s)}{1-\lambda_{\ell}(s)}\right)\det\big(I+\e^{-v}K_{\ell}(I-K_{\ell})^{-1}\big),\nonumber
\end{eqnarray*}
with $K_{\ell}=K_{\textnormal{Ai}}\cdot P_{\ell}$ and $P_{\ell}$ projects on the space of eigenvectors of $K_{\textnormal{Ai}}$ with corresponding eigenvalues $\{\lambda_j:\ j\geq \ell\}$. In this exact identity we can use the expansion of Corollary \ref{nicecor}, i.e. as $t\rightarrow+\infty,\gamma\uparrow 1$ such that $\varkappa_{_\textnormal{Ai}}\geq\frac{2}{3}\sqrt{2}-\chi\frac{\ln t}{t}$,
\begin{align}
	\prod_{j=0}^{p-1}\left(1+\frac{j!}{\sqrt{\pi}}\,2^{-\frac{7}{2}j-\frac{9}{4}}t^{-j-\frac{1}{2}}\e^{\frac{2}{3}\sqrt{2}\,t-v}\right)&\left(1+\mathcal{O}\left(t^{-\min\{p-\chi-\frac{1}{2},\frac{1}{2}\}}\right)\right)\label{eig:0}\\
	&\ \ \ \ =\prod_{j=0}^{\ell-1}\left(1+\e^{-v}\frac{\lambda_j(s)}{1-\lambda_j(s)}\right)\det\big(I+\e^{-v}K_{\ell}(I-K_{\ell})^{-1}\big).\nonumber
\end{align}
In here we first choose $\chi=\frac{1}{2}$, i.e. we take $p=2$ and, say, $\ell=1$:
\begin{equation}
	%\prod_{j=0}^1&\left(1+\frac{j!}{\sqrt{\pi}}\,2^{-\frac{7}{2}j-\frac{9}{4}}t^{-j-\frac{1}{2}}\e^{\frac{2}{3}\sqrt{2}\,t-v}\right)\left(1+\mathcal{O}\left(t^{-\min\{\frac{3}{2}-\chi,\frac{1}{2}\}}\right)\right)\nonumber\\
	\left(1+\frac{1}{\sqrt{\pi}}2^{-\frac{9}{4}}t^{-\frac{1}{2}}\e^{\frac{2}{3}\sqrt{2}\,t-v}\right)\left(1+\mathcal{O}\left(t^{-\frac{1}{2}}\right)\right)
	=\left(1+\e^{-v}\frac{\lambda_0(s)}{1-\lambda_0(s)}\right)\det\big(I+\e^{-v}K_1(I-K_1)^{-1}\big).\label{eig:1}
\end{equation}
For $\frac{2}{3}\sqrt{2}\,t\geq v\geq\frac{2}{3}\sqrt{2}\,t-\frac{1}{2}\ln t$ we now see that $\det(I+\e^{-v}K_1(I-K_1)^{-1})\geq 1$ together with \eqref{minimax} implies in \eqref{eig:1} that
%
%Multiply through with the second summand in the first factor in the left hand side of \eqref{eig:1},
%\begin{eqnarray}
%	\left(1+\sqrt{\pi}\,2^{\frac{9}{4}}t^{\frac{1}{2}}\e^{-\frac{2}{3}\sqrt{2}\,t+v}\right)\left(1+\mathcal{O}\left(t^{-\frac{1}{2}}\right)\right)&=&\left(\sqrt{\pi}\,2^{\frac{9}{4}}t^{\frac{1}{2}}\e^{-\frac{2}{3}\sqrt{2}\,t+v}+\frac{\sqrt{\pi}\,2^{\frac{9}{4}}t^{\frac{1}{2}}\e^{-\frac{2}{3}\sqrt{2}\,t}\lambda_0(s)}{1-\lambda_0(s)}\right)\label{eig:2}\\
%	&&\times\det\big(I+\e^{-v}K_1(I-K_1)^{-1}\big),\nonumber
%\end{eqnarray}
%and notice, for $\chi=\frac{1}{2}$, the boundedness of the left hand side in \eqref{eig:2}. But both factors in the right hand side of \eqref{eig:2} are positive, thus have to be bounded as well:
%\begin{eqnarray*}
%	\frac{t^{\frac{1}{2}}\e^{-\frac{2}{3}\sqrt{2}\,t}\lambda_0(s)}{1-\lambda_0(s)}\det\big(I+\e^{-v}K_1(I-K_1)^{-1}\big)&=&\mathcal{O}(1),\\
%	t^{\frac{1}{2}}\e^{-\frac{2}{3}\sqrt{2}\,t+v}\det\big(I+\e^{-v}K_1(I-K_1)^{-1}\big)&=&\mathcal{O}(1).
%\end{eqnarray*}
%Hence, with $\chi=\frac{1}{2}$, we obtain $\det(I+\e^{-v}K_1(I-K_1)^{-1})=\mathcal{O}(1)$ and since $\det(I+\e^{-v}K_1(I-K_1)^{-1})\geq 1$, we derive with (T3) that
\begin{equation}\label{eig:3}
	\frac{t^{\frac{1}{2}}\e^{-\frac{2}{3}\sqrt{2}\,t}}{1-\lambda_0(s)}=\mathcal{O}(1),\ \ \ t\geq t_0,\ v\geq v_0:\ \ \frac{2}{3}\sqrt{2}\geq\varkappa_{_\textnormal{Ai}}\geq\frac{2}{3}\sqrt{2}-\frac{1}{2}\frac{\ln t}{t}.%\chi\in\left(\frac{1}{2},1\right].
\end{equation}
%But as we have seen from \eqref{eig:2}, the $\mathcal{O}(1)$ term in \eqref{eig:3} has to be bounded away from zero, hence we have in fact
%\begin{equation}\label{eig:4}
%	1-\lambda_0(s)=\mathcal{O}\left(t^{\frac{1}{2}}\e^{-\frac{2}{3}\sqrt{2}\,t}\right).
%\end{equation}
%Although \eqref{eig:4} was derived under the assumption $t\geq t_0,v\geq v_0: \chi=\frac{1}{2}$, the eigenvalues are clearly $\gamma$-independent, i.e. \eqref{eig:4} is valid without restrictions placed on $\gamma$, i.e. for $t\rightarrow+\infty$. \textcolor{red}{Does this last piece make sense? Suppose it does.} 
Next we choose $\chi=\frac{3}{2}$ in \eqref{eig:0}, i.e. we take $p=3$ and, say, $\ell=2$:
\begin{align}
	%\prod_{j=0}^2&\left(1+\frac{j!}{\sqrt{\pi}}2^{-\frac{7}{2}j-\frac{9}{4}}t^{-j-\frac{1}{2}}\e^{\frac{2}{3}\sqrt{2}\,t-v}\right)\left(1+\mathcal{O}\left(t^{-\min\{\frac{5}{2}-\chi,\frac{1}{2}\}}\right)\right)\nonumber\\
	\left(1+\frac{1}{\sqrt{\pi}}2^{-\frac{9}{4}}t^{-\frac{1}{2}}\e^{\frac{2}{3}\sqrt{2}\,t-v}\right)&\left(1+\frac{1}{\sqrt{\pi}}2^{-\frac{7}{2}-\frac{9}{4}}t^{-\frac{3}{2}}\e^{\frac{2}{3}\sqrt{2}\,t-v}\right)\left(1+\mathcal{O}\left(t^{-\frac{1}{2}}\right)\right)\nonumber\\
	&\,=\left(1+\e^{-v}\frac{\lambda_0(s)}{1-\lambda_0(s)}\right)\left(1+\e^{-v}\frac{\lambda_1(s)}{1-\lambda_1(s)}\right)\det\big(I+\e^{-v}K_2(I-K_2)^{-1}\big)\label{eig:5}
\end{align}
Multiplying through with the second summand of the first factor in the left hand side of \eqref{eig:5}, we obtain
\begin{align*}
	&\left(1+\sqrt{\pi}\,2^{\frac{9}{4}}t^{\frac{1}{2}}\e^{-\frac{2}{3}\sqrt{2}\,t+v}\right)\left(1+\frac{1}{\sqrt{\pi}}2^{-\frac{7}{2}-\frac{9}{4}}t^{-\frac{3}{2}}\e^{\frac{2}{3}\sqrt{2}\,t-v}\right)\left(1+\mathcal{O}\left(t^{-\frac{1}{2}}\right)\right)\\
	&\,=\left(\sqrt{\pi}\,2^{\frac{9}{4}}t^{\frac{1}{2}}\e^{-\frac{2}{3}\sqrt{2}\,t+v}+\frac{\sqrt{\pi}\,2^{\frac{9}{4}}t^{\frac{1}{2}}\e^{-\frac{2}{3}\sqrt{2}\,t}\lambda_0(s)}{1-\lambda_0(s)}\right)\left(1+\e^{-v}\frac{\lambda_1(s)}{1-\lambda_1(s)}\right)\det\big(I+\e^{-v}K_2(I-K_2)^{-1}\big).
\end{align*}
For $\frac{2}{3}\sqrt{2}\,t-\frac{1}{2}\ln t\geq v\geq\frac{2}{3}\sqrt{2}\,t-\frac{3}{2}\ln t$, all factors in the left hand side are bounded, hence by positivity all factors in the right hand side have to be bounded and with $\det(I+\e^{-v}K_2(I-K_2)^{-1})\geq 1$, this leads us to
%\begin{eqnarray}
%	\frac{t^{\frac{1}{2}}\e^{-\frac{2}{3}\sqrt{2}\,t}\lambda_0(s)}{1-\lambda_0(s)}\det\big(I+\e^{-v}K_2(I-K_2)^{-1}\big)&=&\mathcal{O}(1),\label{eig:6}\\
%	\frac{t^{\frac{1}{2}}\e^{-\frac{2}{3}\sqrt{2}\,t}\lambda_0(s)}{1-\lambda_0(s)}\frac{t^{\frac{3}{2}}\e^{-\frac{2}{3}\sqrt{2}\,t}\lambda_1(s)}{1-\lambda_1(s)}\det\big(I+\e^{-v}K_2(I-K_2)^{-1}\big)&=&\mathcal{O}(1).\label{eig:7}
%\end{eqnarray}
%Hence from \eqref{eig:6} with $\det(I+\e^{-v}K_2(I-K_2)^{-1})\geq 1$ and with \eqref{eig:6} in \eqref{eig:7} (also using (T3)),
\begin{equation}\label{eig:8}
	\frac{t^{\frac{1}{2}}\e^{-\frac{2}{3}\sqrt{2}\,t}}{1-\lambda_0(s)}=\mathcal{O}(1),\ \ \ \ \frac{t^{\frac{1}{2}}\e^{-\frac{2}{3}\sqrt{2}\,t}}{1-\lambda_1(s)}=\mathcal{O}(1);\ \ \ \ \ t\geq t_0,\ v\geq v_0:\ \ \frac{2}{3}\sqrt{2}-\frac{1}{2}\frac{\ln t}{t}\geq\varkappa_{_\textnormal{Ai}}\geq\frac{2}{3}\sqrt{2}-\frac{3}{2}\frac{\ln t}{t}.
\end{equation}
After that, we let $\chi=\frac{5}{2}$, i.e. $p=4$ and, say, $\ell=3$: After simplification in \eqref{eig:0},
\begin{align*}
	\Big(1+&\sqrt{\pi}\,2^{\frac{9}{4}}t^{\frac{1}{2}}\e^{-\frac{2}{3}\sqrt{2}\,t+v}\Big)\left(1+\sqrt{\pi}\,2^{\frac{7}{2}+\frac{9}{4}}t^{\frac{3}{2}}\e^{-\frac{2}{3}\sqrt{2}\,t+v}\right)\left(1+\frac{2!}{\sqrt{\pi}}\,2^{-\frac{7}{2}2-\frac{9}{4}}t^{-2-\frac{1}{2}}\e^{\frac{2}{3}\sqrt{2}\,t-v}\right)\left(1+\mathcal{O}\left(t^{-\frac{1}{2}}\right)\right)\\
	&=\left(\sqrt{\pi}\,2^{\frac{9}{4}}t^{\frac{1}{2}}\e^{-\frac{2}{3}\sqrt{2}t+v}+\frac{\sqrt{\pi}\,2^{\frac{9}{4}}t^{\frac{1}{2}}\e^{-\frac{2}{3}\sqrt{2}\,t}\lambda_0(s)}{1-\lambda_0(s)}\right)\left(\sqrt{\pi}\,2^{\frac{7}{2}+\frac{9}{4}}t^{\frac{3}{2}}\e^{-\frac{2}{3}\sqrt{2}\,t+v}+\frac{\sqrt{\pi}\,2^{\frac{7}{2}+\frac{9}{4}}t^{\frac{3}{2}}\lambda_1(s)}{1-\lambda_1(s)}\right)\\
	&\,\times\left(1+\e^{-v}\frac{\lambda_2(s)}{1-\lambda_2(s)}\right)\det(I+\e^{-v}K_3(I-K_3)^{-1})
\end{align*}
and by boundedness of the left hand side, also with the help of \eqref{first:es}, for $t\geq t_0,v\geq v_0$,
\begin{equation}\label{eig:6}
	\frac{t^{\frac{1}{2}}\e^{-\frac{2}{3}\sqrt{2}\,t}}{1-\lambda_0(s)}=\mathcal{O}(1),\ \ \frac{t^{\frac{3}{2}}\e^{-\frac{2}{3}\sqrt{2}\,t}}{1-\lambda_1(s)}=\mathcal{O}(1),\ \ \frac{t^{\frac{3}{2}}\e^{-\frac{2}{3}\sqrt{2}\,t}}{1-\lambda_2(s)}=\mathcal{O}(1);\ \ \frac{2}{3}\sqrt{2}-\frac{3}{2}\frac{\ln t}{t}\geq\varkappa_{_\textnormal{Ai}}\geq\frac{2}{3}\sqrt{2}-\frac{5}{2}\frac{\ln t}{t}.
\end{equation}
Iterating this approach for general $\chi=p-\frac{1}{2},\ell=p-1,p\in\mathbb{Z}_{\geq 1}$ we derive a sequence of estimates,
\begin{equation*}
	\frac{t^{j+\frac{1}{2}}\e^{-\frac{2}{3}\sqrt{2}\,t}}{1-\lambda_j(s)}=\mathcal{O}(1),\ \ 0\leq j\leq p-2;\ \ \ \ \frac{t^{p-\frac{3}{2}}\e^{-\frac{2}{3}\sqrt{2}\,t}}{1-\lambda_{p-1}(s)}=\mathcal{O}(1);\ \ \ \frac{2}{3}\sqrt{2}-\left(p-\frac{3}{2}\right)\frac{\ln t}{t}\geq\varkappa_{_\textnormal{Ai}}\geq\frac{2}{3}\sqrt{2}-\left(p-\frac{1}{2}\right)\frac{\ln t}{t}.
\end{equation*}
Since these estimates are valid for any $p\in\mathbb{Z}_{\geq 1}$, we have in fact
\begin{equation}\label{second:es}
	\forall\,j\in\mathbb{Z}_{\geq 0}:\ \ \ \frac{\lambda_j(s)}{1-\lambda_j(s)}\leq D_jt^{-j-\frac{1}{2}}\e^{\frac{2}{3}\sqrt{2}\,t},\ \ t\rightarrow+\infty
\end{equation}
and the constant $D_j>0$ can be chosen independent of $v$. With \eqref{first:es} and \eqref{second:es} thus
\begin{prop}\label{almost} For any $j\in\mathbb{Z}_{\geq 0}$ fixed, as $t\rightarrow+\infty$,
\begin{equation*}
	1-\lambda_j(s)=c_jt^{j+\frac{1}{2}}\e^{-\frac{2}{3}\sqrt{2}\,t}\big(1+o(1)\big),\ \ \ c_j>0.
\end{equation*}
\end{prop}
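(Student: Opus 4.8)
The plan is to sharpen the two-sided bounds \eqref{first:es} and \eqref{second:es} — which already give $1-\lambda_j(s)\asymp t^{j+\frac12}\e^{-\frac23\sqrt2\,t}$ — into a genuine asymptotic equivalence by one further, inductive, exploitation of the exact Lidskii factorization
\begin{equation*}
	\frac{D(J_{\textnormal{Ai}};\gamma)}{D(J_{\textnormal{Ai}};1)}=\det\big(I+\e^{-v}K_{\textnormal{Ai}}(I-K_{\textnormal{Ai}})^{-1}\big)=\prod_{i=0}^{\ell-1}\left(1+\e^{-v}\frac{\lambda_i(s)}{1-\lambda_i(s)}\right)\det\big(I+\e^{-v}K_{\ell}(I-K_{\ell})^{-1}\big),
\end{equation*}
combined with Corollary \ref{nicecor}. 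I would induct on $j\in\mathbb{Z}_{\geq 0}$, the inductive hypothesis being the sharp form $1-\lambda_i(s)=\frac{\sqrt\pi}{i!}2^{\frac72 i+\frac94}\,t^{i+\frac12}\e^{-\frac23\sqrt2\,t}(1+o(1))$ for all $i<j$; transporting the explicit constant (rather than merely the order of magnitude) through the induction turns out to be essential, as explained below, and it also yields Corollary \ref{Aispecex} at the same time. The base case $j=0$ is precisely \eqref{eig:1}, handled in the same way as the step with $\chi=\frac12$, $p=2$, $\ell=1$.

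For the step, fix $\chi=j+\frac12$ in Corollary \ref{nicecor}, so that $p=j+2$ and the error there is $\mathcal O(t^{-1/2})$, and take $\ell=j+1$ in the Lidskii identity, working throughout in the regime $v\geq\frac23\sqrt2\,t-(j+\frac12)\ln t$. On this regime \eqref{second:es} gives $\det(I+\e^{-v}K_{j+1}(I-K_{j+1})^{-1})=1+\mathcal O(t^{-1})$, and likewise the top ($i=j+1$) factor of the product in Corollary \ref{nicecor} is $1+\mathcal O(t^{-1})$; for each $i<j$ the inductive hypothesis turns the $i$-th Lidskii factor into $1+\frac{i!}{\sqrt\pi}2^{-\frac72 i-\frac94}t^{-i-\frac12}\e^{\frac23\sqrt2\,t-v}(1+o(1))$, which, because its coefficient matches the corresponding coefficient in Corollary \ref{nicecor}, cancels the $i$-th factor of the corollary up to a factor $1+o(1)$. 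Collecting all of this leaves
\begin{equation*}
	1+\frac{j!}{\sqrt\pi}2^{-\frac72 j-\frac94}t^{-j-\frac12}\e^{\frac23\sqrt2\,t-v}=\left(1+\e^{-v}\frac{\lambda_j(s)}{1-\lambda_j(s)}\right)\big(1+o(1)\big)
\end{equation*}
uniformly for $v\geq\frac23\sqrt2\,t-(j+\frac12)\ln t$. Since $\lambda_j(s)$ does not depend on $v$, I would then specialize to the boundary value $v=\frac23\sqrt2\,t-(j+\frac12)\ln t$, at which both the explicit term and $\e^{-v}\frac{\lambda_j}{1-\lambda_j}$ are of order one by \eqref{first:es}, \eqref{second:es}; solving gives $\frac{\lambda_j(s)}{1-\lambda_j(s)}=\frac{j!}{\sqrt\pi}2^{-\frac72 j-\frac94}t^{-j-\frac12}\e^{\frac23\sqrt2\,t}(1+o(1))$, hence $1-\lambda_j(s)=\frac{\sqrt\pi}{j!}2^{\frac72 j+\frac94}t^{j+\frac12}\e^{-\frac23\sqrt2\,t}(1+o(1))$, which closes the induction and establishes the Proposition with $c_j=\frac{\sqrt\pi}{j!}2^{\frac72 j+\frac94}$.

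The main obstacle is the bookkeeping in the cancellation of the $j$ lower factors: each of them grows like a positive power of $t$ on the relevant $v$-regime, so a careless $o(1)$ error in the inductive hypothesis could in principle survive multiplication by that large factor. What makes it go through — and the reason the induction must carry the explicit constants — is that the $i$-th factor of the Lidskii product and the $i$-th factor of Corollary \ref{nicecor} agree to leading order exactly when $c_i^{-1}=\frac{i!}{\sqrt\pi}2^{-\frac72 i-\frac94}$, in which case their ratio is $1+o(1)$ irrespective of the size of the factors; had one carried only ``$c_i>0$ exists'', the ratio of the two products would instead limit to some constant $M\neq 1$ a priori, and one would have to first force $M=1$ by letting $v\to\infty$ in the displayed identity (the left side is $v$-independent while the spurious term would scale like $\e^v(M-1)$). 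The remaining points to verify are routine: that the error terms from Corollary \ref{nicecor}, from \eqref{second:es}, and from the inductive hypothesis (the last being $v$-independent) are all uniform in $v$ over the regime, so that the specialization of $v$ at the end is legitimate, and that the $-1$ in $\frac{\lambda_i}{1-\lambda_i}=\frac{1}{1-\lambda_i}-1$ is negligible against the exponentially large main term.
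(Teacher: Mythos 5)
Your proof is correct and is, in essence, the paper's own argument: it uses \eqref{first:es}, \eqref{second:es}, the Lidskii factorization, and Corollary \ref{nicecor}, closed off by an induction on $j$. You are right that \eqref{first:es} and \eqref{second:es} alone — which is all the paper cites when stating Proposition \ref{almost} — give only the two-sided bound $1-\lambda_j(s)\asymp t^{j+\frac12}\e^{-\frac23\sqrt2\,t}$ with $v$-independent constants, not convergence of the prefactor to a single constant $c_j$; that convergence (and the value $c_j=\frac{\sqrt\pi}{j!}2^{\frac72 j+\frac94}$, i.e.\ Corollary \ref{Aispecex}) is exactly what your inductive cancellation of the lower Lidskii factors against the factors of Corollary \ref{nicecor} establishes, and it is what the paper's closing line, ``take $\ln(\ldots)$ of both sides and compare powers in $t$,'' performs implicitly. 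You have also isolated the one delicate point: at $v=\frac23\sqrt2\,t-(j+\frac12)\ln t$ the $i$-th factors with $i<j$ are of size $t^{j-i}$, so their ratio is $1+o(1)$ only because the leading coefficients match, which is why the induction must carry the explicit $c_i$ rather than merely their existence. One small bookkeeping addendum: the estimate $\det\big(I+\e^{-v}K_{j+1}(I-K_{j+1})^{-1}\big)=1+o(1)$ requires, besides \eqref{second:es}, the trace-norm bound $\|K_{\textnormal{Ai}}\|_1=\mathcal{O}(t)$ of \eqref{eig:10}, as the paper records just before Proposition \ref{eig:imp2}.
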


We now recall a few general facts about the trace class operator $K_{\textnormal{Ai}}:L^2\big((s,\infty);\d\lambda\big)\circlearrowleft$. From the identity
\begin{equation}\label{convu}
	K_{\textnormal{Ai}}(\lambda,\mu)=\frac{\textnormal{Ai}(\lambda)\textnormal{Ai}'(\mu)-\textnormal{Ai}(\mu)\textnormal{Ai}'(\lambda)}{\lambda-\mu}=\int_0^{\infty}\textnormal{Ai}(\lambda+t)\textnormal{Ai}(t+\mu)\d t
\end{equation}
it follows that $K_{\textnormal{Ai}}$ is positive definite with finite operator norm $\|K_{\textnormal{Ai}}\|<1$ and trace norm
\begin{equation}\label{eig:10}
	\|K_{\textnormal{Ai}}\|_1=\int_s^{\infty}\textnormal{K}_{\textnormal{Ai}}(\lambda,\lambda)\d\lambda\leq ct,\ \ \ c>0,\ \ t=|s|^{\frac{3}{2}}.
\end{equation}
Note that for any positive definite, trace class operator $B$,% (cf. \cite{BDIK}),
\begin{equation*}
	\big|\det(I+B)-1\big|\leq\|B\|_1\exp\left(\|B\|_1\right),
\end{equation*}
and thus, for any $p\in\mathbb{Z}_{\geq 1}$
\begin{equation*}
	\big|\det(I+\e^{-v}K_p(I-K_p)^{-1})-1\big|\leq \e^{-v}\|K_p(I-K_p)^{-1}\|_1\exp\left(\e^{-v}\|K_p(I-K_p)^{-1}\|_1\right).
\end{equation*}
But with \eqref{second:es},
\begin{equation*}
	\e^{-v}\|K_p(I-K_p)^{-1}\|_1\leq \e^{-v}\|K_p\|_1\|(I-K_p)^{-1}\|\leq \e^{-v}\frac{\|K_p\|_1}{1-\lambda_p}\leq C_p\,t^{\frac{1}{2}-p}\e^{\frac{2}{3}\sqrt{2}\,t-v}.
\end{equation*}
With this back to \eqref{eig:0} where $\varkappa_{_\textnormal{Ai}}\geq\frac{2}{3}\sqrt{2}-\chi\frac{\ln t}{t}$, and we now let $\ell=p$,
\begin{equation*}
	\det\big(I+\e^{-v}K_p(I-K_p)^{-1}\big)=1+\mathcal{O}\left(t^{-(p-\chi-\frac{1}{2})}\right)=1+o(1).
\end{equation*}
Summarizing,
\begin{prop}\label{eig:imp2} Given $\chi\in\mathbb{R}$, determine $p\in\mathbb{Z}_{\geq 0}$ as in Corollary \ref{nicecor}. There exist positive $t_0=t_0(\chi),v_0=v_0(\chi)$ such that
\begin{equation*}
	\prod_{j=0}^{p-1}\left(1+\frac{j!}{\sqrt{\pi}}2^{-\frac{7}{2}j-\frac{9}{4}}t^{-j-\frac{1}{2}}\e^{\frac{2}{3}\sqrt{2}\,t-v}\right)=\prod_{j=0}^{p-1}\left(1+\e^{-v}\frac{\lambda_j(s)}{1-\lambda_j(s)}\right)\left(1+\mathcal{O}\left(t^{-\min\{p-\chi-\frac{1}{2},\frac{1}{2}\}}\right)\right)
\end{equation*}
uniformly for $t\geq t_0,v\geq v_0$ and $\varkappa_{_\textnormal{Ai}}\geq\frac{2}{3}\sqrt{2}-\chi\frac{\ln t}{t}$. Also here we take $\prod_{j=0}^{p-1}(\ldots)\equiv 1$, if $p=0$.
\end{prop}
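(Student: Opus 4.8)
The plan is to derive Proposition~\ref{eig:imp2} as an essentially algebraic consequence of the large gap asymptotics already in hand, with no further Riemann--Hilbert analysis. The key device is the exact spectral factorization of the ratio $D(J_{\textnormal{Ai}};\gamma)/D(J_{\textnormal{Ai}};1)$ supplied by Lidskii's theorem, namely \eqref{eig:0}, which splits it into a finite product over the $p$ largest eigenvalues times a tail Fredholm determinant $\det(I+\e^{-v}K_\ell(I-K_\ell)^{-1})$. Evaluating this ratio in two ways --- spectrally via \eqref{eig:0}, and analytically via Corollary~\ref{nicecor} for the numerator together with the refined gap expansion \eqref{A:1} for the denominator --- and showing the tail determinant is $1+o(1)$, one reads off the asserted identity.

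Concretely, I would take \eqref{eig:0} with $\ell=p$, so that its right-hand side becomes $\prod_{j=0}^{p-1}\bigl(1+\e^{-v}\frac{\lambda_j(s)}{1-\lambda_j(s)}\bigr)\det\bigl(I+\e^{-v}K_p(I-K_p)^{-1}\bigr)$, while its left-hand side is the explicit product $\prod_{j=0}^{p-1}\bigl(1+\frac{j!}{\sqrt\pi}2^{-\frac72 j-\frac94}t^{-j-\frac12}\e^{\frac23\sqrt2\,t-v}\bigr)$ times $1+\mathcal{O}(t^{-\min\{p-\chi-\frac12,\frac12\}})$; here the $\tfrac12$ in the error originates from the $\gamma=1$ expansion \eqref{A:1}, $D(J_{\textnormal{Ai}};1)=\exp[s^3/12]\,|s|^{-1/8}c_0\,(1+\mathcal{O}(t^{-1/2}))$, whose prefactor cancels against that of Corollary~\ref{nicecor} applied to the numerator, and where one uses $\tfrac12\ge\min\{p-\chi-\tfrac12,\tfrac12\}$ to merge the two contributions. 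It then remains to estimate the tail determinant: with $B=\e^{-v}K_p(I-K_p)^{-1}$ I would apply $|\det(I+B)-1|\le\|B\|_1\exp(\|B\|_1)$, bound $\|B\|_1\le\e^{-v}\|K_{\textnormal{Ai}}\|_1\,(1-\lambda_p(s))^{-1}$, and use \eqref{eig:10} together with Proposition~\ref{almost} (equivalently the two-sided estimates \eqref{first:es}, \eqref{second:es}), which give $\|K_{\textnormal{Ai}}\|_1\le ct$ and $1-\lambda_p(s)\asymp t^{p+\frac12}\e^{-\frac23\sqrt2\,t}$. The scaling constraint $\varkappa_{\textnormal{Ai}}\ge\frac23\sqrt2-\chi\frac{\ln t}{t}$, i.e. $\e^{\frac23\sqrt2\,t-v}\le t^{\chi}$, then yields $\|B\|_1\le C_p\,t^{\frac12-p}\e^{\frac23\sqrt2\,t-v}\le C_p\,t^{-(p-\chi-\frac12)}=o(1)$, hence $\det(I+B)=1+\mathcal{O}(t^{-(p-\chi-\frac12)})$. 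Equating the two evaluations of the ratio, dividing through by $1+\mathcal{O}(t^{-\min\{p-\chi-\frac12,\frac12\}})$ and absorbing the factor $1+\mathcal{O}(t^{-(p-\chi-\frac12)})$ into it (using $p-\chi-\tfrac12\ge\min\{p-\chi-\tfrac12,\tfrac12\}$) produces exactly the claim; the case $p=0$ is immediate since both sides equal $1$.

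The only genuinely substantive point --- and it is really the content of the preceding subsection, not of this proposition --- is the control of the tail determinant, which rests entirely on the sharp two-sided asymptotics $1-\lambda_p(s)\asymp t^{p+\frac12}\e^{-\frac23\sqrt2\,t}$ already established in Proposition~\ref{almost}: it is the lower bound for $1-\lambda_p(s)$ that keeps $\|(I-K_p)^{-1}\|$ under control, and the admissibility condition that converts the exponential $\e^{\frac23\sqrt2\,t-v}$ into the polynomial decay $t^{-(p-\chi-\frac12)}$. The remaining care is bookkeeping: one must note that $p$ is defined so that $p-\chi-\tfrac12\in(0,1]$, ensuring the tail contribution is genuinely $o(1)$, and track that the final error exponent is the minimum of the $\tfrac12$ coming from the $\gamma=1$ expansion and the $p-\chi-\tfrac12$ coming from the tail determinant.
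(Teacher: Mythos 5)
Your proposal is correct and follows essentially the same route as the paper: take the exact Lidskii factorization \eqref{eig:0} with $\ell=p$, bound the tail determinant via $|\det(I+B)-1|\le\|B\|_1\exp(\|B\|_1)$ with $\|B\|_1\le \e^{-v}\|K_{\textnormal{Ai}}\|_1/(1-\lambda_p)$, invoke \eqref{eig:10} and \eqref{second:es} to get $\|B\|_1\le C_p\,t^{\frac12-p}\e^{\frac23\sqrt2\,t-v}\le C_p\,t^{-(p-\chi-\frac12)}$, and absorb this $o(1)$ factor into the existing $\mathcal{O}(t^{-\min\{p-\chi-\frac12,\frac12\}})$ error. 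The one small imprecision is in the bookkeeping remark attributing the $\tfrac12$ exponent solely to the $\gamma=1$ expansion \eqref{A:1}: the cap at $\tfrac12$ already appears in Corollary \ref{nicecor} (inherited from Theorem \ref{nice:2}), so both the numerator and denominator expansions contribute that exponent; this has no bearing on the validity of the argument.
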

At this point we use Proposition \ref{almost}, take $\ln(\ldots)$ of both sides and compare powers in $t$, this gives
\begin{equation*}
	c_j=\frac{j!}{\sqrt{\pi}}2^{-\frac{7}{2}j-\frac{9}{4}},\ \ j\in\mathbb{Z}_{\geq 0},
\end{equation*}
and Corollary \ref{Aispecex} follows.
\section{Nonlinear steepest descent analysis associated with $K_{\textnormal{Bess}}^{(a)}$ -- part 1}\label{Bessp1}
The Bessel kernel \eqref{Bessk} is of type \eqref{IIKSker} with
\begin{equation}\label{BIIKS}
	\phi(z)=J_a(\sqrt{z}),\ \ \ \psi(z)=\frac{1}{2}\sqrt{z}J_a'(\sqrt{z});\ \ \ \ \ J_{\textnormal{Bess}}=(0,s),\ \ s>0
\end{equation}
and we will derive an asymptotic solution of RHP \ref{masterIIKS} for sufficiently large (positive) $s$ and $\gamma$ close to $1$ such that
\begin{equation}\label{Bscale}
	\varkappa_{_\textnormal{Bess}}\equiv\frac{v}{t}=2-2\left(\chi+\frac{a}{2}\right)\frac{\ln t}{t},\ \ \ \ v=-\ln(1-\gamma)>0,\ \ \ t=s^{\frac{1}{2}};\ \ \ \ \chi\in\mathbb{R}_{\geq 0},\ \ a\in\mathbb{R}_{>-1}.
\end{equation}
\subsection{Preliminary transformations}\label{Besspre:1} We introduce the unimodular function
\begin{equation*}
	\Psi_a(\z)=\sqrt{\pi}\,\e^{-\im\frac{\pi}{4}}\begin{pmatrix} I_a\big((\e^{-\im\pi}\z)^{\frac{1}{2}}\big)& -\frac{\im}{\pi}K_a\big((\e^{-\im\pi}\z)^{\frac{1}{2}}\big)\smallskip\\ (\e^{-\im\pi}\z)^{\frac{1}{2}}I_a'\big((\e^{-\im\pi}\z)^{\frac{1}{2}}\big) &
	-\frac{\im}{\pi}(\e^{-\im\pi}\z)^{\frac{1}{2}}K_a'\big((\e^{-\im\pi}\z)^{\frac{1}{2}}\big) \end{pmatrix}\e^{\im\frac{\pi}{2}a\sigma_3},\ \ \ \z\in\mathbb{C}\backslash[0,\infty)
\end{equation*}
in terms of the modified Bessel functions $I_{\nu}(z),K_{\nu}(z)$, cf. \cite{NIST}. As before, we choose principal branches for $\z^{\frac{1}{2}}:\,\textnormal{arg}\,\z\in(-\pi,\pi]$. Next we assemble
\begin{equation}\label{bess:1}
	\Psi(\z;a)=\Psi_a(\z)\e^{-\im\frac{\pi}{2}a\sigma_3}\begin{cases}\bigl(\begin{smallmatrix} 1 & 0 \\ -\e^{-\im\pi a}& 1 \end{smallmatrix}\bigr),&\textnormal{arg}\,\z\in(0,\frac{\pi}{3})\\ I,&\textnormal{arg}\,\z\in(\frac{\pi}{3},\frac{5\pi}{3})\\ \bigl(\begin{smallmatrix} 1 & 0 \\ \e^{\im\pi a}& 1 \end{smallmatrix}\bigr),&\textnormal{arg}\,\z\in(\frac{5\pi}{3},2\pi)\end{cases}%\ \ \equiv\Psi_a(\z)\e^{-\im\frac{\pi}{2}a\sigma_3}\mathcal{S}(\z).
\end{equation}
which leads us to the problem below.
\begin{problem} The parametrix $\Psi(\z;a)$ defined in \eqref{bess:1} has the following properties:% (cf. \cite{NIST}):
\begin{enumerate}
	\item $\Psi(\z;a)$ is analytic for $\z\in\mathbb{C}\backslash\big(\bigcup_{j=1}^3\widehat{\Gamma}_j\cup\{0\}\big)$ with
	\begin{equation*}
		\widehat{\Gamma}_1=\e^{\im\frac{\pi}{3}}(0,\infty),\ \ \ \ \widehat{\Gamma}_2=(0,\infty),\ \ \ \ \widehat{\Gamma}_3=\e^{\im\frac{5\pi}{3}}(0,\infty)
	\end{equation*}
	and all three rays are oriented as shown in Figure \ref{figure6B}.
	\item We observe the following jumps,
	\begin{eqnarray*}
		\Psi_+(\z;a)&=&\Psi_-(\z;a)\bigl(\begin{smallmatrix} 1 & 0\smallskip\\ \e^{-\im\pi a} & 1 \end{smallmatrix}\bigr),\ \ \z\in\widehat{\Gamma}_1;\ \ \ \ 
		\Psi_+(\z;a)=\Psi_-(\z;a)\bigl(\begin{smallmatrix} 1 & 0\smallskip\\ \e^{\im\pi a} & 1 \end{smallmatrix}\bigr),\ \ \z\in\widehat{\Gamma}_3;\\
		\Psi_+(\z;a)&=&\Psi_-(\z;a)\bigl(\begin{smallmatrix} 0 & 1\smallskip\\ -1 & 0 \end{smallmatrix}\bigr),\ \ \z\in\widehat{\Gamma}_2.
	\end{eqnarray*}
	\item Near $\z=0$, in case $a\notin\mathbb{Z}$,
	\begin{equation*}
		\Psi(\z;a)=\widehat{\Psi}(\z;a)(\e^{-\im\pi}\z)^{\frac{a}{2}\sigma_3}\begin{pmatrix} 1 & \frac{\im}{2}\frac{1}{\sin\pi a}\\ 0 & 1 \end{pmatrix}\begin{cases}\bigl(\begin{smallmatrix} 1 & 0 \\ -\e^{-\im\pi a}& 1 \end{smallmatrix}\bigr),&\textnormal{arg}\,\z\in(0,\frac{\pi}{3})\\ I,&\textnormal{arg}\,\z\in(\frac{\pi}{3},\frac{5\pi}{3})\\ \bigl(\begin{smallmatrix} 1 & 0 \\ \e^{\im\pi a}& 1 \end{smallmatrix}\bigr),&\textnormal{arg}\,\z\in(\frac{5\pi}{3},2\pi)\end{cases};
	\end{equation*}
	and for $a\in\mathbb{Z}$,
	\begin{equation*}
		\Psi(\z;a)=\widehat{\Psi}(\z;a)(\e^{-\im\pi}\z)^{\frac{a}{2}\sigma_3}\begin{pmatrix} 1 & -\frac{\e^{\im\pi a}}{2\pi\im}\ln(\e^{-\im\pi}\z)\\ 0 & 1 \end{pmatrix}\begin{cases}\bigl(\begin{smallmatrix} 1 & 0 \\ -\e^{-\im\pi a}& 1 \end{smallmatrix}\bigr),&\textnormal{arg}\,\z\in(0,\frac{\pi}{3})\\ I,&\textnormal{arg}\,\z\in(\frac{\pi}{3},\frac{5\pi}{3})\\ \bigl(\begin{smallmatrix} 1 & 0 \\ \e^{\im\pi a}& 1 \end{smallmatrix}\bigr),&\textnormal{arg}\,\z\in(\frac{5\pi}{3},2\pi)\end{cases}.
	\end{equation*}
	Here, $\widehat{\Psi}(\z;a)$ is analytic at $\z=0$ and we choose principal branches for fractional exponents and logarithms.
	\item As $\z\rightarrow\infty,$ valid in a full neighborhood of infinity off the jump contours,
	\begin{align*}
		\Psi(\z;a)=&\,\big(\e^{-\im\pi}\z\big)^{-\frac{1}{4}\sigma_3}\frac{1}{\sqrt{2}}\begin{pmatrix} 1 & -1\\ 1 & 1 \end{pmatrix}\e^{-\im\frac{\pi}{4}\sigma_3}\left\{I+\frac{1}{8(\e^{-\im\pi}\z)^{\frac{1}{2}}}\begin{pmatrix}-(1+4a^2) & 2\im\\ 2\im & 1+4a^2 \end{pmatrix}+\mathcal{O}\left(\z^{-1}\right)\right\}\\
		&\,\times\exp\left[(\e^{-\im\pi}\z)^{\frac{1}{2}}\sigma_3\right]%\e^{\im\frac{\pi}{2}a\sigma_3}
	\end{align*}
	where $\z^{\alpha}$ is defined and analytic for $\z\in\mathbb{C}\backslash(-\infty,0]$ such that $\z^{\alpha}>0$ for $\z>0$.
\end{enumerate}
\end{problem}
\begin{figure}[tbh]
\begin{minipage}{0.4\textwidth} 
\begin{center}
\resizebox{0.6\textwidth}{!}{\includegraphics{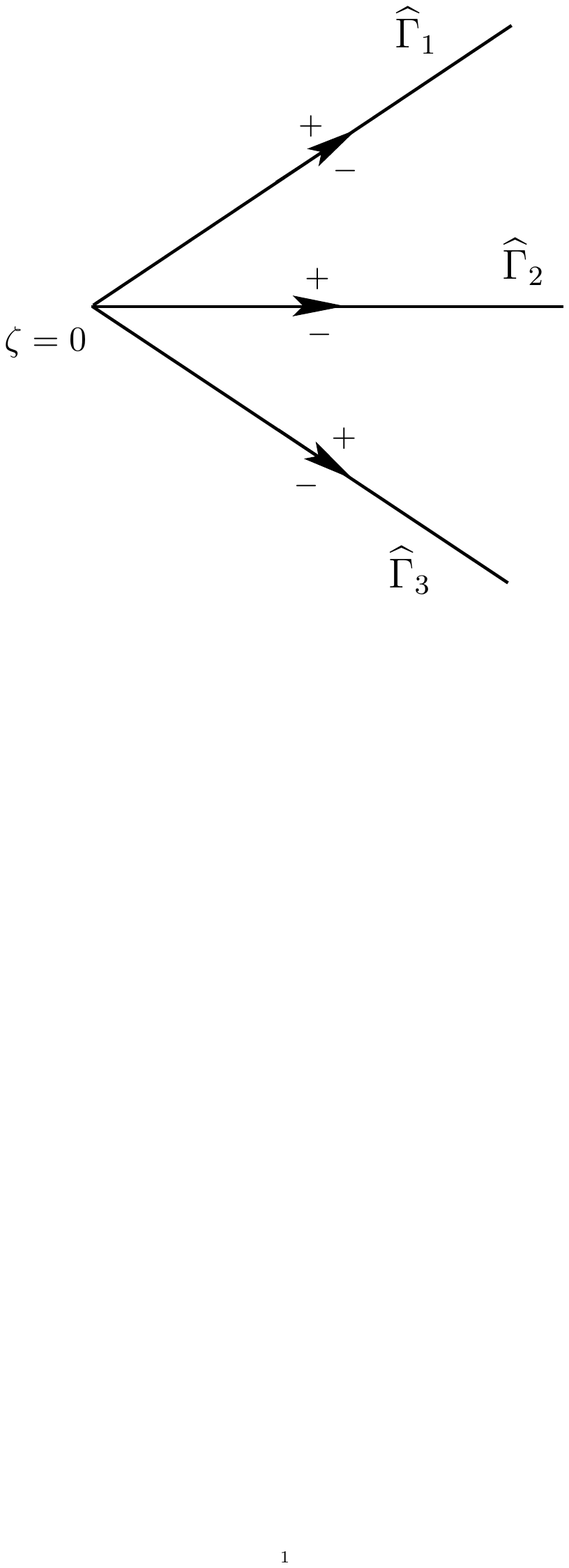}}
%\resizebox{0.55\textwidth}{!}{\input{Besselpara.pdf_t}}
\caption{The oriented jump contours for the Bessel parametrix $\Psi(\z;a)$ in the complex $\z$-plane.}
\label{figure6B}
\end{center}
\end{minipage}
\begin{minipage}{0.4\textwidth}
\begin{center}
\resizebox{0.8\textwidth}{!}{\includegraphics{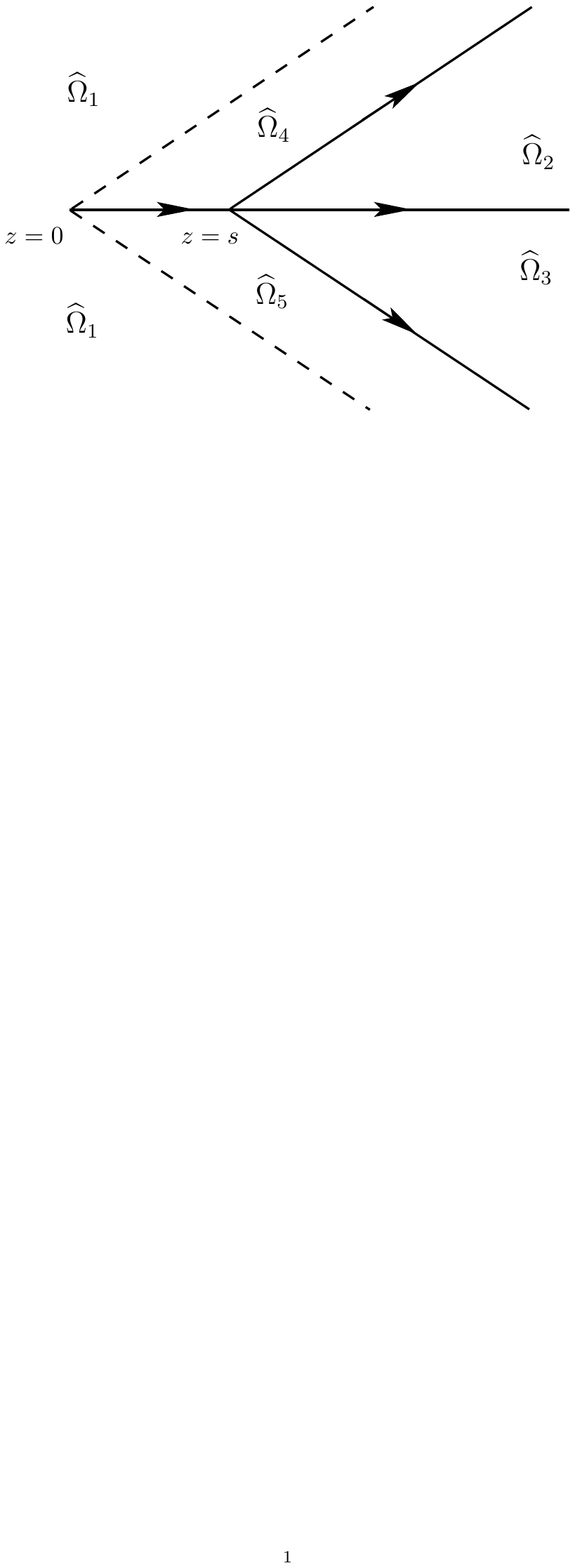}}
%\resizebox{0.75\textwidth}{!}{\input{Center3.pdf_t}}
\caption{``Undressing" of RHP \ref{masterIIKS} with $s>0$. Jump contours of $X(z)$ as solid lines.}
\label{figure7}
\end{center}
\end{minipage}
\end{figure}

%\begin{figure}[tbh]
%\begin{center}
%%\resizebox{0.6\textwidth}{!}{\includegraphics{stokespic.pdf}}
%\resizebox{0.22\textwidth}{!}{\input{Besselpara.pdf_t}}
%\caption{The oriented jump contours for the Bessel parametrix $\Psi(\z;a)$ in the complex $\z$-plane.}
%\label{figure6}
%\end{center}
%\end{figure}
\begin{remark}\label{Besselconnect} We observe that, cf. \cite{NIST},
\begin{equation*}
	\e^{\im\frac{\pi}{2}a}I_a\big((\e^{-\im\pi}\z)^{\frac{1}{2}}\big)=J_a(\z^{\frac{1}{2}}),\ \ \ \z\in\mathbb{C}\backslash[0,\infty)
\end{equation*}
and hence for $\z>0$, we have
\begin{equation*}
	\big(\Psi_{a11}(\z)\big)_+=\sqrt{\pi}\,\e^{-\im\frac{\pi}{4}}J_a(\sqrt{\z}),\ \ \ \ \ \ \ \big(\Psi_{a21}(\z)\big)_+=\sqrt{\pi}\,\e^{-\im\frac{\pi}{4}}\sqrt{\z}J_a'(\sqrt{\z}).
\end{equation*}
This enables us to rewrite the Bessel kernel in terms of the entries of $\Psi_a(\z)$,
\begin{equation*}
	K_{\textnormal{Bess}}^{(a)}(\lambda,\mu)=\frac{\im}{2\pi}\frac{(\Psi_{a11}(\lambda))_+(\Psi_{a21}(\mu))_+-(\Psi_{a11}(\mu))_+(\Psi_{a21}(\lambda))_+}{\lambda-\mu},\ \ \ \lambda,\mu\in(0,s)
\end{equation*}
in which all limits are taken coming from the upper half-plane.
\end{remark}
In our next move we reduce RHP \ref{masterIIKS} with \eqref{BIIKS} to a problem with $z$-independent jumps and this step is the analogue of \eqref{center:1}, \eqref{center:2} for the Bessel kernel determinant. Let $s>0$ and set with help of Figure \ref{figure7},
\begin{equation*}
	X(z)=Y(z)\Psi(z;a)\begin{cases} I,&z\in\widehat{\Omega}_1\cup\widehat{\Omega}_2\cup\widehat{\Omega}_3\smallskip\\
	\bigl(\begin{smallmatrix} 1 & 0\smallskip\\ \e^{-\im\pi a} & 1 \end{smallmatrix}\bigr),&z\in\widehat{\Omega}_4\smallskip\\
	\bigl(\begin{smallmatrix} 1 & 0\smallskip\\ -\e^{\im\pi a} & 1 \end{smallmatrix}\bigr),&z\in\widehat{\Omega}_5 \end{cases}.
\end{equation*}
%\begin{figure}[tbh]
%\begin{center}
%%\resizebox{0.6\textwidth}{!}{\includegraphics{stokespic.pdf}}
%\resizebox{0.3\textwidth}{!}{\input{Center3.pdf_t}}
%\caption{Undressing of RHP \ref{masterIIKS} with $s>0$. Jump contours of $X(z)$ as solid lines.}
%\label{figure7}
%\end{center}
%\end{figure}

We arrive at the problem below:
\begin{problem}\label{Xstart} Determine $X(z)=X(z;s,\gamma;a)\in\mathbb{C}^{2\times 2}$ such that
\begin{enumerate}
	\item $X(z)$ is analytic for $z\in\mathbb{C}\backslash\big(\bigcup_{j=1}^4\widehat{\Gamma}_j^{(s)}\cup\{0,1\}\big)$ with
	\begin{equation*}
		\widehat{\Gamma}_1^{(s)}=(0,s),\ \ \ \widehat{\Gamma}_2^{(s)}=s+\e^{\im\frac{\pi}{3}}(0,\infty),\ \ \ \widehat{\Gamma}_3^{(s)}=(s,+\infty),\ \ \ \widehat{\Gamma}_4^{(s)}=s+\e^{\im\frac{5\pi}{3}}(0,\infty)
	\end{equation*}
	\item The jump conditions are as follows,
	\begin{equation*}
		X_+(z)=X_-(z)\bigl(\begin{smallmatrix}\e^{-\im\pi a} & 1-\gamma\\ 0 & \e^{\im\pi a} \end{smallmatrix}\bigr),\ \ z\in\widehat{\Gamma}_1^{(s)};\ \ \ \ \ \ \ X_+(z)=X_-(z)\bigl(\begin{smallmatrix}0 & 1\\ -1 & 0 \end{smallmatrix}\bigr),\ \ z\in\widehat{\Gamma}_3^{(s)};
	\end{equation*}
	and
	\begin{equation*}
		X_+(z)=X_-(z)\bigl(\begin{smallmatrix} 1 & 0\\ \e^{-\im\pi a} & 1 \end{smallmatrix}\bigr),\ \ z\in\widehat{\Gamma}_2^{(s)};\ \ \ \ \ \ \ X_+(z)=X_-(z)\bigl(\begin{smallmatrix} 1 & 0 \\ \e^{\im\pi a} & 1 \end{smallmatrix}\bigr),\ \ z\in\widehat{\Gamma}_4^{(s)}.
	\end{equation*}
	\item Near $z=s$, with $\mathbb{H}^{\pm}=\{z\in\mathbb{C}:\,\Im z\gtrless 0\}$,
	\begin{equation*}
		X(z)=\widehat{X}(z)\left[I+\frac{\gamma}{2\pi\im}\bigl(\begin{smallmatrix}-1 & -\e^{-\im\pi a}\\ \e^{\im\pi a} & 1 \end{smallmatrix}\bigr)\ln(z-s)\right]\begin{cases}\bigl(\begin{smallmatrix} 0 & 1\\ -1 & 0 \end{smallmatrix}\bigr),&\!\!\!z\in\mathbb{H}^+\\ I,&\!\!\!z\in\mathbb{H}^- \end{cases}\begin{cases}\bigl(\begin{smallmatrix}1 & 0\\ \e^{-\im\pi a} & 1 \end{smallmatrix}\bigr),&\!\!\!\textnormal{arg}(z-s)\in(\frac{\pi}{3},\pi)\\ \bigl(\begin{smallmatrix} 1 & 0\\ -\e^{\im\pi a} & 1 \end{smallmatrix}\bigr),&\!\!\!\textnormal{arg}(z-s)\in(\pi,\frac{5\pi}{3})\\ I,&\!\!\!\textnormal{else}\end{cases}		
		%\begin{pmatrix} 1 & \frac{\gamma}{2\pi\im}\int_s^{2s}|w|^a\frac{\d w}{w-z}\\ 0 & 1 \end{pmatrix}(\e^{-\im\pi}z)^{\frac{a}{2}\sigma_3}\begin{pmatrix} 1 & \frac{\im}{2}\frac{1-\gamma}{\sin\pi a}\\ 0 & 1 \end{pmatrix}\mathcal{S}(z-s)
%		\begin{cases}\bigl(\begin{smallmatrix} 1 & 0\\ -\e^{-\im\pi a} & 1 \end{smallmatrix}\bigr),&\textnormal{arg}(z-s)\in(0,\frac{\pi}{3})\\ I,&\textnormal{arg}(z-s)\in(\frac{\pi}{3},\frac{5\pi}{3})\\ \bigl(\begin{smallmatrix} 1 & 0\\ \e^{\im\pi a} & 1 \end{smallmatrix}\bigr),&\textnormal{arg}(z-s)\in(\frac{5\pi}{3},2\pi) \end{cases}
	\end{equation*}
	where $\widehat{X}(z)$ is analytic at $z=s$ and we choose the principal branch for the logarithm. On the other hand, near $z=0$,
%	
%	
%	as well as $z=s$ and $\z^{\alpha}$ is defined and analytic for $\z\in\mathbb{C}\backslash(-\infty,0]$. On the other hand, for $a\in\mathbb{Z}$, with the same convention for branches and $\ln\z:\mathbb{C}\backslash(-\infty,0)\rightarrow\mathbb{C}$,
%	\begin{eqnarray*}
%		X(z)&=&\widehat{X}(z)z^{\frac{a}{2}\sigma_3}\begin{pmatrix} 1 & \frac{\im}{2}\frac{1-\gamma}{\sin\pi a}\\ 0 & 1 \end{pmatrix},\ \ \ a\notin\mathbb{Z};\\
%		X(z)&=&\widehat{X}(z)z^{\frac{a}{2}\sigma_3}\left[I+\frac{1-\gamma}{2\pi\im}\begin{pmatrix}0 & -\ln z\\ 0 & 0 \end{pmatrix}\e^{\im\pi a}\right],\ \ \ a\in\mathbb{Z};
%	\end{eqnarray*}
%	with $\widehat{X}(z)$ analytic at $z=0$ and $z^{\alpha}$, as well as $\ln z$, defined and analytic for $z\in\mathbb{C}\backslash(0,+\infty)$. On the other hand, near $z=s$, first for $a\notin\mathbb{Z}$,
%	\begin{equation*}
%		X(z)=\widehat{X}(z)\ldots\begin{cases}\bigl(\begin{smallmatrix} 1 & 0\\ -\e^{-\im\pi a} & 1 \end{smallmatrix}\bigr),&\textnormal{arg}(z-s)\in(0,\frac{\pi}{3})\\ I,&\textnormal{arg}(z-s)\in(\frac{\pi}{3},\frac{5\pi}{3})\\ \bigl(\begin{smallmatrix} 1 & 0\\ \e^{\im\pi a} & 1 \end{smallmatrix}\bigr),&\textnormal{arg}(z-s)\in(\frac{5\pi}{3},2\pi) \end{cases},\ \ \ a\notin\mathbb{Z};
%	\end{equation*}
%	and secondly, for $a\in\mathbb{Z}$,
	\begin{eqnarray*}
		X(z)&=&\widehat{X}(z)(\e^{-\im\pi}z)^{\frac{a}{2}\sigma_3}\bigl(\begin{smallmatrix}1&\frac{\im}{2}\frac{1-\gamma}{\sin\pi a}\\ 0 & 1 \end{smallmatrix}\bigr),\ \ a\notin\mathbb{Z};\\
		X(z)&=&\widehat{X}(z)(\e^{-\im\pi}z)^{\frac{a}{2}\sigma_3}\left[I-\frac{\e^{\im\pi a}}{2\pi\im}(1-\gamma)\ln(\e^{-\im\pi}z)\bigl(\begin{smallmatrix}0 & 1\\ 0 & 0 \end{smallmatrix}\bigr)\right],\ a\in\mathbb{Z}.
%\begin{cases}\bigl(\begin{smallmatrix} 1 & 0\\ -\e^{-\im\pi a} & 1 \end{smallmatrix}\bigr),&\textnormal{arg}(z-s)\in(0,\frac{\pi}{3})\\ I,&\textnormal{arg}(z-s)\in(\frac{\pi}{3},\frac{5\pi}{3})\\ \bigl(\begin{smallmatrix} 1 & 0\\ \e^{\im\pi %a} & 1 \end{smallmatrix}\bigr),&\textnormal{arg}(z-s)\in(\frac{5\pi}{3},2\pi) \end{cases}.
	\end{eqnarray*}
	\item As $z\rightarrow\infty$,
	\begin{equation*}
		X(z)=(\e^{-\im\pi}z)^{-\frac{1}{4}\sigma_3}\frac{1}{\sqrt{2}}\begin{pmatrix}1 & -1\\ 1 & 1 \end{pmatrix}\e^{-\im\frac{\pi}{4}\sigma_3}\left\{I+X_{\infty}(\e^{-\im\pi}z)^{-\frac{1}{2}}+\mathcal{O}\left(z^{-1}\right)\right\}\exp\left[(\e^{-\im\pi}z)^{\frac{1}{2}}\sigma_3\right]
%		
%		\e^{-\im z^{\frac{1}{2}}\sigma_3}
	\end{equation*}
	with
	\begin{equation*}
		X_{\infty}=-\frac{1}{2}\begin{pmatrix} 1 & \im\\ \im & -1 \end{pmatrix}Y_{\infty}^{12}+\frac{1}{8}\begin{pmatrix}-(1+4a^2) & 2\im \\ 2\im & 1+4a^2 \end{pmatrix}.
	\end{equation*}
\end{enumerate}
\end{problem}
%\begin{figure}[tbh]
%\begin{center}
%%\resizebox{0.6\textwidth}{!}{\includegraphics{stokespic.pdf}}
%\resizebox{0.3\textwidth}{!}{\input{Center3.pdf_t}}
%\caption{Undressing of RHP \ref{Besselmaster} with $s>0$. Jump contours of $X(z)$ as solid lines.}
%\label{figure7}
%\end{center}
%\end{figure}
\subsection{Differential identity} The required identity is derived along the same lines as for $D(J_{\textnormal{Ai}};\gamma)$, cf. \cite{CIK}. We skip details and summarize the final formula in the Proposition below.
%We first write
%\begin{equation*}
%	\gamma K_{\textnormal{Bess}}(\lambda,\mu)=\frac{f^T(\lambda)h(\mu)}{\lambda-\mu};\ \ \ \ \ f(\lambda)=\binom{\Psi_{a11}(\lambda)}{\Psi_{a21}(\lambda)}_+,\ \ h(\lambda)=\frac{\gamma}{2\pi\im}\binom{-\Psi_{a21}(\lambda)}{\Psi_{a11}(\lambda)}_+,\ \ \lambda,\mu\in(0,s)
%\end{equation*}
%with the help of Remark \ref{Besselconnect}. Next,
%\begin{equation*}
%	\frac{\partial}{\partial s}\ln \textnormal{F}(s,\gamma)=-\int_0^s(I-\gamma K_{\textnormal{Bess}})^{-1}(s,\lambda)\gamma K_{\textnormal{Bess}}(\lambda,s)\d\lambda=-R(s,s)
%\end{equation*}
%where $I+R$ denotes the resolvent of the operator $\gamma K_{\textnormal{Bess}}$, i.e.
%\begin{equation*}
%	I+R=(I-\gamma K_{\textnormal{Bess}})^{-1}.
%\end{equation*}
%But from the standard formul\ae\, (compare \eqref{IIKS:2}),
%\begin{equation*}
%	R(\lambda,\mu)=\frac{F^T(\lambda)H(\mu)}{\lambda-\mu},\ \ (\lambda,\mu)\in(0,s);\ \ \ \ \ Y(z)=I-\int_0^sF(w)h^T(w)\frac{\d w}{w-z};%\ \ F(z)=Y_+(z)f(z),\ H(z)=\big(Y_+^{-1}(z)\big)^Th(z),\ z\in(0,s)
%\end{equation*}
%with
%\begin{equation*}
%	F(z)=Y_+(z)f(z),\ \ \ \ H(z)=\big(Y_+^{-1}(z)\big)^Th(z);\ \ \ \ z\in(0,s),
%\end{equation*}
%we then derive
\begin{prop} For fixed $\gamma\leq 1$, we have
\begin{equation}\label{diff:2}
	\frac{\partial}{\partial s}\ln D(J_{\textnormal{Bess}};\gamma)=-\frac{\gamma}{2\pi\im}\e^{\im\pi a}\big(X^{-1}(z)X'(z)\big)_{21}\Big|_{z\rightarrow s};\ \ \ \ (')=\frac{\d}{\d z}
\end{equation}
in terms of the solution $X(z)$ to RHP \ref{Xstart} and the limit is carried out for $\textnormal{arg}(z-s)\in(\frac{\pi}{3},\pi)$.
\end{prop}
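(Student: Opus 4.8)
The plan is to mimic, almost verbatim, the derivation of the analogous formula \eqref{diff:1} for the Airy kernel given in \cite{CIK}, adapting it to the integrable data \eqref{BIIKS}. First I would recall that for an integrable operator of the form \eqref{IIKSker} the Fredholm determinant $D(J;\gamma)=\det(I-\gamma K)|_{L^2(J)}$ has a logarithmic $s$-derivative (with respect to the right endpoint $s$ of $J_{\textnormal{Bess}}=(0,s)$) expressible through the resolvent kernel $R(\lambda,\mu)$ of $\gamma K$ evaluated on the diagonal at the moving endpoint: schematically $\partial_s\ln D(J_{\textnormal{Bess}};\gamma)=-\gamma\,R(s,s)$, where the classical identity for the resolvent of integrable operators (IIKS theory) writes $R(\lambda,\mu)$ in terms of the functions $\Phi(\lambda)=Y_+(\lambda)\binom{\phi(\lambda)}{\psi(\lambda)}$ and their analogue $\Psi$, with $Y$ the solution of RHP \ref{masterIIKS}. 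The subtlety here compared to the pure gap case $\gamma=1$ is only the bookkeeping of the factor $\gamma$ and the fact that $X$ in RHP \ref{Xstart} is obtained from $Y$ by the explicit transformation $X(z)=Y(z)\Psi(z;a)(\cdots)$ with the Bessel parametrix $\Psi(z;a)$ and the lower-triangular sectional multipliers displayed before RHP \ref{Xstart}.

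The key steps, in order, would be: (i) Write $\partial_s\ln D(J_{\textnormal{Bess}};\gamma)$ as $-\gamma$ times the diagonal value at $z=s$ of the resolvent kernel, and express that resolvent via the boundary value $Y_+$ of the master RHP solution, exactly as in \cite{CIK},~(2.17), but carrying $\gamma$. (ii) Substitute $Y(z)=X(z)(\cdots)^{-1}\Psi(z;a)^{-1}$ near $z=s$; since $\Psi(z;a)$ and the piecewise constant triangular factors are analytic and bounded near $z=s$ (the only singular behaviour of $X$ at $z=s$ is the logarithmic one isolated in property (3) of RHP \ref{Xstart}), the quantity $\partial_s\ln D$ gets re-expressed purely in terms of $X$, its derivative $X'$, and the explicit constant matrices appearing in the Bessel parametrix near $z=s$. (iii) Compute the relevant matrix entry: one extracts the coefficient of $\ln(z-s)$ versus the analytic part in property (3), and after simplifying the constant conjugating matrices (the $\bigl(\begin{smallmatrix}0&1\\-1&0\end{smallmatrix}\bigr)$ / identity split across $\mathbb H^\pm$ and the $\e^{\pm\im\pi a}$ triangular factors) one finds that the prefactor collapses to $-\frac{\gamma}{2\pi\im}\e^{\im\pi a}$ and the remaining object is $(X^{-1}(z)X'(z))_{21}$ in the limit $z\to s$ taken in the sector $\textnormal{arg}(z-s)\in(\tfrac{\pi}{3},\pi)$, which is where the analytic part $\widehat X$ governs the local behaviour. (iv) Finally, note that the restriction $\gamma\le 1$ is exactly what guarantees (for large $t,v$) the solvability of the RHP and hence the meaningfulness of the identity, matching the remark after \eqref{diff:1}.

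The main obstacle I expect is purely combinatorial/bookkeeping: correctly tracking the several constant matrices in the Bessel local structure (the $\e^{\pm\im\pi a}$ factors, the half-plane split $\mathbb H^\pm$, and the sectional multipliers in arg$(z-s)$) so that the scalar prefactor comes out as precisely $-\frac{\gamma}{2\pi\im}\e^{\im\pi a}$ and the sector in which the limit is taken is $(\tfrac\pi3,\pi)$ rather than some other sector — getting the sign and the $\e^{\im\pi a}$ right is the only place where the Bessel case genuinely differs from the Airy computation in \cite{CIK}. There is no hard analysis here: once the algebraic identity for the resolvent of integrable operators is in hand (which is standard, \cite{IIKS}), everything reduces to a local expansion near the moving endpoint, and the $\gamma\ne 1$ modification is a formal change that does not affect the structure of the argument, precisely as asserted in the statement (``derived along the same lines as for $D(J_{\textnormal{Ai}};\gamma)$'').
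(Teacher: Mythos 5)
Your proposal matches the paper exactly: the paper itself states the identity is ``derived along the same lines as for $D(J_{\textnormal{Ai}};\gamma)$, cf.\ \cite{CIK}. We skip details,'' which is precisely the IIKS resolvent-at-the-moving-endpoint argument followed by undressing through the Bessel parametrix that you outline. You have also correctly isolated the only genuinely new ingredient relative to the Airy case, namely the bookkeeping of the $\e^{\pm\im\pi a}$ conjugation factors and the sign flip arising from $s$ being the \emph{right} endpoint of $J_{\textnormal{Bess}}=(0,s)$ rather than the left endpoint of $J_{\textnormal{Ai}}=(s,\infty)$.
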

%\begin{proof} We only note that for $\textnormal{arg}(z-s)\in(\frac{\pi}{3},\pi)$,
%\begin{equation*}
%	F(z)=\e^{\im\frac{\pi}{2}a}\binom{X_{11}(z)}{X_{21}(z)}_+,\ \ \ H(z)=\e^{\im\frac{\pi}{2}a}\frac{\gamma}{2\pi\im}\binom{-X_{21}(z)}{X_{11}(z)}_+;\ \ z\in(0,s).
%\end{equation*}
%\end{proof}

\section{Nonlinear steepest descent analysis associated with $K_{\textnormal{Bess}}^{(a)}$ -- part 2} \label{Bessp2}
\subsection{Initial transformation}\label{Besspre:2} We choose $s>0$ and define
\begin{equation*}
	T(z)=X(sz),\ \ \ z\in\mathbb{C}\backslash(\Sigma_T\cup\{0,1\}),
\end{equation*}
where the contour $\Sigma_T$ consists of four line segment,
\begin{equation*}
	\Sigma_T=\bigcup_{j=1}^4\widehat{\Gamma}_{j,T};\ \ \ \ \widehat{\Gamma}_{1,T}=(0,1),\ \ \ \widehat{\Gamma}_{2,T}=1+\e^{\im\frac{\pi}{3}}(0,\infty),\ \ \ \widehat{\Gamma}_{3,T}=(1,+\infty),\ \ \ \widehat{\Gamma}_{4,T}=1+\e^{\im\frac{5\pi}{3}}(0,\infty);
\end{equation*}
oriented ``from left to right", see Figure \ref{figure7}. This leads us to
\begin{problem} Determine a function $T(z)=T(z;s,\gamma;a)\in\mathbb{C}^{2\times 2}$ which is uniquely determined by the following properties:
\begin{enumerate}
	\item $T(z)$ is analytic for $z\in\mathbb{C}\backslash(\Sigma_T\cup\{0,1\})$.
	\item We have the jump conditions
	\begin{equation*}
		T_+(z)=T_-(z)\bigl(\begin{smallmatrix} \e^{-\im\pi a} & 1-\gamma \\ 0 & \e^{\im\pi a} \end{smallmatrix}\bigr),\ \ z\in\widehat{\Gamma}_{1,T};\ \ \ \ \ \ T_+(z)=T_-(z)\bigl(\begin{smallmatrix} 0 & 1\\ -1 & 0 \end{smallmatrix}\bigr),\ \ z\in\widehat{\Gamma}_{3,T};
	\end{equation*}
	and
	\begin{equation*}
		T_+(z)=T_-(z)\bigl(\begin{smallmatrix} 1 & 0\\ \e^{-\im\pi a} & 1 \end{smallmatrix}\bigr),\ \ z\in\widehat{\Gamma}_{2,T};\ \ \ \ \ \ T_+(z)=T_-(z)\bigl(\begin{smallmatrix} 1 & 0 \\ \e^{\im\pi a} & 1 \end{smallmatrix}\bigr),\ \ z\in\widehat{\Gamma}_{4,T}.
	\end{equation*}
	\item The singular behavior near $z=0$ and $z=1$ is unchanged from the one stated in RHP \ref{Xstart}, modulo the change of variables $T(z)=X(sz)$.
	\item As $z\rightarrow\infty$, with $t=\sqrt{s}$,
	\begin{equation*}
		T(z)=(\e^{-\im\pi}sz)^{-\frac{1}{4}\sigma_3}\frac{1}{\sqrt{2}}\begin{pmatrix} 1 & -1\\ 1& 1 \end{pmatrix}\e^{-\im\frac{\pi}{4}\sigma_3}\Big\{I+X_{\infty}(\e^{-\im\pi}sz)^{-\frac{1}{2}}+\mathcal{O}\left(z^{-1}\right)\Big\}
		\exp\left[t(\e^{-\im\pi}z)^{\frac{1}{2}}\sigma_3\right].
	\end{equation*}
\end{enumerate}
\end{problem}
\subsection{Normalization transformation} We introduce the $g$-function, for $z\in\mathbb{C}\backslash\mathbb{R}$,
\begin{equation}\label{g:2n}
	g(z)=-\big(\e^{-\im\pi}(z-1)\big)^{\frac{1}{2}}+V\ln\left(\frac{1+(\e^{-\im\pi}(z-1))^{\frac{1}{2}}}{1-(\e^{-\im\pi}(z-1))^{\frac{1}{2}}}\right),\ \ \ V=\frac{\chi}{t}
	%g(z)=\im(z-1)^{\frac{1}{2}}+V\ln\left(\frac{1+\im(z-1)^{\frac{1}{2}}}{1-\im(z-1)^{\frac{1}{2}}}\right),\ \ \ \ V=\frac{\chi}{t}
\end{equation}
where $\z^{\alpha}$ as well as $\ln\z$ are defined and analytic for $\z\in\mathbb{C}\backslash(-\infty,0]$. We also choose% with $0<\textnormal{arg}\,\z\leq 2\pi$, but we choose the principal branch for the logarithm, i.e.
\begin{equation*}
	-\pi<\textnormal{arg}\left(\frac{1+(\e^{-\im\pi}(z-1))^{\frac{1}{2}}}{1-(\e^{-\im\pi}(z-1))^{\frac{1}{2}}}\right)\leq\pi.
\end{equation*}
The relevant analytical properties of the $g$-function are summarized below.
\begin{prop} The function $g(z)$ defined in \eqref{g:2n} is analytic for $z\in\mathbb{C}\backslash\big((-\infty,0)\cup(1,+\infty)\big)$, more precisely,
\begin{equation*}
	g_{\pm}(z)=-\sqrt{1-z}+V\ln\left(\frac{\sqrt{1-z}+1}{\sqrt{1-z}-1}\right)\mp\im\pi V,\ \ \ z\in(-\infty,0);
\end{equation*}
and
\begin{equation*}
	g_{\pm}(z)=-\sqrt{1-z}+V\ln\left(\frac{1+\sqrt{1-z}}{1-\sqrt{1-z}}\right),\ \ \ z\in(0,1);
\end{equation*}
as well as
\begin{equation*}
	g_{\pm}(z)=\pm\im\sqrt{z-1}\pm\im V\textnormal{arg}\left(\frac{1-\im\sqrt{z-1}}{1+\im\sqrt{z-1}}\right),\ \ \ z\in(1,+\infty).
\end{equation*}
Also for $z\rightarrow\infty,z\notin\mathbb{R}$,
\begin{equation*}
	(\e^{-\im\pi}z)^{\frac{1}{2}}+g(z)=\im\pi V+\left(2V-\frac{1}{2}\right)(\e^{-\im\pi}z)^{-\frac{1}{2}}+\mathcal{O}\left(z^{-1}\right).
\end{equation*}
\end{prop}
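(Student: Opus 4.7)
The proof will be a direct calculation in three parts: analyticity on the stated domain, boundary values on the two cuts, and the expansion at infinity. The plan is to unwind the principal branches of $(\cdot)^{1/2}$ and $\ln(\cdot)$ used in \eqref{g:2n} and carry out the expansions by hand.

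For analyticity, the Möbius map $\xi\mapsto\frac{1+\xi}{1-\xi}$ sends the real set $\{\xi\in\mathbb{R}:|\xi|>1\}$ into $(-\infty,0)$, while preserving the sign of $\Im\,\xi$. The principal square root $w(z)=(\e^{-\im\pi}(z-1))^{1/2}=\sqrt{1-z}$ is analytic off $[1,+\infty)$ and takes values in the closed right half-plane, being real positive on $(-\infty,1)$ with $w(z)>1$ precisely for $z\in(-\infty,0)$. Hence $\ln\bigl(\frac{1+w}{1-w}\bigr)$ inherits a branch cut only where $w(z)$ exceeds $1$, giving the cut $(-\infty,0)$, in addition to the cut $[1,+\infty)$ inherited from $w$ itself.

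For the boundary values on $(-\infty,0)$, the square root extends continuously with $w=\sqrt{1-z}>1$, while $\frac{1+w}{1-w}$ is negative real; tracking the sign of its imaginary part for $z=x\pm\im 0$, $x<0$, under the convention $-\pi<\arg(\cdot)\le\pi$ selects the correct branch of $\ln$ on each side and produces the summand $\mp\im\pi V$. On $(1,+\infty)$ it is $w$ that jumps, with $w_\pm(z)=\mp\im\sqrt{z-1}$, so $\frac{1+w_\pm}{1-w_\pm}$ has unit modulus and its principal logarithm reduces to $\im$ times a principal argument, producing the $\pm\im V\arg\bigl(\frac{1-\im\sqrt{z-1}}{1+\im\sqrt{z-1}}\bigr)$ contribution alongside the $\pm\im\sqrt{z-1}$ term from $-w_\pm$.

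For the expansion at infinity, set $\zeta=(\e^{-\im\pi}z)^{1/2}$, so that $\zeta^2=-z$ and $1/z=-\zeta^{-2}$. The binomial series gives
\begin{equation*}
w(z)=\zeta(1-1/z)^{1/2}=\zeta+\frac{1}{2\zeta}+\mathcal{O}\bigl(\zeta^{-3}\bigr),
\end{equation*}
hence $\zeta-w(z)=-\tfrac{1}{2}\zeta^{-1}+\mathcal{O}(\zeta^{-3})$. For the logarithm, write
\begin{equation*}
\frac{1+w}{1-w}=-\left(1+\frac{2}{w-1}\right)=-\left(1+\frac{2}{w}+\mathcal{O}(w^{-2})\right),
\end{equation*}
and insert the expansion of $w$ to obtain $\ln\bigl(\frac{1+w}{1-w}\bigr)=\im\pi+\frac{2}{\zeta}+\mathcal{O}(\zeta^{-2})$ for the principal branch, so that $\zeta+g(z)=\im\pi V+(2V-\tfrac{1}{2})\zeta^{-1}+\mathcal{O}(z^{-1})$, matching the stated expansion. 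The main place where care is needed is the branch bookkeeping in this last step: the constant $\ln(-1)$ must be evaluated consistently with the convention $-\pi<\arg(\cdot)\le\pi$ fixed for the logarithm in \eqref{g:2n}, and any sign flip between the upper and lower half-planes has to be compatible with the jump $g_+(z)-g_-(z)=-2\im\pi V$ derived in the boundary-value step, which provides a useful internal consistency check.
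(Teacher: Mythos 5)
The paper states this proposition without proof, so there is no paper proof to compare against; what follows assesses your computation on its own terms. Your analyticity discussion and your boundary-value computations on $(-\infty,0)$, $(0,1)$, and $(1,+\infty)$ are correct, and the identification of $w_\pm = \mp\im\sqrt{z-1}$ on $(1,+\infty)$ is the right observation. The expansion of $\zeta - w$ and the coefficient $2V - \tfrac{1}{2}$ of $\zeta^{-1}$ are likewise correct.

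There is, however, a genuine loose end in your treatment of the constant term at infinity. You write $\ln\bigl(\tfrac{1+w}{1-w}\bigr)=\im\pi+\tfrac{2}{\zeta}+\mathcal{O}(\zeta^{-2})$ ``for the principal branch'' without saying in which half-plane, and then hedge by pointing to an internal consistency check that you do not actually carry out. A careful evaluation shows that the constant \emph{does} depend on the half-plane: for $\Im z > 0$ one has $\arg(\e^{-\im\pi}(z-1))\in(-\pi,0)$, so $w$ lies in the fourth quadrant, $-\bigl(1+\tfrac{2}{w-1}\bigr)$ lies in the third quadrant, and the principal logarithm gives $-\im\pi + \tfrac{2}{w} + \mathcal{O}(w^{-2})$; for $\Im z < 0$ the symmetric argument gives $+\im\pi + \tfrac{2}{w}+\mathcal{O}(w^{-2})$. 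Thus $(\e^{-\im\pi}z)^{1/2}+g(z)=\mp\im\pi V+(2V-\tfrac{1}{2})(\e^{-\im\pi}z)^{-1/2}+\mathcal{O}(z^{-1})$ as $z\to\infty$ in the upper/lower half-plane respectively, and the proposition's single-sign statement is the lower-half-plane form. Your proposed consistency check against the jump $g_+-g_-=-2\im\pi V$ on $(-\infty,0)$ is not the relevant one here, because that jump concerns finite negative $z$ while the sign flip in question occurs across the positive real axis (the branch cut of $(\e^{-\im\pi}z)^{1/2}$); checking the jump of $(\e^{-\im\pi}z)^{1/2}+g(z)$ across $(1,+\infty)$ using your third boundary-value formula does give $\mp\im\pi V$ on the two sides, confirming the flip. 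A complete proof should either pin the constant to one half-plane explicitly or state the two-sided form; as written, the step that produces $\im\pi V$ uniformly is not justified.
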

In the next transformation we set
\begin{equation*}
	S(z)=\e^{-\im\pi tV\sigma_3}T(z)\e^{tg(z)\sigma_3},\ \ \ z\in\mathbb{C}\backslash\big(\Sigma_T\cup\{0,1\}\big)
\end{equation*}
and are lead to the problem below.
\begin{problem}\label{g2RHP} Determine $S(z)=S(z;s,\gamma;a)\in\mathbb{C}^{2\times 2}$ such that
\begin{enumerate}
	\item  $S(z)$ is analytic for $z\in\mathbb{C}\backslash(\Sigma_T\cup(-\infty)\cup\{0,1\})$.
	\item The limiting values $S_{\pm}(z),z\in\Sigma_T$ are related by the equations
	 \begin{equation*}
		S_+(z)=S_-(z)\begin{pmatrix} 1 & 0\\ \e^{-\im\pi a}\e^{2tg(z)} & 1 \end{pmatrix},\ \ z\in\widehat{\Gamma}_{2,T};\ \ \ \ \ \ \ S_+(z)=S_-(z)\begin{pmatrix} 1 & 0\\ \e^{\im\pi a}\e^{2tg(z)} & 1 \end{pmatrix},\ \ z\in\widehat{\Gamma}_{4,T};
	\end{equation*}
	and
	\begin{equation*}
		S_+(z)=S_-(z)\begin{pmatrix} 0 & 1\\ -1 & 0 \end{pmatrix},\ \ z\in\widehat{\Gamma}_{3,T};\ \ \ \ \ \ \ S_+(z)=S_-(z)\e^{-2\pi\im tV\sigma_3},\ \ z\in(-\infty,0);
	\end{equation*}
	as well as
	\begin{equation*}
		S_+(z)=S_-(z)\begin{pmatrix}\e^{-\im\pi a} & \e^{-t(\varkappa_{\textnormal{Bess}}+2g(z))}\\ 0 & \e^{\im\pi a}\end{pmatrix},\ \ z\in\widehat{\Gamma}_{1,T}.
	\end{equation*}
	\item Near $z=1$ and $z=0$ with $\widehat{S}(z)$ analytic at either point,
	\begin{eqnarray*}
		\e^{\im\pi tV\sigma_3}S(z)\e^{-tg(z)\sigma_3}&=&\widehat{S}(z)\left[I+\frac{\gamma}{2\pi\im}\bigl(\begin{smallmatrix}-1 & -1\\ 1 & 1 \end{smallmatrix}\bigr)\ln(z-1)\right]\e^{\im\frac{\pi}{2}a\sigma_3}\begin{cases}
		\bigl(\begin{smallmatrix}0 & 1\\ -1 & 0 \end{smallmatrix}\bigr),&z\in\mathbb{H}^+\\ I,&z\in\mathbb{H}^- \end{cases}\\
		&&\,\,\times\begin{cases} \bigl(\begin{smallmatrix} 1 & 0\\ \e^{-\im\pi a} & 1 \end{smallmatrix}\bigr),&\textnormal{arg}(z-1)\in(\frac{\pi}{3},\pi)\\ \bigl(\begin{smallmatrix} 1 & 0\\ -\e^{\im\pi a} & 1\end{smallmatrix}\bigr),&\textnormal{arg}(z-1)\in(\pi,\frac{5\pi}{3})\\ I,&\textnormal{else}\end{cases},\ \ \ \ \ z\rightarrow 1,\ z\notin\Sigma_T;
%		
%		
%		\,\,\widehat{T}(z)\begin{pmatrix} 1 & \frac{\gamma|s|^a}{2\pi\im}\int_1^2|w|^a\frac{\d w}{w-z}\\ 0 & 1 \end{pmatrix}(\e^{-\im\pi}sz)^{\frac{a}{2}\sigma_3}\begin{pmatrix} 1 & \frac{\im}{2}\frac{1-\gamma}{\sin\pi a}\\ 0 & 1 \end{pmatrix}\mathcal{S}(z-1),\ \ a\notin\mathbb{Z};\\
%		\e^{\im\pi tV\sigma_3}S(z)\e^{-tg(z)\sigma_3}=&\,\,\widehat{T}(z)(\e^{-\im\pi}sz)^{\frac{a}{2}\sigma_3}\Big[I-\frac{\e^{\im\pi a}}{2\pi\im}\Big\{(1-\gamma)\ln(\e^{-\im\pi}sz)+\gamma\ln\big(\e^{-\im\pi}s(z-1)\big)\Big\}\bigl(\begin{smallmatrix} 0 & 1\\ 0 & 1 \end{smallmatrix}\bigr)\Big]\\
%		&\,\times\mathcal{S}(z-1),\ \ a\in\mathbb{Z}.
	\end{eqnarray*}
	and, as $z\rightarrow 0,z\notin\Sigma_T$,
	\begin{eqnarray*}
		\e^{\im\pi tV\sigma_3}S(z)\e^{-tg(z)\sigma_3}&=&\widehat{S}(z)(\e^{-\im\pi}sz)^{\frac{a}{2}\sigma_3}\begin{pmatrix} 1 & \frac{\im}{2}\frac{1-\gamma}{\sin\pi a}\\ 0 & 1 \end{pmatrix},\ \ \ a\notin\mathbb{Z}\\
		 \e^{\im\pi tV\sigma_3}S(z)\e^{-tg(z)\sigma_3}&=&\widehat{S}(z)(\e^{-\im\pi}sz)^{\frac{a}{2}\sigma_3}\left[I-\frac{\e^{\im\pi a}}{2\pi\im}(1-\gamma)\ln(\e^{-\im\pi}sz)\bigl(\begin{smallmatrix}0 & 1\\ 0 & 0 \end{smallmatrix}\bigr)\right],\ \ \ a\in\mathbb{Z}.
	\end{eqnarray*}
	\item As $z\rightarrow\infty$,
	\begin{equation*}
		S(z)=(\e^{-\im\pi}sz)^{-\frac{1}{4}\sigma_3}\e^{-\im\pi tV\sigma_3}\frac{1}{\sqrt{2}}\begin{pmatrix}1 & -1\\ 1 & 1 \end{pmatrix}\e^{-\im\frac{\pi}{4}\sigma_3}\e^{\im\pi tV\sigma_3}\left\{I+S_{\infty}(\e^{-\im\pi}z)^{-\frac{1}{2}}+\mathcal{O}\left(z^{-1}\right)\right\}
	\end{equation*}
	with
	\begin{equation*}
		S_{\infty}=\e^{-\im\pi tV\sigma_3}X_{\infty}\e^{\im\pi tV\sigma_3}s^{-\frac{1}{2}}+t\left(2V-\frac{1}{2}\right)\sigma_3.
	\end{equation*}
\end{enumerate}
\end{problem}
As before, the following observations are crucial: With $0<r<\frac{1}{4}$ fixed,
\begin{equation}\label{crux:4}
	\Re\big(g(z)\big)<0,\ \ \ z\in\big(\widehat{\Gamma}_{2,T}\cup\widehat{\Gamma}_{4,T}\big)\backslash D(1,r)
\end{equation}
and for $z\in(0,1)\backslash D(0,r)$ and sufficiently large $t\geq t_0,v\geq v_0$ subject to \eqref{Bscale},
\begin{equation}\label{crux:5}
	\varkappa_{_\textnormal{Bess}}+2g(z)=\varkappa_{_{\textnormal{Bess}}}-2\sqrt{1-z}+2V\ln\left(\frac{1+\sqrt{1-z}}{1-\sqrt{1-z}}\right)\geq\delta>0.
\end{equation}
Hence, we expect the major contribution to the asymptotic solution of RHP \ref{g2RHP} to arise from the line segments $(-\infty,0)\cup(0,1)\cup(1,+\infty)$ and two small neighborhoods of $z=0$ as well as $z=1$.
\begin{remark} We will again represent $tV=\chi$ as
\begin{equation*}
	\chi=k+\alpha;\ \ \ \ \ k\in\mathbb{Z}_{\geq 0},\ \ -\frac{1}{2}\leq\alpha<\frac{1}{2},
\end{equation*}
and hence the jump on $(-\infty,0)$ in RHP \ref{g2RHP} equals
\begin{equation*}
	S_+(z)=S_-(z)\e^{-2\pi\im\alpha\sigma_3},\ \ \ z<0.
\end{equation*}
\end{remark}
\subsection{Analysis of model Riemann-Hilbert problems} The outer parametrix, for $z\in\mathbb{C}\backslash\mathbb{R}$,
\begin{equation}\label{Besselout}
	P^{(\infty)}(z)=\big(\e^{-\im\pi}s(z-1)\big)^{-\frac{1}{4}\sigma_3}\e^{-\im\pi\alpha\sigma_3}\frac{1}{\sqrt{2}}\begin{pmatrix}1 & -1\\ 1 & 1 \end{pmatrix}\e^{-\im\frac{\pi}{4}\sigma_3}
	\left(\frac{(\e^{-\im\pi}(z-1))^{\frac{1}{2}}+1}{(\e^{-\im\pi}(z-1))^{\frac{1}{2}}-1}\right)^{-\frac{a}{2}\sigma_3}\big(\mathcal{D}(z)\big)^{\sigma_3},%\ \ \ z\in\mathbb{C}\backslash\mathbb{R}
\end{equation}
with the Szeg\H{o} function
\begin{equation*}
	\mathcal{D}(z)=\left(\frac{1+\big(\e^{-\im\pi}(z-1)\big)^{\frac{1}{2}}}{1-\big(\e^{-\im\pi}(z-1)\big)^{\frac{1}{2}}}\right)^{\alpha},\ \ \ z\in\mathbb{C}\backslash\mathbb{R},\ \ \ \ \ \ \mathcal{D}(z)=\e^{\im\pi\alpha}\left(1+\frac{2\alpha}{(\e^{-\im\pi}z)^{\frac{1}{2}}}+\mathcal{O}\left(z^{-1}\right)\right),\ \ z\rightarrow\infty;
\end{equation*}
displays the properties summarized below.
\begin{problem} The parametrix $P^{(\infty)}(z)$ has the following analytical properties
\begin{enumerate}
	\item $P^{(\infty)}(z)$ is analytic for $z\in\mathbb{C}\backslash\mathbb{R}$ and we orient the real axis from left to right.
	\item The limiting values $P^{(\infty)}_{\pm}(z),z\in\mathbb{R}$ are square integrable and related by the jump conditions
	\begin{eqnarray*}
		P_+^{(\infty)}(z)&=&P_-^{(\infty)}(z)\e^{-2\pi\im\alpha\sigma_3},\ \ z\in(-\infty,0);\ \ \ \ \ \ P_+^{(\infty)}(z)=P_-^{(\infty)}(z)\e^{-\im\pi a\sigma_3},\ \ z\in(0,1);\\
		P_+^{(\infty)}(z)&=&P_-^{(\infty)}(z)\bigl(\begin{smallmatrix} 0 & 1\\ -1 & 0 \end{smallmatrix}\bigr),\ \ z\in(1,+\infty)
	\end{eqnarray*}
	\item As $z\rightarrow\infty,z\notin\mathbb{R}$,
	\begin{equation*}
		P^{(\infty)}(z)=(\e^{-\im\pi}sz)^{-\frac{1}{4}\sigma_3}\e^{-\im\pi\alpha\sigma_3}\frac{1}{\sqrt{2}}\begin{pmatrix} 1 & -1\\ 1 & 1 \end{pmatrix}\e^{-\im\frac{\pi}{4}\sigma_3}\e^{\im\pi\alpha\sigma_3}\Big\{I+\mathcal{O}\left((\e^{-\im\pi}z)^{-\frac{1}{2}}\right)\Big\}.
	\end{equation*}
\end{enumerate}
\end{problem}
For the local parametrix near $z=1$ we use ideas from \cite{BDIK}, Section $3.4$. Consider the unimodular function
\begin{equation}\label{hankelbare}
	B(\z)=\frac{1}{2}\sqrt{\pi}\begin{pmatrix} 1 & 0\\ 0 & \im\end{pmatrix}\begin{pmatrix} H_0^{(1)}(\z^{\frac{1}{2}}) & -H_0^{(2)}(\z^{\frac{1}{2}})\smallskip\\ -\z^{\frac{1}{2}}\big(H_0^{(1)}\big)'(\z^{\frac{1}{2}}) & \z^{\frac{1}{2}}\big(H_0^{(2)}\big)'(\z^{\frac{1}{2}}) \\ \end{pmatrix},\ \ \ \z\in\mathbb{C}\backslash(-\infty,0]
\end{equation}
which uses the Hankel functions $H_0^{(1)}$ and $H_0^{(2)}$ and the principal branch for $\z^{\frac{1}{2}}:\,\textnormal{arg}\,\z\in(-\pi,\pi]$. %A straightforward computation shows that $B(\z)$ has the properties below
\begin{problem}\label{Hankelmodel} The function $B(\z)\in\mathbb{C}^{2\times 2}$ defined in \eqref{hankelbare} has the following properties:
\begin{enumerate}
	\item $B(\z)$ is analytic for $\z\in\mathbb{C}\backslash(-\infty,0]$.
	\item Provided we orient the negative half-ray from $-\infty$ to the origin, we have
	\begin{equation*}
		B_+(\z)=B_-(\z)\begin{pmatrix} 0 & -1\\ 1 & 2\end{pmatrix},\ \ \ \ \z\in(-\infty,0).
	\end{equation*}
	\item As $\z\rightarrow 0,\z\notin(-\infty,0]$,
	\begin{equation*}
		B(\z)=\frac{1}{2}\sqrt{\pi}\begin{pmatrix}1 & 0\\ 0 & \im \end{pmatrix}\widehat{B}(\z)\left[I+\frac{\ln\z}{2\pi\im}\begin{pmatrix} -1 & -1\\ 1 & 1 \end{pmatrix}\right],
	\end{equation*}
	with the principal branch for the logarithm and
	\begin{eqnarray*}
		\widehat{B}(\z)&=&\begin{pmatrix} J_0(\z^{\frac{1}{2}})(1-\frac{2\im}{\pi}\ln 2) & -J_0(\z^{\frac{1}{2}})(1+\frac{2\im}{\pi}\ln 2)\smallskip\\ -\z^{\frac{1}{2}}J_0'(\z^{\frac{1}{2}})(1-\frac{2\im}{\pi}\ln 2)-\frac{2\im}{\pi}J_0(\z^{\frac{1}{2}}) & \z^{\frac{1}{2}}J_0'(\z^{\frac{1}{2}})(1+\frac{2\im}{\pi}\ln 2)-\frac{2\im}{\pi}J_0(\z^{\frac{1}{2}}) \end{pmatrix}\\
		&&-\frac{2\im}{\pi}\sum_{k=0}^{\infty}\psi(k+1)\begin{pmatrix}1 & 1\\ -2k& -2k \end{pmatrix}\frac{(-\frac{\z}{4})^k}{(k!)^2},\ \ |\z|<r;\ \ \ \ \begin{cases}J_0(\z^{\frac{1}{2}})=\sum_{0}^{\infty}\frac{(-\frac{1}{4}\z)^k}{(k!)^2}&\\ \z^{\frac{1}{2}}J_0'(\z^{\frac{1}{2}})=\sum_{1}^{\infty}2k\frac{(-\frac{1}{4}\z)^k}{(k!)^2}&\\ \end{cases}.
	\end{eqnarray*}
	\item The function $B(\z)$ is normalized so that as $\z\rightarrow\infty,\z\notin\mathbb{R}$,
	\begin{equation}\label{Hankelasy}
		B(\z)=\z^{-\frac{1}{4}\sigma_3}\frac{1}{\sqrt{2}}\begin{pmatrix} 1 & -1\\ 1 & 1 \end{pmatrix}\e^{-\im\frac{\pi}{4}\sigma_3}\left\{I+\frac{\im}{8\z^{\frac{1}{2}}}\begin{pmatrix}1&-2\im\\ -2\im & -1\end{pmatrix}+\frac{3}{128\z}\begin{pmatrix}1 & 4\im\\ -4\im & 1\end{pmatrix}+\mathcal{O}\left(\z^{-\frac{3}{2}}\right)\right\}\e^{\im\z^{\frac{1}{2}}\sigma_3}.
	\end{equation}
	%as $\z\rightarrow\infty,\z\notin\mathbb{R}$.
\end{enumerate}
\end{problem}
In terms of $B(\z)$ we can now define the local parametrix as follows: consider
\begin{align}\label{Hankelpara}
	\e^{\im\pi tV\sigma_3}&P^{(1)}(z)\e^{-tg(z)\sigma_3}=E^{(1)}(z)B\big(\z(z)\big)\left[I-\frac{1-\gamma}{2\pi\im}\begin{pmatrix}-1 & -1 \\ 1 & 1 \end{pmatrix}\ln\big(\z(z)\big)\right]\e^{\im\frac{\pi}{2}a\sigma_3}\nonumber\\
	&\times\,\begin{cases}\bigl(\begin{smallmatrix}0 & 1\\ -1 & 0 \end{smallmatrix}\bigr),&z\in \mathbb{H}^+\\ I,&z\in \mathbb{H}^-\end{cases}\,\times\begin{cases} \bigl(\begin{smallmatrix} 1 & 0\\ \e^{-\im\pi a} & 1 \end{smallmatrix}\bigr),&\textnormal{arg}(z-1)\in(\frac{\pi}{3},\pi)\\ \bigl(\begin{smallmatrix} 1 & 0\\ -\e^{\im\pi a} & 1\end{smallmatrix}\bigr),&\textnormal{arg}(z-1)\in(\pi,\frac{5\pi}{3})\\ I,&\textnormal{else}\end{cases};\ \ \ \ \ 0<|z-1|<\frac{1}{4}
\end{align}
where we make use of the locally conformal change of variables
\begin{equation*}
	\z(z)=t^2\e^{\im\pi}\left\{g(z)-\frac{\alpha}{t}\ln\left(\frac{1+(\e^{-\im\pi}(z-1))^{\frac{1}{2}}}{1-(\e^{-\im\pi}(z-1))^{\frac{1}{2}}}\right)\right\}^2=t^2(z-1)\left(1-\frac{2k}{t}\right)^2\Big\{1+\frac{4k(z-1)}{3(t-2k)}+\mathcal{O}\left((z-1)^2\right)\Big\},
\end{equation*}
so that
\begin{equation}\label{localtrick}
	\z^{\frac{1}{2}}(z)=\e^{-\im\frac{\pi}{2}}t\,\textnormal{sgn}(\Im z)\left\{g(z)-\frac{\alpha}{t}\ln\left(\frac{1+(\e^{-\im\pi}(z-1))^{\frac{1}{2}}}{1-(\e^{-\im\pi}(z-1))^{\frac{1}{2}}}\right)\right\}, \ \ z\in D\left(1,\frac{1}{4}\right)\Big\backslash\left(\frac{3}{4},\frac{5}{4}\right)
\end{equation}
and the locally analytic multiplier, for $|z-1|<\frac{1}{4}$,
\begin{equation*}
	E^{(1)}(z)=\e^{\im\pi tV\sigma_3}P^{(\infty)}(z)\big(\mathcal{D}(z)\big)^{-\sigma_3}\begin{cases}\bigl(\begin{smallmatrix} 0 & -1\\ 1 & 0 \end{smallmatrix}\bigr),&\!\!\!z\in\mathbb{H}^+\\ I,&\!\!\!z\in\mathbb{H}^-\end{cases}
	\times\,\e^{-\im\frac{\pi}{2}a\sigma_3}\e^{\im\frac{\pi}{4}\sigma_3}\frac{1}{\sqrt{2}}\begin{pmatrix}1 & 1\\ -1 & 1 \end{pmatrix}\big(\z(z)\big)^{\frac{1}{4}\sigma_3}.
	%,\ \ \ |z-1|<\frac{1}{4}.
\end{equation*}
It is easy to verify the properties of $P^{(1)}(z)$ which are summarized below:
\begin{problem}\label{HankelRHP} The parametrix $P^{(1)}(z)$ defined in \eqref{Hankelpara} has the following properties:
\begin{enumerate}
	\item $P^{(1)}(z)$ is analytic for $z\in D(1,\frac{1}{4})\backslash(\Sigma_T\cup\{1\})$.
	\item Since the change of variables $\z\mapsto\z(z),z\in D(1,\frac{1}{4})$ is locally conformal, we obtain from RHP \ref{Hankelmodel} directly the jump behavior 
	\begin{eqnarray*}
		P^{(1)}_+(z)&=&P^{(1)}_-(z)\begin{pmatrix} 1 & 0\\ \e^{-\im\pi a}\e^{2tg(z)} & 1 \end{pmatrix},\ \ z\in\widehat{\Gamma}_{2,T}\cap D\left(1,\frac{1}{4}\right),\\
		P^{(1)}_+(z)&=&P^{(1)}_-(z)\begin{pmatrix} 1 & 0\\ \e^{\im\pi a}\e^{2tg(z)} & 1 \end{pmatrix},\ \ z\in\widehat{\Gamma}_{4,T}\cap D\left(1,\frac{1}{4}\right);
	\end{eqnarray*}
	and
	\begin{eqnarray*}
		P^{(1)}_+(z)&=&P^{(1)}_-(z)\begin{pmatrix} 0 & 1\\ -1 & 0 \end{pmatrix},\ \ z\in\left(1,\frac{5}{4}\right);\\
		P^{(1)}_+(z)&=&P^{(1)}_-(z)\begin{pmatrix}\e^{-\im\pi a} & \e^{-t(\varkappa_{\textnormal{Bess}}+2g(z))}\\ 0 & \e^{\im\pi a}\end{pmatrix},\ \ z\in\left(\frac{3}{4},1\right).
	\end{eqnarray*}
	This matches exactly the jump behavior near $z=1$ of RHP \ref{g2RHP}.
	\item Near $z=1$, with the principal branch for the logarithm,
	\begin{align*}
		\e^{\im\pi tV\sigma_3}P^{(1)}(z)\e^{-tg(z)\sigma_3}=&\,\widehat{P}^{(1)}(z)\left[I+\frac{\gamma}{2\pi\im}\begin{pmatrix}-1 & -1\\ 1 & 1\end{pmatrix}\ln(z-1)\right]\e^{\im\frac{\pi}{2}a\sigma_3}\begin{cases}\bigl(\begin{smallmatrix} 0 & 1\\ -1 & 0 \end{smallmatrix}\bigr),&z\in\mathbb{H}^+\cap D(1,\frac{1}{4})\\ I,&z\in\mathbb{H}^-\cap D(1,\frac{1}{4})\end{cases}\\
		&\,\times\begin{cases} \bigl(\begin{smallmatrix} 1 & 0\\ \e^{-\im\pi a} & 1 \end{smallmatrix}\bigr),&\textnormal{arg}(z-1)\in(\frac{\pi}{3},\pi)\\ \bigl(\begin{smallmatrix} 1 & 0\\ -\e^{\im\pi a} & 1\end{smallmatrix}\bigr),&\textnormal{arg}(z-1)\in(\pi,\frac{5\pi}{3})\\ I,&\textnormal{else}\end{cases},\ \ \ \ \ z\rightarrow 1,\ z\notin\Sigma_T,
	\end{align*}
	and we have thus the same singular structure near $z=1$ as in RHP \ref{g2RHP}.
	\item In case $t\rightarrow+\infty,\gamma\uparrow 1$ subject to \eqref{Bscale}, we obtain from \eqref{Hankelasy} and \eqref{localtrick},
	\begin{align}\label{Hmatch:1}
		P^{(1)}&(z)=P^{(\infty)}(z)\bigg\{I+\frac{\im}{8\z^{\frac{1}{2}}(z)}M(z)\begin{pmatrix}1 & -2\im\e^{-\im\pi a}\\ -2\im\e^{\im\pi a} & -1\end{pmatrix}\big(M(z)\big)^{-1}+\frac{3}{128\z(z)}\\
		&\times\,M(z)\begin{pmatrix}1 & 4\im\e^{-\im\pi a}\\ -4\im\e^{\im\pi a} & -1\end{pmatrix}\big(M(z)\big)^{-1}+\mathcal{O}\left(t^{-3}\right)\bigg\};\ \ \ M(z)=\big(\mathcal{D}(z)\big)^{-\sigma_3}
		\begin{cases}\bigl(\begin{smallmatrix}0&-1\\ 1&0\end{smallmatrix}\bigr),&z\in\mathbb{H}^+\\ I,&z\in\mathbb{H}^-\end{cases},\nonumber
	\end{align}
	uniformly for $0<r_1\leq|z-1|\leq r_2<\frac{1}{4}$. Here we use that on the annulus
	\begin{equation*}
		\exp\left[-\varkappa_{_\textnormal{Bess}}t\pm2\im\z^{\frac{1}{2}}(z)\right]=\mathcal{O}\left(t^{-\infty}\right),\ \ \ t\rightarrow+\infty,\ \gamma\uparrow 1:\ \ \varkappa_{_\textnormal{Bess}}=2-2\left(\chi+\frac{a}{2}\right)\frac{\ln t}{t}.
	\end{equation*}
\end{enumerate}
\end{problem}
\begin{remark}\label{cem:3} Observe that from properties $(2)$ and $(3)$ in RHP \ref{HankelRHP} we obtain
\begin{equation*}
	S(z)=N_1(z)P^{(1)}(z),\ \ \ 0\leq|z-1|<\frac{1}{4},
\end{equation*}
with $N_1(z)$ analytic at $z=1$.
\end{remark}
For the remaining parametrix near $z=0$ we shall use again classical orthogonal polynomials: define for $k\in\mathbb{Z}_{\geq 0}$, with principal branches for fractional exponents,
\begin{equation}\label{lagbare}
	L(\z;a)=\begin{pmatrix}q_k^{(a)}(\z) & \frac{1}{2\pi\im}\int_0^{\infty}q_k^{(a)}(t)\,t^a\e^{-t}\frac{\d t}{t-\z}\smallskip\\ \gamma_{k-1}q_{k-1}^{(a)}(\z) & \frac{\gamma_{k-1}}{2\pi\im}\int_0^{\infty}q_{k-1}^{(a)}(t)\,t^a\e^{-t}\frac{\d t}{t-\z}
	\end{pmatrix}(\e^{-\im\pi}\z)^{\frac{a}{2}\sigma_3}\e^{-\frac{1}{2}\z\sigma_3},\ \ \ \z\in\mathbb{C}\backslash[0,\infty),%\ \ k\in\mathbb{Z}_{\geq 0}
\end{equation}
in terms of monic Laguerre polynomials $\{q_k^{(a)}(\z)\}_{k\in\mathbb{Z}_{\geq 0}}, q_{-1}^{(a)}(\z)\equiv 0=\gamma_{k-1}$, cf. \cite{NIST}:
\begin{equation*}
	q_k^{(a)}(\z)=\z^k+a_{k,k-1}\z^{k-1}+a_{k,k-2}\z^{k-2}+\mathcal{O}\left(\z^{k-3}\right),\ \ \z\rightarrow\infty;\ \ \ \ \int_0^{\infty}q_j^{(a)}(t)q_k^{(a)}(t)\,t^a\e^{-t}\d t=h_k\delta_{jk},
\end{equation*}
with
\begin{equation*}
	h_k=k!\,\Gamma(1+k+a),\ \ k\in\mathbb{Z}_{\geq 0};\ \ \ a_{k,k-1}=-k(k+a),\ \ a_{k,k-2}=\frac{k}{2}(k+a)(k-1)(k+a-1),\ \ k\in\mathbb{Z}_{\geq 0}.
\end{equation*}
\begin{remark} The normalization in \cite{NIST} of the Laguerre polynomials $\{L_k^{(a)}(\z)\}_{k\in\mathbb{Z}_{\geq 0}}$ is again different from the one we choose: we have to use the relation $q_k^{(a)}(\z)=(-1)^kk!L_k^{(a)}(\z),\z\in\mathbb{C}$.
\end{remark}
\begin{problem}\label{lagbareRHP} For any $k\in\mathbb{Z}_{\geq 0},a>-1$, the function $L(\z;a)\in\mathbb{C}^{2\times 2}$ defined in \eqref{lagbare} has the following properties
\begin{enumerate}
	\item $L(\z;a)$ is analytic for $\z\in\mathbb{C}\backslash[0,+\infty)$ and we orient the positive half-ray from $0$ to $+\infty$.
	\item Along this ray we have
	\begin{equation*}
		L_+(\z;a)=L_-(\z;a)\begin{pmatrix}\e^{-\im\pi a} & 1\\ 0 & \e^{\im\pi a} \end{pmatrix},\ \ \z\in(0,\infty).
	\end{equation*}
	\item Near $\z=0$,
	\begin{eqnarray*}
		L(\z;a)&=&\widehat{L}(\z;a)(\e^{-\im\pi}\z)^{\frac{a}{2}\sigma_3}\begin{pmatrix} 1 & \frac{\im}{2}\frac{1}{\sin\pi a}\\ 0 & 1\end{pmatrix},\ \ \ a\notin\mathbb{Z};\\
		L(\z;a)&=&\widehat{L}(\z;a)(\e^{-\im\pi}\z)^{\frac{a}{2}\sigma_3}\left[I-\frac{\e^{\im\pi a}}{2\pi\im}\ln\big(\e^{-\im\pi}\z\big)\bigl(\begin{smallmatrix}0 & 1\\ 0 & 0 \end{smallmatrix}\bigr)\right],\ \ \ a\in\mathbb{Z};
	\end{eqnarray*}
	with $\widehat{L}(\z;a)$ analytic at $\z=0$.
	\item As $\z\rightarrow\infty,\z\notin[0,\infty)$, with $\gamma_k=-\frac{2\pi\im}{h_k}$,
	\begin{equation}\label{Lagasy}
		L(\z;a)=\left\{I+\frac{1}{\z}\begin{pmatrix} -\frac{\gamma_{k-1}}{\gamma_k}& \gamma_k^{-1}\\ \gamma_{k-1}&\frac{\gamma_{k-1}}{\gamma_k}\end{pmatrix}+\frac{1}{\z^2}\begin{pmatrix} \frac{1}{2}\frac{\gamma_{k-2}}{\gamma_k} & \gamma_{k+1}^{-1}\\ -\gamma_{k-2}& \frac{1}{2}\frac{\gamma_{k-1}}{\gamma_{k+1}}\end{pmatrix}+\mathcal{O}\left(\z^{-3}\right)\right\}\z^{k\sigma_3}(\e^{-\im\pi}\z)^{\frac{a}{2}\sigma_3}\e^{-\frac{1}{2}\z\sigma_3}.
	\end{equation}
\end{enumerate}
\end{problem}
The required model function near $z=0$ is now given by
\begin{equation}\label{Lagpara}
	P^{(0)}(z)=E^{(0)}(z)L\big(\z(z);a\big)\e^{\frac{t}{2}\varkappa_{\textnormal{Bess}}\sigma_3}\e^{tg(z)\sigma_3},\ \ \ \ \ z\in D\left(0,\frac{1}{4}\right)\Big\backslash\left(-\frac{1}{4},\frac{1}{4}\right),
\end{equation}
where
\begin{equation*}
	E^{(0)}(z)=P^{(\infty)}(z)\big(\mathcal{D}(z)\big)^{-\sigma_3}\big(\e^{-\im\pi}\z(z)\big)^{-\frac{a}{2}\sigma_3}t^{\frac{a}{2}\sigma_3}\big(\beta(z)\big)^{-\sigma_3},\ \ \ 0\leq|z|<\frac{1}{4},
\end{equation*}
with the choice
\begin{equation*}
	\beta(z)=\left(\z(z)\frac{1+(\e^{-\im\pi}(z-1))^{\frac{1}{2}}}{1-(\e^{-\im\pi}(z-1))^{\frac{1}{2}}}\right)^kt^{-\chi}=2^{2k}t^{-\alpha}\left\{1-\frac{k}{4}z+\mathcal{O}\left(z^2\right)\right\},\ \ z\rightarrow0,
\end{equation*}
is analytic at $z=0$. In \eqref{Lagpara} we have employed the locally conformal change of variables
\begin{equation*}
	\z(z)=2t\left(g(z)-\frac{\chi}{t}\ln\left(\frac{1+(\e^{-\im\pi}(z-1))^{\frac{1}{2}}}{1-(\e^{-\im\pi}(z-1))^{\frac{1}{2}}}\right)+1\right)=tz\left\{1+\frac{z}{4}+\mathcal{O}\left(z^2\right)\right\},\ \ \ |z|<\frac{1}{4}
\end{equation*}
and we collect the relevant properties of $P^{(0)}(z)$ below.
\begin{problem}\label{LagparaRHP} The parametrix $P^{(0)}(z)$ has the following analytical properties
\begin{enumerate}
	\item $P^{(0)}(z)$ is analytic for $z\in D(0,\frac{1}{4})\backslash(-\frac{1}{4},\frac{1}{4})$.
	\item With local analyticity of $\z=\z(z)$,
	\begin{eqnarray*}
		P^{(0)}_+(z)&=&P^{(0)}_-(z)\e^{-2\pi\im tV\sigma_3},\ \ \ z\in\left(-\frac{1}{4},0\right);\\
		P^{(0)}_+(z)&=&P^{(0)}_-(z)\begin{pmatrix}\e^{-\im\pi a} & \e^{-t(\varkappa_{\textnormal{Bess}}+2g(z))}\\ 0 & \e^{\im\pi a}\end{pmatrix},\ \ \ z\in\left(0,\frac{1}{4}\right);
	\end{eqnarray*}
	which matches the jump behavior of $S(z)$ in a neighborhood of $z=0$, compare RHP \ref{g2RHP}.
	\item Near $z=0$, compare RHP \ref{lagbareRHP},
	\begin{eqnarray*}
		\e^{\im\pi tV\sigma_3}P^{(0)}(z)\e^{-tg(z)\sigma_3}&=&\widehat{P}^{(0)}(z)(\e^{-\im\pi}sz)^{\frac{a}{2}\sigma_3}\begin{pmatrix} 1 & \frac{\im}{2}\frac{1-\gamma}{\sin\pi a}\\ 0 & 1\end{pmatrix},\ \ \ a\notin\mathbb{Z};\\
		\e^{\im\pi tV\sigma_3}P^{(0)}(z)\e^{-tg(z)\sigma_3}&=&\widehat{P}^{(0)}(z)(\e^{-\im\pi}sz)^{\frac{a}{2}\sigma_3}\left[I-\frac{\e^{\im\pi a}}{2\pi\im}\ln\big(\e^{-\im\pi}sz\big)\bigl(\begin{smallmatrix} 0 & 1\\ 0 & 0\end{smallmatrix}\bigr)\right],\ \ \ a\in\mathbb{Z};
	\end{eqnarray*}
	i.e. we also have a matching of the singular behavior with the one stated in RHP \ref{g2RHP}.
	\item As $t\rightarrow+\infty,\gamma\uparrow 1$ subject to \eqref{Bscale}, we derive from \eqref{Lagasy} (using also that $\mathbb{R}_{\geq 0}\ni\chi=k+\alpha$),
	\begin{align}
		P^{(0)}(z)=&\,P^{(\infty)}(z)\bigg\{I+\frac{1}{\z(z)}\big(\mathcal{D}(z)\big)^{-\sigma_3}\begin{pmatrix}-\frac{\gamma_{k-1}}{\gamma_k} & \gamma_k^{-1}\beta^{-2}(z)(\e^{-\im\pi}t^{-1}\z(z))^{-a}\smallskip\\ \gamma_{k-1}\beta^2(z)(\e^{-\im\pi}t^{-1}\z(z))^a & \frac{\gamma_{k-1}}{\gamma_k}\end{pmatrix}\big(\mathcal{D}(z)\big)^{\sigma_3}\nonumber\\
		&\,+\frac{1}{\z^2(z)}\big(\mathcal{D}(z)\big)^{-\sigma_3}\begin{pmatrix}\frac{1}{2}\frac{\gamma_{k-2}}{\gamma_k} & \gamma_{k+1}^{-1}\beta^{-2}(z)(\e^{-\im\pi}t^{-1}\z(z))^{-a}\smallskip\\
		-\gamma_{k-2}\beta^2(z)(\e^{-\im\pi}t^{-1}\z(z))^a & \frac{1}{2}\frac{\gamma_{k-1}}{\gamma_{k+1}}\end{pmatrix}\big(\mathcal{D}(z)\big)^{\sigma_3}\nonumber\\
		&\,+\mathcal{O}\left(t^{-3+2|\alpha|}\right)\bigg\}\label{Lagparaasy}
	\end{align}
	uniformly for $0<r_1\leq|z|\leq r_2<\frac{1}{4}$.
\end{enumerate}
\end{problem}
\begin{remark}\label{cem:4} From properties $(2)$ and $(3)$ in RHP \ref{LagparaRHP} we deduce
\begin{equation*}
	S(z)=N_0(z)P^{(0)}(z),\ \ \ 0\leq|z|<\frac{1}{4}
\end{equation*}
where $N_0(z)$ is analytic at $z=0$.
\end{remark}
We have now completed the necessary local analysis and can move on to the comparison of the explicit model functions $P^{(\infty)}(z),P^{(1)}(z)$ and $P^{(0)}(z)$ to the unknown $S(z)$ of RHP \ref{g2RHP}.
\subsection{Ratio transformation and first small norm estimate} Use \eqref{Besselout}, \eqref{Hankelpara}, \eqref{Lagpara} and define
\begin{equation*}
	R(z)=\begin{pmatrix} 1 & 0\\ \omega & 1\end{pmatrix}S(z)\begin{cases}\big(P^{(0)}(z)\big)^{-1},&|z|<r\\ \big(P^{(1)}(z)\big)^{-1},&|z-1|<r\\ \big(P^{(\infty)}(z)\big)^{-1},&|z|>r,\,|z-1|>r\end{cases};\ \ \ \ \omega=-s^{\frac{1}{2}}\big(N(S_{\infty}-(a+2\alpha)\sigma_3)N^{-1}\big)_{21}
\end{equation*}
with $0<r<\frac{1}{4}$ fixed and we put, compare RHP \ref{g2RHP},
\begin{equation*}
	N=\e^{-\im\pi\alpha\sigma_3}\frac{1}{\sqrt{2}}\begin{pmatrix} 1 & -1\\ 1 & 1\end{pmatrix}\e^{-\im\frac{\pi}{4}\sigma_3}\e^{\im\pi\alpha\sigma_3}.
\end{equation*}
\begin{figure}[tbh]
\begin{center}
\resizebox{0.255\textwidth}{!}{\includegraphics{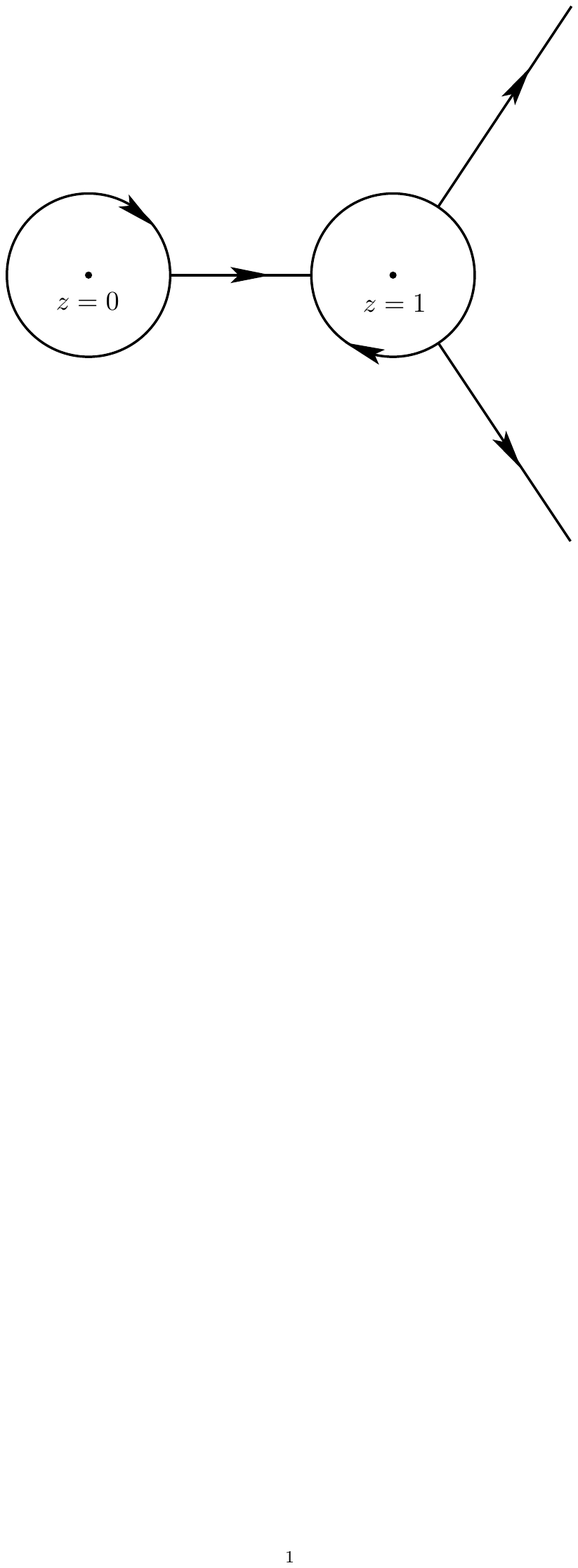}}
%\resizebox{0.27\textwidth}{!}{\input{Ratio5.pdf_t}}
\caption{The oriented jump contours for the ratio function $R(z)$ in the complex $z$-plane.}
\label{figure8}
\end{center}
\end{figure}
\begin{problem} Determine $R(z)=R(z;s,\gamma;a)\in\mathbb{C}^{2\times 2}$ such that
\begin{enumerate}
	\item $R(z)$ is analytic for $z\in\mathbb{C}\backslash\Sigma_R$ with square integrable boundary values on the contour
	\begin{equation*}
		\Sigma_R=\partial D(0,r)\cup\partial D(1,r)\cup(r,1-r)\cup\Big(\big(\,\widehat{\Gamma}_{2,T}\cup\widehat{\Gamma}_{4,T}\big)\cap\{z\in\mathbb{C}:\,|z-1|>r\}\Big)
	\end{equation*}
	which is depicted in Figure \ref{figure8}.
	\item We have jumps $R_+(z)=R_-(z)G_R(z),z\in\Sigma_R$ with $G_R(z)=G_R(z;s,\gamma;a)$ and
	\begin{equation*}
		G_R(z)=P^{(0)}(z)\big(P^{(\infty)}(z)\big)^{-1},\ z\in\partial D(0,r);\ \ \ \ 
		G_R(z)=P^{(1)}(z)\big(P^{(\infty)}(z)\big)^{-1},\ z\in\partial D(1,r);
	\end{equation*}
	followed by
	\begin{align*}
		G_R(z)=&\,P^{(\infty)}(z)\begin{pmatrix}1 & 0\\\e^{-\im\pi a}\e^{2tg(z)} & 1\end{pmatrix}\big(P^{(\infty)}(z)\big)^{-1},\ \ z\in\widehat{\Gamma}_{2,T}\cap\{z\in\mathbb{C}:\,|z-1|>r\},\\
		G_R(z)=&\,P^{(\infty)}(z)\begin{pmatrix}1 & 0\\\e^{\im\pi a}\e^{2tg(z)} & 1\end{pmatrix}\big(P^{(\infty)}(z)\big)^{-1},\ \ z\in\widehat{\Gamma}_{4,T}\cap\{z\in\mathbb{C}:\,|z-1|>r\};
	\end{align*}
	as well as
	\begin{equation*}
		G_R(z)=P_-^{(\infty)}(z)\begin{pmatrix}1 & \e^{-\im\pi a}\e^{-t(\varkappa_{\textnormal{Bess}}+2g(z))}\\ 0 & 1\end{pmatrix}\big(P_-^{(\infty)}(z)\big)^{-1},\ \ z\in(r,1-r).
	\end{equation*}
	Note that by construction, see Remarks \ref{cem:3} and \ref{cem:4}, there are no jumps inside $D(0,r)\cup D(1,r)$ and on $(-\infty,-r)\cup(1+r,+\infty)$. Moreover $R(z)$ is bounded at $z=0$ and $z=1$.
	\item As $z\rightarrow\infty$,
	\begin{align*}
		R(z)=&\,\begin{pmatrix}1&0\\ \omega&1\end{pmatrix}(\e^{-\im\pi}sz)^{-\frac{1}{4}\sigma_3}\e^{-\im\pi tV\sigma_3}\frac{1}{\sqrt{2}}\begin{pmatrix}1 & -1\\ 1 & 1\end{pmatrix}\e^{-\im\frac{\pi}{4}\sigma_3}\e^{\im\pi tV\sigma_3}\left\{I+S_{\infty}(\e^{-\im\pi}z)^{-\frac{1}{2}}+\mathcal{O}\left(z^{-1}\right)\right\}\\
		&\,\times\big(P^{(\infty)}(z)\big)^{-1}=I+\mathcal{O}\left((\e^{-\im\pi}z)^{-\frac{1}{2}}\right).
	\end{align*}
\end{enumerate}
\end{problem}
Observe that we obtain from \eqref{crux:4}, \eqref{crux:5} and \eqref{Hmatch:1} the following small norm estimate for
\begin{equation}\label{jfac:1}
	\widehat{G}_R(z;s,\gamma;a)=s^{\frac{1}{4}\sigma_3}G_R(z;s,\gamma;a)s^{-\frac{1}{4}\sigma_3},\ \ z\in\Sigma_R.
\end{equation}
\begin{prop}\label{bDZ:1} Given $\chi\in\mathbb{R}_{\geq 0},a>-1$ there exist positive $t_0=t_0(\chi,a),v_0=v_0(\chi,a)$ and $c=c(\chi,a)$ such that
\begin{equation*}
	\|\widehat{G}_R(\cdot;s,\gamma;a)-I\|_{L^2\cap L^{\infty}(\Sigma_R\backslash\partial D(0,r))}\leq\frac{c}{t},\ \ \ \ \ \forall\,t\geq t_0,\ v=-\ln(1-\gamma)\geq v_0:\ \ v=2t-2\left(\chi+\frac{a}{2}\right)\ln t.
\end{equation*}
\end{prop}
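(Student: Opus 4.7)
\smallskip

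\noindent \textbf{Proof proposal for Proposition \ref{bDZ:1}.} The plan is to split the contour $\Sigma_R \setminus \partial D(0,r)$ into three pieces and to estimate $\widehat{G}_R - I$ on each of them separately, entirely along the lines of the analogous Airy estimate in Proposition \ref{DZ:1}. The decisive observation is that the outer parametrix factors as
\begin{equation*}
    P^{(\infty)}(z) = s^{-\frac{1}{4}\sigma_3}\,P^{(\infty),0}(z), \qquad P^{(\infty),0}(z) = \big(\e^{-\im\pi}(z-1)\big)^{-\frac{1}{4}\sigma_3}\e^{-\im\pi\alpha\sigma_3}\tfrac{1}{\sqrt{2}}\bigl(\begin{smallmatrix}1 & -1\\ 1 & 1\end{smallmatrix}\bigr)\e^{-\im\frac{\pi}{4}\sigma_3}\cdots
\end{equation*}
with an $s$-independent factor $P^{(\infty),0}(z)$, because $s^{1/4\sigma_3}$ and $(\e^{-\im\pi}(z-1))^{1/4\sigma_3}$ are diagonal and therefore commute. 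Consequently, whenever $G_R(z)-I$ is the conjugation by $P^{(\infty)}(z)$ of a matrix with entries bounded by $\epsilon(z;s,\gamma)$, the conjugation $\widehat{G}_R(z)-I = s^{1/4\sigma_3}(G_R(z)-I)s^{-1/4\sigma_3}$ is the conjugation of the \emph{same} matrix by the $s$-independent $P^{(\infty),0}(z)$, so the factors of $s^{\pm 1/4}$ that would otherwise amplify the off-diagonal entries are killed.

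\smallskip

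\noindent First, on the lens pieces $(\widehat{\Gamma}_{2,T}\cup\widehat{\Gamma}_{4,T})\cap\{|z-1|>r\}$ the jump reads
\begin{equation*}
    \widehat{G}_R(z) = P^{(\infty),0}(z)\bigl(\begin{smallmatrix}1 & 0\\ \e^{\mp\im\pi a}\e^{2tg(z)} & 1\end{smallmatrix}\bigr)\big(P^{(\infty),0}(z)\big)^{-1},
\end{equation*}
and by \eqref{crux:4} the scalar $\e^{2tg(z)}$ decays exponentially in $t$; since $P^{(\infty),0}(z)$ and its inverse are uniformly bounded on these arcs, the contribution is $\mathcal{O}(\e^{-ct})$ in $L^2\cap L^\infty$. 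Next, on the line segment $(r,1-r)$ the jump is an analogous conjugation of $\bigl(\begin{smallmatrix}1 & \e^{-\im\pi a}\e^{-t(\varkappa_{\textnormal{Bess}}+2g(z))}\\ 0 & 1\end{smallmatrix}\bigr)$, and \eqref{crux:5} yields the same exponential decay; uniformity in $\gamma$ here is where we use $v\geq v_0$ large (so that $\varkappa_{_\textnormal{Bess}}$ stays bounded and the exponent $\varkappa_{_\textnormal{Bess}}+2g(z)\geq\delta$ is a uniform positive constant).

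\smallskip

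\noindent The circle $\partial D(1,r)$ carries the only non-exponentially-small contribution and is where the rate $c/t$ is sharp. Here we apply the matching \eqref{Hmatch:1}: substituting it into $G_R(z) = P^{(1)}(z)(P^{(\infty)}(z))^{-1}$ and using the factorization above we obtain
\begin{equation*}
    \widehat{G}_R(z) = P^{(\infty),0}(z)\bigg\{I + \frac{\im}{8\z^{\frac{1}{2}}(z)}M(z)\bigl(\begin{smallmatrix}1 & -2\im\e^{-\im\pi a}\\ -2\im\e^{\im\pi a} & -1\end{smallmatrix}\bigr)M(z)^{-1} + \mathcal{O}\big(t^{-2}\big)\bigg\}\big(P^{(\infty),0}(z)\big)^{-1},
\end{equation*}
whose leading correction is bounded in absolute value by $c\,|\z(z)|^{-1/2} = c\,t^{-1}|\hat\z(z)|^{-1/2}$, with $|\hat\z(z)|$ bounded below on $\partial D(1,r)$; the correction terms are supported on a compact circle, so the $L^2$ and $L^\infty$ norms are both controlled by $c/t$.

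\smallskip

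\noindent The main technical point, and the step worth stating most carefully in the write-up, is the factorization trick $P^{(\infty)} = s^{-1/4\sigma_3}P^{(\infty),0}$: without it, the conjugation $s^{1/4\sigma_3}(\cdots)s^{-1/4\sigma_3}$ in the definition of $\widehat{G}_R$ would multiply $(1,2)$ entries of $G_R - I$ by $s^{1/2}=t$ and thereby spoil the rate. Once this is observed, the three contour estimates above combine additively to give the claimed bound $\|\widehat{G}_R-I\|_{L^2\cap L^\infty(\Sigma_R\setminus\partial D(0,r))}\leq c/t$, uniformly in the scaling regime $v = 2t - 2(\chi+a/2)\ln t$ with $t\geq t_0(\chi,a)$, $v\geq v_0(\chi,a)$.
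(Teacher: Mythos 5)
Your proposal matches the paper's approach exactly: the paper offers no detailed proof but states that the bound ``follows from \eqref{crux:4}, \eqref{crux:5} and \eqref{Hmatch:1}'' after introducing the conjugation \eqref{jfac:1}, and the factorization $P^{(\infty)}(z)=s^{-\frac{1}{4}\sigma_3}P^{(\infty),0}(z)$ with $s$-independent $P^{(\infty),0}$ is precisely the point of that conjugation. Your breakdown into the lens arcs, the segment $(r,1-r)$ and the circle $\partial D(1,r)$, with $\partial D(1,r)$ carrying the $\mathcal{O}(t^{-1})$ contribution via \eqref{Hmatch:1} and the compactness/boundedness of $P^{(\infty),0}$ and $M(z)$ there, is the intended argument.

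One small imprecision is worth correcting before this goes into a write-up: on the unbounded lens arcs $(\widehat{\Gamma}_{2,T}\cup\widehat{\Gamma}_{4,T})\cap\{|z-1|>r\}$, the factor $P^{(\infty),0}(z)$ is \emph{not} uniformly bounded, since $(\e^{-\im\pi}(z-1))^{-\frac{1}{4}\sigma_3}$ grows like $|z|^{\pm1/4}$ as $|z|\to\infty$, so the conjugation can amplify entries by $|z|^{1/2}$. This does not invalidate the conclusion, because along these rays $\Re\,g(z)\sim -c\sqrt{|z|}$ for large $|z|$, so $|\e^{2tg(z)}|$ decays superpolynomially in $|z|$ and still gives $\mathcal{O}(\e^{-ct})$ in both $L^2$ and $L^\infty$ after absorbing the polynomial growth. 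The point is that the exponential decay from \eqref{crux:4}, not uniform boundedness of $P^{(\infty),0}$, is what controls the unbounded part of the contour.
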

For the remaining contour $\partial D(0,r)$ we have to be more careful:
% which is similar to the Airy kernel determinant.will first carry out the necessary analysis with the additional contraint
%\begin{equation}\label{aconst}
%	-1<a<1
%\end{equation}
%in place which will be lifted later on.
From \eqref{Lagparaasy}, uniformly for $z\in\partial D(0,r)$,
\begin{align*}
	\widehat{G}_R&(z;s,\gamma;a)-I=(\e^{-\im\pi}(z-1))^{-\frac{1}{4}\sigma_3}\Big\{R_1^+(z)t^{-1+2\alpha}+R_1^-(z)t^{-1-2\alpha}+R_2(z)t^{-1}\\
	&+R_3^+(z)t^{-2+2\alpha}+R_3^-(z)t^{-2-2\alpha}+R_4(z)t^{-2}+\mathcal{O}\left(t^{-3+2|\alpha|}\right)\Big\}(\e^{-\im\pi}(z-1))^{\frac{1}{4}\sigma_3}
\end{align*}
with $t$-independent coefficients
\begin{equation*}
	R_1^+(z)=\im\gamma_k^{-1}\frac{\hat{\beta}^{-2}(z)}{2\hat{\z}(z)}\hat{\eta}^{-1}(z)\e^{-\im\pi\alpha\sigma_3}\begin{pmatrix}1 & -1\\ 1 & -1\end{pmatrix}\e^{\im\pi\alpha\sigma_3};\  R_1^-(z)=-\im\gamma_{k-1}\frac{\hat{\beta}^2(z)}{2\hat{\z}(z)}\hat{\eta}(z)\e^{-\im\pi\alpha\sigma_3}\begin{pmatrix}1 & 1\\ -1 & -1\end{pmatrix}\e^{\im\pi\alpha\sigma_3}
\end{equation*}
and
\begin{equation*}
	R_3^+(z)=\im\gamma_{k+1}^{-1}\frac{\hat{\beta}^{-2}(z)}{2\hat{\z}^2(z)}\hat{\eta}^{-1}(z)\e^{-\im\pi\alpha\sigma_3}\begin{pmatrix}1 & -1\\ 1 & -1\end{pmatrix}\e^{\im\pi\alpha\sigma_3};\ \ R_3^-(z)=\im\gamma_{k-2}\frac{\hat{\beta}^2(z)}{2\hat{\z}^2(z)}\hat{\eta}(z)\e^{-\im\pi\alpha\sigma_3}\begin{pmatrix}1 & 1\\ -1 & -1 \end{pmatrix}\e^{\im\pi\alpha\sigma_3};
\end{equation*}
all of rank one as well as
\begin{equation*}
	R_2(z)=\frac{1}{\hat{\z}(z)}\e^{-\im\pi\alpha\sigma_3}\begin{pmatrix} 0 & -\frac{\gamma_{k-1}}{\gamma_k}\\ -\frac{\gamma_{k-1}}{\gamma_k} & 0\end{pmatrix}\e^{\im\pi\alpha\sigma_3};\ \ R_4(z)=\frac{1}{4\hat{\z}^2(z)}\e^{-\im\pi\alpha\sigma_3}\begin{pmatrix}\frac{\gamma_{k-2}}{\gamma_k}+\frac{\gamma_{k-1}}{\gamma_{k+1}} & \frac{\gamma_{k-2}}{\gamma_k}-\frac{\gamma_{k-1}}{\gamma_{k+1}}\\ \frac{\gamma_{k-2}}{\gamma_k}-\frac{\gamma_{k-1}}{\gamma_{k+1}} & \frac{\gamma_{k-2}}{\gamma_k}+\frac{\gamma_{k-1}}{\gamma_{k+1}}\end{pmatrix}\e^{\im\pi\alpha\sigma_3}.
\end{equation*}
We have introduced $\beta(z)=t^{-\alpha}\hat{\beta}(z),\z(z)=t\hat{\z}(z)$ and %$\eta(z)=t^a\hat{\eta}(z)$ with
\begin{equation*}
	\hat{\eta}(z)=\left(\frac{1+(\e^{-\im\pi}(z-1))^{\frac{1}{2}}}{1-(\e^{-\im\pi}(z-1))^{\frac{1}{2}}}\hat{\z}(z)\right)^a=2^{2a}\bigg\{1-\frac{a}{4}z+\mathcal{O}\left(z^2\right)\bigg\},\ \ z\rightarrow 0.
\end{equation*}
Similar to the approach for the Airy kernel we use also here matrix factorizations,
\begin{align*}
	\widehat{G}_R&(z;s,\gamma;a)=\widehat{E}^+(z)(\e^{-\im\pi}(z-1))^{-\frac{1}{4}\sigma_3}\Big\{I+R_1^-(z)t^{-1-2\alpha}+R_2(z)t^{-1}+\big(R_3^+(z)-R_1^+(z)R_2(z)\big)t^{-2+2\alpha}\\
	&+R_3^-(z)t^{-2-2\alpha}+\big(R_4(z)-R_1^+(z)R_1^-(z)\big)t^{-2}+\mathcal{O}\left(t^{-3+2|\alpha|}\right)\Big\}(\e^{-\im\pi}(z-1))^{\frac{1}{4}\sigma_3},\ \ 0\leq\alpha<\frac{1}{2};\\
	\widehat{G}_R&(z;s,\gamma;a)=\widehat{E}^-(z)(\e^{-\im\pi}(z-1))^{-\frac{1}{4}\sigma_3}\Big\{I+R_1^+(z)t^{-1+2\alpha}+R_2(z)t^{-1}+\big(R_3^-(z)-R_1^-(z)R_2(z)\big)t^{-2-2\alpha}\\
	&+R_3^+(z)t^{-2+2\alpha}+\big(R_4(z)-R_1^-(z)R_1^+(z)\big)t^{-2}+\mathcal{O}\left(t^{-3+2|\alpha|}\right)\Big\}(\e^{-\im\pi}(z-1))^{\frac{1}{4}\sigma_3},\ \ -\frac{1}{2}\leq\alpha\leq 0
\end{align*}
with the invertible meromorphic factors
\begin{eqnarray*}
	\widehat{E}^+(z)&=&(\e^{-\im\pi}(z-1))^{-\frac{1}{4}\sigma_3}\big(I+R_1^+(z)t^{-1+2\alpha}\big)(\e^{-\im\pi}(z-1))^{\frac{1}{4}\sigma_3},\\
	\widehat{E}^-(z)&=&(\e^{-\im\pi}(z-1))^{-\frac{1}{4}\sigma_3}\big(I+R_1^-(z)t^{-1-2\alpha}\big)(\e^{-\im\pi}(z-1))^{\frac{1}{4}\sigma_3}.
\end{eqnarray*}
We have thus
\begin{equation*}
	G_R(z;s,\gamma;a)=s^{-\frac{1}{4}\sigma_3}\widehat{E}^{\pm}(z)s^{\frac{1}{4}\sigma_3}E_0^{\pm}(z),\ \ \ z\in\partial D(0,r)
\end{equation*}
and can move on to the next transformation.
\subsection{Singular Riemann-Hilbert problem and iterative solution} Put
\begin{equation}\label{unde:1}
	Q(z)=R(z)\begin{cases}\begin{cases}s^{-\frac{1}{4}\sigma_3}\widehat{E}^+(z)s^{\frac{1}{4}\sigma_3},&|z|<r\\ I,&|z|>r\end{cases},&0\leq\alpha<\frac{1}{2}\smallskip\\
	\begin{cases}s^{-\frac{1}{4}\sigma_3}\widehat{E}^-(z)s^{\frac{1}{4}\sigma_3},&|z|<r\\ I,&|z|>r\end{cases},&-\frac{1}{2}\leq\alpha\leq 0\end{cases}
\end{equation}
and obtain
\begin{problem}\label{singBessel} Determine $Q(z)=Q(z;s,\gamma;a)\in\mathbb{C}^{2\times 2}$ such that
\begin{enumerate}
	\item $Q(z)$ is analytic for $z\in\mathbb{C}\backslash(\Sigma_R\cup\{0\})$ with square integrable boundary values on the contour $\Sigma_R$ depicted in Figure \ref{figure8}.
	\item The jump conditions are as follows,
	\begin{eqnarray*}
		Q_+(z)&=&Q_-(z)G_R(z;s,\gamma;a),\ \ \ \ z\in\Sigma_R\backslash\partial D(0,r);\\
		Q_+(z)&=&Q_-(z)\begin{cases} E_0^+(z),&\,\,\,\,\,0\leq\alpha<\frac{1}{2}\\ E_0^-(z),&-\frac{1}{2}\leq\alpha\leq 0\end{cases},\ \ \ \ z\in\partial D(0,r).
	\end{eqnarray*}
	\item We have a first order pole singularity at $z=0$,
	\begin{equation*}
		Q(z)=\widehat{Q}(z)\begin{pmatrix} 1 & \sigma^{\pm}z^{-1}\\ 0 & 1\end{pmatrix}\big(T^{\pm}\big)^{-1},\ \ \ |z|<r
	\end{equation*}
	with $\widehat{Q}(z)$ analytic at $z=0$ and
	\begin{equation*}
		\sigma^+=\frac{\im\gamma_k^{-1}2^{-4k-2a-1}t^{-1+2\alpha}}{1-\im\gamma_k^{-1}2^{-4k-2a-2}t^{-1+2\alpha}},\ \ \alpha\in\left[0,\frac{1}{2}\right);\ \ \ \sigma^-=\frac{-\im\gamma_{k-1}2^{4k+2a-1}t^{-1-2\alpha}}{1+\im\gamma_{k-1}2^{4k+2a-2}t^{-1-2\alpha}},\ \ \alpha\in\left[-\frac{1}{2},0\right];
	\end{equation*}
	as well as
	\begin{equation*}
		T^+=\begin{pmatrix} 1 & 1\\ \e^{2\pi\im\alpha}s^{\frac{1}{2}} & 0\end{pmatrix},\ \ \alpha\in\left[0,\frac{1}{2}\right);\ \ \ \ \ \ \ T^-=\begin{pmatrix}1 & 1\\ -\e^{2\pi\im\alpha}s^{\frac{1}{2}} & 0\end{pmatrix},\ \ \alpha\in\left[-\frac{1}{2},0\right].
	\end{equation*}
	\item As $z\rightarrow\infty$, 
	\begin{equation*}
		Q(z)=I+\mathcal{O}\left((\e^{-\im\pi}z)^{-\frac{1}{2}}\right).
	\end{equation*}
\end{enumerate}
\end{problem}
In order to complete the nonlinear steepest descent analysis we perform a final transformation, define for $z\in\mathbb{C}\backslash\Sigma_R$
\begin{equation}\label{unde:2}
	Q(z)=\big\{zI+B^{\pm}\big\}L(z)\frac{1}{z}
\end{equation}
and obtain the problem below
\begin{problem}\label{bLfinal} Determine $L(z)=L(z;s,\gamma;a)\in\mathbb{C}^{2\times 2}$ such that
\begin{enumerate}
	\item $L(z)$ is analytic for $z\in\mathbb{C}\backslash\Sigma_R$ with the contour $\Sigma_R$ shown in Figure \ref{figure8}.
	\item We have identical jumps as in the latest $Q$-RHP \ref{singBessel}, i.e.
	\begin{equation*}
		L_+(z)=L_-(z)G_Q(z;s,\gamma;a),\ \ \ z\in\Sigma_R.
	\end{equation*}
	\item $L(z)$ is analytic at $z=0$ provided that
	\begin{equation*}
		B^{\pm}=\sigma^{\pm}L(0)T^{\pm}\begin{pmatrix}0&1\\0&0\end{pmatrix}\big(T^{\pm}\big)^{-1}\left\{L(0)-\sigma^{\pm}L'(0)T^{\pm}\begin{pmatrix}0 & 1\\0&0\end{pmatrix}\big(T^{\pm}\big)^{-1}\right\}^{-1}.
	\end{equation*}
	\item As $z\rightarrow\infty$, we have $L(z)\rightarrow I$.
\end{enumerate}
\end{problem}
At this point we use a similar factorization as in \eqref{jfac:1},
\begin{equation*}
	\widehat{G}_Q(z;s,\gamma;a)=s^{\frac{1}{4}\sigma_3}G_Q(z;s,\gamma;a)s^{-\frac{1}{4}\sigma_3},\ \ \ z\in\Sigma_R
\end{equation*}
and summarize our results.
\begin{prop} Given $\chi=k+\alpha\in\mathbb{R}_{\geq 0}$ with $k\in\mathbb{Z}_{\geq 0},-\frac{1}{2}\leq\alpha<\frac{1}{2}$ as well as $a>-1$, there exist positive $t_0=t_0(\chi,a),v_0=v_0(\chi,a)$ and $c=c(\chi,a)$ such that
\begin{equation*}
	\|\widehat{G}_Q(\cdot;s,\gamma;a)-I\|_{L^2\cap L^{\infty}(\partial D(0,r))}\leq c\,t^{-1-2|\alpha|},\ \ \forall\,t\geq t_0,\ v=-\ln(1-\gamma)\geq v_0:\ \ v=2t-2\left(\chi+\frac{a}{2}\right)\ln t.
\end{equation*}
\end{prop}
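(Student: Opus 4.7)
The plan is to follow the same scheme used to prove Proposition \ref{DZ:2} in the Airy case, adapted to the structure on $\partial D(0,r)$ in the Bessel setting.

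First, I would identify the jump $\widehat{G}_Q$ on $\partial D(0,r)$ explicitly. The transformation $Q\mapsto L$ in \eqref{unde:2} is left multiplication by a matrix that is analytic on $\mathbb{C}\setminus\{0\}$, so it preserves all jumps across $\Sigma_R$; in particular $G_L\equiv G_Q$ on $\partial D(0,r)$. Unravelling the definition \eqref{unde:1} together with the factorization $\widehat{G}_R=\widehat{E}^{\pm}E_0^{\pm}$ displayed immediately before \eqref{unde:1}, a short calculation shows
\begin{equation*}
\widehat{G}_Q(z;s,\gamma;a)=E_0^{\pm}(z),\qquad z\in\partial D(0,r),
\end{equation*}
with $(+)$ chosen for $\alpha\in[0,\tfrac{1}{2})$ and $(-)$ for $\alpha\in[-\tfrac{1}{2},0]$. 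Thus the estimate reduces to bounding $E_0^{\pm}(z)-I$ pointwise on the fixed circle.

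Second, I would exploit nilpotency of the rank-one residues $R_1^{\pm}(z)$ to compute $E_0^{\pm}$ efficiently. Each core matrix $\bigl(\begin{smallmatrix}1 & \mp 1\\ \pm 1 & -1\end{smallmatrix}\bigr)$ squares to zero, so $\bigl(\widehat{E}^{\pm}(z)\bigr)^{-1}=I-R_1^{\pm}(z)\,t^{-1\pm 2\alpha}$ holds \emph{exactly}. Multiplying out $\bigl(\widehat{E}^{\pm}\bigr)^{-1}\widehat{G}_R$ yields the expansion of $E_0^{\pm}(z)$ recorded in the body of the text, whose terms are weighted by the explicit powers $t^{-1\mp 2\alpha}, t^{-1}, t^{-2\pm 2\alpha}, t^{-2-2\alpha}, t^{-2}$ plus an error $\mathcal{O}(t^{-3+2|\alpha|})$.

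Third, I would bound each term on the compact circle $\partial D(0,r)$. Since $\partial D(0,r)$ sits at distance bounded below from both of the points $z=0$ and $z=1$, the $t$-independent building blocks $\hat\beta(z),\hat\z(z),\hat\eta(z)$ and the conjugating factors $(\e^{-\im\pi}(z-1))^{\pm\frac{1}{4}\sigma_3}$ are uniformly bounded above and below. Consequently every coefficient $R_j^{\pm}(z), R_2(z), R_4(z)$ is a bounded matrix function of $z$ on $\partial D(0,r)$, the leading contribution to $E_0^+(z)-I$ comes from the rank-one term $R_1^-(z)\,t^{-1-2\alpha}$, and symmetrically the leading contribution to $E_0^-(z)-I$ is $R_1^+(z)\,t^{-1+2\alpha}$; combining both cases gives the $L^\infty$ bound $\mathcal{O}(t^{-1-2|\alpha|})$. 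Since the contour has fixed, $r$-dependent arclength, the same bound transfers to $L^2$.

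The main obstacle is the careful bookkeeping of the subleading terms of orders $t^{-1}$, $t^{-2\pm 2\alpha}$ and $t^{-2}$: one must verify that after combining with the counter-terms produced by the factorization (in particular the $R_1^{+}R_2$ and $R_1^{+}R_1^{-}$ products) and after accounting for the $(\e^{-\im\pi}(z-1))^{\pm\frac{1}{4}\sigma_3}$ conjugation, the total remains compatible with the stated bound uniformly across $-\tfrac{1}{2}\le\alpha<\tfrac{1}{2}$. This is the direct analogue of the Airy computation \eqref{fac:0}--\eqref{fac:2} leading to Proposition \ref{DZ:2}. Once established, the estimate feeds into a Neumann-series inversion of the singular integral equation equivalent to RHP \ref{bLfinal}, in complete parallel with \eqref{fin:1}--\eqref{DZ:better}, providing the unique solvability of the final Riemann--Hilbert problem for large $t,v$.
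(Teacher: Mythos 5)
Your strategy tracks the paper's implicit argument: on $\partial D(0,r)$ the $Q$-jump is, by construction, $G_Q=E_0^{\pm}$, so that $\widehat G_Q=(\widehat E^{\pm})^{-1}\widehat G_R$ is read directly off the factorizations displayed just before RHP \ref{singBessel}, and the bound is to follow from boundedness (above and away from zero) of $\hat\beta,\hat\zeta,\hat\eta$ and of $(\e^{-\im\pi}(z-1))^{\pm\frac14\sigma_3}$ on the fixed circle. The problem is your identification of the leading term. You claim the leading contribution to $E_0^{\pm}(z)-I$ is $R_1^{\mp}(z)t^{-1\mp2\alpha}$, of size $t^{-1-2|\alpha|}$, but the \emph{same} bracket also contains $R_2(z)t^{-1}$, with $R_2(z)=-\frac{\gamma_{k-1}}{\gamma_k\hat\zeta(z)}\,\e^{-\im\pi\alpha\sigma_3}\bigl(\begin{smallmatrix}0&1\\1&0\end{smallmatrix}\bigr)\e^{\im\pi\alpha\sigma_3}$, which is nonzero precisely when $k\ge1$ (since $\gamma_{k-1}\ne0$). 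For any $\alpha\ne0$ one has $t^{-1}\gg t^{-1-2|\alpha|}$, so the $R_2$-term strictly dominates the rank-one term on $\partial D(0,r)$ and the displayed expansion only gives $\|\widehat G_Q-I\|_{L^2\cap L^\infty(\partial D(0,r))}=\mathcal O(t^{-1})$ for $k\ge1$. The stated exponent $-1-2|\alpha|$ is not justified by the step you invoke.

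This mismatch with the Airy analogue (Proposition \ref{DZ:2}) is structural and worth recording: in the Hermite expansion \eqref{e:9} the $\zeta^{-1}$-coefficient has \emph{zero} diagonal, so the diagonal ($R_2$-type) correction only appears at $\zeta^{-2}\sim t^{-1}$, while the rank-one terms come from $\zeta^{-1}\beta^{\mp2}\sim t^{-\frac12\mp\alpha}$ and dominate $t^{-1}$ for all $|\alpha|<\tfrac12$. In the Laguerre expansion \eqref{Lagasy} the $\zeta^{-1}$-coefficient has nonzero diagonal $\mathrm{diag}(-\gamma_{k-1}/\gamma_k,\gamma_{k-1}/\gamma_k)$ and $\zeta\sim t$, so the diagonal correction already sits at $t^{-1}$ while the rank-one terms are pushed down to $t^{-1\mp2\alpha}$; the ordering flips on one side of $\alpha=0$. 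You should either restrict the claimed bound to $k=0$ (where $R_2\equiv0$), or state the honest bound $\mathcal O(t^{-1})$ and then check, as the paper in effect does when computing $\widehat L(1)$ by keeping the $A_0 t^{-1}$ and $A_0^{\pm}t^{-1-2|\alpha|}$ contributions separate, that the downstream integration still closes; small-norm solvability of RHP \ref{bLfinal} is unaffected in either case.
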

Hence, from general theory \cite{DZ}, the latter Proposition together with Proposition \ref{bDZ:1} implies
\begin{prop}\label{bDZ:2} Given $\chi\in\mathbb{R}_{\geq 0}$ and $a>-1$ there exist positive $t_0=t_0(\chi,a),v_0=v_0(\chi,a)$ and $c=c(\chi,a)$ such that RHP \ref{bLfinal} is uniquely solvable for $t\geq t_0,v\geq v_0:v=2t-2\left(\chi+\frac{a}{2}\right)\ln t$. The solution $L=L(z;s,\gamma;a)$ can be computed iteratively via the integral equation
\begin{equation*}
	\widehat{L}(z)=I+\frac{1}{2\pi\im}\int_{\Sigma_R}\widehat{L}_-(w)\big(\widehat{G}_Q(w)-I\big)\frac{\d w}{w-z},\ \ z\in\mathbb{C}\backslash\Sigma_R;\ \ \ \ \ \widehat{L}(z;s,\gamma;a)=s^{\frac{1}{4}\sigma_3}L(z;s,\gamma;a)s^{-\frac{1}{4}\sigma_3}
\end{equation*}
using the estimate
\begin{equation*}
	\|\widehat{L}_-(\cdot;s,\gamma;a)-I\|_{L^2(\Sigma_R)}\leq c\,t^{-1-2|\alpha|},\ \ \ \forall\, t\geq t_0,\ v\geq v_0:\ \ v=2t-2\left(\chi+\frac{a}{2}\right)\ln t.
\end{equation*}
\end{prop}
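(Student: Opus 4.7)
The plan is to apply the standard Deift-Zhou small-norm Riemann-Hilbert machinery to the conjugated problem for $\widehat{L}(z)=s^{\frac{1}{4}\sigma_3}L(z)s^{-\frac{1}{4}\sigma_3}$. Conjugation by $s^{\frac{1}{4}\sigma_3}$ is essential because the raw jump matrices $G_Q$ contain entries that are not uniformly small in $t$, whereas the rescaled matrices $\widehat{G}_Q$ are, as already reflected in Proposition \ref{bDZ:1} and the preceding proposition concerning $\widehat{G}_Q$ on $\partial D(0,r)$.

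First, I combine the two available small-norm estimates into a single inequality on the full contour $\Sigma_R$: Proposition \ref{bDZ:1} gives $\|\widehat{G}_Q(\cdot;s,\gamma;a)-I\|_{L^2\cap L^{\infty}(\Sigma_R\setminus\partial D(0,r))}\leq c/t$, while on $\partial D(0,r)$, the preceding proposition (whose rank-one factorization $E_0^{\pm}$ was precisely designed to absorb the $R_1^{\pm}$ terms) yields $\|\widehat{G}_Q(\cdot;s,\gamma;a)-I\|_{L^2\cap L^{\infty}(\partial D(0,r))}\leq c\,t^{-1-2|\alpha|}$. In particular, $\|\widehat{G}_Q-I\|_{L^{\infty}(\Sigma_R)}\to 0$ uniformly in $v$ as $t\to+\infty$ along the Stokes line $v=2t-2(\chi+\frac{a}{2})\ln t$.

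Second, I reformulate the $\widehat{L}$-RHP as the singular integral equation
\begin{equation*}
(I-C_{\widehat{G}_Q-I})[\widehat{L}_- - I]=C_-[\widehat{G}_Q-I],\qquad C_{\widehat{G}_Q-I}[f]:=C_-\bigl[f(\widehat{G}_Q-I)\bigr],
\end{equation*}
with $C_-$ the Cauchy minus projector on $L^2(\Sigma_R)$. Boundedness of $C_-$ together with the $L^{\infty}$ smallness of $\widehat{G}_Q-I$ implies $\|C_{\widehat{G}_Q-I}\|_{L^2\to L^2}<1$ for $t\geq t_0$, so the operator $I-C_{\widehat{G}_Q-I}$ is invertible by Neumann series. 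This furnishes the unique solution $\widehat{L}_-$ and, via the Cauchy representation, the solution of the RHP for every $z\notin\Sigma_R$, which can be dressed back to yield $L(z)$. The asserted $L^2$ bound then follows from the chain
\begin{equation*}
\|\widehat{L}_- - I\|_{L^2(\Sigma_R)}\leq \bigl\|(I-C_{\widehat{G}_Q-I})^{-1}\bigr\|_{L^2\to L^2}\cdot\|C_-[\widehat{G}_Q-I]\|_{L^2(\Sigma_R)}\leq c\|\widehat{G}_Q-I\|_{L^2(\Sigma_R)},
\end{equation*}
with the $L^2$ contribution on $\Sigma_R$ controlled by the circle $\partial D(0,r)$ where the improved exponent $-1-2|\alpha|$ was obtained.

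The main obstacle is not in the abstract solvability (which is by now routine) but in verifying that the rank-one perturbation built into the passage $R\mapsto Q\mapsto L$ through \eqref{unde:1}--\eqref{unde:2} is genuinely responsible for the $|\alpha|$-improvement on $\partial D(0,r)$. Concretely, one must check that the absorption of the $R_1^{\pm}(z)t^{-1\pm 2\alpha}$ terms into $E^{\pm}$ leaves a residual whose leading contribution is $R_1^{\mp}(z)t^{-1\mp 2\alpha}$, whose $L^2\cap L^{\infty}$ norm is $c\,t^{-1-2|\alpha|}$ uniformly in the sign of $\alpha$; and that the first-order pole at $z=0$ introduced by $E^{\pm}$ is exactly compensated by the prefactor $\{zI+B^{\pm}\}/z$ in \eqref{unde:2} through the definition of $B^{\pm}$ in RHP \ref{bLfinal}(3), so that $L(z)$ (hence $\widehat{L}(z)$) is genuinely analytic at $z=0$ and the small-norm theory applies without the isolated singularity.
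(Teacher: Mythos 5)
Your proposal follows the same overall approach as the paper: the paper's argument here is the single sentence ``Hence, from general theory \cite{DZ}, the latter Proposition together with Proposition \ref{bDZ:1} implies,'' and you correctly unpack this into the standard singular-integral / Neumann-series argument for small-norm Riemann--Hilbert problems, including the correct identification of the conjugation by $s^{\frac{1}{4}\sigma_3}$ and the role of the factorization through $E^{\pm}$ in removing the pole at $z=0$ via $\{zI+B^{\pm}\}/z$.

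However, your final sentence contains a logical inversion that deserves attention. You write that the $L^2$ contribution on $\Sigma_R$ is ``controlled by the circle $\partial D(0,r)$ where the improved exponent $-1-2|\alpha|$ was obtained.'' This is backwards: since $0\leq|\alpha|<\frac{1}{2}$, we have $t^{-1-2|\alpha|}\leq t^{-1}$, so the bound on $\partial D(0,r)$ is the \emph{smaller} (better) of the two. The combined $L^2\cap L^{\infty}$ norm of $\widehat{G}_Q-I$ over $\Sigma_R$ is therefore dominated by the $c/t$ coming from Proposition \ref{bDZ:1} on $\Sigma_R\setminus\partial D(0,r)$ (in particular from $\partial D(1,r)$), not by the circle at the origin. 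Following your own chain $\|\widehat{L}_--I\|_{L^2}\lesssim\|\widehat{G}_Q-I\|_{L^2(\Sigma_R)}$, one obtains $\|\widehat{L}_--I\|_{L^2(\Sigma_R)}\leq c\,t^{-1}$, not $c\,t^{-1-2|\alpha|}$. Note that this is precisely where the Bessel case differs from the Airy case: in Airy one has $t^{-\frac{1}{2}-|\alpha|}$ on $\partial D(\frac{1}{2},r)$ versus $t^{-1}$ elsewhere, and since $-\frac{1}{2}-|\alpha|>-1$ the circle \emph{does} dominate, so \eqref{DZ:better} reads $c\,t^{-\frac{1}{2}-|\alpha|}$; in Bessel the inequality goes the other way. (It is worth checking separately whether the claimed $c\,t^{-1-2|\alpha|}$ on $\partial D(0,r)$ is even sharp for $k\geq 1$, since the term $R_2(z)t^{-1}$ in the factorization does not vanish when $\gamma_{k-1}\neq 0$; for $k=0$ it does vanish, which is consistent with the stated bound there.) Either way, the downstream integration in Section \ref{Bessinteg} absorbs an $\mathcal{O}(t^{-1})$ estimate just as well, so this is an issue of precision in the stated exponent rather than one that breaks the argument; but the claim that $\partial D(0,r)$ is the controlling piece should not be made.
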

\section{Extraction of large gap asymptotics}\label{Bessinteg}
We employ the central differential identity \eqref{diff:2}, i.e.
\begin{equation*}
	F(J_{\textnormal{Bess}};\gamma)\equiv\frac{\partial}{\partial s}\ln D(J_{\textnormal{Bess}};\gamma)=-\frac{\gamma}{2\pi\im}\e^{\im\pi a}\big(X^{-1}(z)X'(z)\big)_{21}\Big|_{z\rightarrow s}
\end{equation*}
and the limit is taken with $\textnormal{arg}(z-s)\in(\frac{\pi}{3},\pi)$.
\subsection{Asymptotics for the derivative} Tracing back the transformation sequence
\begin{equation*}
	X(z)\mapsto T(z)\mapsto S(z)\mapsto R(z)\mapsto Q(z)\mapsto L(z)
\end{equation*}
we derive for \eqref{diff:2} the following exact identity
\begin{align*}
	F(J_{\textnormal{Bess}};\gamma)=&\,-\frac{\gamma\e^{\im\pi a}}{2\pi\im s}\big(T^{-1}(w)T'(w)\big)_{21}\Big|_{w\rightarrow 1}=-\frac{\gamma\e^{\im\pi a}}{2\pi\im s}\e^{-2tg_+(1)}\big(S^{-1}(w)S'(w)\big)_{21}\Big|_{w\rightarrow 1}\\
	=&\,-\frac{\gamma\e^{\im\pi a}}{2\pi\im s}\Big\{\big(P^{(1)}(w)\big)^{-1}\big(P^{(1)}(w)\big)'\Big\}_{21}\Big|_{w\rightarrow 1}-\frac{\gamma\e^{\im\pi a}}{2\pi\im s}\Big\{\big(P^{(1)}(w)\big)^{-1}R^{-1}(w)R'(w)P^{(1)}(w)\Big\}_{21}\Big|_{w\rightarrow 1}.
\end{align*}
The first summand is derived from RHP \ref{HankelRHP}
\begin{equation*}
	-\frac{\gamma\e^{\im\pi a}}{2\pi\im s}\Big\{\big(P^{(1)}(w)\big)^{-1}\big(P^{(1)}(w)\big)'\Big\}_{21}\Big|_{w\rightarrow 1}=\gamma\left(-\frac{1}{4}+\frac{a}{2\sqrt{s}}+\frac{k}{\sqrt{s}}-\frac{k}{s}(a+k)\right)
\end{equation*}
and for the second we have from \eqref{unde:1} and \eqref{unde:2},
\begin{align*}
	-\frac{\gamma\e^{\im\pi a}}{2\pi\im s}\Big\{\big(P^{(1)}(w)\big)^{-1}&R^{-1}(w)R'(w)P^{(1)}(w)\Big\}_{21}\Big|_{w\rightarrow 1}=\frac{\gamma\e^{\im\pi a}}{2\pi\im s}\Big\{\big(P^{(1)}(w)\big)^{-1}L^{-1}(w)B^{\pm}L(w)P^{(1)}(w)\Big\}_{21}\Big|_{w\rightarrow 1}\\
	&-\frac{\gamma\e^{\im\pi a}}{2\pi\im s}\Big\{\big(P^{(1)}(w)\big)^{-1}L^{-1}(w)L'(w)P^{(1)}(w)\Big\}_{21}\Big|_{w\rightarrow 1}\\
	&\,=-\frac{\gamma}{2s}\left(1-\frac{2k}{t}\right)\e^{-2\pi\im\alpha}\Big\{L^{-1}(1)B^{\pm}L(1)\Big\}_{21}+\frac{\gamma}{2s}\left(1-\frac{2k}{t}\right)\e^{-2\pi\im\alpha}\Big\{L^{-1}(1)L'(1)\Big\}_{21}.
\end{align*}
Note that with Proposition \ref{bDZ:2}, for $z\in\mathbb{C}\backslash\Sigma_R$,
\begin{equation*}
	\widehat{L}(z)=I+\frac{1}{2\pi\im}\oint_{\partial D(0,r)}\big(\widehat{G}_Q(w)-I\big)\frac{\d w}{w-z}+\frac{1}{2\pi\im}\oint_{\partial D(1,r)}\big(\widehat{G}_Q(w)-I\big)\frac{\d w}{w-z}+\mathcal{O}\left(t^{-2-4|\alpha|}\right)
\end{equation*}
and hence by standard residue computations,
\begin{eqnarray*}
	\widehat{L}(1)&=&I-\e^{-\im\pi\alpha\sigma_3}\Big\{A_0^{\pm}t^{-1-2|\alpha|}+A_0t^{-1}+\mathcal{O}\left(t^{-\min\{2-2|\alpha|,2\}}\right)\Big\}\e^{\im\pi\alpha\sigma_3},\\
	\widehat{L}'(1)&=&\e^{-\im\pi\alpha\sigma_3}\Big\{A_2^{\pm}t^{-1-2|\alpha|}+A_2t^{-1}+\mathcal{O}\left(t^{-\min\{2-2|\alpha|,2\}}\right)\Big\}\e^{\im\pi\alpha\sigma_3}
\end{eqnarray*}
where
\begin{eqnarray*}
	A_0&=&\frac{1}{8}\left(1-\frac{2k}{t}\right)^{-1}\begin{pmatrix} -4a & -4a^2-\frac{2k}{3t-2k}\\ 3 & 4a\end{pmatrix}-\frac{\gamma_{k-1}}{\gamma_k}\begin{pmatrix}0&1\\1&0\end{pmatrix};\\
	 A_2&=&\frac{1}{8}\left(1-\frac{2k}{t}\right)^{-1}\begin{pmatrix}\ast & \ast\\ -4a^2-\frac{6k}{3t-2k} & \ast\end{pmatrix}+\frac{\gamma_{k-1}}{\gamma_k}\begin{pmatrix}0&1\\1&0\end{pmatrix}
\end{eqnarray*}
and
\begin{equation*}
	A_0^{\pm}=2^{\pm4k-1\pm 2a}\begin{pmatrix}1 & \pm 1\\ \mp 1 & -1\end{pmatrix}\begin{cases}-\im\gamma_{k-1},&(+)\\ \im\gamma_k^{-1},&(-)\end{cases};\ \ \ \ \ A_2^{\pm}=-A_0^{\pm}.
\end{equation*}
We now obtain
\begin{align*}
	&\Big\{L^{-1}(1)L'(1)\Big\}_{21}=\e^{2\pi\im\alpha}s^{\frac{1}{2}}\Big\{A_2^{\pm}t^{-1-2|\alpha|}+A_2t^{-1}+\mathcal{O}\left(t^{-\min\{2-2|\alpha|,2\}}\right)\Big\}_{21}\\
	&=\e^{2\pi\im\alpha}s^{\frac{1}{2}}\left[-\frac{a^2}{2t}+\frac{k}{t}(k+a)\pm 2^{\pm 4k-1\pm 2a}t^{-1-2|\alpha|}\begin{cases}-\im\gamma_{k-1},&(+)\\ \im\gamma_k^{-1},&(-)\end{cases}+
	\mathcal{O}\left(t^{-\min\{2-2|\alpha|,2\}}\right)\right\}
\end{align*}
and with
\begin{equation*}
	L^{\pm 1}(0)=\e^{-\im\pi\alpha\sigma_3}s^{-\frac{1}{4}\sigma_3}\left[I+\mathcal{O}\left(t^{-1-2|\alpha|}\right)\right]s^{\frac{1}{4}\sigma_3}\e^{\im\pi\alpha\sigma_3},
\end{equation*}
also
\begin{equation*}
	B^{\pm}=\sigma^{\pm}\e^{-\im\pi\alpha\sigma_3}s^{-\frac{1}{4}\sigma_3}\left\{I+\mathcal{O}\left(t^{-1-2|\alpha|}\right)\right\}\begin{pmatrix}1 & \mp 1\\ \pm 1 & -1 \end{pmatrix}\left\{I+\mathcal{O}\left(t^{-1-2|\alpha|}\right)\right\}s^{\frac{1}{4}\sigma_3}\e^{\im\pi\alpha\sigma_3},
\end{equation*}
so that
\begin{equation*}
	\Big\{L^{-1}(1)B^{\pm}L(1)\Big\}_{21}=\e^{2\pi\im\alpha}s^{\frac{1}{2}}\sigma^{\pm}\left\{\begin{pmatrix}1 & \mp 1\\ \pm 1 & -1\end{pmatrix}+\mathcal{O}\left(t^{-1-2|\alpha|}\right)\right\}_{21}.
\end{equation*}
We summarize (using that $\gamma=1-\e^{-\varkappa_{\textnormal{Bess}}t}=1+\mathcal{O}\left(t^{-\infty}\right)$),
\begin{equation}\label{Bessdifffinal}
	F(J_{\textnormal{Bess}};\gamma)=-\frac{1}{4}+\frac{a}{2\sqrt{s}}-\frac{a^2}{4s}\mp\frac{\sigma^{\pm}}{2\sqrt{s}}+\frac{k}{\sqrt{s}}-\frac{k}{2s}(k+a)+\mathcal{O}\left(t^{-2-2|\alpha|}\right)
\end{equation}
and can now move to the integration.
\subsection{Integration of expansion \eqref{Bessdifffinal}} Start with $k=0$ and $0\leq\alpha<\frac{1}{2}$, i.e. 
\begin{equation}\label{intb:1}
	F(J_{\textnormal{Bess}};\gamma)=-\frac{1}{4}+\frac{a}{2\sqrt{s}}-\frac{a^2}{4s}-\frac{\sigma^+}{2\sqrt{s}}+\mathcal{O}\left(t^{-2-2\alpha}\right),\ \ \ \sigma^+=-\frac{\Gamma(1+a)\pi^{-1}2^{-2a-2}\,t^{-1+2\alpha}}{1+\Gamma(1+a)\pi^{-1}2^{-2a-3}\,t^{-1+2\alpha}}
\end{equation}
and make use of
\begin{lem} Let $t=t(s)=s^{\frac{1}{2}}$ and $0<\hat{s}_0<s$. For any $f\in L^1(\hat{s}_0,s)$, we have
\begin{equation*}
	\int_{\hat{s}_0}^sf\big(t(u)\big)\d u = 2\int_{\hat{t}_0}^twf(w)\,\d w;\ \ \ \ \ \hat{t}_0=\hat{s}_0^{\frac{1}{2}}>0.
\end{equation*}
\end{lem}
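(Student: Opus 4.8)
The plan is to prove the identity by the elementary substitution $w=t(u)=\sqrt{u}$, equivalently $u=w^2$, in the Lebesgue integral on the left-hand side; the only thing deserving comment is that this substitution is legitimate for a general integrable (not merely continuous) $f$. First I would record the structural facts about the change of variables: the map $\varphi(w)=w^2$ restricts to a strictly increasing $C^1$ bijection of the compact interval $[\hat t_0,t]$ onto $[\hat s_0,s]$, with inverse $u\mapsto\sqrt u=t(u)$, which is $C^1$ on $[\hat s_0,s]$ precisely because $\hat s_0>0$. On $[\hat t_0,t]$ one has $\varphi'(w)=2w$, a continuous function with $0<2\hat t_0\le\varphi'(w)\le 2t$; in particular $\varphi$ is absolutely continuous with a Jacobian that is bounded and bounded away from zero, and there is no singularity to worry about since $w=0\notin[\hat t_0,t]$.

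Next I would invoke the change-of-variables theorem for Lebesgue integrals under a $C^1$ diffeomorphism (equivalently, absolute continuity of $\varphi$ together with the fundamental theorem of calculus), which is valid for any $L^1$ integrand. Using that $t(w^2)=\sqrt{w^2}=w$ for $w\in[\hat t_0,t]\subset(0,\infty)$ and $\d u=2w\,\d w$, this yields
\begin{equation*}
	\int_{\hat s_0}^s f\big(t(u)\big)\,\d u=\int_{\hat t_0}^t f\big(t(w^2)\big)\,2w\,\d w=2\int_{\hat t_0}^t w\,f(w)\,\d w,
\end{equation*}
which is exactly the asserted formula. Integrability of the right-hand integrand is automatic from the same substitution applied to $|f|$, since $\int_{\hat t_0}^t 2w\,|f(w)|\,\d w=\int_{\hat s_0}^s|f(t(u))|\,\d u<\infty$; and the whole computation is insensitive to $f$ being scalar-, complex-, or matrix-valued, which is the generality needed when the lemma is applied to the right-hand side of \eqref{Bessdifffinal} and to the various $\mathcal{O}$-remainders there.

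I do not expect a genuine obstacle: the proof is a one-line change of variables, and the sole technical nicety — validity of the substitution for merely $L^1$ (rather than continuous) integrands — follows immediately from $\varphi'$ being continuous, positive, and bounded on the compact interval $[\hat t_0,t]$. If one prefers to avoid quoting the abstract substitution theorem, the same conclusion can be reached by first proving it for $f\in C([\hat t_0,t])$ via the classical substitution rule and then extending to $f\in L^1$ by density, the error control being provided by $\|\,2w\,(f-g)(w)\|_{L^1(\hat t_0,t)}\le 2t\,\|f-g\|_{L^1(\hat t_0,t)}$.
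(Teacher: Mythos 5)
Your argument is correct and is the only sensible one: the lemma is the substitution $u=w^2$, $\d u = 2w\,\d w$, with the $L^1$ validity guaranteed because $\varphi(w)=w^2$ is a $C^1$ diffeomorphism with Jacobian bounded above and away from zero on the compact interval $[\hat t_0,t]\subset(0,\infty)$. The paper states the lemma without proof, regarding exactly this substitution as immediate, so your write-up simply makes explicit the routine justification that the paper leaves implicit.
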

Since
\begin{equation*}
	\varkappa_{_\textnormal{Bess}}=\frac{v}{t}=2-2\left(\chi+\frac{a}{2}\right)\frac{\ln t}{t};\ \ \ \ v=-\ln(1-\gamma)>0,\ \ \chi=k+\alpha
\end{equation*}
we have that
\begin{equation*}
	t^{2\alpha}=t^{-2k-a}\e^{-v+2t},
\end{equation*}
and thus in \eqref{intb:1},
\begin{align*}
	-\frac{1}{2}&\int_{\hat{s}_0}^s\sigma^+\big(t(u)\big)\frac{\d u}{u^{\frac{1}{2}}}=\int_{\hat{t}_0}^t\frac{2d_0(a)w^{-1-a}\e^{-v}\e^{2w}}{1+d_0(a)w^{-1-a}\e^{-v}\e^{2w}}\d w\\
	&\,=\ln\Big(1+d_0(a)w^{-1-a}\e^{-v}\e^{2w}\Big)\bigg|_{w=\hat{t}_0}^t+\frac{1}{2}(1+a)\int_{\hat{t}_0}^t\frac{2d_0(a)w^{-1-a}\e^{-v}\e^{2w}}{1+d_0(a)w^{-1-a}\e^{-v}\e^{2w}}\frac{\d w}{w}
\end{align*}
with $d_0(a)=\Gamma(1+a)\pi^{-1}2^{-2a-3}$. Suppose $t\geq t_0,v\geq v_0$ are sufficiently large such that
\begin{equation*}
	0<\frac{1}{2}\left(v+a\ln t\right)\leq t\leq\frac{1}{2}\big(v+(1+a)\ln t\big),
\end{equation*}
and thus the base point of integration equals $\hat{s}_0=\frac{1}{4}(v+a\ln t)^2$ or equivalently $\hat{t}_0=\frac{1}{2}(v+a\ln t)>0$. Then
\begin{equation*}
	-\frac{1}{2}\int_{\hat{s}_0}^s\sigma^+\big(t(u)\big)\frac{\d u}{u^{\frac{1}{2}}}=\ln\Big(1+d_0(a)t^{-1-a}\e^{2t-v}\Big)+\mathcal{O}\left(t^{-1}\right)
\end{equation*}
and we can integrate \eqref{intb:1} as follows,
\begin{eqnarray}
	\ln\left(\frac{\det(I-\gamma K_{\textnormal{Bess}})\big|_{L^2(0,s)}}{\det(I-\gamma K_{\textnormal{Bess}})\big|_{L^2(0,\hat{s}_0)}}\right)&=&-\frac{1}{4}(s-\hat{s}_0)+a\big(\sqrt{s}-\sqrt{\hat{s}_0}\,\big)-\frac{a^2}{4}\ln\left|\frac{s}{\hat{s}_0}\right|+\ln\Big(1+d_0(a)t^{-1-a}\e^{2t-v}\Big)\nonumber\\
	&&+\mathcal{O}\left(t^{-1}\right)\label{intb:2}
\end{eqnarray}
where we use that
\begin{equation*}
	\int_{\hat{s}_0}^s\mathcal{O}\left(t^{-2-2\alpha}(u)\right)\d u=\int_{\hat{t}_0}^t\mathcal{O}\left(w^{-1-2\alpha}\right)\d w=\int_{\hat{t}_0}^t\mathcal{O}\left(w^{-1+a}\e^{-2w}\e^v\right)\d w=\mathcal{O}\left(t^{-1}\right)
\end{equation*}
uniformly as $t\rightarrow+\infty,\gamma\uparrow 1$ such that $0<\frac{1}{2}(v+a\ln t)\leq t\leq\frac{1}{2}(v+(1+a)\ln t)$. But with \eqref{B:1}, we know that
\begin{equation*}
	\ln\det(I-\gamma K_{\textnormal{Bess}})\Big|_{L^2(0,\hat{s}_0)}=-\frac{1}{4}\hat{s}_0+a\sqrt{\hat{s}_0}-\frac{a^2}{4}\ln \hat{s}_0+\ln\tau_a+\mathcal{O}\left(\hat{t}_0^{-1}\right),\ \ \ \ \tau_a=\frac{G(1+a)}{(2\pi)^{\frac{a}{2}}}
\end{equation*}
since $\varkappa_{_\textnormal{Bess}}=\frac{v}{t(\hat{s}_0)}\geq 2-a\frac{\ln t}{t}$. Substituting the latter back into \eqref{intb:2} we have thus
\begin{prop}\label{propb:1} For any given $a>-1$ there exist $t_0(a)>0$ and $v_0(a)>0$ such that
\begin{equation*}
	\ln\det(I-\gamma K_{\textnormal{Bess}})\Big|_{L^2(0,s)}=-\frac{s}{4}+a\sqrt{s}-\frac{a^2}{4}\ln s+\ln\tau_a+\ln\Big(1+d_0(a)t^{-1-a}\e^{2t-v}\Big)+\mathcal{O}\left(t^{-1}\right)
\end{equation*}
uniformly for
\begin{equation*}
	t\geq t_0,\ v\geq v_0:\ \ \ \ \ 0<\frac{1}{2}(v+a\ln t)\leq t\leq\frac{1}{2}\big(v+(1+a)\ln t\big)
\end{equation*}
with $d_0(a)=\Gamma(1+a)\pi^{-1}2^{-2a-3}$.
\end{prop}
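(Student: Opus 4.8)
The plan is to integrate the differential identity \eqref{diff:2} in the endpoint $s$ over a suitably chosen interval and then fix the integration constant by matching against the known large$-s$ behaviour of the hard-edge gap probability at $\gamma=1$. The statement we must prove is the ``$k=0$'' seed of the inductive scheme; all the heavy analytic machinery has already been assembled, so what remains is an integration together with a careful choice of base point.

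First I would invoke the nonlinear steepest descent analysis of Sections \ref{Bessp1}--\ref{Bessp2}: tracing the chain of transformations $X\mapsto T\mapsto S\mapsto R\mapsto Q\mapsto L$, and using the small-norm estimate of Proposition \ref{bDZ:2} together with the explicit model functions $P^{(\infty)},P^{(1)},P^{(0)}$, yields the asymptotic expansion \eqref{Bessdifffinal} for $F(J_{\textnormal{Bess}};\gamma)=\partial_s\ln D(J_{\textnormal{Bess}};\gamma)$, uniformly in the Stokes regime \eqref{Bscale}. Specializing to $k=0$ and $0\leq\alpha<\tfrac12$ collapses \eqref{Bessdifffinal} to \eqref{intb:1}, with the rational function $\sigma^+$ written out explicitly in terms of $t=\sqrt{s}$. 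Next I would integrate \eqref{intb:1} from a base point $\hat{s}_0$ to $s$ via the change-of-variables lemma that turns $\int_{\hat{s}_0}^s f(t(u))\,\d u$ into $2\int_{\hat{t}_0}^t w f(w)\,\d w$. The polynomial-in-$s$ terms integrate at once; the sensitive contribution is $-\tfrac12\int\sigma^+(t(u))u^{-1/2}\,\d u$. Here one uses the Stokes relation $t^{2\alpha}=t^{-2k-a}\e^{-v+2t}$ — a direct consequence of \eqref{Bscale} with $\chi=k+\alpha$ — to rewrite the integrand as $2d_0(a)w^{-1-a}\e^{2w-v}\big(1+d_0(a)w^{-1-a}\e^{2w-v}\big)^{-1}$, which up to a lower-order tail is the logarithmic derivative of $\ln\big(1+d_0(a)w^{-1-a}\e^{2w-v}\big)$; an integration by parts isolates the announced factor $\ln\big(1+d_0(a)t^{-1-a}\e^{2t-v}\big)$ plus an $\mathcal{O}(t^{-1})$ remainder, and this is exactly \eqref{intb:2}.

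The choice of base point is the organizing trick: I would take $\hat{t}_0=\tfrac12(v+a\ln t)$, equivalently $\hat{s}_0=\tfrac14(v+a\ln t)^2$, so that at $\hat{s}_0$ one has $\varkappa_{_\textnormal{Bess}}=\tfrac{v}{t(\hat{s}_0)}\geq 2-a\tfrac{\ln t}{t}$, i.e. $\hat{s}_0$ sits on the boundary of the admissible region where all product factors have already migrated into the error term. There the refined hard-edge asymptotics \eqref{B:1} of Appendix \ref{Bessapp} applies and gives $\ln\det(I-\gamma K_{\textnormal{Bess}})|_{L^2(0,\hat{s}_0)}=-\tfrac14\hat{s}_0+a\sqrt{\hat{s}_0}-\tfrac{a^2}{4}\ln\hat{s}_0+\ln\tau_a+\mathcal{O}(\hat{t}_0^{-1})$ with $\tau_a=G(1+a)(2\pi)^{-a/2}$. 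Substituting this into \eqref{intb:2} and collecting the $\hat{s}_0$-terms (they cancel against the lower limits of the elementary integrals) produces the claimed expansion, with $t_0,v_0$ chosen large enough that both Proposition \ref{bDZ:2} and \eqref{B:1} are in force.

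\textbf{Main obstacle.} The genuine difficulty is the uniformity of the error estimate across the whole double-scaling window $0<\tfrac12(v+a\ln t)\leq t\leq\tfrac12\big(v+(1+a)\ln t\big)$: one must verify that $\int_{\hat{t}_0}^t\mathcal{O}(w^{-1-2\alpha})\,\d w=\mathcal{O}(t^{-1})$ uniformly — again by rewriting $w^{-2\alpha}=w^{-a}\e^{2w-v}$ so that the integral is dominated by its endpoint — and that the three sources of $\mathcal{O}(t^{-1})$ error (the steepest-descent remainder in \eqref{Bessdifffinal}, the integration by parts, and the base-point expansion \eqref{B:1}) combine additively without any logarithmic accumulation. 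Everything else is routine bookkeeping.
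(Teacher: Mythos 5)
Your proposal reproduces the paper's own proof of Proposition \ref{propb:1}: integrate \eqref{Bessdifffinal} specialized to $k=0$ via the change-of-variables lemma, use the Stokes relation to recast $\sigma^+$ as a logarithmic derivative plus an $\mathcal{O}(t^{-1})$ remainder, and pick the base point $\hat{t}_0=\tfrac12(v+a\ln t)$ on the boundary of the admissible window so that Appendix \ref{Bessapp} supplies the integration constant $\ln\tau_a$. The only slip is a sign in your exponent relation (from \eqref{Bscale} with $k=0$ one has $w^{2\alpha}=w^{-a}\e^{2w-v}$, hence $w^{-2\alpha}=w^{a}\e^{v-2w}$, not $w^{-a}\e^{2w-v}$); with that corrected the exponential decay of $\e^{v-2w}$ still dominates and gives $\int_{\hat{t}_0}^t\mathcal{O}\left(w^{-1-2\alpha}\right)\d w=\mathcal{O}\left(t^{-1}\right)$ uniformly, exactly as you intended.
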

In the general case $k\in\mathbb{Z}_{\geq 0}$ we let
\begin{equation*}
	\hat{t}_k=\frac{1}{2}\left(v+2\left(k-\frac{1}{2}+\frac{a}{2}\right)\ln t\right),\ \ \ \ \ \hat{t}_k'=\frac{1}{2}\left(v+2\left(k+\frac{a}{2}\right)\ln t\right),\ \ k\in\mathbb{Z}_{\geq 1}
\end{equation*}
and first derive the following two expansions.
\begin{lem}\label{lemb:1} For $\hat{t}_k\leq t\leq \hat{t}_k'$, i.e. $\alpha\in[-\frac{1}{2},0]$,
\begin{align*}
	&\ln\left(\frac{\det(I-\gamma K_{\textnormal{Bess}})\big|_{L^2(0,s)}}{\det(I-\gamma K_{\textnormal{Bess}})\big|_{L^2(0,\hat{s}_k)}}\right)=-\frac{1}{4}(s-\hat{s}_k)+a\big(\sqrt{s}-\sqrt{\hat{s}_k}\,\big)-\frac{a^2}{4}\ln\left|\frac{s}{\hat{s}_k}\right|+2k\big(\sqrt{s}-\sqrt{\hat{s}_k}\,\big)\\
	&-\frac{k}{2}(k+a)\ln\left|\frac{s}{\hat{s}_k}\right|+\ln\Big(1+\im\gamma_{k-1}2^{4k+2a-2}t^{-1+a+2k}\e^{-2t+v}\Big)-\ln\Big(1+\im\gamma_{k-1}2^{4k+2a-2}\Big)+\mathcal{O}\left(t^{-1}\right).
\end{align*}
\end{lem}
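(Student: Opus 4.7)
The argument mirrors the proof of Lemma \ref{lem:1} in the Airy case. Starting from expansion \eqref{Bessdifffinal} with the $(-)$ branch (appropriate for $\alpha\in[-\frac{1}{2},0]$), the plan is to integrate $F(J_{\textnormal{Bess}};\gamma)$ in $s$ from $\hat s_k$ to $s$. First, I would apply the change-of-variables lemma stated just above \eqref{intb:2} (with $u=w^2$, $\sqrt{u}=w$) to convert the $s$-integral into a $w$-integral on $[\hat t_k,t]$. The polynomial-in-$1/\sqrt{u}$ contributions $-\frac{1}{4}+\frac{a}{2\sqrt{u}}-\frac{a^2}{4u}+\frac{k}{\sqrt{u}}-\frac{k(k+a)}{2u}$ integrate in closed form to produce, respectively, the terms $-\frac14(s-\hat s_k)$, $a(\sqrt{s}-\sqrt{\hat s_k})$, $-\frac{a^2}{4}\ln|s/\hat s_k|$, $2k(\sqrt{s}-\sqrt{\hat s_k})$, and $-\frac{k}{2}(k+a)\ln|s/\hat s_k|$ appearing in the statement.

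The heart of the argument is the integration of the $\sigma^-$ term. The plan is to use the scaling identity, valid on the Stokes curve $v=2t-2(\chi+\frac{a}{2})\ln t$ with $\chi=k+\alpha$,
\begin{equation*}
	t^{-2\alpha}=t^{2k+a}\,\e^{v-2t},
\end{equation*}
to eliminate $\alpha$ from $\sigma^-$ and rewrite it in a purely $w$- and $v$-dependent form
\begin{equation*}
	\sigma^-(w)=\frac{-2f(w)}{1+f(w)},\qquad f(w)=\im\gamma_{k-1}2^{4k+2a-2}\,w^{-1+2k+a}\,\e^{v-2w}.
\end{equation*}
Since $f'(w)=f(w)\bigl[(2k+a-1)/w-2\bigr]$, I would then decompose
\begin{equation*}
	\sigma^-(w)=\frac{f'(w)}{1+f(w)}-\frac{2k+a-1}{w}\cdot\frac{f(w)}{1+f(w)}.
\end{equation*}
The first summand integrates exactly to $\ln(1+f(w))$; the second contributes only to the error, bounded uniformly by $O(t^{-1})$ on the Stokes strip $[\hat t_k,\hat t_k']$ by an exponential-decay estimate analogous to \eqref{int:clue2}.

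Evaluating $\ln(1+f(w))\bigr|_{\hat t_k}^{t}$, the upper endpoint yields precisely $\ln\bigl(1+\im\gamma_{k-1}2^{4k+2a-2}t^{-1+a+2k}\e^{-2t+v}\bigr)$. At the lower endpoint, the defining identity $v-2\hat t_k=-(2k-1+a)\ln t$ gives $\hat t_k^{-1+2k+a}\e^{v-2\hat t_k}=(\hat t_k/t)^{-1+2k+a}=1+O(\ln t/t)$, producing $-\ln(1+\im\gamma_{k-1}2^{4k+2a-2})$ up to admissible error. Finally, the remainder $\mathcal{O}(t^{-2-2|\alpha|})$ in \eqref{Bessdifffinal} integrates to $\mathcal{O}(t^{-1})$ by the same device: $w^{-2|\alpha|}=w^{-2\alpha}\leq w^{2k+a}\e^{v-2w}$ is exponentially small away from the Stokes line, and the integral against $w\,dw$ in the strip is controlled. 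The main technical points are verifying uniformity of the error estimates across the full strip $[\hat t_k,\hat t_k']$ and checking that the residual $\int_{\hat t_k}^t w^{-1}f/(1+f)\,dw$ indeed falls into the $\mathcal{O}(t^{-1})$ bucket --- routine given the explicit exponential structure but the most computation-intensive piece.
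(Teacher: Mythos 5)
Your proposal is correct and follows the same route as the paper: the paper's proof merely records the integral identity for $\sigma^-$ and invokes \eqref{Bessdifffinal}, and your logarithmic-derivative decomposition of $\sigma^-$ together with the endpoint bookkeeping is exactly the computation the paper displays explicitly in the $k=0$ case just before Proposition \ref{propb:1}. One precision worth making: the lower endpoint yields $\hat t_k^{2k+a-1}\e^{v-2\hat t_k}=(\hat t_k/t)^{2k+a-1}=1+\mathcal{O}(\ln t/t)$ rather than $1+\mathcal{O}(t^{-1})$ (since $t-\hat t_k$ is of order $\ln t$), which is consistent with the $\mathcal{O}(\ln t/t)$ error in Theorem \ref{nice:b} downstream but slightly coarser than the $\mathcal{O}(t^{-1})$ the Lemma nominally states, so the phrase ``admissible error'' should be spelled out.
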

\begin{proof}
We use that
\begin{equation*}
	\frac{1}{2}\int_{\hat{s}_k}^s\sigma^-\big(t(u)\big)\frac{\d u}{u^{\frac{1}{2}}} =\ln\Big(1+\im\gamma_{k-1}2^{4k+2a-2}t^{-1+a+2k}\e^{-2t+v}\Big)-\ln\Big(1+\im\gamma_{k-1}2^{4k+2a-2}\Big)+\mathcal{O}\left(t^{-1}\right)
\end{equation*}
and the stated identity follows now from \eqref{Bessdifffinal}.
\end{proof}
\begin{lem}\label{lemb:2} For $\hat{t}_k'\leq t\leq \hat{t}_{k+1}$, i.e. $\alpha\in[0,\frac{1}{2}]$,
\begin{align*}
	&\ln\left(\frac{\det(I-\gamma K_{\textnormal{Bess}})\big|_{L^2(0,s)}}{\det(I-\gamma K_{\textnormal{Bess}})\big|_{L^2(0,\hat{s}_k')}}\right)=-\frac{1}{4}(s-\hat{s}_k')+a\left(\sqrt{s}-\sqrt{\hat{s}_k'}\,\right)-\frac{a^2}{4}\ln\left|\frac{s}{\hat{s}_k'}\right|+2k\left(\sqrt{s}-\sqrt{\hat{s}_k'}\,\right)\\
	&-\frac{k}{2}(k+a)\ln\left|\frac{s}{\hat{s}_k'}\right|+\ln\Big(1-\im\gamma_k^{-1}2^{-4k-2a-2}t^{-1-a-2k}\e^{2t-v}\Big)+\mathcal{O}\left(t^{-1}\right)
\end{align*}
\end{lem}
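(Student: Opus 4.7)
The strategy mirrors exactly the one used in Lemma \ref{lemb:1} and in the $k=0$ case handled in Proposition \ref{propb:1}, but now we have to integrate the $\sigma^+$-term in \eqref{Bessdifffinal} instead of the $\sigma^-$-term. Concretely, starting from
\begin{equation*}
F(J_{\textnormal{Bess}};\gamma)=-\frac{1}{4}+\frac{a}{2\sqrt{s}}-\frac{a^2}{4s}-\frac{\sigma^+}{2\sqrt{s}}+\frac{k}{\sqrt{s}}-\frac{k}{2s}(k+a)+\mathcal{O}\bigl(t^{-2-2\alpha}\bigr),
\end{equation*}
valid throughout the present regime $\alpha\in[0,\tfrac{1}{2}]$, I would integrate both sides in $s$ from $\hat{s}_k'$ to $s$, converting to the $w=u^{1/2}$ variable via the change-of-variables Lemma preceding Proposition \ref{propb:1}. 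All but the $\sigma^+$-term produce elementary polynomial and logarithmic contributions together with an $\mathcal{O}(t^{-1})$ remainder (the latter using the same trick $t^{2\alpha}=t^{-2k-a}\e^{-v+2t}$ that was employed in \eqref{intb:2} to bound the $\mathcal{O}(t^{-2-2\alpha})$ remainder after integration).

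For the $\sigma^+$-integral, substitute $u=w^2$ to get
\begin{equation*}
-\frac{1}{2}\int_{\hat{s}_k'}^s\sigma^+\bigl(t(u)\bigr)\frac{\d u}{u^{\frac{1}{2}}}=-\int_{\hat{t}_k'}^t\sigma^+(w)\,\d w,
\end{equation*}
and use \eqref{Bscale} to rewrite $t^{-1+2\alpha}=t^{-1-2k-a}\e^{2t-v}$. The integrand becomes
\begin{equation*}
-\sigma^+(w)=\frac{-\im\gamma_k^{-1}2^{-4k-2a-1}w^{-1-2k-a}\e^{2w-v}}{1-\im\gamma_k^{-1}2^{-4k-2a-2}w^{-1-2k-a}\e^{2w-v}},
\end{equation*}
which, up to a subleading factor coming from the $w^{-1-2k-a}$ prefactor, is the logarithmic derivative of the denominator. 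Explicitly,
\begin{equation*}
\frac{\d}{\d w}\ln\Bigl(1-\im\gamma_k^{-1}2^{-4k-2a-2}w^{-1-2k-a}\e^{2w-v}\Bigr)=-\sigma^+(w)-\frac{(1+2k+a)}{w}\cdot\frac{\im\gamma_k^{-1}2^{-4k-2a-2}w^{-1-2k-a}\e^{2w-v}}{1-\im\gamma_k^{-1}2^{-4k-2a-2}w^{-1-2k-a}\e^{2w-v}},
\end{equation*}
so after integrating from $\hat{t}_k'$ to $t$ the error-term integral is $\mathcal{O}(t^{-1})$ by the same bound used in \eqref{intb:2}.

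What remains is to control the boundary contribution at $w=\hat{t}_k'$. By definition $\hat{t}_k'=\tfrac{1}{2}(v+2(k+\tfrac{a}{2})\ln t)$ and hence $\e^{2\hat{t}_k'-v}=t^{2k+a}$, whence the argument of the logarithm at the lower endpoint is $1-\im\gamma_k^{-1}2^{-4k-2a-2}(\hat{t}_k')^{-1-2k-a}t^{2k+a}=1+\mathcal{O}(t^{-1})$ because $\hat{t}_k'\asymp t$ in the regime $\hat{t}_k'\leq t\leq\hat{t}_{k+1}$. Consequently the lower-endpoint logarithm contributes only $\mathcal{O}(t^{-1})$ and can be absorbed in the error. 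Combining this with the elementary integrals of the remaining terms in $F(J_{\textnormal{Bess}};\gamma)$ yields exactly the stated expansion.

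The only genuine technical point---and the obstacle that needs the most care---is the boundary bookkeeping at $w=\hat{t}_k'$ together with the verification that the subleading $\ln t/t$ factors arising from $\hat{t}_k'\neq t$ are in fact controlled by the $\mathcal{O}(t^{-1})$ error. This is the analogue of the $-\ln(1+\im\gamma_{k-1}2^{4k+2a-2})$ term in Lemma \ref{lemb:1}, which there survives as a genuine constant because the corresponding lower endpoint $\hat{t}_k$ produces an $\mathcal{O}(1)$ argument; here the different placement of $\hat{t}_k'$ makes the corresponding constant invisible. Once this is checked, the rest of the derivation is a routine repetition of the argument that produced Proposition \ref{propb:1}.
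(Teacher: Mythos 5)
Your proof plan is correct and follows the same route as the paper's own (much terser) proof: convert to the $w$-variable, recognise $-\sigma^+(w)$ as the logarithmic derivative of $1+d_k(a)w^{-1-2k-a}\e^{2w-v}$ up to an $\mathcal{O}(w^{-1})$-weighted correction, integrate, and observe that the lower-endpoint contribution at $w=\hat{t}_k'$ vanishes to order $\mathcal{O}(t^{-1})$ because $\e^{2\hat{t}_k'-v}=t^{2k+a}$ cancels the algebraic prefactor. One small slip: in your displayed identity for $\frac{\d}{\d w}\ln(1-\im\gamma_k^{-1}2^{-4k-2a-2}w^{-1-2k-a}\e^{2w-v})$, the numerator of the remainder fraction should read $-\im\gamma_k^{-1}2^{-4k-2a-2}w^{-1-2k-a}\e^{2w-v}$ (i.e.\ $+d_k(a)w^{-1-2k-a}\e^{2w-v}$), so that the correct identity is $\frac{\d}{\d w}\ln(1+\psi)=-\sigma^+-\frac{1+2k+a}{w}\cdot\frac{\psi}{1+\psi}$ with $\psi=d_k(a)w^{-1-2k-a}\e^{2w-v}$; as written, the last fraction has a spurious sign built in through $\im\gamma_k^{-1}<0$. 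This is inconsequential for the final estimate since the correction term is $\mathcal{O}(t^{-1})$ after integration either way, and your observation about why the analogue of the $-\ln(1+\im\gamma_{k-1}2^{4k+2a-2})$ constant from Lemma \ref{lemb:1} does not appear here is accurate.
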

\begin{proof} Note that
\begin{equation*}
	-\frac{1}{2}\int_{\hat{s}_k'}^s\sigma^+\big(t(u)\big)\frac{\d u}{u^{\frac{1}{2}}}=\ln\Big(1-\im\gamma_k^{-1}2^{-4k-2a-2}t^{-1-a-2k}\e^{2t-v}\Big)+\mathcal{O}\left(t^{-1}\right),
\end{equation*}
and substitute this back into \eqref{Bessdifffinal}.
\end{proof}
At this point we follow the logic carried out for the Airy kernel determinant, i.e. first choose $k=1$ in Lemma \ref{lemb:1} and use Proposition \ref{propb:1}, 
\begin{equation*}
	\ln\det(I-\gamma K_{\textnormal{Ai}})\Big|_{L^2(0,\hat{s}_1)}=-\frac{\hat{s}_1}{4}+a\sqrt{\hat{s}_1}-\frac{a^2}{4}\ln \hat{s}_1+\ln\tau_a+\ln\Big(1+d_0(a)\Big)+\mathcal{O}\left(t^{-1}\right),
\end{equation*}
so that
\begin{equation*}
	\ln\det(I-\gamma K_{\textnormal{Bess}})\Big|_{L^2(0,s)}=-\frac{s}{4}+a\sqrt{s}-\frac{a^2}{4}\ln s+\ln\tau_a+\ln\Big(1+d_0(a)t^{-1-a}\e^{2t-v}\Big)+\mathcal{O}\left(\frac{\ln t}{t}\right)
\end{equation*}	
%	
%	-\frac{s}{4}+a\sqrt{s}-\frac{a^2}{4}\ln s+2(t-\hat{t}_1)+\ln\Big(1+c_0(a)t^{1+a}\e^{-2t+v}\Big)-\ln\big(c_0(a)t^a\big)\\
%	&\,+\ln\tau_a+\mathcal{O}\left(\max\left\{\frac{\ln t}{t},t^{-1+|a|}\right\}\right)=-\frac{s}{4}+a\sqrt{s}-\frac{a^2}{4}\ln s+\ln\tau_a+\ln\Big(1+c_0^{-1}(a)t^{-1-a}\e^{2t-v}\Big)\\
%	&+\mathcal{O}\left(\max\left\{\frac{\ln t}{t},t^{-1+|a|}\right\}\right)
%\end{align*}
and we used that
\begin{equation*}
	2\big(t-\hat{t}_1\big)=-\ln\left(t^{1+a}\e^{-2t+v}\right).
\end{equation*}
Summarizing,
\begin{prop}\label{propb:2} Given $a>-1$ there exist $t_0(a)>0$ and $v_0(a)>0$ such that
\begin{equation*}
	\ln\det(I-\gamma K_{\textnormal{Bess}})\Big|_{L^2(0,s)}=-\frac{s}{4}+a\sqrt{s}-\frac{a^2}{4}\ln s+\ln\tau_a+\ln\Big(1+d_0(a)t^{-1-a}\e^{2t-v}\Big)+\mathcal{O}\left(\frac{\ln t}{t}\right)%\mathcal{O}\left(\max\left\{\frac{\ln t}{t},t^{-1+|a|}\right\}\right)
\end{equation*}
uniformly for
\begin{equation*}
	t\geq t_0,\ v\geq v_0:\ \ \ \ \frac{1}{2}\left(v+(1+a)\ln t\right)\leq t\leq \frac{1}{2}\left(v+(2+a)\ln t\right).
\end{equation*}
\end{prop}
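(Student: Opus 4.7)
The proof follows the same inductive bootstrapping scheme deployed for the Airy case in the passage from Proposition~\ref{k0} to Proposition~\ref{k1} via Lemma~\ref{lem:1} with $k=1$. Specifically, I would invoke Lemma~\ref{lemb:1} with $k=1$ on the range $\hat{t}_1\leq t\leq \hat{t}_1'$ (equivalently $\alpha\in[-\frac{1}{2},0]$) and combine it with Proposition~\ref{propb:1} evaluated at the matching boundary point $s=\hat{s}_1$, where the latter applies since $t=\hat{t}_1$ lies on the right endpoint of its domain of validity.

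Concretely, Lemma~\ref{lemb:1} with $k=1$ produces
\begin{equation*}
	\ln\bigg(\frac{D(J_{\textnormal{Bess}};\gamma)|_{(0,s)}}{D(J_{\textnormal{Bess}};\gamma)|_{(0,\hat{s}_1)}}\bigg)=-\frac{s-\hat{s}_1}{4}+(a+2)\big(\sqrt{s}-\sqrt{\hat{s}_1}\big)-\left(\frac{a^2}{4}+\frac{1+a}{2}\right)\ln\bigg|\frac{s}{\hat{s}_1}\bigg|+\Lambda(t,v)+\mathcal{O}\!\left(t^{-1}\right),
\end{equation*}
where $\Lambda(t,v)=\ln(1+\im\gamma_0\,2^{2+2a}\,t^{1+a}\e^{-2t+v})-\ln(1+\im\gamma_0\,2^{2+2a})$, and I would use the key identification $\im\gamma_0\,2^{2+2a}=d_0(a)^{-1}$ with $d_0(a)=\Gamma(1+a)\pi^{-1}2^{-2a-3}$. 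Proposition~\ref{propb:1} at $\hat{s}_1$ supplies $\ln D(\hat{s}_1)$; since $2\hat{t}_1-v=(1+a)\ln t$, the boundary factor in Proposition~\ref{propb:1} evaluates as $d_0(a)\hat{t}_1^{-1-a}\e^{2\hat{t}_1-v}=d_0(a)(t/\hat{t}_1)^{1+a}=d_0(a)+\mathcal{O}(\ln t/t)$.

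Assembling these, the terms involving $\hat{s}_1$ reorganize via $2(\sqrt{s}-\sqrt{\hat{s}_1})=2(t-\hat{t}_1)=2t-v-(1+a)\ln t$ together with $\ln|s/\hat{s}_1|=2\ln(t/\hat{t}_1)=\mathcal{O}(\ln t/t)$. After the algebraic identity
\begin{equation*}
	-\ln(1+d_0^{-1})+\ln(1+d_0)=\ln d_0,
\end{equation*}
the factor $\ln\tau_a+\ln d_0+\Lambda(t,v)$ combines with the surviving $2t-v-(1+a)\ln t$ contribution to yield precisely $\ln\tau_a+\ln(1+d_0(a)t^{-1-a}\e^{2t-v})$, which is the desired product in the statement.

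The main obstacle is book-keeping the upgrade of the error term from $\mathcal{O}(t^{-1})$ in Proposition~\ref{propb:1} to $\mathcal{O}(\ln t/t)$ here: this loss is unavoidable because the substitution $\hat{t}_1\mapsto t$ in the power $\hat{t}_1^{-1-a}$ introduces corrections of order $(\ln t)/t$, and similarly the logarithmic $\ln|s/\hat{s}_1|$ terms contribute at the same order. One must verify carefully that the coefficients $\frac{a^2}{4}+\frac{1+a}{2}$ of $\ln|s/\hat{s}_1|$ remain bounded independently of $v$ on the indicated range, so the total error is genuinely $\mathcal{O}(\ln t/t)$ uniformly. This is exactly the analogue of Remark~\ref{arti} and the computation preceding Proposition~\ref{k2} in the Airy setting, and the strategy can be iterated to produce the general multi-factor expansion of Corollary~\ref{nicecor}-type for the Bessel kernel.
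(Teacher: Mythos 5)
Your proposal is correct and follows the same route as the paper: invoking Lemma~\ref{lemb:1} with $k=1$, using the identification $\im\gamma_0\,2^{2+2a}=d_0(a)^{-1}$, substituting Proposition~\ref{propb:1} at the base point $\hat{s}_1$, and then resolving $2(t-\hat{t}_1)=-\ln(t^{1+a}\e^{-2t+v})$ with the algebraic identity $\ln(1+d_0)-\ln(1+d_0^{-1})=\ln d_0$ to reassemble the product factor. Your identification of the two sources of the $\mathcal{O}(\ln t/t)$ error (the $\ln|s/\hat{s}_1|$ terms and the replacement $\hat{t}_1^{-1-a}\e^{2\hat{t}_1-v}\mapsto 1$) is also accurate and slightly more explicit than the paper's writeup.
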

Now we continue with $k=1$: use Proposition \ref{propb:2},
\begin{equation*}
	\ln\det(I-\gamma K_{\textnormal{Bess}})\Big|_{L^2(0,\hat{s}_1')}=-\frac{\hat{s}_1'}{4}+a\sqrt{\hat{s}_1'}-\frac{a^2}{4}\ln \hat{s}_1'+\ln\tau_a+\ln\Big(1+d_0(a)t\Big)+\mathcal{O}\left(\frac{\ln t}{t}\right),%\mathcal{O}\left(\max\left\{\frac{\ln t}{t},t^{-1+|a|}\right\}\right),
\end{equation*}
let $d_1(a)=\Gamma(2+a)\pi^{-1}2^{-7-2a}$, and thus back in Lemma \ref{lemb:2},
\begin{align*}
	\ln\det(I-&\gamma K_{\textnormal{Bess}})\Big|_{L^2(0,s)}=-\frac{s}{4}+a\sqrt{s}-\frac{a^2}{4}\ln s+\ln\tau_a+2(t-\hat{t}_1')+\ln\Big(1+d_0(a)t\Big)\\
	&+\ln\Big(1+d_1(a)t^{-3-a}\e^{2t-v}\Big)+\mathcal{O}\left(\frac{\ln t}{t}\right). %\mathcal{O}\left(\max\left\{\frac{\ln t}{t},t^{-1+|a|}\right\}\right).
\end{align*}
But since
\begin{equation*}
	2(t-\hat{t}_1')=\ln\Big(t^{-2-a}\e^{2t-v}\Big),
\end{equation*}
we obtain in fact
\begin{prop} Given $a>-1$, there exist $t_0(a)>0$ and $v_0(a)>0$ such that
\begin{equation*}
	\ln\det(I-\gamma K_{\textnormal{Bess}})\Big|_{L^2(0,s)}=-\frac{s}{4}+a\sqrt{s}-\frac{a^2}{4}\ln s+\ln\tau_a+\prod_{j=0}^1\ln\Big(1+d_j(a)t^{-2j-1-a}\e^{2t-v}\Big)+\mathcal{O}\left(\frac{\ln t}{t}\right)%\mathcal{O}\left(\max\left\{\frac{\ln t}{t},t^{-1+|a|}\right\}\right)
\end{equation*}
uniformly for
\begin{equation*}
	t\geq t_0,\ \ v\geq v_0:\ \ \ \frac{1}{2}\left(v+(2+a)\ln t\right)\leq t\leq \frac{1}{2}\left(v+(3+a)\ln t\right).
\end{equation*}
\end{prop}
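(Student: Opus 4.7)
The plan is to follow the iterative bootstrap pattern established in Propositions \ref{propb:1} and \ref{propb:2}, pushing the expansion one more ``node'' to the right. Concretely, I would use Proposition \ref{propb:2} as an initial condition at $s=\hat{s}_1'$, i.e.\ at the \emph{right} endpoint of its regime of validity, and then integrate the differential identity \eqref{Bessdifffinal} from $\hat{s}_1'$ up to $s$ using Lemma \ref{lemb:2} with $k=1$. Since Lemma \ref{lemb:2} covers exactly the window $\hat{t}_1' \leq t \leq \hat{t}_2$, which matches the claimed range $\frac{1}{2}(v+(2+a)\ln t)\leq t\leq \frac{1}{2}(v+(3+a)\ln t)$, no new Riemann--Hilbert analysis is needed; the asymptotic input has already been produced in Proposition \ref{bDZ:2}.

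First, I evaluate Proposition \ref{propb:2} at $t=\hat{t}_1'$. Using $2\hat{t}_1'-v=(2+a)\ln\hat{t}_1'$ so that $d_0(a)\hat{t}_1'^{-1-a}\e^{2\hat{t}_1'-v}=d_0(a)\hat{t}_1'$, the log-product factor collapses into $\ln(1+d_0(a)\hat{t}_1')$. Substituting this into the differential-identity output of Lemma \ref{lemb:2} with $k=1$ yields, after cancelling the polynomial parts $-s/4+a\sqrt{s}-\frac{a^2}{4}\ln s$ that rebuild along the interval, the expression
\begin{equation*}
\ln\det(I-\gamma K_{\textnormal{Bess}})\big|_{L^2(0,s)}=-\frac{s}{4}+a\sqrt{s}-\frac{a^2}{4}\ln s+\ln\tau_a+\ln(1+d_0(a)\hat{t}_1')+2(t-\hat{t}_1')+\ln\bigl(1+d_1(a)t^{-3-a}\e^{2t-v}\bigr)+\mathcal{O}\!\left(\tfrac{\ln t}{t}\right).
\end{equation*}

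The key algebraic step is then to repackage $\ln(1+d_0(a)\hat{t}_1')+2(t-\hat{t}_1')$ into the missing $j=0$ factor. Using the identity $2(t-\hat{t}_1')=\ln(t^{-2-a}\e^{2t-v})$ together with $\hat{t}_1'=t+\mathcal{O}(\ln t)$, I get
\begin{equation*}
\ln\bigl(1+d_0(a)\hat{t}_1'\bigr)+\ln\bigl(t^{-2-a}\e^{2t-v}\bigr)=\ln\bigl(d_0(a)t^{-1-a}\e^{2t-v}\bigr)+\mathcal{O}\!\left(\tfrac{1}{t}\right)=\ln\bigl(1+d_0(a)t^{-1-a}\e^{2t-v}\bigr)+\mathcal{O}\!\left(\tfrac{\ln t}{t}\right),
\end{equation*}
where the last equality uses that on the window $\hat{t}_1'\leq t\leq\hat{t}_2$ the quantity $d_0(a)t^{-1-a}\e^{2t-v}$ stays bounded away from zero (in fact it grows at least like a power of $t$), so that the additive ``$1$'' can be absorbed into an $\mathcal{O}(\ln t/t)$ correction by expanding the logarithm. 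Together with the already-present $\ln(1+d_1(a)t^{-3-a}\e^{2t-v})$ factor, this produces exactly $\ln\prod_{j=0}^{1}(1+d_j(a)t^{-2j-1-a}\e^{2t-v})$.

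The main obstacle — really the only delicate point — is keeping track of the error terms so that the final bound is $\mathcal{O}(\ln t/t)$ uniformly on the whole interval. Three independent sources of error must be collated: the $\mathcal{O}(\ln t/t)$ from the inductive hypothesis Proposition \ref{propb:2} at $\hat{s}_1'$, the $\mathcal{O}(t^{-1})$ tail produced by integrating $\mathcal{O}(t^{-2-2|\alpha|})$ in \eqref{Bessdifffinal} over $[\hat{s}_1',s]$ (here $\alpha\in[0,\frac{1}{2}]$, cf.\ the argument after Lemma \ref{lemb:2}), and the $\mathcal{O}(1/t)$ coming from the boundary identification $\hat{t}_1'=t+\mathcal{O}(\ln t)$ in the repackaging step above. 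Each of these is dominated by $\mathcal{O}(\ln t/t)$, and the worst of them — the bootstrapped error from Proposition \ref{propb:2} — saturates this bound, which explains why the final error cannot be improved beyond $\mathcal{O}(\ln t/t)$ at this stage of the induction.
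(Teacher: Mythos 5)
Your proposal is correct and reproduces the paper's own proof: the paper likewise evaluates Proposition \ref{propb:2} at $\hat{s}_1'$ to obtain the factor $\ln(1+d_0(a)t)$, inserts the result into Lemma \ref{lemb:2} with $k=1$, and then uses the identity $2(t-\hat{t}_1')=\ln\bigl(t^{-2-a}\e^{2t-v}\bigr)$ to repackage $\ln(1+d_0(a)t)+2(t-\hat{t}_1')$ into the missing $j=0$ factor modulo $\mathcal{O}(\ln t/t)$. The only small imprecision is in your error tagging of the boundary identification $\hat{t}_1'=t+\mathcal{O}(\ln t)$, which actually contributes $\mathcal{O}(\ln t/t)$ (not $\mathcal{O}(1/t)$) when it enters the argument of the logarithm, so it ties with, rather than is dominated by, the bootstrapped error from Proposition \ref{propb:2}; this does not change the final $\mathcal{O}(\ln t/t)$ bound.
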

\begin{remark} Also here the lower constraint on $t$ is artificial since for $t<\frac{1}{2}(v+2\ln t)$, the second factor of the product only contributes to the error term and we reproduce Proposition \ref{propb:2} with adjusted error estimate.
\end{remark}
At this point we would iterate the procedure (just as in the case of the Airy kernel) and obtain in the end the following result
\begin{theo}\label{nice:b} Given $q\in\mathbb{Z}_{\geq 1}$ and $a>-1$, there exist $t_0=t_0(q,a)>0$ and $v_0=v_0(q,a)>0$ such that
\begin{equation*}
	\ln\det(I-\gamma K_{\textnormal{Bess}})\Big|_{L^2(0,s)}=-\frac{s}{4}+a\sqrt{s}-\frac{a^2}{4}\ln s+\ln\tau_a+\prod_{j=0}^{q-1}\Big(1+d_j(a)t^{-2j-1-a}\e^{2t-v}\Big)+\mathcal{O}\left(\frac{\ln t}{t}\right)\\
	%&\,+\mathcal{O}\left(\max\left\{\frac{\ln t}{t},t^{-1+|a|}\right\}\right)
\end{equation*}
with $d_j(a)=j!\,\Gamma(1+a+j)\pi^{-1}2^{-4j-2a-3}$, uniformly for
\begin{equation*}
	t\geq t_0,\ \ v\geq v_0:\ \ \ 0<\frac{1}{2}(v+a\ln t)\leq t\leq\frac{1}{2}\left(v+(2q+a)\ln t\right).
\end{equation*}
\end{theo}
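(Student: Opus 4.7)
The plan is to prove Theorem \ref{nice:b} by induction on $q\in\mathbb{Z}_{\geq 1}$, following exactly the iterative scheme already executed in Section \ref{Airyp3} to establish Theorem \ref{nice:2}. The base of the induction is in place: Proposition \ref{propb:1} handles $q=1$ on $[\hat{t}_0',\hat{t}_1]$, Proposition \ref{propb:2} then extends it to $[\hat{t}_1,\hat{t}_1']$, and the unlabelled proposition with the two-factor product advances it to $[\hat{t}_1',\hat{t}_2]$. The inductive engine consists of alternately invoking Lemma \ref{lemb:2} to pass a transition of the type $\hat{t}_k'\mapsto\hat{t}_{k+1}$ (which adds a new factor to the product) and Lemma \ref{lemb:1} to pass a transition of the type $\hat{t}_k\mapsto\hat{t}_k'$ (which preserves the factor count but carries the expansion through the adjacent range of $\alpha$):
\begin{equation*}
\cdots\ \mapsto\ [\hat{t}_{k-1}',\hat{t}_k]\ \mapsto\ [\hat{t}_k,\hat{t}_k']\ \mapsto\ [\hat{t}_k',\hat{t}_{k+1}]\ \mapsto\ \cdots
\end{equation*}

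For the inductive step, assume the expansion asserted by Theorem \ref{nice:b} is already known on $[\hat{t}_0',\hat{t}_q']$. Substituting this expansion at the base point $\hat{s}_q'$ into Lemma \ref{lemb:2} with $k=q$ extends the validity to $[\hat{t}_q',\hat{t}_{q+1}]$: the linear boundary term $2q(t-\hat{t}_q')$ supplied by the lemma together with the base-point value of the top factor of the inductive hypothesis reorganize via an elementary exponential/logarithm identity into precisely the new product element $\ln(1+d_q(a)\,t^{-2q-1-a}\e^{2t-v})$, mirroring what is carried out explicitly on the Airy side between Propositions \ref{k1} and \ref{k2}. A subsequent application of Lemma \ref{lemb:1} with $k=q+1$ at the base point $\hat{s}_{q+1}$ propagates the expansion further to $[\hat{t}_{q+1},\hat{t}_{q+1}']$, and the constant $-\ln(1+\im\gamma_q 2^{4(q+1)+2a-2})$ in the lemma cancels against its counterpart transferred in from the base-point factor, so that the factor count stays at $q+1$.

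The principal obstacle is the bookkeeping. One has to check that the contributions $2k(t-\hat{t}_k)$, $2k(t-\hat{t}_k')$ and $\tfrac{k}{2}(k+a)\ln|s/\hat{s}_k|$ from Lemmas \ref{lemb:1} and \ref{lemb:2} rearrange cleanly upon substitution of the inductive hypothesis into the compact shape asserted by the theorem; in particular the $k(k+a)$ logarithms must telescope against the corresponding pieces from the lemma and the inductive hypothesis so that only the universal prefactor $-\tfrac{a^2}{4}\ln s$ survives outside the product, and the base-point constants $\ln(1+\im\gamma_{k-1}2^{4k+2a-2})$ must line up in pairs between consecutive applications. A secondary analytic point is uniformity of the $\mathcal{O}(\ln t/t)$ error: the integrated remainder $\int \mathcal{O}(w^{-2-2|\alpha|})\,\d w$ is bounded by $\mathcal{O}(t^{-1})$ uniformly for $\alpha\in[-\tfrac{1}{2},\tfrac{1}{2})$, and this bound must dominate the error inherited from the base point at each of the $\sim 2q$ inductive stages.

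Finally, the seemingly artificial lower bound $t\geq\tfrac{1}{2}(v+a\ln t)$ in Theorem \ref{nice:b} requires no separate argument: when $t$ drops below an intermediate threshold $\hat{t}_k'$, the top factors of the $q$-fold product decay and can be absorbed into the error, thereby reducing the statement for parameter $q$ to its counterpart for parameter $q-1$ (with a slightly weakened error). This telescoping, analogous to the observation made for the Airy kernel in Remark \ref{arti}, unifies all intermediate propositions under the single umbrella of Theorem \ref{nice:b}.
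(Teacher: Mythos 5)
Your proposal is correct and matches the paper's own (explicitly sketched) argument: the paper states only that ``we would iterate the procedure (just as in the case of the Airy kernel)'', and the alternating application of Lemmas \ref{lemb:1} and \ref{lemb:2} starting from Propositions \ref{propb:1}, \ref{propb:2} and the two-factor proposition, recombining the boundary logarithms with the linear terms $2k(\sqrt{s}-\sqrt{\hat{s}_k})$ resp.\ $2k(\sqrt{s}-\sqrt{\hat{s}_k'})$ at each stage, is exactly that iteration. One small bookkeeping correction: the terms $-\tfrac{k}{2}(k+a)\ln|s/\hat{s}_k|$ (and their primed analogues) do not telescope against anything from the inductive hypothesis; since $t/\hat{t}_k=1+\mathcal{O}(\ln t/t)$ on the relevant intervals they are simply of order $\ln t/t$ and are absorbed into the error term, and this absorption, rather than the integrated remainder $\int\mathcal{O}(w^{-2-2|\alpha|})\,\d w$, is the actual source of the $\mathcal{O}(\ln t/t)$ in the final estimate.
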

\section{Proof of Theorem \ref{Besselmain} and asymptotics for eigenvalues}\label{Besseigexp} To obtain Theorem \ref{Besselmain} we combine Theorem \ref{nice:b} with expansion \eqref{B:1}, 
%The derivation of the eigenvalue expansions follows along the same lines as before, we only have to use the Corollary below which paraphrases Theorem \ref{nice:b}.
\begin{cor} Given $\chi\in\mathbb{R}$ and fixed $a>-1$ determine $p\in\mathbb{Z}_{\geq 0}$ such that $p=0$ for $\chi<-\frac{1}{2}$ and $\chi+\frac{1}{2}<p\leq\chi+\frac{3}{2}$ for $\chi\geq-\frac{1}{2}$. There exist positive $t_0=t_0(\chi,a),v_0=v_0(\chi,a)$ such that
\begin{equation*}
	D(J_{\textnormal{Bess}};\gamma)=\exp\left[-\frac{s}{4}+a\sqrt{s}\right]s^{-\frac{1}{4}a^2}\tau_a\prod_{j=0}^{p-1}\Big(1+d_j(a)t^{-2j-1-a}\e^{2t-v}\Big)
	\left(1+\mathcal{O}\left(\max\left\{t^{-2(p-\frac{1}{2}-\chi)},\frac{\ln t}{t}\right\}\right)\right)
\end{equation*}
with $d_j(a)=j!\,\Gamma(1+a+j)\pi^{-1}2^{-4j-2a-3}$, uniformly for $t\geq t_0,v\geq v_0$ and $\varkappa_{_\textnormal{Bess}}=\frac{v}{t}\geq 2-2(\chi+\frac{a}{2})\frac{\ln t}{t}$. In case $p=0$, we take $\prod_{j=0}^{p-1}(\ldots)\equiv 1$.
\end{cor}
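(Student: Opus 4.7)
The proof follows the blueprint used for Corollary \ref{nicecor} in the Airy setting: one combines Theorem \ref{nice:b} (the log-expansion of $D(J_{\textnormal{Bess}};\gamma)$ with a prescribed number of product factors, valid in a controlled $v$-regime) with the refined pure gap asymptotic \eqref{B:1}. Given $\chi\in\mathbb{R}$ and the corresponding $p$, the plan is to apply Theorem \ref{nice:b} with $q=p$ in a main sub-regime, and then to bridge to $\gamma=1$ using \eqref{B:1} together with standard Fredholm determinant estimates.

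In the main sub-regime $2t-(2\chi+a)\ln t\leq v\leq 2t-a\ln t$, Theorem \ref{nice:b} with $q=p$ is immediately applicable: the hypothesis $\chi+\frac{1}{2}<p$ gives $2\chi+a<2p+a-1<2p+a$, so the theorem's lower bound $v\geq 2t-(2q+a)\ln t$ is strictly implied by the corollary's condition, while the upper bound is exactly the defining inequality of the sub-regime. Exponentiating Theorem \ref{nice:b} yields the product expansion with multiplicative error $1+\mathcal{O}(\ln t/t)$, accounting for the $\ln t/t$ entry in the stated error.

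In the extended sub-regime $v>2t-a\ln t$ (which is the entire relevant range when $p=0$, since then $\chi<-\frac{1}{2}$ forces $v>2t+(1-a)\ln t$), the bound $\e^{2t-v}<t^a$ forces each of the $p$ factors in the product to equal $1+\mathcal{O}(t^{-1})$, so the product itself is $1+\mathcal{O}(t^{-1})$. Here I invoke \eqref{B:1} for $D(J_{\textnormal{Bess}};1)$ together with the Lidskii-type identity $D(J;\gamma)=D(J;1)\det(I+\e^{-v}K(I-K)^{-1})$ and the standard bound $|\det(I+B)-1|\leq\|B\|_1\e^{\|B\|_1}$ to estimate the deviation from $D(J;1)$. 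The required control of $\|\e^{-v}K(I-K)^{-1}\|_1$ is extracted from Theorem \ref{nice:b} applied at the matching boundary $v=2t-a\ln t$: positive-definiteness of the factors in $\det(I+\e^{-v}K(I-K)^{-1})$ converts the $\mathcal{O}(\ln t/t)$ upper bound on the log-determinant into a trace bound $\e^{-v}\sum_j\lambda_j/(1-\lambda_j)=\mathcal{O}(\ln t/t)$ at the boundary, which then extends monotonically in $v$ to all larger values.

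The error $\max\{t^{-2(p-\chi-\frac{1}{2})},\ln t/t\}$ has a transparent origin: $\ln t/t$ is the intrinsic error from Theorem \ref{nice:b}, while $t^{-2(p-\chi-\frac{1}{2})}$ is precisely the size of the first omitted factor $1+d_p(a)t^{-2p-1-a}\e^{2t-v}$ at the lower endpoint $v=2t-(2\chi+a)\ln t$ of the corollary's range. The main technical obstacle is the quantitative control of the Fredholm resolvent in the near-$\gamma=1$ regime (where $\|K\|\uparrow 1$ as $t\to\infty$, so a naive use of $(1-\|K\|)^{-1}$ blows up); this is handled by bootstrapping the boundary estimates supplied by Theorem \ref{nice:b} into the extended sub-regime, rather than by any refined a priori operator-theoretic analysis of $K_{\textnormal{Bess}}^{(a)}$.
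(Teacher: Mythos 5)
Your proposal is correct and follows the same plan as the paper's one-line proof: split the region $\varkappa_{_\textnormal{Bess}}\geq 2-2(\chi+\frac{a}{2})\frac{\ln t}{t}$ into the overlap with Theorem \ref{nice:b}'s domain (where you apply the theorem with $q=p$, which is admissible since $p>\chi+\frac{1}{2}$ forces $2p+a>2\chi+a$) and the complementary range $v\geq 2t-a\ln t$, where the $p$ factors are each $1+\mathcal{O}(t^{-1})$ and the free constant is pinned by the refined pure-gap expansion \eqref{B:1}. Two remarks. First, the entire Lidskii detour in your second sub-regime is unnecessary: \eqref{B:1} is not a statement about $D(J_{\textnormal{Bess}};1)$ only -- it is stated uniformly for $\gamma\uparrow 1$ in the regime $\varkappa_{_\textnormal{Bess}}\geq 2-a\frac{\ln t}{t}$, which is exactly your extended sub-regime. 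Applying \eqref{B:1} directly gives $D(J_{\textnormal{Bess}};\gamma)=\exp[-\frac{s}{4}+a\sqrt{s}]s^{-a^2/4}\tau_a(1+\mathcal{O}(t^{-1}))$ there, with no bridge to $\gamma=1$ and no delicate trace-norm bootstrap required; your Lidskii-plus-trace-bound argument is valid, but it reproves something already available. Second, your attribution of the error scale $t^{-2(p-\frac{1}{2}-\chi)}$ to ``the first omitted factor'' $1+d_p(a)t^{-2p-1-a}\e^{2t-v}$ is an arithmetic slip: at the endpoint $v=2t-(2\chi+a)\ln t$ one has $\e^{2t-v}=t^{2\chi+a}$, so the $j$-th factor is $1+d_j(a)t^{2\chi-2j-1}$; for $j=p$ this has size $t^{-2(p+\frac{1}{2}-\chi)}$, while the stated exponent $-2(p-\frac{1}{2}-\chi)$ is the size of the \emph{last included} factor $j=p-1$. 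Since you actually derive the sharper $\mathcal{O}(\ln t/t)$ in both sub-regimes (of which the stated $\max$-error is a consequence), this mislabeling does not affect the validity of the proof, only the commentary.
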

At this point the logic is just as before we only have to use that $K_{\textnormal{Bess}}^{(a)}$ is positive definite with finite operator norm $\|K_{\textnormal{Bess}}^{(a)}\|<1$ and trace norm 
\begin{equation*}
	\|K_{\textnormal{Bess}}^{(a)}\|_1=\int_0^sK_{\textnormal{Bess}}^{(a)}(\lambda,\lambda)\d\lambda\leq ct,\ \ \forall\, t\geq t_0,\ \ t=s^{\frac{1}{2}}.
\end{equation*}
%Then our previous argument implies that for fixed $j\in\mathbb{Z}_{\geq 0},a>-1$, as $t\rightarrow\infty$,
%\begin{equation*}
%	1-\lambda_j(s)=\frac{\pi}{\Gamma(1+a+j)j!}2^{4j+2a+3}t^{2j+1+a}\e^{-2t}\big(1+o(1)\big).
%\end{equation*}
\begin{appendix}
\section{Refining Corollary $1.2$ in \cite{B}}\label{Airyapp} In this appendix we will improve the error estimate derived in \cite{B}, Corollary $1.2.$. In more detail, we shall prove the following Theorem.
\begin{theo}\label{bettererror} As $s\rightarrow-\infty,\gamma\uparrow 1$,
\begin{equation}\label{A:1}
	\det(I-\gamma K_{\textnormal{Ai}})\Big|_{L^2(s,\infty)}=\exp\left[\frac{s^3}{12}\right]|s|^{-\frac{1}{8}}c_0\Big(1+\mathcal{O}\left(s^{-\frac{3}{4}}\right)\Big),\ \ \ c_0=\exp\left[\frac{1}{24}\ln 2+\z'(-1)\right],
\end{equation}
uniformly for
\begin{equation*}
	\varkappa_{_\textnormal{Ai}}=-\frac{\ln(1-\gamma)}{(-s)^{\frac{3}{2}}}\geq \frac{2}{3}\sqrt{2}.
\end{equation*}
\end{theo}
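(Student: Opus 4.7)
The plan is to carry out a nonlinear steepest descent analysis of RHP \ref{unifRHP} specialized to the regime $\varkappa_{\textnormal{Ai}} \geq \frac{2}{3}\sqrt{2}$, which lies outside the parameter range \eqref{scale} of the main text (it corresponds to $\chi \leq 0$). In this regime no stationary-point resolution at $z = \frac{1}{2}$ is required, so the analysis is substantially simpler than in Sections \ref{Airyp1}--\ref{Airyp2}; this is what will allow the improvement over \cite{B}, Corollary $1.2$.

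First, I would repeat the transformation sequence $Y \mapsto X \mapsto T \mapsto S$ of Sections \ref{prelim} and \ref{Airyp2}, but now using the unperturbed $g$-function $g(z) = \frac{2}{3} z^{\frac{1}{2}}(z - \frac{3}{2})$, i.e., taking $V = 0$ in \eqref{g:1}. This is precisely the $g$-function employed in \cite{CIK} for the analysis of $D(J_{\textnormal{Ai}}; 1)$. The only deviation from the $\gamma = 1$ setup is an upper-triangular factor $(1-\gamma) \e^{2tg_+(z)}$ that appears in the jump on $(0,\infty)$. Since $g$ attains its minimum $-\frac{\sqrt{2}}{3}$ on $(0,\infty)$ at $z = \frac{1}{2}$ and $1 - \gamma = \e^{-v} \leq \e^{-\frac{2\sqrt{2}}{3} t}$ in our regime, this factor is uniformly bounded by $\e^{-v + \frac{2\sqrt{2}}{3} t} \leq 1$, attaining its maximum at $z = \frac{1}{2}$ and decaying exponentially away from that point.

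Second, I open standard lenses through the real axis into the upper and lower half-planes; the $(0,\infty)$-jump is then replaced by jumps on the lens boundaries carrying $\e^{2 t g(z)}$ with $\Re g(z) < g(\frac{1}{2})$, and combined with $\e^{-v}$ these are exponentially small off a shrinking neighborhood of $z = \frac{1}{2}$. I would employ only two local parametrices: the outer parametrix \eqref{out} with $\alpha = 0$ and the Bessel-type parametrix \eqref{e:7} at the origin. In particular, the Hermite parametrix $P^{(\frac{1}{2})}(z)$ of \eqref{local} is \emph{not} required, because the rank-one obstruction it resolves in Section \ref{Airysing} is absent when $\chi \leq 0$. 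Standard small-norm theory (cf.\ Proposition \ref{DZ:3}) then yields $\|R - I\|_{L^2 \cap L^\infty(\Sigma_R)} = \mathcal{O}(t^{-1/2})$, with the dominant contribution concentrated in a $t^{-1/2}$-wide neighborhood of $z = \frac{1}{2}$.

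Substituting into the differential identity \eqref{diff:1} then produces
\[
\frac{\partial}{\partial s} \ln D(J_{\textnormal{Ai}}; \gamma) = \frac{s^2}{4} - \frac{1}{8 s} + \mathcal{O}\!\left( t^{-7/6} \right),
\]
i.e., the same leading structure as in Section \ref{asyderiv} but without the $\sqrt{2}\,|s|^{\frac{1}{2}}\sigma^{\pm}$ contribution from the Hermite parametrix. Integration from a base point $s_0 < s$ with $\varkappa_{\textnormal{Ai}}(s_0) = \frac{2}{3}\sqrt{2}$, where Proposition \ref{k0} at $\chi = 0$ supplies a matching value of $\ln D$, yields the claimed uniform expansion with error $\mathcal{O}(t^{-1/2}) = \mathcal{O}(|s|^{-3/4})$, and the constant $c_0 = \exp[\frac{1}{24} \ln 2 + \zeta'(-1)]$ is pinned down by the known $\gamma = 1$ expansion of \cite{BBD, DIK} at $\varkappa_{\textnormal{Ai}} = +\infty$. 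The main technical obstacle is achieving uniform sharpness near the threshold $\varkappa_{\textnormal{Ai}} = \frac{2}{3}\sqrt{2}$, where $(1-\gamma) \e^{2tg(z)}$ equals unity exactly at $z = \frac{1}{2}$ and neither the purely exponential decay of \cite{CIK} nor a stationary-phase model is immediately available. The remedy is to scale the lens width like $t^{-1/2}$ so that $\e^{2tg(z)}$ gains a decay of order $\e^{-c}$ from its imaginary part on the lens boundary; this balances the residual $O(1)$ size of the jump and produces the uniform $L^2 \cap L^\infty$ control needed.
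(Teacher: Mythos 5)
Your overall route—drop to $V=0$, use $P^{(0)}$ and $P^{(\infty)}$, invoke \eqref{diff:1}, integrate—is the paper's route, but there are two substantive errors.

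First, you cannot dispense with the local parametrix at $z=\tfrac{1}{2}$ in the uniform regime $\varkappa_{_{\textnormal{Ai}}}\ge\tfrac{2}{3}\sqrt{2}$, and the lens/contour-deformation remedy you propose does not work. Near $z=\tfrac12$ one has $g(z)|_{V=0}\approx -\tfrac{\sqrt 2}{3}+\tfrac{1}{\sqrt 2}(z-\tfrac12)^2$, so the jump entry on $(0,\infty)$ is $\e^{-t(\varkappa_{_{\textnormal{Ai}}}+2g(z))}\approx \e^{-t(\varkappa_{_{\textnormal{Ai}}}-\frac{2\sqrt2}{3})}\e^{-\sqrt 2\,t(z-\frac12)^2}$. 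Deforming perpendicular to the real axis, $z-\tfrac12=ir$, gives $\Re(z-\tfrac12)^2=-r^2<0$, so the entry \emph{grows}. There is no decay to be gained ``from the imaginary part''; the imaginary part only contributes oscillation, and the relevant real part worsens off-axis. At the threshold $\varkappa_{_{\textnormal{Ai}}}=\tfrac23\sqrt2$ the jump is exactly $1$ at $z=\tfrac12$ on every admissible contour through that point, so the $R$-RHP is not small-norm without a local model there. What the paper does is precisely retain the local construction at $z=\tfrac12$ but with $k=0$ in \eqref{hermitebare}, i.e.\ the trivial Gaussian case of the Hermite parametrix; it is the $Q$- and $L$-transformations of Subsection \ref{Airysing} (the rank-one singularity resolution) that become unnecessary, not the local parametrix itself. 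Conflating these two is the core gap.

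Second, your method of fixing the constant of integration is circular. Proposition \ref{k0} is \emph{derived} in the paper from \eqref{A:1}, i.e.\ from the very Theorem \ref{bettererror} you are trying to prove; you cannot invoke it to supply a value of $\ln D$ at the base point. After $s$-integration the constant of integration is a priori a function $d_0(\gamma)$. The paper determines it by (i) forming the comparison ratio $\chi(z)=R_0(z;t,\varkappa_{_{\textnormal{Ai}}})R_0(z;t,+\infty)^{-1}$ and proving a small-norm estimate for $\chi$, which transfers the known $\gamma=1$ asymptotics of \cite{CIK,DIK} to $\gamma<1$ with controlled error, and (ii) using the $\gamma$-differential identity $\partial_\gamma\ln D(J_{\textnormal{Ai}};\gamma)=-\textnormal{tr}\big((I-\gamma K_{\textnormal{Ai}})^{-1}K_{\textnormal{Ai}}\big)$ together with that comparison to show $\partial_\gamma\ln d_0(\gamma)=\mathcal O(t^{-1/2})$, whence $d_0(\gamma)=c_0+\mathcal O(t^{-1/2})$. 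Neither ingredient appears in your proposal. You should also replace your claimed $F(J_{\textnormal{Ai}};\gamma)=\tfrac{s^2}{4}-\tfrac{1}{8s}+\mathcal O(t^{-7/6})$ with an estimate that accounts for the $z=\tfrac12$ contribution: at the threshold the omitted $\sigma^{\pm}$-term is of size $\mathcal O(t^{-1/6})$ in the derivative (and only after integration produces an $\mathcal O(t^{-1/2})$ contribution to $\ln D$), so the bound as you wrote it cannot hold uniformly down to $\varkappa_{_{\textnormal{Ai}}}=\tfrac23\sqrt2$.
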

Expansion \eqref{A:1} was proven in \cite{B} with a weaker error term. The result in loc. cit followed from the Tracy-Widom representation of $D(J_{\textnormal{Ai}};\gamma)$ in terms of a Painlev\'e transcendent and the derivation of transition asymptotics for the latter.\smallskip

Here, we will apply the method of integrable integral operators and use RHP \ref{masterIIKS}. In more detail, \eqref{A:1} will follow from a comparison of the nonlinear steepest descent analysis of RHP \ref{masterIIKS} for $\gamma=1, s\rightarrow-\infty$ (corresponding to $\varkappa_{_\textnormal{Ai}}=+\infty$), cf. \cite{CIK}, to the nonlinear steepest descent analysis of the same RHP for $\gamma\uparrow 1,s\rightarrow-\infty$ such that $\varkappa_{_\textnormal{Ai}}\geq\frac{2}{3}\sqrt{2}$ is fixed. We will see that the difference between $\varkappa_{_\textnormal{Ai}}\geq\frac{2}{3}\sqrt{2}$ fixed and $\varkappa_{_\textnormal{Ai}}=+\infty$ is ``almost beyond all orders".

\subsection{Comparison of nonlinear steepest descent analysis} We go back to RHP \ref{masterIIKS} tailored to \eqref{Airytailor} and observe its validity for any choice of $\gamma\leq 1$. The same also applies to the first two transformations
\begin{equation*}
	Y(z)\mapsto X(z)\mapsto T(z)
\end{equation*}
carried out in Section \ref{prelim}. After that, we can formally copy the $g$-function transformation \eqref{g:1}, however for the scale
\begin{equation*}
	\varkappa_{_\textnormal{Ai}}\in\left[\frac{2}{3}\sqrt{2},+\infty\right)\ \ \textnormal{fixed},\ \ \ \ \ \textnormal{or}\ \ \ \ \varkappa_{_\textnormal{Ai}}=+\infty
\end{equation*}
we shall see that setting $V\equiv 0$ in \eqref{g:1} is sufficient. With this, the next transformation
\begin{equation*}
	S_0(z)=T(z)\e^{t(g(z)|_{V=0})\sigma_3},\ \ \ z\in\mathbb{C}\backslash\Sigma_T
\end{equation*}
leads us to the following problem.
\begin{problem} The normalized function $S_0(z)=S_0(z;s,\gamma)\in\mathbb{C}^{2\times 2}$ is characterized by the following properties
\begin{enumerate}
	\item $S_0(z)$ is analytic for $z\in\mathbb{C}\backslash(\Sigma_T\cup\{0\})$.
	\item The limiting values $S_{\pm}(z),z\in\Sigma_T$ are related by the equations
	\begin{equation*}
		S_{0+}(z)=S_{0-}(z)\begin{pmatrix} 1 & 0\\ \e^{2tg(z)|_{V=0}} & 1 \end{pmatrix},\ \ z\in\big(\Gamma_2\cup\Gamma_4\big)\backslash\{0\};\ \ \ \ \ S_{0+}(z)=S_{0-}(z)\begin{pmatrix} 0 & 1\\ -1 & 0 \end{pmatrix},\ \ z\in\Gamma_3\backslash\{0\};
	\end{equation*}
	and, along the entire positive half-ray,
	\begin{equation*}
		S_{0+}(z)=S_{0-}(z)\begin{pmatrix} 1 & \e^{-t(\varkappa_{\textnormal{Ai}}+2g(z)|_{V=0})}\\ 0 & 1 \end{pmatrix},\ \ z\in(0,+\infty).
	\end{equation*}
	\item The singular behavior is identical to the one stated in RHP \ref{gRHP} subject to the replacement $V=0$.
	\item Also the asymptotic normalization at $z=\infty\notin\Sigma_T$ is the same as in RHP \ref{gRHP}, modulo $V=0$.
\end{enumerate}
\end{problem}
Note that, as before, with $0<r<\frac{1}{8}$ fixed,
\begin{equation}\label{A:es1}
	\Re\big(g(z)|_{V=0}\big)<0,\ \ \ z\in\big(\Gamma_2\cup\Gamma_4\big)\backslash D(0,r)
\end{equation}
and, since
\begin{equation*}
	g(z)|_{V=0}\geq-\frac{1}{3}\sqrt{2},\ \ \ z\in(0,+\infty);\ \ \ \ \ \ g\left(\frac{1}{2}\right)|_{V=0}=-\frac{1}{3}\sqrt{2},
\end{equation*}
we have now for $z\in(0,+\infty)\backslash D(0,r)$,
\begin{equation}\label{A:es2}
	\varkappa_{_\textnormal{Ai}}+2g(z)|_{V=0}\geq 0\ \ \ \textnormal{with equality iff}\ \ \ \varkappa_{_\textnormal{Ai}}=\frac{2}{3}\sqrt{2}\ \ \textnormal{and}\ \ z=\frac{1}{2}
\end{equation}
as a consequence of our choice $\varkappa_{_\textnormal{Ai}}\geq\frac{2}{3}\sqrt{2}$. But this means that we only require a simplified local Riemann-Hilbert analysis near $z=\frac{1}{2}$ when working with the scale $\varkappa_{_\textnormal{Ai}}\in[\frac{2}{3}\sqrt{2},+\infty]$. In fact, for $\varkappa_{_\textnormal{Ai}}\in[\frac{2}{3}\sqrt{2},+\infty)$ we take $k=0$ in \eqref{hermitebare}, i.e.
\begin{equation*}
	H(\z)\big|_{k=0}=\begin{pmatrix} 1 & \frac{1}{2\pi\im}\int_{\mathbb{R}}\e^{-t^2}\frac{\d t}{t-\z}\\ 0 & 1 \end{pmatrix}\e^{-\frac{1}{2}\z^2\sigma_3},\ \ \ \z\in\mathbb{C}\backslash\mathbb{R},
\end{equation*}
and adjusting all subsequent steps of the construction to $V=0=\alpha=k$ is sufficient. The matching \eqref{e:11} is then replaced by
\begin{equation}\label{nmatcrit}
	P^{(\frac{1}{2})}(z)=P^{(\infty)}(z)\big|_{V=0}\left\{I+\frac{\gamma_0^{-1}}{\z(z)}\bigl(\begin{smallmatrix} 0 & 1\\ 0 & 0 \end{smallmatrix}\bigr)+\mathcal{O}\left(t^{-\frac{3}{2}}\right)\right\};\ \ \ \ \z(z)=\sqrt{2t}\left(g(z)\big|_{V=0}+\frac{\sqrt{2}}{3}\right)^{\frac{1}{2}},
\end{equation}
uniformly for $0<r_1\leq|z-\frac{1}{2}|\leq r_2<\frac{1}{2}$. On the other hand, for $\varkappa_{_\textnormal{Ai}}=+\infty$ we do not need any separate analysis near $z=\frac{1}{2}$.
\begin{remark} If we were to keep $\varkappa_{_\textnormal{Ai}}>\frac{2}{3}\sqrt{2}$ fixed, there would be no need for a local analysis near $z=\frac{1}{2}$ at all, see \eqref{A:es2}. The difference between $\varkappa_{_\textnormal{Ai}}>\frac{2}{3}\sqrt{2}$ and $\varkappa_{_\textnormal{Ai}}=+\infty$ is thus ``beyond all orders" and between $\varkappa_{_\textnormal{Ai}}\geq\frac{2}{3}\sqrt{2}$ and $\varkappa_{_\textnormal{Ai}}=+\infty$ ``almost beyond all orders".
\end{remark}
The local model functions in \eqref{out} and \eqref{e:7} can be used again, subject to the replacements $V=0=\alpha=k$.
\subsection{Small norm theorem} In the given situation the ratio transformation consists of the replacement
\begin{equation*}
	R_0(z)=\begin{pmatrix} 1 & 0\\ \omega & 1 \end{pmatrix} S_0(z)\begin{cases}\big(P^{(0)}(z)\big)^{-1}\big|_{V=0=\alpha=k},&|z|<r\smallskip\\ \big(P^{(\frac{1}{2})}(z)\big)^{-1}\big|_{V=0=\alpha=k},&|z-\frac{1}{2}|<r\smallskip\\ \big(P^{(\infty)}(z)\big)^{-1}\big|_{V=0=\alpha=k},&|z|>r, |z-\frac{1}{2}|>r \end{cases};\ \ \ \omega=-|s|^{\frac{1}{2}}\big(NS_{\infty}N^{-1}\big)_{21}
\end{equation*}
when $\varkappa_{_\textnormal{Ai}}\in[\frac{2}{3}\sqrt{2},+\infty)$ is fixed and without $P^{(\frac{1}{2})}(z)\big|_{V=0=\alpha=k}$ for $\varkappa_{_\textnormal{Ai}}=+\infty$. In either case $0<r<\frac{1}{8}$ and we use
\begin{equation*}
	N=\frac{1}{\sqrt{2}}\begin{pmatrix} 1 & 1\\ -1 & 1 \end{pmatrix} \e^{-\im\frac{\pi}{4}\sigma_3}.
\end{equation*}
We are lead to the problem below which is formulated on the contour shown in Figure \ref{figure4} for $\varkappa_{_\textnormal{Ai}}\in[\frac{2}{3}\sqrt{2},+\infty)$ and in Figure \ref{figure5} for $\varkappa_{_\textnormal{Ai}}=+\infty$.
\begin{figure}[tbh]
\begin{center}
\resizebox{0.21\textwidth}{!}{\includegraphics{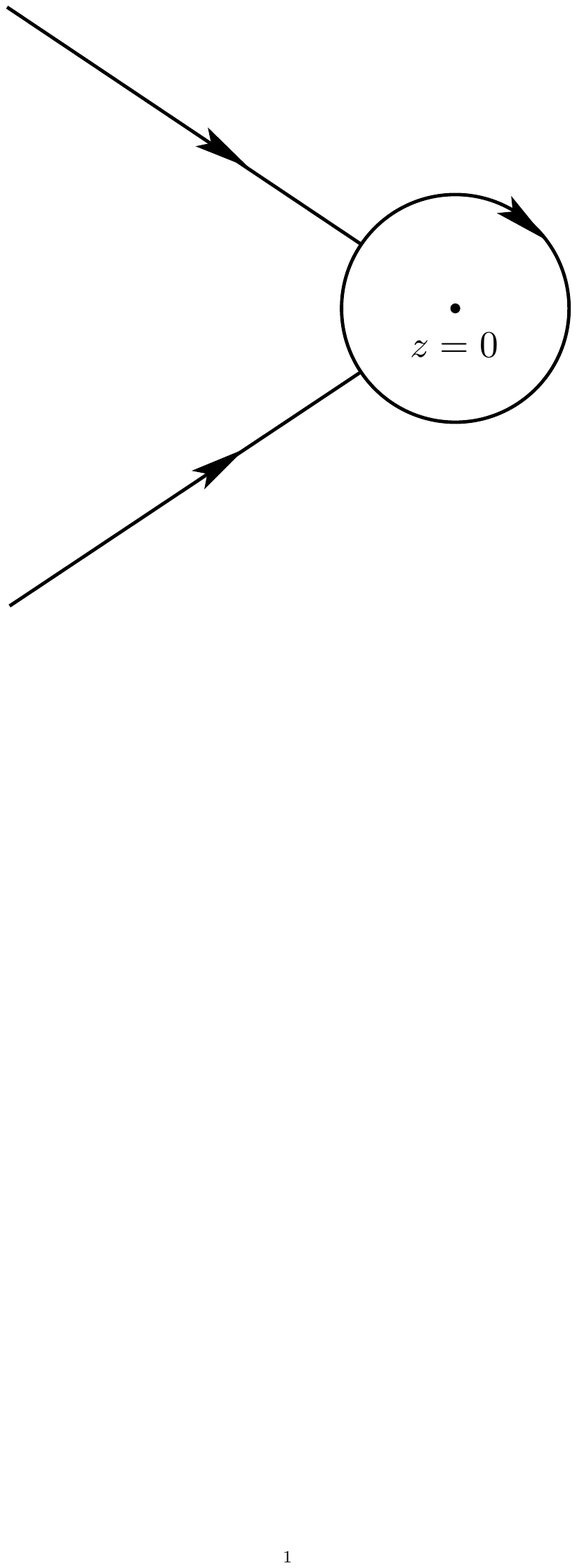}}
%\resizebox{0.23\textwidth}{!}{\input{Ratio31.pdf_t}}
\caption{The oriented jump contours for the ratio function $R_0(z)$ in the complex $z$-plane, case $\varkappa_{_\textnormal{Ai}}=+\infty$.}
\label{figure5}
\end{center}
\end{figure}
\begin{problem} Determine $R_0(z)=R_0(z;t,\varkappa_{_\textnormal{Ai}})\in\mathbb{C}^{2\times 2}$ such that
\begin{enumerate}
	\item $R_0(z)$ is analytic for $z\in\mathbb{C}\backslash\Sigma_{R_0}$ with square integrable boundary values on the contours $\Sigma_{R_0}$,
	\begin{align*}
		\Sigma_{R_0}=&\,\,\partial D(0,r)\cup\big((\Gamma_2\cup\Gamma_4)\cap\{z\in\mathbb{C}:\,|z|>r\}\big);\\
		\Sigma_{R_0}=&\,\,\partial D(0,r)\cup\partial D\left(\frac{1}{2},r\right)\cup\left(r,\frac{1}{2}-r\right)\cup\left(\frac{1}{2}+r,\infty\right)\cup\big((\Gamma_2\cup\Gamma_4)\cap\{z\in\mathbb{C}:\,|z|>r\}\big);
	\end{align*}
	shown in Figures \ref{figure4} and \ref{figure5}.
	\item On the contour $\Sigma_{R_0}$ we have jumps $R_{0+}(z)=R_{0-}(z)G_{R_0}(z;s,\gamma),z\in\Sigma_{R}$ with
	\begin{align*}
		G_{R_0}(z)=&\,P^{(0)}(z)\big|_{V=0}\big(P^{(\infty)}(z)\big|_{V=0}\big)^{-1},\ z\in\partial D(0,r);\\
		G_{R_0}(z)=&\,P^{(\frac{1}{2})}(z)\big|_{V=0}\big(P^{(\infty)}(z)\big|_{V=0}\big)^{-1},\ z\in\partial D\left(\frac{1}{2},r\right)\\
		G_{R_0}(z)=&\,P^{(\infty)}(z)\big|_{V=0}\begin{pmatrix} 1 & \e^{-t(\varkappa_{\textnormal{Ai}}+2g(z)|_{V=0})}\\ 0 & 1 \end{pmatrix}\big(P^{(\infty)}(z)\big|_{V=0}\big)^{-1},\ \ z\in\left(r,\frac{1}{2}-r\right)\cup\left(r+\frac{1}{2},+\infty\right);\\
		G_{R_0}(z)=&\,P^{(\infty)}(z)\big|_{V=0}\begin{pmatrix} 1 & 0\\ \e^{tg(z)|_{V=0}} & 1 \end{pmatrix}\big(P^{(\infty)}(z)\big|_{V=0}\big)^{-1},\ \ z\in\big(\Gamma_2\cup\Gamma_4\big)\cap\{z\in\mathbb{C}:\,|z|>r\}.
	\end{align*}
	These jumps hold for $\varkappa_{_\textnormal{Ai}}\in[\frac{2}{3}\sqrt{2},+\infty)$.  For $\varkappa_{_\textnormal{Ai}}=+\infty$ we simply have to dispose of $\partial D(\frac{1}{2},r)\cup(r,\frac{1}{2}-r)\cup(r+\frac{1}{2},+\infty)$.
	\item As $z\rightarrow\infty$, we have $R_0(z)\rightarrow I$.
\end{enumerate}
\end{problem}
It is evident that the latter problem admits direct asymptotical analysis, in fact from Proposition \ref{DZ:1} and \eqref{nmatcrit} we obtain at once
\begin{prop} There exist $t_0$ and $c$ positive such that
\begin{equation*}
	\|G_{R_0}(\cdot;t,\varkappa_{_\textnormal{Ai}})-I\|_{L^2\cap L^{\infty}(\Sigma_{R_0})}\leq\frac{c}{t^{\frac{1}{6}}},\ \ \forall\,t\geq t_0,\ \varkappa_{_\textnormal{Ai}}\geq\frac{2}{3}\sqrt{2};\ \ 
	\|G_{R_0}(\cdot;t,+\infty)-I\|_{L^2\cap L^{\infty}(\Sigma_{R_0})}\leq\frac{c}{t^{\frac{2}{3}}},\ \ \forall\,t\geq t_0.
\end{equation*}
\end{prop}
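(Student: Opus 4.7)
The plan is to bound $G_{R_0}-I$ piecewise on the four constituents of $\Sigma_{R_0}$ and combine. Since the triple $V=0=\alpha=k$ trivializes the delicate matrix factorizations of Section \ref{Airysing}, the argument is essentially a reduction of Proposition \ref{DZ:1} together with a direct analysis of the single outstanding disk $\partial D(\frac{1}{2},r)$.

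I begin with the two line-type pieces. On $(\Gamma_2\cup\Gamma_4)\cap\{|z|>r\}$ the estimate \eqref{A:es1} supplies a uniform negative bound $\Re g(z)|_{V=0}\leq -c<0$, making the deviation from $I$ exponentially small in $t$. On the real segments $(r,\frac12-r)\cup(\frac12+r,+\infty)$, the relevant exponent is $-t(\varkappa_{_\textnormal{Ai}}+2g(z)|_{V=0})$; here one uses \eqref{A:es2} together with the observation that equality in \eqref{A:es2} occurs only at $z=\frac12$ when $\varkappa_{_\textnormal{Ai}}=\frac{2}{3}\sqrt{2}$. Since $z=\frac12$ is excised by $D(\frac12,r)$, a direct compactness/calculus argument on the closed set $(r,\frac12-r]\cup[\frac12+r,N)$ (with $N$ arbitrarily large) yields $\varkappa_{_\textnormal{Ai}}+2g(z)|_{V=0}\geq\delta>0$ for some $\delta=\delta(r)$ independent of $\varkappa_{_\textnormal{Ai}}\in[\frac{2}{3}\sqrt{2},+\infty]$; this will give an $\mathcal{O}(\e^{-\delta t})$ contribution.

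Next, on $\partial D(0,r)$ I would invoke the Bessel matching \eqref{e:8} specialized to $V=0=\alpha=k$, which reads
\begin{equation*}
	P^{(0)}(z)\big|_{V=0}=P^{(\infty)}(z)\big|_{V=0}\left\{I+\frac{1}{8\z^{\frac{1}{2}}(z)}\begin{pmatrix}-1 & -2\im\\ -2\im & 1\end{pmatrix}+\mathcal{O}\left(t^{-2}\right)\right\},
\end{equation*}
with $\z(z)=t^2z+\mathcal{O}(t^2z^2)$ so that $|\z^{1/2}|\asymp t$ on $\partial D(0,r)$. Conjugation by the outer prefactor $(|s|z)^{-\frac14\sigma_3}$ in $P^{(\infty)}|_{V=0}$ magnifies the off-diagonal entries by $|s|^{\frac12}=t^{\frac13}$, producing the estimate $\mathcal{O}(t^{-\frac23})$. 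The remaining circle $\partial D(\frac12,r)$, which is only present when $\varkappa_{_\textnormal{Ai}}<+\infty$, is the leading contributor: with $k=0$ the Hermite expansion \eqref{e:9} collapses drastically (since $\gamma_{-1}=\gamma_{-2}=\gamma_{-3}=0$), reducing to the single correction $\gamma_0^{-1}\z^{-1}(z)\bigl(\begin{smallmatrix}0&1\\0&0\end{smallmatrix}\bigr)$ as recorded in \eqref{nmatcrit}. On $\partial D(\frac12,r)$ we have $|\z(z)|\asymp t^{\frac12}$, and after the same conjugation inflation by $t^{\frac13}$ the resulting $L^2\cap L^\infty$ bound becomes $\mathcal{O}(t^{\frac13}/t^{\frac12})=\mathcal{O}(t^{-\frac16})$. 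When $\varkappa_{_\textnormal{Ai}}=+\infty$ this circle is absent, and the maximum drops to $\mathcal{O}(t^{-\frac23})$.

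The main obstacle is the uniformity claim for the bound $\delta$ in \eqref{A:es2} across the whole range $\varkappa_{_\textnormal{Ai}}\in[\frac{2}{3}\sqrt{2},+\infty]$, including the boundary case $\varkappa_{_\textnormal{Ai}}=\frac{2}{3}\sqrt{2}$. A short calculation shows $\varkappa_{_\textnormal{Ai}}+2g(z)|_{V=0}$ attains its minimum on $(0,\infty)$ precisely at $z=\frac12$ with value $\varkappa_{_\textnormal{Ai}}-\frac{2}{3}\sqrt{2}\geq 0$; removing a neighborhood of $\frac12$ therefore furnishes a lower bound depending only on $r$, which is the key point that makes the bookkeeping on the real segments work uniformly. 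Combining the four pieces yields the two stated bounds.
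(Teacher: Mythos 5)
Your proposal is correct and reproduces the paper's argument, which simply appeals to Proposition \ref{DZ:1} on $\Sigma_{R_0}\setminus\partial D(\frac12,r)$ and to the $k=0$ Hermite matching \eqref{nmatcrit} on $\partial D(\frac12,r)$; the bookkeeping $|\z|\asymp t^{1/2}$ times the conjugation inflation $|s|^{1/2}\asymp t^{1/3}$ yielding $t^{-1/6}$ on $\partial D(\frac12,r)$, and $|\z^{1/2}|\asymp t$ times $t^{1/3}$ yielding $t^{-2/3}$ on $\partial D(0,r)$, is exactly what makes the proposition ``immediate''. Your explicit monotonicity argument showing $\varkappa_{_\textnormal{Ai}}+2g(z)|_{V=0}$ has its unique minimum at $z=\frac12$ with value $\varkappa_{_\textnormal{Ai}}-\frac{2}{3}\sqrt{2}$, hence a lower bound $\delta(r)>0$ on the punctured real segments uniformly over $\varkappa_{_\textnormal{Ai}}\in[\frac{2}{3}\sqrt{2},+\infty]$, is a useful supplement the paper leaves implicit: it is precisely what upgrades the estimate of Proposition \ref{DZ:1} (stated in the scaling regime $v=\frac{2}{3}\sqrt{2}\,t-\chi\ln t$) to the whole half-line $\varkappa_{_\textnormal{Ai}}\geq\frac{2}{3}\sqrt{2}$ needed here.
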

This estimate implies unique solvability of the $R_0$-RHP in $L^2(\Sigma_{R_0})$. Moreover, the solution satisfies
\begin{eqnarray}
	R_0(z;t,\varkappa_{_\textnormal{Ai}})&=&|s|^{-\frac{1}{4}\sigma_3}\left\{I+\mathcal{O}\left(\frac{t^{-\frac{1}{2}}}{1+|z|}\right)\right\}|s|^{\frac{1}{4}\sigma_3},\ \ t\rightarrow\infty,\gamma\uparrow 1:\ \ \varkappa_{_\textnormal{Ai}}\geq\frac{2}{3}\sqrt{2}\label{A:es31}\\
	R_0(z;t,+\infty)&=&|s|^{-\frac{1}{4}\sigma_3}\left\{I+\mathcal{O}\left(\frac{t^{-1}}{1+|z|}\right)\right\}|s|^{\frac{1}{4}\sigma_3},\ \ t\rightarrow\infty\label{A:es32}
\end{eqnarray}
uniformly for $z\in\mathbb{C}\backslash\Sigma_{R_0}$.  We shall now consider
%For the anticipated comparison we now observe that the function $R_0(z)$ in fact depends parametrically on $t$ and $\varkappa_{_\textnormal{Ai}}$ once we use that
%\begin{equation*}
%	\gamma=1-\e^{-\varkappa_{\textnormal{Ai}} t}.
%\end{equation*}
%This justifies the notation $R_0(z)=R_0(z;t,\varkappa_{_\textnormal{Ai}})$ as well as $G_{R_0}(z)=G_{R_0}(z;t,\varkappa_{_\textnormal{Ai}})$ and we shall now consider 
\begin{equation*}
	\chi(z;t,\varkappa_{_\textnormal{Ai}})=R_0(z;t,\varkappa_{_\textnormal{Ai}})\big(R_0(z;t,+\infty)\big)^{-1},\ \ z\in\mathbb{C}\backslash\Sigma_{R};\ \ \ t\geq t_0,\ \ \ \varkappa_{_\textnormal{Ai}}\in\left[\frac{2}{3}\sqrt{2},+\infty\right)\ \ \textnormal{fixed}.
\end{equation*}
Because of \eqref{A:es31} and \eqref{A:es32} the function $\chi(z;t,\varkappa_{_\textnormal{Ai}})$ is well defined and it again satisfies a RHP
\begin{problem} Determine $\chi(z)=\chi(z;t,\varkappa_{_\textnormal{Ai}})\in\mathbb{C}^{2\times 2}$ such that
\begin{enumerate}
	\item $\chi(z)$ is analytic for $z\in\mathbb{C}\backslash\Sigma_{\chi}$ with square integrable boundary values on the jump contour 
	\begin{equation*}
		\Sigma_{\chi}=\partial D(0,r)\cup\partial D\left(\frac{1}{2},r\right)\cup\left(r,\frac{1}{2}-r\right)\cup\left(r+\frac{1}{2},+\infty\right)
	\end{equation*}
	shown in Figure \ref{figure6}.
	\begin{figure}[tbh]
\begin{center}
\resizebox{0.3\textwidth}{!}{\includegraphics{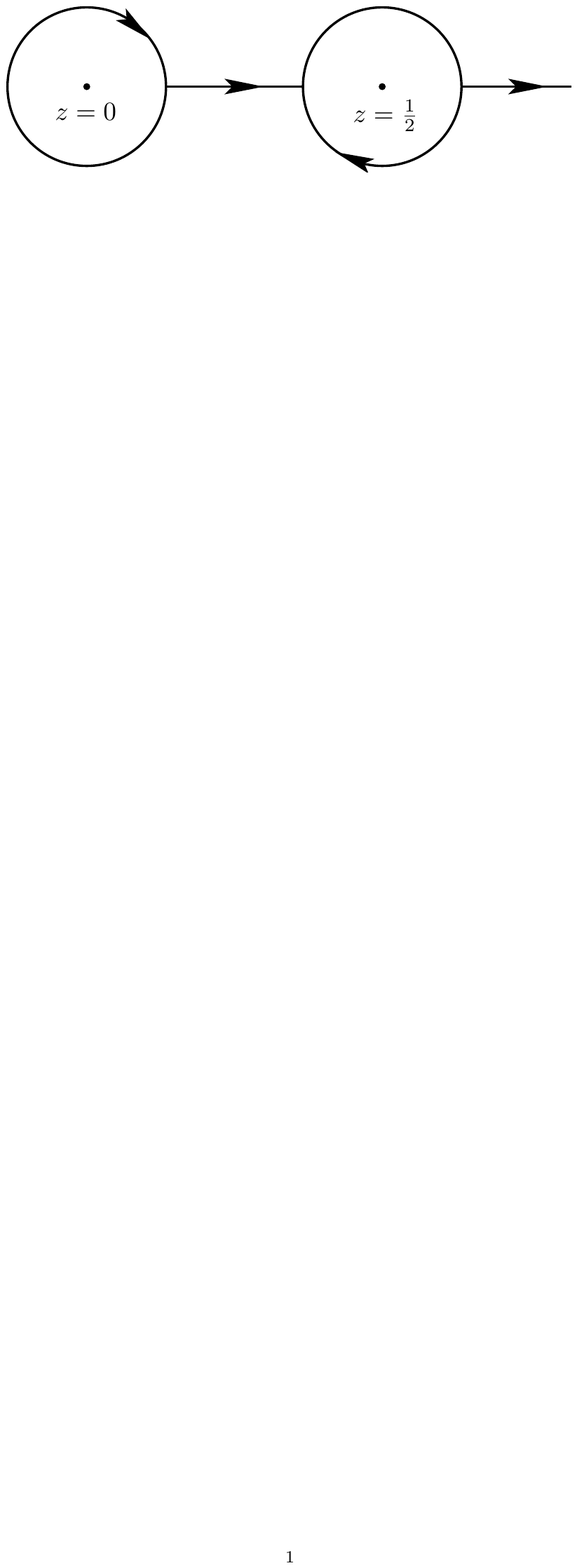}}
%\resizebox{0.3\textwidth}{!}{\input{Ratio41.pdf_t}}
\caption{The oriented jump contours for the ratio function $\chi(z)$ in the complex $z$-plane.}
\label{figure6}
\end{center}
\end{figure}
	\item The limiting values are related via the jump conditions $\chi_+(z)=\chi_-(z)G_{\chi}(z;t,\varkappa)$ with
	\begin{align*}
		G_{\chi}(z)=&\,R_{0-}(z;t,+\infty)\Big[P^{(0)}(z)\big|_{V=0}\big(P^{(0)}(z)\big|_{\substack{V=0\\ \varkappa=+\infty}}\big)^{-1}\Big]R_{0-}^{-1}(z;t,+\infty),\ \ z\in\partial D(0,r);\\
		G_{\chi}(z)=&\,R_0(z;t,+\infty)\Big[P^{(\frac{1}{2})}(z)\big|_{V=0}\big(P^{(\infty)}(z)\big|_{V=0}\big)^{-1}\Big]R_0^{-1}(z;t,+\infty),\ \ z\in\partial D\left(\frac{1}{2},r\right)
	\end{align*}
	and for $z\in(r,\frac{1}{2}-r)\cup(r+\frac{1}{2},+\infty)$,
	\begin{equation*}
		G_{\chi}(z)=R_0(z;t,+\infty)\bigg[P^{(\infty)}(z)\big|_{V=0}\begin{pmatrix} 1 & \e^{-t(\varkappa_{_\textnormal{Ai}}+2g(z)|_{V=0})}\\ 0 & 1 \end{pmatrix}\big(P^{(\infty)}(z)\big|_{V=0}\big)^{-1}\bigg]R_0^{-1}(z;t,+\infty).	\end{equation*}
	\item As $z\rightarrow\infty$, we have $\chi(z)\rightarrow I$.
\end{enumerate}
\end{problem}
Using \eqref{e:7} together with \eqref{A:es32} we obtain the following important small-norm estimate
\begin{prop} There exists $t_0$ and $c$ positive such that
\begin{equation*}
	\|G_{\chi}(\cdot;t,\varkappa_{_\textnormal{Ai}})-I\|_{L^2\cap L^{\infty}(\Sigma_{\chi})}\leq \frac{c}{t^{\frac{1}{6}}},\ \ \ \forall\,t\geq t_0,\ \ \varkappa_{_\textnormal{Ai}}\geq\frac{2}{3}\sqrt{2}.
\end{equation*}
\end{prop}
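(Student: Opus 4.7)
The plan is to estimate $G_\chi - I$ separately on each of the four pieces constituting $\Sigma_\chi$, identify $\partial D(\frac12,r)$ as the dominating contribution, and observe that the line-segment and $\partial D(0,r)$ pieces are exponentially small. The non-trivial factor $t^{-1/6}$ will arise from the interplay of the $O(t^{-1/2})$ Hermite-parametrix matching in \eqref{nmatcrit} with the conjugating factors of order $|s|^{1/4}=t^{1/6}$ coming from the $(|s|z)^{-\frac14\sigma_3}$ prefactor inside $P^{(\infty)}(z)|_{V=0}$ and inside $R_0(\cdot;t,+\infty)$.

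On $\partial D(0,r)$, the two Bessel-type parametrices $P^{(0)}(z)|_{V=0}$ and $P^{(0)}(z)|_{V=0,\varkappa=+\infty}$ are constructed by the identical recipe \eqref{e:7} (with $V=0=\alpha=k$), differing only in the local logarithmic factor $\frac{\gamma-1}{2\pi i}\ln z=-\frac{e^{-\varkappa_{_\textnormal{Ai}} t}}{2\pi i}\ln z$. Hence $P^{(0)}|_{V=0}\bigl(P^{(0)}|_{V=0,\varkappa=+\infty}\bigr)^{-1}-I=\mathcal O(e^{-\varkappa_{_\textnormal{Ai}} t})$ uniformly on $\partial D(0,r)$, and since $\varkappa_{_\textnormal{Ai}}\geq\frac23\sqrt{2}$ while the conjugating factor $R_{0-}(z;t,+\infty)$ has size at most $|s|^{1/2}=t^{1/3}$ (coming from the $|s|^{\pm\frac14\sigma_3}$ structure visible in \eqref{A:es32}), this piece contributes only $\mathcal O(e^{-ct})$ for some $c>0$. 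On the line segments $(r,\frac12-r)\cup(\frac12+r,+\infty)$, the exponent $\varkappa_{_\textnormal{Ai}}+2g(z)|_{V=0}$ is bounded below by a strictly positive constant $\delta=\delta(r)$, because $g|_{V=0}$ achieves its minimum $-\frac{\sqrt{2}}3$ on $(0,\infty)$ precisely at $z=\frac12$ (a nondegenerate minimum), so that the off-diagonal perturbation $e^{-t(\varkappa_{_\textnormal{Ai}}+2g(z)|_{V=0})}$ is again exponentially small, and the same polynomial-in-$t$ conjugation bound applies.

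The dominating contribution lives on $\partial D(\frac12,r)$. Here \eqref{nmatcrit} gives
\[
P^{(\frac12)}(z)|_{V=0}\bigl(P^{(\infty)}(z)|_{V=0}\bigr)^{-1}=I+\frac{\gamma_0^{-1}}{\zeta(z)}\,P^{(\infty)}(z)|_{V=0}\bigl(\begin{smallmatrix}0&1\\0&0\end{smallmatrix}\bigr)\bigl(P^{(\infty)}(z)|_{V=0}\bigr)^{-1}+\mathcal O(t^{-3/2})
\]
uniformly for $z\in\partial D(\frac12,r)$, with $\zeta(z)=O(t^{1/2})$. The leading rank-one correction is conjugated by $P^{(\infty)}(z)|_{V=0}=(|s|z)^{-\frac14\sigma_3}N\,\mathcal D(z)^{\sigma_3}$; the off-diagonal $(1,2)$-entry thereby acquires a factor $|s|^{1/2}=t^{1/3}$, so the whole matching behaves like $O(t^{-1/2}\cdot t^{1/3})=O(t^{-1/6})$. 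Further conjugation by $R_0(z;t,+\infty)=|s|^{-\frac14\sigma_3}(I+O(t^{-1}))|s|^{\frac14\sigma_3}$ preserves this order of magnitude because the $|s|^{\pm\frac14\sigma_3}$ factors in $R_0$ exactly absorb into the already-computed conjugation structure up to relatively $O(t^{-1})$ corrections.

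Combining, we obtain the uniform $L^\infty$ bound $\|G_\chi(\cdot;t,\varkappa_{_\textnormal{Ai}})-I\|_{L^\infty(\Sigma_\chi)}\leq c t^{-1/6}$; since every piece of $\Sigma_\chi$ has uniformly bounded arc length in $t$ (the unbounded ray $(\frac12+r,+\infty)$ contributing only exponentially small tails as shown above), the $L^2$ bound follows by the trivial estimate $\|\cdot\|_{L^2}\leq(\mathrm{length})^{1/2}\|\cdot\|_{L^\infty}$, yielding the claim. The main subtlety in executing this plan is bookkeeping the two sources of growth (the outer $(|s|z)^{-\frac14\sigma_3}$ factor and the $|s|^{\pm\frac14\sigma_3}$ pieces inside $R_0$) in a way that confirms they combine to give exactly the $t^{1/3}$ amplification of the $t^{-1/2}$ matching and no more; once this bookkeeping is in place the estimate is essentially automatic.
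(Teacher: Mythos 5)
Your argument is correct and fills in, with essentially the same ingredients, the one-line justification the paper gives ("Using \eqref{e:7} together with \eqref{A:es32} ..."). You correctly identify that the only genuinely polynomial-in-$t$ contribution comes from $\partial D(\tfrac12,r)$, where \eqref{nmatcrit} gives an $O(t^{-1/2})$ perturbation inside the parametrix that is then amplified by at most $|s|^{1/2}=t^{1/3}$ under conjugation by $P^{(\infty)}(z)|_{V=0}=(|s|z)^{-\frac14\sigma_3}N$ (and preserved under the subsequent conjugation by $R_0(\cdot;t,+\infty)=|s|^{-\frac14\sigma_3}(I+\mathcal O(t^{-1}))|s|^{\frac14\sigma_3}$), while the contributions from $\partial D(0,r)$ (driven by the $\tfrac{\gamma-1}{2\pi\im}\ln z$ factor in \eqref{e:7}, with $|\gamma-1|=\e^{-\varkappa_{_\textnormal{Ai}}t}$) and from the line segments (using that $\varkappa_{_\textnormal{Ai}}+2g(z)|_{V=0}\geq\delta(r)>0$ away from $z=\tfrac12$, which you correctly justify via $g'(z)|_{V=0}=z^{-1/2}(z-\tfrac12)$) are exponentially small.

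One small slip of bookkeeping: in the conjugation $(|s|z)^{-\frac14\sigma_3}\,M\,(|s|z)^{\frac14\sigma_3}$ (with $M=N\bigl(\begin{smallmatrix}0&1\\0&0\end{smallmatrix}\bigr)N^{-1}$ a full matrix), it is the $(2,1)$-entry, not the $(1,2)$-entry, that is amplified by $(|s|z)^{1/2}$; the $(1,2)$-entry is in fact suppressed by $(|s|z)^{-1/2}$. Your order-of-magnitude conclusion $O(t^{-1/2})\cdot O(t^{1/3})=O(t^{-1/6})$ is unaffected, but the entry should be swapped. Also worth stating explicitly (you do gesture at it) is that the $L^2$ norm over the unbounded ray $(\tfrac12+r,\infty)$ is finite because $\e^{-t(\varkappa_{_\textnormal{Ai}}+2g(z)|_{V=0})}$ decays super-exponentially in $z$ (since $g(z)|_{V=0}\sim\tfrac23z^{3/2}$), overwhelming the algebraic growth of $P^{(\infty)}(z)$ and $R_0(z;t,+\infty)$.
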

But this implies again unique solvability of the $\chi$-RHP for $t\geq t_0,\varkappa>\frac{2}{3}\sqrt{2}$ and we have
\begin{equation*}
	\chi(z)=|s|^{-\frac{1}{4}\sigma_3}\left\{I+\mathcal{O}\left(\frac{t^{-\frac{1}{2}}}{1+|z|}\right)\right\}|s|^{\frac{1}{4}\sigma_3},\ \ z\in\mathbb{C}\backslash\Sigma_{\chi},\ \ c>0.
\end{equation*}
We summarize the relation between the nonlinear steepest descent analysis with fixed $\varkappa_{_\textnormal{Ai}}\in[\frac{2}{3}\sqrt{2},+\infty)$ and $\varkappa_{_\textnormal{Ai}}=+\infty$ in the following Corollary.
\begin{cor} There exists $t_0>0$ and $c>0$ such that
\begin{equation}\label{A:es4}
	R_0(z;t,\varkappa_{_\textnormal{Ai}})=|s|^{-\frac{1}{4}\sigma_3}\left\{I+\mathcal{O}\left(\frac{t^{-\frac{1}{2}}}{1+|z|}\right)\right\}|s|^{\frac{1}{4}\sigma_3}R_0(z;t,+\infty),\ \ \forall\, t\geq t_0,\ \ \varkappa_{_\textnormal{Ai}}\in\left[\frac{2}{3}\sqrt{2},+\infty\right)
\end{equation}
uniformly for $z\in\mathbb{C}\backslash\Sigma_{\chi}$.
\end{cor}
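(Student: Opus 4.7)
The strategy is to read the corollary as a direct corollary of the two facts already in place: the small-norm estimate for the jump matrix $G_\chi$ on $\Sigma_\chi$ (from the proposition immediately preceding the corollary) and the definition of $\chi(z)$ as a ratio of the two $R_0$-functions. No new RHP analysis is required.

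First I would recall that $\chi(z)=R_0(z;t,\varkappa_{_\textnormal{Ai}})(R_0(z;t,+\infty))^{-1}$ is well-defined on $\mathbb{C}\setminus\Sigma_\chi$ (by \eqref{A:es31}, \eqref{A:es32}) and is the unique solution of the $\chi$-RHP. Applying the standard Deift-Zhou machinery to the singular integral equation
\begin{equation*}
  \chi(z)=I+\frac{1}{2\pi\im}\int_{\Sigma_\chi}\chi_-(w)\bigl(G_\chi(w)-I\bigr)\frac{\d w}{w-z},
\end{equation*}
together with the small-norm bound $\|G_\chi-I\|_{L^2\cap L^{\infty}(\Sigma_\chi)}\leq c\,t^{-1/6}$, gives solvability for all $t\geq t_0$ and $\varkappa_{_\textnormal{Ai}}\geq \frac{2}{3}\sqrt{2}$. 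The $\sigma_3$-conjugation structure $|s|^{-\frac{1}{4}\sigma_3}(\,\cdot\,)|s|^{\frac{1}{4}\sigma_3}$ is inherited from the analogous structure present in both \eqref{A:es31} and \eqref{A:es32}; after tracking it through the Cauchy operator one obtains the pointwise estimate
\begin{equation*}
  \chi(z)=|s|^{-\frac{1}{4}\sigma_3}\left\{I+\mathcal{O}\left(\frac{t^{-1/2}}{1+|z|}\right)\right\}|s|^{\frac{1}{4}\sigma_3},\qquad z\in\mathbb{C}\setminus\Sigma_\chi,
\end{equation*}
which is the identity displayed right above the corollary. The $(1+|z|)^{-1}$ decay comes from the fact that $\Sigma_\chi$ is bounded except for the half-ray $(r+\tfrac12,+\infty)$, on which the jump is exponentially small in $t$ by \eqref{A:es2}, so the Cauchy integral decays uniformly like $(1+|z|)^{-1}$.

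Given this, the corollary is obtained by multiplying the above identity for $\chi(z)$ on the right by $R_0(z;t,+\infty)$:
\begin{equation*}
  R_0(z;t,\varkappa_{_\textnormal{Ai}})=\chi(z)\,R_0(z;t,+\infty)=|s|^{-\frac{1}{4}\sigma_3}\left\{I+\mathcal{O}\left(\frac{t^{-1/2}}{1+|z|}\right)\right\}|s|^{\frac{1}{4}\sigma_3}R_0(z;t,+\infty),
\end{equation*}
which is \eqref{A:es4}. The only delicate point, which is really the substance of the earlier proposition rather than of the corollary itself, is the improvement of the naive $L^2\cap L^\infty$ bound $t^{-1/6}$ on $G_\chi-I$ to the uniform $t^{-1/2}$ appearing in the factor $\{I+\mathcal{O}(t^{-1/2}/(1+|z|))\}$. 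This improvement rests on the specific triangular structure of the leading matching correction $\gamma_0^{-1}\zeta(z)^{-1}E_{12}$ of the Hermite-type parametrix $P^{(1/2)}|_{V=0}$ in \eqref{nmatcrit}, together with the $\sigma_3$-conjugation from the outer parametrix, and was the main obstacle on the road to the corollary.
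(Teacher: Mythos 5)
Your argument follows the paper's route exactly: the corollary is read off from the pointwise estimate on $\chi(z)$, which in turn comes from the small-norm $\chi$-RHP, with the $\sigma_3$-conjugation by $|s|^{\frac14\sigma_3}$ cancelling the $|s|^{\pm\frac14}$ growth in the outer parametrix and upgrading the naive $t^{-1/6}$ bound on $G_\chi-I$ to the $t^{-1/2}$ rate dictated by $\z(z)^{-1}\sim t^{-1/2}$. One small imprecision: the improvement is driven purely by this conjugation cancellation, not by the triangular shape of the Hermite matching correction, but this does not affect the validity of the argument.
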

\subsection{Proof of expansion \ref{A:1}}\label{gderiv} We recall the differential identity \eqref{diff:1},
\begin{equation*}
	\frac{\partial}{\partial s}\ln D(J_{\textnormal{Ai}};\gamma)=\frac{\gamma}{2\pi\im}\big(X^{-1}(z)X'(z)\big)_{21}\Big|_{z\rightarrow s}
\end{equation*}
in which the derivative is taken for any fixed $\gamma\leq 1$. Repeating the steps carried out in Section \ref{asyderiv}, here however tailored to the transformation sequence
\begin{equation*}
	X(z)\mapsto T(z)\mapsto S_0(z)\mapsto R_0(z),
\end{equation*}
we derive immediately the following exact identity
\begin{equation*}
	F(J_{\textnormal{Ai}};\gamma)\equiv\frac{\partial}{\partial s}\ln D(J_{\textnormal{Ai}};\gamma)=\frac{1}{4}\gamma |s|^2-\frac{\gamma}{2}\Big\{R_0^{-1}(0)R_0'(0)\Big\}_{21}.
\end{equation*}
We can now either directly use the $R_0$-RHP to compute the remaining matrix entry or simply refer to our previous computations of Section \ref{asyderiv}. If we choose the latter we have to keep in mind the additional transformation $Q(z)\mapsto L(z)$ and tailor it to the current situation. In either case, we obtain after indefinite integration with respect to $s$,
%Using at this point \eqref{A:es4}, we see that for some universal $c>0$,
%\begin{equation}\label{A:es5}
%	F(J_{\textnormal{Ai}};\gamma)-F(J_{\textnormal{Ai}};1)=\mathcal{O}\left(\e^{-ct}\right),\ \ t\rightarrow+\infty,\ \gamma\uparrow 1:\ \ \varkappa\in\left(\frac{2}{3}\sqrt{2},+\infty\right),
%\end{equation}
%or equivalently, after indefinite integration with respect to $s$,
\begin{equation}\label{A:es6}
	\ln D(J_{\textnormal{Ai}};\gamma)=\frac{s^3}{12}-\frac{1}{8}\ln|s|+\ln d_0+\mathcal{O}\left(t^{-\frac{1}{2}}\right),\ \ t\rightarrow+\infty,\ \gamma\uparrow 1:\ \ \varkappa_{_\textnormal{Ai}}\in\left[\frac{2}{3}\sqrt{2},+\infty\right)
\end{equation}
where $d_0>0$ may still depend on $\gamma$ (but not $s$) at this point. 
\begin{remark} Observe that for $\varkappa_{_\textnormal{Ai}}>\frac{2}{3}\sqrt{2}$ all power like error terms in \eqref{A:es6} are in fact $\gamma$-independent. This matches the result of Remark $7.4.$ in \cite{B}.
\end{remark}
It is clear that 
\begin{equation*}
	d_0(\gamma)=c_0\big(1+o(1)\big),\ \ \gamma\uparrow 1;\  \ \ c_0=\exp\left[-\frac{1}{24}+\z'(-1)\right]
\end{equation*}
but we would like to have a better error estimate. To this end, note that
\begin{equation*}
	\frac{\partial}{\partial\gamma}\ln D(J_{\textnormal{Ai}};\gamma)=-\textnormal{tr}\,\left((I-\gamma K_{\textnormal{Ai}})^{-1}K_{\textnormal{Ai}}\right)\Big|_{L^2(s,\infty)}=-\frac{1}{\gamma}\int_s^{\infty}R(\lambda,\lambda;t,\varkappa_{_\textnormal{Ai}})\,\d\lambda
\end{equation*}
with, (cf. \cite{CIK}, Section $2$),
\begin{equation*}
	R(\lambda,\lambda;t,\varkappa_{_\textnormal{Ai}})=\frac{\gamma}{2\pi\im}\big(X^{-1}(z;t,\varkappa_{_\textnormal{Ai}})X'(z;t,\varkappa_{_\textnormal{Ai}})\big)_{21}\Big|_{z\rightarrow\lambda},\ \ \ \ \lambda\in(s,\infty);\ \ \ \ (')=\frac{\d}{\d z}.
\end{equation*}
But from \eqref{A:es4}, we get, after tracing back the transformations,
\begin{equation*}
	\frac{\partial}{\partial\gamma}\ln D(J_{\textnormal{Ai}};\gamma)=\frac{\partial}{\partial\gamma}\ln D(J_{\textnormal{Ai}};1)+\mathcal{O}\left(t^{-\frac{1}{2}}\right)=\mathcal{O}\left(t^{-\frac{1}{2}}\right)
\end{equation*}
where we used the classical gap asymptotics from, say, \cite{CIK} in the last equality. Hence, back in \eqref{A:es6}, we deduce
\begin{equation*}
	\frac{\partial}{\partial\gamma}\ln d_0(\gamma)=\mathcal{O}\left(t^{-\frac{1}{2}}\right),
\end{equation*}
and thus %with the $s$-independency of $d_0$.
\begin{equation*}
	\ln d_0(\gamma)\equiv\ln c_0+\mathcal{O}\left(t^{-\frac{1}{2}}\right),\ \ \ t\rightarrow+\infty,\ \gamma\uparrow 1:\ \ \varkappa_{_\textnormal{Ai}}\in\left[\frac{2}{3}\sqrt{2},+\infty\right),
\end{equation*}
which completes the proof of \eqref{A:1}.
\section{Refining Theorem $4$ in \cite{DKV}}\label{Bessapp}
This part of the appendix complements the previous one and we derive the analogue of Theorem \ref{bettererror} for the Bessel kernel. In more detail, we shall prove
\begin{theo} For any fixed $a>-1$, as $s\rightarrow+\infty,\gamma\uparrow 1$,
\begin{equation}\label{B:1}
	\det(I-\gamma K_{\textnormal{Bess}})\Big|_{L^2(0,s)}=\exp\left[-\frac{s}{4}+a\sqrt{s}\right]s^{-\frac{1}{4}a^2}\tau_a\left(1+\mathcal{O}\left(s^{-\frac{1}{2}}\right)\right),\ \ \ \tau_a=\frac{G(1+a)}{(2\pi)^{\frac{a}{2}}},
\end{equation}
uniformly for 
\begin{equation*}
	\varkappa_{_\textnormal{Bess}}=-\frac{\ln(1-\gamma)}{\sqrt{s}}\geq 2-a\frac{\ln t}{t}.
\end{equation*}
\end{theo}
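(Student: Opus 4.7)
The plan is to parallel Appendix \ref{Airyapp}: I will re-run the nonlinear steepest descent analysis of Section \ref{Bessp2} in the restricted regime $\varkappa_{_\textnormal{Bess}}\in\big[2-a\frac{\ln t}{t},+\infty\big]$ using the simplified $g$-function obtained by setting $V=0$ in \eqref{g:2n}, and then compare the resulting solution to the one at $\varkappa_{_\textnormal{Bess}}=+\infty$ (i.e.\ $\gamma=1$), which is the classical [DKV] setting. The transformations $Y\mapsto X\mapsto T$ of Sections \ref{Besspre:1} and \ref{Besspre:2} are valid for any $\gamma\leq 1$, and for the normalized function $S_0(z)=T(z)\,\e^{tg(z)|_{V=0}\sigma_3}$ the analogues of \eqref{A:es1}, \eqref{A:es2} read: $\Re(g(z)|_{V=0})<0$ on $(\widehat{\Gamma}_{2,T}\cup\widehat{\Gamma}_{4,T})\setminus D(1,r)$, while on $(0,1)$ one has $\varkappa_{_\textnormal{Bess}}+2g(z)|_{V=0}=\varkappa_{_\textnormal{Bess}}-2\sqrt{1-z}\geq 0$ with equality attained only at $(\varkappa_{_\textnormal{Bess}},z)=(2,0)$. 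Hence the only ``critical'' point in the finite-$\varkappa$ regime is the left endpoint $z=0$, and a local parametrix there is needed only while $\varkappa_{_\textnormal{Bess}}<\infty$; at $\varkappa_{_\textnormal{Bess}}=+\infty$ the construction collapses to the [DKV] one.

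For the local analysis I recycle the outer model $P^{(\infty)}(z)|_{V=0}$ from \eqref{Besselout}, the Hankel parametrix $P^{(1)}(z)|_{V=0}$ from \eqref{Hankelpara} near $z=1$, and the reduced Laguerre parametrix $P^{(0)}(z)|_{V=0=\alpha=k}$ from \eqref{Lagpara} near $z=0$. At $k=0$ the matching \eqref{Lagparaasy} simplifies drastically since $\gamma_{-1}=0$, leaving a single off-diagonal contribution of size $t^{-1}$ with a $t$-independent $a$-dependent weight absorbed by the Szeg\H{o} function $\mathcal{D}(z)$; combined with \eqref{Hmatch:1}, the ratio function $R_0(z;t,\varkappa_{_\textnormal{Bess}})$ satisfies a Deift--Zhou small-norm estimate of order $t^{-1}$, yielding
\begin{equation*}
R_0(z;t,\varkappa_{_\textnormal{Bess}})=s^{-\frac{1}{4}\sigma_3}\left\{I+\mathcal{O}\!\left(\frac{t^{-1}}{1+|z|}\right)\right\}s^{\frac{1}{4}\sigma_3},\qquad z\in\mathbb{C}\setminus\Sigma_{R_0},
\end{equation*}
uniformly for $t\geq t_0$ and $\varkappa_{_\textnormal{Bess}}\in[2-a\ln t/t,+\infty]$. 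Forming next $\chi(z)=R_0(z;t,\varkappa_{_\textnormal{Bess}})R_0^{-1}(z;t,+\infty)$ as in Appendix \ref{Airyapp}, its jump on $\partial D(0,r)\cup(r,1-r)$ is exponentially close to $I$ away from $z=0$ and equals $I+\mathcal{O}(t^{-1})$ on $\partial D(0,r)$, so that
\begin{equation*}
R_0(z;t,\varkappa_{_\textnormal{Bess}})=s^{-\frac{1}{4}\sigma_3}\left\{I+\mathcal{O}\!\left(\frac{t^{-1}}{1+|z|}\right)\right\}s^{\frac{1}{4}\sigma_3}\,R_0(z;t,+\infty).
\end{equation*}

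To extract the announced asymptotics I use both differential identities. Tracing \eqref{diff:2} through $X\mapsto T\mapsto S_0\mapsto R_0$ furnishes an exact formula for $F(J_{_\textnormal{Bess}};\gamma)=\partial_s\ln D$ purely in terms of $R_0(w),R_0'(w)$ at $w=1$; asymptotic evaluation and indefinite integration in $s$ produce
\begin{equation*}
\ln D(J_{_\textnormal{Bess}};\gamma)=-\frac{s}{4}+a\sqrt{s}-\frac{a^2}{4}\ln s+\ln d_0(\gamma)+\mathcal{O}\!\left(s^{-\frac{1}{2}}\right),
\end{equation*}
with an integration constant $d_0=d_0(\gamma)$. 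To identify $d_0=\tau_a$ I invoke the $\gamma$-differential identity $\partial_\gamma\ln D=-\gamma^{-1}\int_0^s R(\lambda,\lambda)\,\d\lambda$, express $R(\lambda,\lambda)$ in terms of $R_0$, and apply the comparison estimate together with the known [DKV] expansion at $\gamma=1$; this forces $\partial_\gamma\ln d_0=\mathcal{O}(s^{-1/2})$ and hence $\ln d_0(\gamma)=\ln\tau_a+\mathcal{O}(s^{-1/2})$.

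The main technical obstacle is controlling the matching at $z=0$ when $\varkappa_{_\textnormal{Bess}}$ is near the threshold $2-a\ln t/t$: in that regime $\e^{-t(\varkappa_{_\textnormal{Bess}}+2g(z))}$ at $z=0$ is only algebraically small, of size $t^a$, so the Laguerre-based local construction must absorb this algebraic drift through its built-in $t^{\frac{a}{2}\sigma_3}$ conjugation while keeping the matching error genuinely $\mathcal{O}(t^{-1})$ uniformly down to the threshold. Verifying this uniformity, and the parallel uniformity of the subsequent $\chi$-RHP comparison at $\partial D(0,r)$, is the crux of the argument.
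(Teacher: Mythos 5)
Your proposal is correct and follows essentially the same route as the paper's Appendix~\ref{Bessapp}: set $V=0$ in the $g$-function, reuse the outer/Hankel/Laguerre parametrices with $V=0=\alpha=k$, obtain an $\mathcal{O}(t^{-1})$ small-norm estimate for $R_0$ (after $s^{\frac14\sigma_3}$-conjugation), compare $R_0(\cdot;t,\varkappa_{_\textnormal{Bess}})$ to $R_0(\cdot;t,+\infty)$ via the auxiliary $\chi$-RHP, then integrate the $\partial_s$-identity and pin down the constant via the $\partial_\gamma$-identity against the known [DKV] result. One small imprecision: since the admissible range $\varkappa_{_\textnormal{Bess}}\geq 2-a\frac{\ln t}{t}$ permits $\varkappa_{_\textnormal{Bess}}<2$ when $a>0$, the quantity $\varkappa_{_\textnormal{Bess}}-2\sqrt{1-z}$ can be negative near $z=0$ rather than merely vanishing at $(2,0)$; what matters, as you note in your last paragraph, is that the resulting algebraic drift $t^a$ is absorbed by the $t^{\frac{a}{2}\sigma_3}$ built into the Laguerre parametrix so the matching on $\partial D(0,r)$ remains uniformly $\mathcal{O}(t^{-1})$.
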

Expansion \eqref{B:1} was proven in \cite{DKV} in case $\gamma=1$, i.e. $\varkappa_{_\textnormal{Bess}}=+\infty$, allowing also for $a\in\mathbb{C}:\,\Re a>-1$. Another proof of the gap expansion (i.e. $\varkappa_{_\textnormal{Bess}}=+\infty$) appeared in \cite{E} using operator theoretic methods valid for $|\Re a|<1$. Here we will use once more the method of integrable integral operators and RHP \ref{masterIIKS} subject to \eqref{BIIKS}.% In fact, just as in the case of Theorem \ref{bettererror} we will see that the difference between $\varkappa_{_\textnormal{Bess}}+a\frac{\ln t}{t}\in[2,\infty)$ fixed and $\varkappa_{_\textnormal{Bess}}=%+\infty$ is ``almost beyond all orders".
\subsection{Comparison of nonlinear steepest descent analysis} Since the initial RHP \ref{masterIIKS} is valid for any $\gamma\leq 1$ we can again apply the first two transformations
\begin{equation*}
	Y(z)\mapsto X(z)\mapsto T(z)
\end{equation*}
as worked out in Sections \ref{Besspre:1} and \ref{Besspre:2}. After that we set $V\equiv 0$ in \eqref{g:2n} and define
\begin{equation*}
	S_0(z)=T(z)\e^{t(g(z)|_{V=0})\sigma_3},\ \ \ \ z\in\mathbb{C}\backslash\Sigma_T.
\end{equation*}
This leads us to the problem below
\begin{problem} The normalized function $S_0(z)=S_0(z;s,\gamma)\in\mathbb{C}^{2\times 2}$ is characterized by the following properties
\begin{enumerate}
	\item $S_0(z)$ is analytic for $z\in\mathbb{C}\backslash(\Sigma_T\cup\{0,1\})$.
	\item The jump conditions are as follows,
	\begin{equation*}
		S_{0+}(z)=S_{0-}(z)\begin{pmatrix} 1 & 0\\ \e^{-\im\pi a}\e^{2tg(z)|_{V=0}} & 1 \end{pmatrix},\ \ z\in\widehat{\Gamma}_{2,T};\ \ \ \ S_{0+}(z)=S_{0-}(z)\begin{pmatrix} 1 & 0\\ \e^{\im\pi a}\e^{2tg(z)|_{V=0}} & 1 \end{pmatrix},\ \ z\in\widehat{\Gamma}_{4,T}
	\end{equation*}
	and on the positive real axis,
	\begin{equation*}
		S_{0+}(z)=S_{0-}(z)\begin{pmatrix}\e^{-\im\pi a} & \e^{-t(\varkappa_{\textnormal{Bess}}+2g(z)|_{V=0})}\\ 0 & \e^{\im\pi a} \end{pmatrix},\ \ z\in\widehat{\Gamma}_{1,T};\ \ \ \ S_{0+}(z)=S_{0-}(z)\begin{pmatrix} 0 & 1\\ -1 & 0 \end{pmatrix},\ \ z\in\widehat{\Gamma}_{3,T}.
	\end{equation*}
	\item The singular behavior is identical to the one stated in RHP \ref{g2RHP}, subject to the replacement $V=0$.
	\item The normalization at $z=\infty$ is identical to the one stated in RHP \ref{g2RHP}, also here subject to the replacement $V=0$.
\end{enumerate}
\end{problem}
We have, as before, with $0<r<\frac{1}{4}$ fixed,
\begin{equation*}
	\Re\big(g(z)|_{V=0}\big)<0,\ \ \ \ z\in\big(\widehat{\Gamma}_{2,T}\cup\widehat{\Gamma}_{4,T}\big)\backslash D(1,r),
\end{equation*}
but since
\begin{equation*}
	g(z)|_{V=0}\geq -1,\ \ \ z\in(0,1);\ \ \ \ g\left(0\right)|_{V=0}=-1,
\end{equation*}
we now also have that for given $a>-1$ there exist $t_0>0,0<r<\frac{1}{4}$ such that for $z\in(0,1)\backslash D(0,r)$ and $t\geq t_0$,
\begin{equation*}
	\varkappa_{_\textnormal{Bess}}+2g(z)|_{V=0}\geq 2-2\sqrt{1-z}-a\frac{\ln t}{t}\geq\delta>0.
\end{equation*}
This means we only need a simplified local analysis near the origin $z=0$. In fact, take $k=0$ in \eqref{lagbare}
\begin{equation*}
	L(\z;a)\big|_{k=0}=\begin{pmatrix} 1 & \frac{1}{2\pi\im}\int_0^{\infty}t^a\e^{-t}\frac{\d t}{t-\z}\\ 0 & 1\end{pmatrix}(\e^{-\im\pi}\z)^{\frac{a}{2}\sigma_3}\e^{-\frac{1}{2}\z\sigma_3},\ \ \ \z\in\mathbb{C}\backslash[0,\infty),
\end{equation*}
and adjust \eqref{Lagpara} to $V=0=\alpha=k$. The same adjustments have to be made for the local model functions in \eqref{Besselout} and \eqref{Hankelpara}.
\subsection{Small norm theorem} We are left with the final transformation: put
\begin{equation}\label{bessg2:1}
	R_0(z)=\begin{pmatrix} 1 & 0\\ \omega & 1\end{pmatrix}S_0(z)\begin{cases} \big(P^{(0)}(z)\big)^{-1}\big|_{V=0=\alpha=k},&|z|<r\\\big(P^{(1)}(z)\big)^{-1}\big|_{V=0=\alpha=k},&|z-1|<r\\ \big(P^{(\infty)}(z)\big)^{-1}\big|_{V=0=\alpha=k},&|z|>r,\ \ |z-1|>r\end{cases}   ;\ \ \ \omega=-s^{\frac{1}{2}}\big(N(S_{\infty}-a\sigma_3)N^{-1}\big)_{21}
\end{equation}
with $0<r<\frac{1}{4}$ and
\begin{equation*}
	N=\frac{1}{\sqrt{2}}\begin{pmatrix} 1 & -1\\ 1 & 1\end{pmatrix}\e^{-\im\frac{\pi}{4}\sigma_3}.
\end{equation*}
The jump conditions of $R_0(z)$ are posed on the contour shown in Figure \ref{figure8} when $\varkappa_{_\textnormal{Bess}}+a\frac{\ln t}{t}\in[2,\infty)$. For $\varkappa_{_\textnormal{Bess}}=+\infty$ we have no jumps on $\partial D(0,r)\cup(r,1-r)$. In either case we easily derive the following estimate for
\begin{equation*}
	\widehat{G}_{R_0}(z;s,\gamma;a)=s^{\frac{1}{4}\sigma_3}G_{R_0}(z;s,\gamma;a)s^{-\frac{1}{4}\sigma_3},\ \ \ z\in\Sigma_{R_0}
\end{equation*}
with $G_{R_0}(z)$ denoting the jump matrix associated with \eqref{bessg2:1}.
\begin{prop} Given $a>-1$ there exist $t_0=t_0(a),v_0=v_0(a)$ and $c=c(a)$ such that
\begin{equation*}
	\|\widehat{G}_{R_0}(\cdot;s,\gamma;a)-I\|_{L^2\cap L^{\infty}(\Sigma_{R_0})}\leq\frac{c}{t},\ \ \ \forall\,t\geq t_0,\ v=-\ln(1-\gamma)\geq v_0:\ \ v\geq 2t-a\ln t.
\end{equation*}
\end{prop}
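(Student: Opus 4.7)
The statement is a standard small-norm estimate of Deift-Zhou type, and the plan is to bound $\widehat{G}_{R_0}-I$ piece-by-piece on the four kinds of components of $\Sigma_{R_0}$, relying on the same ingredients that powered Proposition \ref{bDZ:1} but specialized to $V=0=\alpha=k$.

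First I would handle the two lens rays and the interval $(r,1-r)$ (the components present only in the finite-$\varkappa$ case). On $(\widehat{\Gamma}_{2,T}\cup\widehat{\Gamma}_{4,T})\cap\{|z-1|>r\}$ the jump is $P^{(\infty)}|_{V=0}\bigl(\begin{smallmatrix}1&0\\ \e^{\mp\im\pi a}\e^{2tg(z)|_{V=0}}&1\end{smallmatrix}\bigr)(P^{(\infty)}|_{V=0})^{-1}$, which is exponentially close to $I$ by \eqref{crux:4}; conjugation by $s^{\frac14\sigma_3}$ only produces polynomial corrections in $s^{\frac12}=t$, which are absorbed by the exponential. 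On $(r,1-r)$, the scalar exponent controlling the correction is $-t(\varkappa_{_{\textnormal{Bess}}}+2g(z)|_{V=0})=-t(2-2\sqrt{1-z})+a\ln t$, and the elementary lower bound $2-2\sqrt{1-z}\geq 2-2\sqrt{1-r}=:\delta_r>0$ together with the hypothesis $v\geq 2t-a\ln t$ (which replaces $\varkappa_{_\textnormal{Bess}}$ by its effective value) gives $\e^{-t(\varkappa_{_\textnormal{Bess}}+2g(z)|_{V=0})}\leq C\e^{-\delta_r t}$; again, the conjugation cost $t^{\pm\frac12}$ is immaterial. These two classes therefore contribute $O(\e^{-ct})$ uniformly in $L^2\cap L^\infty$.

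Second, on the two circles I would invoke the matching expansions already computed in the paper, evaluated at $V=0,\alpha=0,k=0$. On $\partial D(1,r)$, \eqref{Hmatch:1} specialized to $V=0$ gives $P^{(1)}(P^{(\infty)})^{-1}=I+\frac{\im}{8\z^{1/2}(z)}M(z)\bigl(\begin{smallmatrix}1&-2\im\e^{-\im\pi a}\\ -2\im\e^{\im\pi a}&-1\end{smallmatrix}\bigr)M^{-1}(z)+O(t^{-2})$ with $\z^{1/2}(z)\sim t$, so $G_{R_0}(z)-I$ is of size $t^{-1}$ in an appropriate basis. The only subtle point is the conjugation: the off-diagonal entries of the bracket carry an implicit factor of $(|s|z)^{\pm\frac14\sigma_3}$ from $P^{(\infty)}$, and it is precisely after conjugating by $s^{\frac14\sigma_3}$ that the matrix $\widehat{G}_{R_0}-I$ becomes uniformly $O(t^{-1})$ on $\partial D(1,r)$. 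On $\partial D(0,r)$, formula \eqref{Lagparaasy} taken at $k=0=\alpha$ produces $P^{(0)}(P^{(\infty)})^{-1}=I+\z(z)^{-1}(\mathcal{D})^{-\sigma_3}\bigl(\begin{smallmatrix}0&\gamma_0^{-1}\hat\beta^{-2}(z)(\e^{-\im\pi}t^{-1}\z(z))^{-a}\\0&0\end{smallmatrix}\bigr)(\mathcal{D})^{\sigma_3}+O(t^{-2})$; since $\hat\beta(z)$, $\hat\z(z)$ are $t$-independent and $\z(z)=t\hat\z(z)$, the leading term is $t^{-1-a}$ times a $t$-independent off-diagonal matrix, which is conjugated into a $O(t^{-1})$ perturbation by $s^{\frac14\sigma_3}$ because the growing $s^{\frac12}=t$ factor in the (21)-entry kills the $a$-dependent exponent (after use of the identity $t^{-a}=t^{-a}$, harmless since $a$ is fixed); the constraint $v\geq 2t-a\ln t$ plays no role here since the exponential terms in $G_{R_0}$ are confined to the other components.

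The main obstacle is bookkeeping for the conjugation by $s^{\frac14\sigma_3}$: because the outer parametrix \eqref{Besselout} carries the factor $(\e^{-\im\pi}s(z-1))^{-\frac14\sigma_3}$, naive $L^\infty$ bounds on $G_{R_0}-I$ blow up like $s^{\frac12}$ in the (21)-entry, and it is only after the similarity $G_{R_0}\mapsto\widehat{G}_{R_0}=s^{\frac14\sigma_3}G_{R_0}s^{-\frac14\sigma_3}$ that the desired $O(t^{-1})$ bound emerges. This is the same mechanism that underlies Proposition \ref{bDZ:1}; I would mimic its proof verbatim, noting that with $V=0=\alpha=k$ every coefficient matrix $R_j^{\pm}$, $R_2$ of the preceding section collapses, so no separate factorization argument or singular RHP transformation is required and the error $t^{-1-2|\alpha|}$ from Proposition \ref{bDZ:2} becomes simply $t^{-1}$, yielding the stated bound uniformly for $t\geq t_0,\,v\geq v_0$ subject to $v\geq 2t-a\ln t$.
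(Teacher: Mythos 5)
Your overall strategy — decompose $\Sigma_{R_0}$ into the four pieces, use \eqref{crux:4} and the constraint $v\geq 2t-a\ln t$ for exponential decay on the lens rays and on $(r,1-r)$, the Hankel matching \eqref{Hmatch:1} on $\partial D(1,r)$, the Laguerre matching \eqref{Lagparaasy} at $V=0=\alpha=k$ on $\partial D(0,r)$, and the similarity by $s^{\frac14\sigma_3}$ to neutralize the $(\e^{-\im\pi}s(z-1))^{\pm\frac14\sigma_3}$ prefactors of $P^{(\infty)}$ — is exactly what the paper has in mind when it writes ``we easily derive the following estimate,'' and it is correct.

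There is, however, a genuine arithmetic slip in your $\partial D(0,r)$ discussion that you should fix because, read literally, it would make the proof fail for $a<0$. You correctly write the leading correction as $\z(z)^{-1}\gamma_0^{-1}\hat\beta^{-2}(z)\big(\e^{-\im\pi}t^{-1}\z(z)\big)^{-a}\bigl(\begin{smallmatrix}0&1\\0&0\end{smallmatrix}\bigr)$ (conjugated by $\mathcal{D}^{\sigma_3}=I$ since $\alpha=0$), but you then assert it is ``$t^{-1-a}$ times a $t$-independent matrix.'' That is not what your own formula says: since $\z(z)=t\,\hat\z(z)$, the combination $t^{-1}\z(z)=\hat\z(z)$ is $t$-independent, so $\big(\e^{-\im\pi}t^{-1}\z(z)\big)^{-a}=\big(\e^{-\im\pi}\hat\z(z)\big)^{-a}$ carries no power of $t$ at all, and likewise $\hat\beta$ is $t$-independent at $\alpha=0$. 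The only $t$-dependence is the prefactor $\z(z)^{-1}=t^{-1}\hat\z(z)^{-1}$, so the bracketed correction is cleanly $O(t^{-1})$. Your phrase ``the growing $s^{\frac12}=t$ factor in the $(21)$-entry kills the $a$-dependent exponent (after use of the identity $t^{-a}=t^{-a}$\ldots)'' is a red herring: the role of the $s^{\frac14\sigma_3}$ conjugation is purely to cancel the $s$-powers coming from the $(\e^{-\im\pi}s(z-1))^{-\frac14\sigma_3}$ factor in \eqref{Besselout}, not to compensate any $a$-dependent power of $t$; if such a power existed, conjugation would not remove it, and for $a<0$ you would end up with a bound worse than $t^{-1}$. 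With the correct reading of your own formula, the conjugated error on $\partial D(0,r)$ is $O(t^{-1})$ uniformly, which is what the Proposition claims.

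A smaller point: on $(r,1-r)$ you write the exponent as $-t(2-2\sqrt{1-z})+a\ln t$, silently taking $\varkappa_{\textnormal{Bess}}$ equal to its boundary value; for the stated uniformity in the region $v\geq 2t-a\ln t$ you should keep the inequality, $\varkappa_{\textnormal{Bess}}+2g(z)|_{V=0}\geq(2-2\sqrt{1-r})-a\tfrac{\ln t}{t}\geq\delta>0$, which is of course exactly what the paper records just before the Proposition.
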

From this estimate we obtain unique solvability of the $R_0$-RHP in $L^2(\Sigma_{R_0})$ and the solution satisfies
\begin{equation}\label{bc:1}
	a>-1:\ \ \ \ s^{\frac{1}{4}\sigma_3}R_0(z)s^{-\frac{1}{4}\sigma_3}=I+\mathcal{O}\left(\frac{t^{-1}}{1+|z|}\right),\ \ t\rightarrow\infty,\ \gamma\uparrow 1:\ \ \varkappa_{_\textnormal{Bess}}\geq2-a\frac{\ln t}{t}
\end{equation}
uniformly for $z\in\mathbb{C}\backslash\Sigma_{R_0}$. The required comparison follows from writing $R_0=R_0(z;t,\varkappa_{_\textnormal{Bess}};a)$ and analyzing %the ratio
\begin{equation*}
	\chi(z;t,\varkappa_{_\textnormal{Bess}};a)=R_0(z;t,\varkappa_{_\textnormal{Bess}};a)\big(R_0(z;t,+\infty;a)\big)^{-1},\ \ z\in\mathbb{C}\backslash\Sigma_{R_0};\  t\geq t_0,\ \ \varkappa_{_\textnormal{Bess}}+a\frac{\ln t}{t}\in[2,+\infty)\ \ \textnormal{fixed}.
\end{equation*}
\begin{problem} Determine $\chi(z)=\chi(z;t,\varkappa_{_\textnormal{Bess}};a)\in\mathbb{C}^{2\times 2}$ such that
\begin{enumerate}
	\item $\chi(z)$ is analytic for $z\in\mathbb{C}\backslash\Sigma_{\chi}$ with square integrable boundary values on the jump contour
	\begin{equation*}
		\Sigma_{\chi}=\partial D(0,r)\cup\partial D(1,r)\cup(r,1-r)
	\end{equation*}
	shown in Figure \ref{figure9}.
	\begin{figure}[tbh]
\begin{center}
\resizebox{0.25\textwidth}{!}{\includegraphics{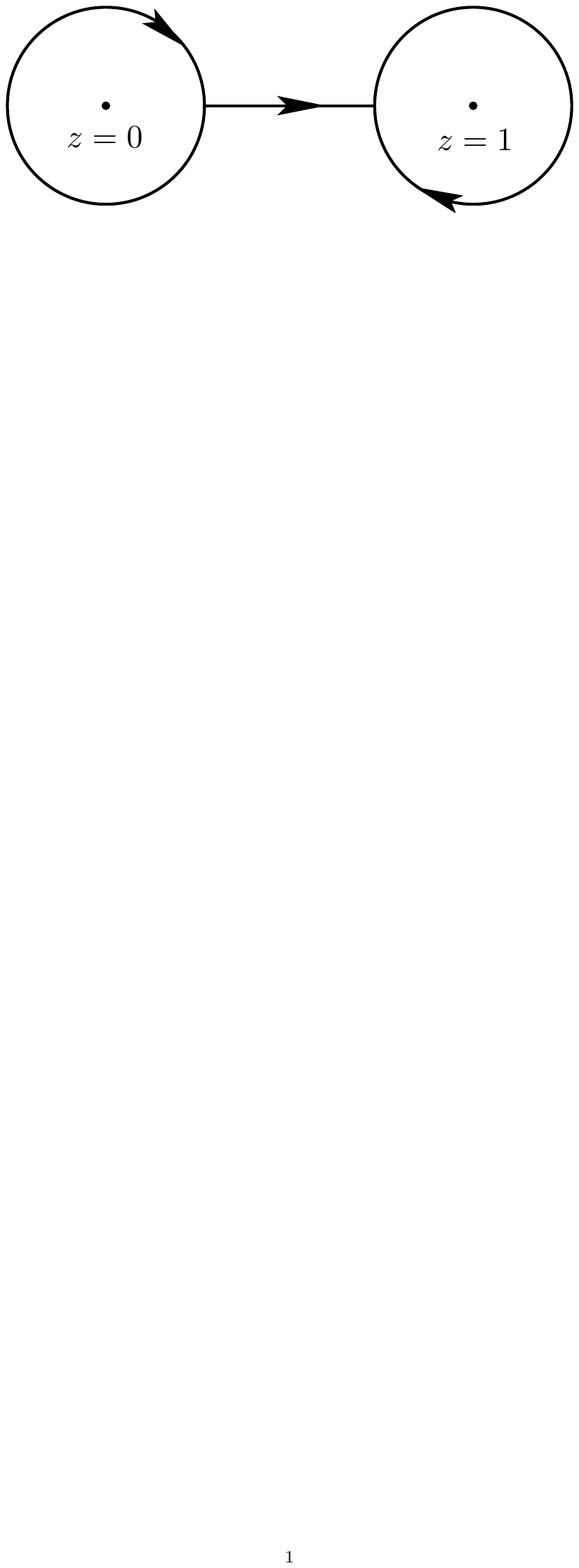}}
%\resizebox{0.25\textwidth}{!}{\input{Ratio6.pdf_t}}
\caption{The oriented jump contours for the ratio function $R_0(z)$ in the complex $z$-plane.}
\label{figure9}
\end{center}
\end{figure}
	\item We have jump conditions $\chi_+(z)=\chi_-(z)G_{\chi}(z;t,\varkappa_{_\textnormal{Bess}};a)$ with 
	\begin{align*}
		G_{\chi}(z)=&\,R_{0}(z;t,+\infty;a)\Big[P^{(0)}(z)\big|_{V=0}\big(P^{(\infty)}(z)\big|_{V=0}\big)^{-1}\Big]R_{0}^{-1}(z;t,+\infty;a),\ \ z\in\partial D(0,r);\\
		G_{\chi}(z)=&\,R_{0-}(z;t,+\infty;a)\Big[P^{(1)}(z)\big|_{V=0}\big(P^{(1)}(z)\big|_{\substack{V=0\\ \varkappa_{_\textnormal{Bess}}=+\infty}}\big)^{-1}\Big]R_{0-}^{-1}(z;t,+\infty;a),\ \ z\in\partial D(1,r);
	\end{align*}
	and for $z\in(r,1-r)$,
	\begin{equation*}
		G_{\chi}(z)=R_0(z;t,+\infty;a)\Big[P^{(\infty)}(z)\big|_{V=0}\begin{pmatrix}1 & \e^{-\im\pi a}\e^{-t(\varkappa_{_\textnormal{Bess}}+2g(z)|_{V=0})}\\ 0 & 1\end{pmatrix}\big(P^{(\infty)}(z)\big|_{V=0}\big)^{-1}\Big]R_0^{-1}(z;t,+\infty;a).\end{equation*}
	\item Near $z=\infty$, the function behaves as $\chi(z)\rightarrow I$.
\end{enumerate}
\end{problem}
Now we use \eqref{bc:1} and derive for the conjugated jump matrix
\begin{equation*}
	\widehat{G}_{\chi}(z;t,\varkappa_{_\textnormal{Bess}};a)=t^{\frac{1}{2}\sigma_3}G_{\chi}(z;t,\varkappa_{_\textnormal{Bess}};a)t^{-\frac{1}{2}\sigma_3},\ \ z\in\Sigma_{\chi}
\end{equation*}
the estimate
\begin{prop} For any given $a>-1$ there exist positive $t_0=t_0(a)$ and $c=c(a)$ such that
\begin{equation*}
	\|\widehat{G}_{\chi}(\cdot;t,\varkappa_{_\textnormal{Bess}};a)-I\|_{L^2\cap L^{\infty}(\Sigma_{\chi})}\leq\frac{c}{t},\ \ \ \forall\, t\geq t_0,\ \ \varkappa_{_\textnormal{Bess}}\geq2-a\frac{\ln t}{t}.
\end{equation*}
\end{prop}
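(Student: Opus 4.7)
The plan is to estimate $\widehat{G}_\chi - I$ pointwise on each of the three components $\partial D(0,r)$, $\partial D(1,r)$, $(r,1-r)$ of $\Sigma_\chi$, and then assemble the $L^2\cap L^\infty$ bound. Since $s^{\frac{1}{4}\sigma_3} = t^{\frac{1}{2}\sigma_3}$, the previous estimate \eqref{bc:1} applied to $R_0(z;t,+\infty;a)$ tells us that $t^{\frac{1}{2}\sigma_3} R_0(z;t,+\infty;a) t^{-\frac{1}{2}\sigma_3} = I + \mathcal{O}(t^{-1}/(1+|z|))$. Hence on each piece of $\Sigma_\chi$,
\begin{equation*}
\widehat{G}_\chi(z) = \bigl[I+\mathcal{O}(t^{-1})\bigr]\cdot t^{\frac{1}{2}\sigma_3}\bigl[\text{local matching bracket}\bigr]t^{-\frac{1}{2}\sigma_3}\cdot\bigl[I+\mathcal{O}(t^{-1})\bigr],
\end{equation*}
so the task reduces to controlling the conjugation $t^{\frac{1}{2}\sigma_3}[\,\cdot\,]t^{-\frac{1}{2}\sigma_3}$ of each bracket.

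On $\partial D(0,r)$, I would apply the matching \eqref{Lagparaasy} specialized to $k=\alpha=V=0$. Then $\gamma_{k-1}=\gamma_{-1}=0$ forces the leading $1/\zeta$ correction to be a pure $(12)$ contribution, giving
\begin{equation*}
P^{(0)}|_{V=0}(P^{(\infty)}|_{V=0})^{-1}=I+\tfrac{1}{\zeta(z)}\bigl(\mathcal{D}(z)\bigr)^{-\sigma_3}\bigl(\begin{smallmatrix}0 & \gamma_0^{-1}\hat\beta^{-2}(z)(\e^{-\im\pi}\hat\zeta(z))^{-a}\\ 0 & 0\end{smallmatrix}\bigr)\bigl(\mathcal{D}(z)\bigr)^{\sigma_3}+\mathcal{O}(t^{-2}),
\end{equation*}
with $\zeta(z)=t\hat\zeta(z)$. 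Writing $P^{(\infty)}(z)|_{V=0}=s^{-\frac{1}{4}\sigma_3}\widetilde{P}^{(\infty)}(z)$ with $\widetilde{P}^{(\infty)}$ independent of $s$, one sees that the conjugation $t^{\frac{1}{2}\sigma_3}\cdot(\ldots)\cdot t^{-\frac{1}{2}\sigma_3}$ absorbs exactly the $s^{\pm\frac{1}{4}}$ entries generated by $P^{(\infty)}$, producing a bounded $t$-independent matrix multiplied by $t^{-1}$. This delivers $\mathcal{O}(t^{-1})$ on $\partial D(0,r)$.

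On $\partial D(1,r)$, the only discrepancy between $P^{(1)}(z)|_{V=0}$ and $P^{(1)}(z)|_{V=0,\varkappa=+\infty}$ comes from the rank-one bracket $K(z)=I-\tfrac{1-\gamma}{2\pi\im}\bigl(\begin{smallmatrix}-1&-1\\1&1\end{smallmatrix}\bigr)\ln\zeta(z)$ in \eqref{Hankelpara}; all other factors cancel in the ratio. So the bracket reduces to $E^{(1)}(z)B(\zeta(z))K(z)B(\zeta(z))^{-1}E^{(1)}(z)^{-1}-I$, multiplicatively of order $(1-\gamma)=\e^{-v}\leq t^a\e^{-2t}$ up to conjugation by $E^{(1)}B$. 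On $\partial D(1,r)$ with $r<\tfrac{1}{4}$ one has $|g(z)|_{V=0}|\leq\sqrt{r}<\tfrac{1}{2}$, so \eqref{Hankelasy} and the identification $\im\zeta^{\frac{1}{2}}=t\,\text{sgn}(\Im z)g(z)|_{V=0}$ yield at worst a growth of $\e^{2t\sqrt{r}}$ from the conjugation. The product is still $\e^{-2t(1-\sqrt{r})+a\ln t}=\mathcal{O}(t^{-\infty})$. On $(r,1-r)$, the bracket is $I+\e^{-t(\varkappa_{_\textnormal{Bess}}+2g(z)|_{V=0})}\cdot(\text{rank-one})$ with $\varkappa_{_\textnormal{Bess}}+2g(z)|_{V=0}\geq 2(1-\sqrt{1-r})-a\ln t/t\geq \delta>0$, again $\mathcal{O}(t^{-\infty})$. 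Combining these three pieces yields the claimed $c/t$ bound.

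The main obstacle I foresee is step two, namely the interaction of the $t^{\frac{1}{2}\sigma_3}$ conjugation with the matrix structure of the Laguerre matching: a naive reading of \eqref{Lagparaasy} produces an $(21)$ entry of size $\mathcal{O}(1)$ (carrying a factor $s^{\frac{1}{2}}$ from $P^{(\infty)}$) and a $(12)$ entry of size $\mathcal{O}(t^{-2})$ (carrying $s^{-\frac{1}{2}}$), so it is only after the simultaneous rescaling by $t^{\pm\frac{1}{2}}$ that both entries acquire the uniform $\mathcal{O}(t^{-1})$ bound claimed. Verifying that the $k=\alpha=0$ simplification (in particular $\gamma_{-1}=0$) kills the diagonal $(11),(22)$ pieces of the leading correction is the key algebraic input; the exponential estimates on $\partial D(1,r)$ and $(r,1-r)$ are comparatively routine consequences of \eqref{crux:4}--\eqref{crux:5} and the choice $\varkappa_{_\textnormal{Bess}}\geq 2-a\ln t/t$.
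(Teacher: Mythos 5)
Your overall strategy mirrors exactly what the paper intends: use the bound \eqref{bc:1} to control the $\widehat R_0$ conjugation, then reduce the problem to controlling the $t^{\frac{1}{2}\sigma_3}$--conjugated local brackets on the three components of $\Sigma_\chi$. The paper only gestures at this argument (``Now we use \eqref{bc:1} and derive$\ldots$''), so your write-up supplies the details the paper omits, and the key algebraic inputs you isolate --- the vanishing $\gamma_{-1}=0$, the balancing action of $t^{\pm\frac{1}{2}}$ on the $s^{\pm\frac{1}{2}}$ entries, and the exponential decay from \eqref{crux:4}--\eqref{crux:5} on the remaining pieces --- are precisely the ones needed.

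One presentational inconsistency deserves to be fixed: your displayed equation on $\partial D(0,r)$ asserts
\begin{equation*}
	P^{(0)}\big|_{V=0}\big(P^{(\infty)}\big|_{V=0}\big)^{-1}=I+\frac{1}{\zeta(z)}\bigl(\begin{smallmatrix}0 & \ast\\ 0 & 0\end{smallmatrix}\bigr)+\mathcal{O}\!\left(t^{-2}\right),
\end{equation*}
which reads as an entrywise $\mathcal{O}(t^{-1})$ bound on the bracket itself. That is false: the correct statement from \eqref{Lagparaasy} with $k=\alpha=V=0$ is
\begin{equation*}
	P^{(0)}\big|_{V=0}\big(P^{(\infty)}\big|_{V=0}\big)^{-1}=I+P^{(\infty)}\big|_{V=0}\,\frac{1}{\zeta(z)}\bigl(\begin{smallmatrix}0 & \ast\\ 0 & 0\end{smallmatrix}\bigr)\,\big(P^{(\infty)}\big|_{V=0}\big)^{-1}+\ldots,
\end{equation*}
and the outer $P^{(\infty)}$ factor, being $s^{-\frac{1}{4}\sigma_3}$ times a bounded non-diagonal matrix, produces a $(21)$ entry of order $\mathcal{O}(1)$. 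Your closing paragraph makes clear that you understand this --- you explicitly identify the naive $(21)$ entry as $\mathcal{O}(1)$ ``carrying a factor $s^{\frac{1}{2}}$ from $P^{(\infty)}$'' and correctly explain that only after the $t^{\pm\frac{1}{2}}$ rescaling does the uniform $\mathcal{O}(t^{-1})$ bound appear. So the display should either carry the $P^{(\infty)}$ conjugation explicitly, or be replaced by the statement that $t^{\frac{1}{2}\sigma_3}\,P^{(0)}\big|_{V=0}\big(P^{(\infty)}\big|_{V=0}\big)^{-1}\,t^{-\frac{1}{2}\sigma_3}=I+\mathcal{O}(t^{-1})$, which is what actually holds. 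The estimates you give on $\partial D(1,r)$ and $(r,1-r)$ are correct and comparatively routine, as you say; the extra $\ln t$ coming from $\ln\zeta(z)$ in the rank-one bracket on $\partial D(1,r)$ is harmless because it is swallowed by the $\mathcal{O}(t^{-\infty})$ factor.
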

This provides in turn the analogue of \eqref{A:es4} for the Bessel kernel and the remaining argument is just as in Section \ref{gderiv}.
\end{appendix}

\end{document}